\newif\ifdraft
\newif\iffullversion
\newcommand{\bra}[1]{{\left\langle{#1}\right\vert}}
\newcommand{\ket}[1]{{\left\vert{#1}\right\rangle}}
\newcommand{\qw}[1][-1]{\ar @{-} [0,#1]}
\newcommand{\qwx}[1][-1]{\ar @{-} [#1,0]}
\newcommand{\gate}[1]{*+<.6em>{#1} \POS ="i","i"+UR;"i"+UL **\dir{-};"i"+DL **\dir{-};"i"+DR **\dir{-};"i"+UR **\dir{-},"i" \qw}
\newcommand{\control}{*!<0em,.025em>-=-<.2em>{\bullet}}
\newcommand{\ctrl}[1]{\control \qwx[#1] \qw}
\newcommand{\multigate}[2]{*+<1em,.9em>{\hphantom{#2}} \POS [0,0]="i",[0,0].[#1,0]="e",!C *{#2},"e"+UR;"e"+UL **\dir{-};"e"+DL **\dir{-};"e"+DR **\dir{-};"e"+UR **\dir{-},"i" \qw}
\newcommand{\ghost}[1]{*+<1em,.9em>{\hphantom{#1}} \qw}
\newcommand{\lstick}[1]{*!R!<.5em,0em>=<0em>{#1}}
\newcommand{\Qcircuit}{\xymatrix @*=<0em>}
\let\ket\undefined
\let\bra\undefined
\newcommand\fullshort[2]{#1}
\newcommand\fullshort[2]{#2}
\newcommand\shortonly[1]{\fullshort{}{#1}}
\newcommand\fullonly[1]{\fullshort{#1}{}}
\newtheorem{definition}{Definition}
\newtheorem{corollary}{Corollary}
\newtheorem{lemma}{Lemma}
\newtheorem{claim}{Claim}
\newtheorem{fact}{Fact}
\newtheorem{theorem}{Theorem}
\newenvironment{proof}[1][]{\medskip\noindent
  \textit{Proof\ifx\proof#1\proof\else\
    #1\fi. }}{\qed\global\let\qed\qedmacro\par}
\newcommand\autonameref[1]{\autoref{#1} {\smaller\sffamily(\nameref{#1})}}
  \renewcommand\paragraph[1]{\par\noindent\textbf{#1}}}
\newcommand\tqed{\shortonly{\hfill$\diamond$}}
\newcommand\mtqed{\shortonly{\tag*{$\diamond$}}}
\let\include\input
\DeclareRobustCommand\ch@sidecross{%
  {\tikz[remember picture,overlay] {
    \node (target) at (0,.2\baselineskip) {};
    \node (cross) [xshift=\ch@sidecrosspos pt] at (current page.west |- target) {\normalsize\sffamily\bfseries X};
    }}}
\begin{document}

\iffullversion\else
\makeatletter
\def\symbolindex@warn#1#2#3{}
\makeatother
\fi

\begin{shortversion}
  \date{}
\end{shortversion}

\title{Quantum Attacks on Classical Proof Systems\\[6pt]\Large The Hardness of Quantum Rewinding}

\iffullversion
\author{Andris Ambainis\\
\footnotesize University of Latvia and\\[-2pt]
\footnotesize Institute for Advanced Study\\[-2pt]
\footnotesize Princeton
\and 
Ansis Rosmanis\\
\footnotesize Institute for Quantum Computing\\[-2pt]
\footnotesize School of Computer Science\\[-2pt]
\footnotesize University of Waterloo
\and Dominique Unruh\\
\footnotesize University of Tartu}
\else
\author{\IEEEauthorblockN{Andris Ambainis}
\IEEEauthorblockA{
\footnotesize University of Latvia and\\
Institute for Advanced Study\\
Princeton}
\and
\IEEEauthorblockN{Ansis Rosmanis}
\IEEEauthorblockA{\footnotesize Institute for Quantum Computing\\
School of Computer Science\\
University of Waterloo}
\and
\IEEEauthorblockN{Dominique Unruh}
\IEEEauthorblockA{\footnotesize University of Tartu}
}
\fi

\maketitle

\fullonly{
\renewenvironment{abstract}{\begin{quote}\small\textbf{Abstract.}}{\end{quote}}
}

\begin{abstract}
  Quantum zero-knowledge proofs and quantum proofs of knowledge are
  inherently difficult to analyze because their security analysis uses
  rewinding. Certain cases of quantum rewinding are handled by the
  results by Watrous (SIAM J Comput, 2009) and Unruh (Eurocrypt 2012),
  yet in general the problem remains elusive. We show that this is not
  only due to a lack of proof techniques: relative to an oracle,
  we show that classically secure proofs and proofs of knowledge are
  insecure in the quantum setting.

  More specifically, sigma-protocols, the Fiat-Shamir construction,
  and Fischlin's proof system are quantum insecure under assumptions that are
  sufficient for classical security. Additionally, we show that for
  similar reasons, computationally binding commitments provide almost
  no security guarantees in a quantum setting.

  To show these results, we develop the ``pick-one trick'', a general
  technique that allows an adversary to find one value satisfying a
  given predicate, but not two.
\end{abstract}

\begin{shortversion}
  \begin{IEEEkeywords}
    quantum cryptography; quantum query complexity; rewinding; random oracles
  \end{IEEEkeywords}
\end{shortversion}

\begin{fullversion}
  \columnseprule.4pt
  \columnsep30pt
  \tableofcontents
\end{fullversion}

\section{Introduction}

Quantum computers threaten classical cryptography. With a quantum
computer, an attacker would be able to break all schemes based on the
hardness of factoring, or on the hardness of discrete logarithms
\cite{Shor:1994:Algorithms}, this would affect most public key encryption and
signature schemes is use today. For symmetric ciphers and hash
functions, longer key and output lengths will be required due to
considerable improvements in brute force attacks
\cite{grover,bht97collision}. These threats lead to the
question: how can classical cryptography be made secure against
quantum attacks? Much research has been done towards cryptographic schemes
based on hardness assumptions not known to be vulnerable to quantum
computers, e.g., lattice-based cryptography. (This is called \emph{post-quantum
cryptography}%
\index{post-quantum cryptography}; see~\cite{postqc-book} for a somewhat dated
survey.) Yet, identifying useful quantum-hard assumptions is only half of the
problem. Even if the underlying assumption holds against quantum
attackers, for many classically secure protocols it is not clear if
they also resist quantum attacks: the proof techniques used in the
classical setting often cannot be applied in the quantum world. This
raises the question whether it is just our proof techniques that are
insufficient, or whether the protocols themselves are quantum
insecure. The most prominent example are zero-knowledge proofs. To
show the security of a zero-knowledge proof
system,\fullonly{\footnote{\label{revfootnote:explain.proof}Reminder: a
  proof or proof system is a protocol where a prover convinces a verifier of 
  the validity of a statement $s$. It is zero-knowledge if the view of the verifier can be 
  simulated without knowing a witness $w$ for the statement $s$ (i.e., 
  the verifier learns nothing about the witness). A proof of knowledge
  is a proof which additionally convinces the verifier that the
  prover could provide a witness $w$ (i.e., not just the mere existence
  of $w$ is proven). Arguments and arguments of knowledge are like proofs 
  and proofs of 
  knowledge, except that they are secure only against computationally
  limited provers.}}
 one typically uses
rewinding. That is, in a hypothetical execution, the adversary's state
is saved, and the adversary is executed several times starting from
that state. In the quantum setting, we cannot do that: saving a
quantum state means cloning it, violating the no-cloning theorem
\cite{nocloning}. Watrous \cite{watrous-qzk} showed that for many
zero-knowledge proofs, security can be shown using a quantum version
of the rewinding technique. (Yet this technique is not as versatile as
classical rewinding. For example, the quantum security of the graph
non-isomorphism proof system \cite{Goldreich:1991:Proofs} is an open problem.)
Unruh \cite{qpok} noticed that Watrous' rewinding cannot be used to
show the security of proofs of knowledge; he developed a new rewinding
technique to show that so-called sigma-protocols are proofs of
knowledge.  Yet, in \cite{qpok} an unexpected condition was needed: their technique only
applies to proofs of knowledge with \emph{strict soundness} (which
roughly means that the last message in the interaction is determined
by the earlier ones); this condition is not needed in the classical
case. The security of sigma-protocols without strict soundness (e.g.,
graph isomorphism \cite{Goldreich:1991:Proofs}) was
left open. The
problem also applies to arguments as well (i.e., computationally-sound proof
systems, without ``of knowledge''), as these are often shown secure by proving that they
are actually arguments of knowledge. Further cases where new proof
techniques are needed in the quantum setting are schemes involving
random oracles. Various proof techniques were developed
\cite{boneh11quantumro,zhandry:quantum.ibe,qtc,boneh13quantumsigs,qpos}, but
all are restricted to specific cases, none of them matches the power of the
classical proof techniques.

To summarize: For many constructions that are easy to prove secure
classically, proofs in the quantum setting are much harder and come
with additional conditions limiting their applicability. The question
is: does this only reflect our lack of understanding of the quantum
setting, or are those additional conditions indeed necessary? 
Or could it be that those classically secure constructions are 
actually insecure quantumly?

\paragraph{Our contribution.}
We
show, relative to an oracle, that the answer is indeed \textbf{yes}:
\begin{compactitem}
\item\pagelabel{revpage:contrib.list} Sigma-protocols are
  not necessarily quantum proofs of knowledge, even if they are
  classical proofs of knowledge. In particular, the strict soundness
  condition from \cite{qpok} is necessary. (\autoref{theo:know.break.sigma})
\item In the computational setting, sigma-protocols are not
  necessarily quantum arguments, even if they are classical
  arguments.
  (\autoref{theo:break.sigma.comp})
\item The Fiat-Shamir construction \cite{Fiat:1987:How} for
  non-interactive proofs of knowledge in the random oracle model does
  not give rise to quantum proofs of knowledge. And in the
  computational setting, not even to quantum arguments.
  (\multiautoref{theo:know.break.fs,theo:break.fs.comp})
\item Fischlin's non-interactive proof of knowledge in the random
  oracle model \cite{fischlin05online} is not a quantum proof of
  knowledge. (This is remarkable because in contrast to Fiat-Shamir,
  the classical security proof of Fischlin's scheme does not use rewinding.)  And in the
  computational setting, it is not even an argument.
  (\multiautoref{theo:know.break.fischlin,theo:break.fischlin.comp})
\item Besides proof systems, we also have negative results for
  commitment schemes. The usual classical definition of
  computationally binding commitments is that the adversary cannot
  provide openings to two different values for the same
  commitment. Surprisingly, relative to an oracle, there are
  computationally binding commitments where a quantum adversary can
  open the commitment to any value he chooses (just not to two values
  \emph{simultaneously}).
  (\autoref{theo:com.attack})
\item The results on commitments in turn allow us to strengthen the
  above results for proof systems. While it is known that even in the
  quantum case, sigma-protocols with so-called ``strict soundness''
  (the third message is uniquely determined by the other two) are
  proofs and proofs of knowledge \cite{qpok}, using the computational
  variant of this property leads to schemes that are not even
  computationally secure.
  (\multiautoref{theo:know.break.sigma,theo:break.sigma.comp,theo:know.break.fs,theo:break.fs.comp,theo:know.break.fischlin,theo:break.fischlin.comp}.)
    \end{compactitem}
\begin{figure*}[t]
  \begin{center}
    \renewcommand\r[1]{\textsuperscript{\ref{#1}}}
    \renewcommand\c[1]{\textsuperscript{\cite{#1}}}
    \small\sffamily
    \begin{tabular}{|ccc|cc|cc|cc|}
      \hline
      \multicolumn{3}{|c|}{\bfseries Underlying sigma-protocol} & \multicolumn{2}{|c|}{\bfseries Sig.-pr.\,used\,directly} & 
      \multicolumn{2}{|c|}{\bfseries Fiat-Shamir} & \multicolumn{2}{|c|}{\bfseries Fischlin} \\
      \hline
      zero- & special & strict     &           &       &     &       &   &       \\
      knowledge &  soundness &  soundness & PoK       & proof & PoK & proof & PoK & proof \\
      \hline
      stat & perf & comp             & attack\r{theo:know.break.sigma}& stat\c{watrous-qzk}  & attack\r{theo:know.break.fs}    & ?   & attack\r{theo:know.break.fischlin} & ? \\
      stat & comp & comp             & attack\r{theo:break.sigma.comp}& attack\r{theo:break.sigma.comp}& attack\r{theo:break.fs.comp}& attack\r{theo:break.fs.comp}& attack\r{theo:break.fischlin.comp} & attack\r{theo:break.fischlin.comp} \\
      stat & perf & perf             & stat\c{qpok}      & stat\c{watrous-qzk}  & ?         & ?     & ?         & ? \\
      \hline
    \end{tabular}
  \end{center}
  \fullonly{\vspace*{-5mm}}
  \caption{Taxonomy of proofs of knowledge. For different combinations
    of security properties of the underlying sigma-protocol
    (statistical (stat)/perfect (perf)/computational (comp)), is there an attack in the
    quantum setting (relative to an oracle)? Or do we get a
    statistically/computationally secure proof/proof of knowledge (PoK)? The
    superscripts refer to theorem numbers in this paper or to
    literature references. Note that in all cases, classically we have
    at least computational security.}
  \label{fig:overview}
\end{figure*}
\autoref{fig:overview} gives an overview of the results relating to proofs of knowledge.
Our main result are the separations listed in the bullet points
above. Towards that goal, we additionally develop two tools that
may be of independent interest in quantum cryptographic proofs:
\begin{compactitem}
\item  \autoref{sec:pickone}: We develop the ``pick-one'' trick, a technique for providing the
  adversary with the ability to compute a value with a certain
  property, but not two of them. (See ``our technique'' below.) This
  technique and the matching lower bound on the adversary's query
  complexity may be useful for developing further oracle separations
  between quantum and classical security. (At least it gives rise to
  all the separations listed above.)
\item \fullonly{\autoref{sec:opsi}:} We show \shortonly{(in the full version) }how to create an oracle that allows us to create
  arbitrarily many copies of a given state $\ket\Psi$, but that is not
  more powerful than having many copies of $\ket\Psi$, even if queried
  in superposition.  Again, this might be useful for other oracle
  separations, too.  (The construction of $\Opsi$ in
  \autoref{sec:pickone} is an example for this.)
\end{compactitem}

\paragraph{Related work.} 
Van der Graaf~\cite{Graaf:1998:Towards} first noticed that security definitions
based on rewinding might be problematic in the quantum
setting. Watrous~\cite{watrous-qzk} showed how the problems with quantum
rewinding can be solved for a large class of zero-knowledge
proofs. Unruh~\cite{qpok} gave similar results for proofs of knowledge;
however he introduced the additional condition ``strict soundness''
and they did not cover the computational case (arguments and arguments
of knowledge). Our work (the results on sigma-protocols,
\autoref{sec:attack.sigma}) shows that these restrictions are not
accidental: both strict soundness and statistical security are
required for the result from \cite{qpok} to hold.
Protocols that are
secure classically but insecure in the quantum setting were constructed before:
\cite{zhandry12random} presented classically secure pseudorandom functions that become insecure 
when the adversary is not only quantum, but can also \emph{query the
  pseudorandom function in superposition}. Similarly for secret sharing schemes
\cite{damgaard13superposition} and one-time MACs \cite{boneh13quantummac}.
But, in all of these cases, the negative results are shown for the case when the adversary
is allowed to interact with the honest parties in superposition. Thus, the cryptographic protocol is different in the classical case and the quantum case.
In contrast, we keep the protocols the same, with only classical
communication and only change adversary's internal power (by allowing
it to be a polynomial-time quantum computer which may access quantum oracles).
We believe that this is the first such separation.
Boneh, Dagdelen, Fischlin, Lehmann, Schaffner, and Zhandry
\cite{boneh11quantumro} first showed how to correctly define the
random oracle in the quantum setting (namely, the adversary has to
have superposition access to it).  For the Fiat-Shamir construction
(using random oracles as modeled by \cite{boneh11quantumro}),
an impossibility result was given by Dagdelen, Fischlin, and 
Gagliardoni~\cite{dagdelen13fiat}. However,
their impossibility only shows that security of Fiat-Shamir cannot be
shown using extractors that do not perform quantum rewinding;\footnote{\label{revfootnote:rewind}They do
  allow extractors that restart the adversary with the same classical
  randomness from the very beginning. But due to the randomness
  inherent in quantum measurements, the adversary will then not
  necessarily reach the same state again. They also do not allow the
  extractor to use a purified (i.e., unitary) adversary to avoid
  measurements that introduce randomness.} but such quantum rewinding
is possible and used in the existing positive results from
\cite{watrous-qzk,qpok} which would also not work in a model without
quantum rewinding.
A\label{revpage:oblivious} \emph{variant of} Fiat-Shamir has been shown
  to be a quantum secure signature scheme \cite{dagdelen13fiat}.
  Probably their scheme can also be shown to be a quantum
  zero-knowledge proof of knowledge.\footnote{The unforgeability proof
    from \cite{dagdelen13fiat} is already almost a proof of the proof
    of knowledge property. And the techniques from \cite{qro-nizk} can
    probably be applied to show that the protocol form \cite{dagdelen13fiat} is
    zero-knowledge.}  However, their construction assumes
  sigma-protocols with ``oblivious commitments''.  These are a much
  stronger assumption that usual sigma-protocols: as shown in
  \cite[Appendix A]{qro-nizk}, sigma-protocols with oblivious
  commitments are by themselves already non-interactive zero-knowledge
  proofs in the CRS model (albeit single-theorem, non-adaptive ones). 
  \cite{qro-nizk} presents a non-interactive quantum zero-knowledge
  proof of knowledge in the random oracle model, based on arbitrary sigma-protocols
  (it does not even need strict soundness). That protocol uses
  ideas different from both Fiat-Shamir and Fischlin's scheme to avoid
  rewinding.

  It was known for a long time that it is difficult to use classical
  definitions for computational binding in the quantum setting
  ({\shortonly{\expandafter\def\csname cite@adjust\endcsname{\hspace*{.5pt}}}\cite{dumais00qbc}} is the first reference we are aware of), but
  none showed so far that the computational definition was truly
  insufficient.

\paragraph{Our technique.} The schemes we analyze are all based on
sigma-protocols which have the \emph{special soundness} property:
In a proof of a statement~$s$, 
given two accepting conversations $(\com,\ch,\resp)$ and
$(\com,\ch',\resp')$, one can efficiently extract a witness for~$s$. (The
\emph{commitment}\index{commitment} $\com$ and the
\emph{response}\index{response} $\resp$ are sent by the prover, and
the \emph{challenge}\index{challenge} $\ch$ by the verifier.)  In the
classical case, we can ensure that the prover cannot produce one
accepting conversation without having enough information to produce
two. This is typically proven by rewinding the prover to get two
conversations. So in order to break the schemes in the quantum case,
we need to give the prover some information that allows him to succeed
in one interaction, but not in two.

To do so, we use the following trick (we call it the \emph{pick-one
  trick}):%
\index{pick-one trick} Let $S$ be a set of values (e.g., accepting
conversations). Give the quantum state
$\ket{\Psi}:=\frac1{\sqrt{\abs{S}}}\sum_{x\in S}\ket x$ to the
adversary. Now the adversary can get a random $x\in S$ by measuring
$\ket{\Psi}$. However, on its own that is not more useful than just
providing a random $x\in S$. So in addition, we provide an oracle that
applies the unitary $\OF$ with $\OF\ket{\Psi}=-\ket{\Psi}$ and
$\OF\ket{\Psi^\bot}=\ket{\Psi^\bot}$ for all $\ket{\Psi^\bot}$
orthogonal to $\ket{\Psi}$. Now the adversary can use (a variant of)
Grover's search starting with state $\ket{\Psi}$ to find some $x\in S$
that satisfies a predicate $P(x)$ of his choosing, as long as
$\abs{S}/\abs{\{x\in S:P(x)\}}$ is polynomially bounded. Note however: once the
adversary did this, $\ket{\Psi}$ is gone, he cannot get a second
$x\in S$.

How do we use that to break proofs of knowledge? The simplest case is
attacking the sigma-protocol itself.  Assume the challenge space is
polynomial. (I.e., $\abs\ch$ is logarithmic.) Fix a commitment $\com$,
and let $S$ be the set of all $(\ch,\resp)$ that form an accepting
conversation with $\com$. Give $\com$ and $\ket{\Psi}$ to the malicious
prover.  (Actually, in the full proof we provide an oracle $\Opsi$
that allows us to get $\ket{\Psi}$ for a random $\com$.)  He sends
$\com$ and receives a challenge $\ch'$. And using the pick-one trick, he
gets $(\ch,\resp)\in S$ such that $\ch=\ch'$. Thus sending $\resp$
will make the verifier accept.

This in itself does not constitute a break of the protocol. A
malicious prover is allowed to make the verifier accept, as long as he
knows a witness. Thus we need to show that even given $\ket{\Psi}$ and
$\OF$, it is hard to compute a witness. Given two accepting
conversations $(\com,\ch,\resp)$ and $(\com,\ch',\resp')$ we can
compute a witness. So we need that given $\ket{\Psi}$ and $\OF$, it is
hard to find two different $x,x'\in S$.  We show this below
(under certain assumptions on the size of $S$, see
\autoref{theo:pickone.sound}, \autoref{coro:pick.one.sound-oall}). Thus the
sigma-protocol is indeed broken: the malicious prover can make the
verifier accept using information that does not allow
him to compute a witness. (The full counterexample will need
additional oracles, e.g., for membership test in $S$ etc.)
Counterexamples for the other constructions (Fiat-Shamir, Fischlin,
etc.) are constructed similarly. We stress that this does not contradict the security of
sigma-protocols with strict soundness \cite{qpok}. Strict soundness
implies that there is only one response per challenge. Then $\abs S$
is polynomial and it becomes possible to extract two accepting
conversations from $\ket{\Psi}$ and $\OF$.

The main technical challenge is to prove that given $\ket{\Psi}$ and
$\OF$, it is hard to find two different $x,x'\in
S$.
This is done using the representation-theoretic form of ``quantum adversary" lower bound method for quantum algorithms \cite{Ambainis10,AMRR}. The method is based on viewing a quantum algorithm as a sequence of transformations on a bipartite quantum system that consists of two registers: one register $\calH_A$ that contains the algorithm's quantum state and another register $\calH_I$ that contains the information which triples $(com, ch, resp)$ belong to $S$. The algorithm's purpose is to obtain two elements $x_1, x_2\in S$ using only a limited type of interactions betweeen $\calH_A$ and $\calH_I$.
(From a practical perspective, a quantum register $\calH_I$ holding the membership
information about $S$ would be huge.
However, we do not propose to implement such a register. Rather, we use it as a tool
to prove a lower bound which then implies a corresponding lower bound in the usual model where $S$ is accessed via oracles.)

We then partition the state-space of $\calH_I$ into subspaces corresponding to group representations of the symmetry group of $\calH_I$ (the set of
all permutations of triples $(com, ch, resp)$ that satisfy some natural requirements).
Informally, these subspaces correspond to possible states of algorithm's knowledge
about the input data: having no information about any $s\in S$, knowing one value $x\in S$, knowing two values $x_1, x_2\in S$ and so on.

The initial state in which the algorithm has $\ket{\Psi}$ corresponds to $\calH_I$ being in the state ``the algorithm knows one $x\in S$". (This is very natural because measuring $\ket{\Psi}$ gives one value $x\in S$ and there is no way to obtain two values $x\in S$ from this state with a non-negligible probability.) We then show that each application of the available oracles (such as $O_F$ and the membership test for $S$) can only move a tiny part of the state in $\calH_I$ from the ``the algorithm knows one $x\in S$" subspace of $\calH_I$ to the ``the algorithm knows two $x\in S$" subspace. Therefore, to obtain two values $x_1, x_2\in S$, we need to apply the available oracles
a large number of times.

While the main idea is quite simple, implementing it requires a sophisticated analysis of the representations of the symmetry group of $\calH_I$ and how they evolves when
the oracles are applied.

Actually, below we prove an even stronger result: We do not wish to
give the state $\ket{\Psi}$ as input to the adversary. (Because that
would mean that the attack only works with an input that is not
efficiently computable, even in our relativized model.) Thus, instead,
we provide an oracle $\Opsi$ for efficiently constructing this
state. But then, since the oracle can be invoked arbitrarily many
times, the adversary could create two copies of $\ket{\Psi}$, thus
easily obtaining two $x,x'\in S$! Instead, we provide an oracle
$\Opsi$ that provides a state $\ketpsi$ which is a superposition of
many $\ket\Psi=\ketpsiy y$ for independently chosen sets
$S_y$. Now the adversary can produce $\ketpsi$ and using a measurement
of $y$, get many states $\ketpsiy y$ for random $y$'s, but no two
states  $\ketpsiy y$ for the same $y$. Taking these additional
capabilities into account complicates the  proof further, as does the
presence of additional oracles that are needed, e.g., to construct the
prover (who does need to be able to get several $x\in S$).

\paragraph{On the meaning of oracle separations.}\pagelabel{revpage:oracles}
  At this point, we should say a few words about
  what it implies that our impossibility results are relative to a
  certain oracle.  Certainly, our results do not necessarily imply
  that the investigated schemes are insecure or unprovable in the
  ``real world'', i.e., without oracles. However, our results give a
  number of valuable insights.  Foremost, they tell us which proof
  techniques cannot be used for showing security of those schemes:
  only non-relativizing proofs can work. This cuts down the search
  space for proofs considerable. Also, it shows that security proofs
  would need new techniques; the proof techniques from
  \cite{watrous-qzk,qpok} at least are relativizing. And even
  non-relativizing proof techniques such as (in the classical setting)
  \cite{barak01bb} tend to use specially designed (and more
  complicated) protocols than their relativizing counterparts, so our
  results might give some evidence that the specific protocols we
  investigate here have no proofs at all, whether relativizing or
  non-relativizing.  Furthermore, oracle-based impossibilities can
  give ideas for non-oracle-based impossibilities. If we can find
  computational problems that exhibit similar properties as our
  oracles, we might get analogous impossibilities without resorting to
  oracles (using computational assumptions instead).\footnote{For
    example, \cite{aaronson12money} 
    presents a construction that might
    allow to implement an analogue to the oracle $\OF$. Essentially,
    if the set $S$ (called $A$ in \cite{aaronson12money}) is a linear
    code, then they give a candidate for how to obfuscate $\OF$
    (called $V_A$ in \cite{aaronson12money}) such that one can apply
    $\OF$ but does not learn $A$.  Of course, this does not give us a
    candidate for how to construct the other oracles needed in this
    work, but it shows that the idea of actually replacing our custom
    made oracles by computational assumptions may not be far fetched.
  } However, we should stress that even if we get rid of the oracles,
  our results do not state that \emph{all} sigma-protocols lead to
  insecure schemes. It would not be excluded that, e.g., the
  graph-isomorphism sigma-protocol \cite{Goldreich:1991:Proofs} is
  still a proof of knowledge. What our approach aims to show is the impossibility of
  \emph{general} constructions that are secure for \emph{all}
  sigma-protocols.

  Finally, we mention one point that is important in general when
  designing oracle separations in the quantum world: even relative to
  an oracle, the structural properties of quantum circuits should not
  change. For example, any quantum algorithm (even one that involves intermediate 
  measurements or other non-unitary operations)
  can be replaced by a
  unitary quantum circuit, and that unitary circuit can be reversed. If we choose
  oracles that are not reversible, then
  we lose this property. (E.g.,
  oracles that perform measurements 
  or that perform random choices are non-reversible.)
  So an impossibility result based on such
  oracles would only apply in a world where quantum circuits are not
  reversible. Thus for meaningful oracle separations, we need to ensure
  that: (a) all oracles are unitary, and (b) all oracles have
  inverses. This makes some of the definitions of oracles in our work
  (\autoref{def:ora.dist}) more involved than would be necessary if we
  had used non-unitary oracles.

\fullonly{}

\begin{shortversion}
  \paragraph{Organization.} 
  \autoref{sec:sec.def} introduces security definitions.
  \autoref{sec:pickone} develops the pick-one trick. \autoref{sec:com}
  shows the insecurity of computationally binding commitments,
  \autoref{sec:attack.sigma} that of sigma-protocols,
  \autoref{sec:fiat} that of the Fiat-Shamir construction, and
  \autoref{sec:fischlin} that of Fischlin's construction.  Additional
  details and full proofs are given in the full version~\cite{qpok-imposs-fv}.
\end{shortversion}

\begin{fullversion}

\section{Preliminaries}
\label{sec:preliminaries}

\paragraph{Security parameter.} As usual in cryptography, we assume
that all algorithms are parametric in a \emph{security parameter}
$\secpar$\symbolindexmark{\secpar}%
\index{security parameter}. Furthermore, parameters of said algorithms
can also implicitly depend on the security parameter. E.g., if we say
``Let $\ell$ be a superlogarithmic integer. Then $A(\ell)$ runs in
polynomial time.'', then this formally means ``Let $\ell$ be a
superlogarithmic function. Then the running time of
$A(\secpar,\ell(\secpar))$ is a polynomially-bounded function of~$\secpar$.''

\paragraph{Misc.} $x\otR M$\symbolindexmark{\otR} means that $x$ is
uniformly randomly chosen from the set $M$.
$x\ot A(y)$\symbolindexmark{\ot} means that $x$ is assigned the
classical output of the (usually probabilistic or quantum) algorithm
$A$ on input $y$.

{%
\paragraph{Quantum mechanics.} For space reasons, we cannot give an
introduction to the mathematics of quantum mechanics used here. We
refer the reader to, e.g., \cite{nielsenchuang-10year}. A
\emph{quantum state}%
\index{quantum state}\index{state!quantum} is a vector of norm~$1$ in
a Hilbert space, written $\ket\Psi$.\symbolindexmark{\ket} Then
$\bra\Psi$\symbolindexmark{\bra} is its
dual. $\TD(\rho,\rho')$\symbolindexmark{\TD} denotes the \emph{trace
  distance}%
\index{trace distance}%
\index{distance!trace} between mixed states $\rho,\rho'$. We write
short $\TD(\ket{\Psi},\ket{\Psi'})$ for
$\TD(\selfbutter\Psi,\selfbutter{\Psi'})$.
$\SD(X;Y)$\symbolindexmark{\SD} in contrast is the \emph{statistical
  distance}%
\index{statistical distance}%
\index{distance!statistical} between random variables $X$ and $Y$.
}

\paragraph{Oracles.} We make heavy use of oracles in this
paper. Formally, an oracle $\calO$\symbolindexmark{\calO} is a unitary
transformation on some Hilbert space $\calH$. An oracle algorithm $A$ with
access to $\calO$ (written $A^{\calO}$) is then a quantum algorithm
which has a special gate for applying the unitary $\calO$.  $\calO$
may depend on the security parameter.  $\calO$ may be probabilistic in
the sense that at the beginning of the execution, the unitary $\calO$
is chosen according to some distribution
(like the random oracle in cryptography).
However, $\calO$ may not be
probabilistic in the sense that $\calO$, when queried on the same
value twice, gives two different random answers (like an encryption
oracle for a probabilistic encryption scheme would). Such a behavior
would be difficult to define formally when allowing queries to $\calO$
in superposition. When defining $\calO$, we use the shorthand
$\calO(x):=f(x)$ to denote
$\calO\ket{x,y}:=\calO\ket{x,y\oplus f(x)}$. We call an oracle of this
form classical. Our classical algorithms will only access oracles of
this form. We stress that even for a classical oracle $\calO$, a
quantum algorithm can query $\calO(x)$ in superposition of different $x$.
We often give access to several oracles $(\calO_1,\calO_2,\dots)$ to
an algorithm. This can be seen as a specific case of access to a
single oracle by setting
$\calO\ket i\ket\Psi:=\ket i\otimes\calO_i\ket\Psi$.

In our setting, oracles are used to denote a relativised world in
which those oracles happen to be efficiently computable. If a unitary
$U$ is implemented by an efficient quantum circuit, $U^\dagger$ can
also be implemented by an efficient quantum circuit. We would expect
this also to hold in a relativised setting. Thus for any oracle
$\calO$, algorithms should have access to their inverses, too. In our
work this is ensured because all oracles defined here are self-inverse
($\calO=\calO^\dagger$).
\end{fullversion}

\fullshort\subsection\section{Security definitions}
\label{sec:sec.def}

A \emph{sigma-protocol}%
\index{sigma-protocol} for a relation $R$ is a three message proof
system. It is described by the lengths $\ellcom,\ellch,\ellresp$%
\symbolindexmark{\ellcom}\symbolindexmark{\ellch}\symbolindexmark{\ellresp}
of the messages, a polynomial-time prover $(P_1,P_2)$ and a
polynomial-time verifier $V$. The first
message from the prover is $\com\ot P_1(s,w)$\symbolindexmark{\com} with $(s,w)\in R$ and is called
\emph{commitment}%
\index{commitment!(in sigma-protocol)}, the uniformly random reply
from the verifier is $\ch\otR\bits\ellch$\symbolindexmark{\ch} (called \emph{challenge}%
\index{challenge!(in sigma-protocol)}), and the prover answers with
$\resp\ot P_2(\ch)$\symbolindexmark{\resp} (the \emph{response}%
\index{response!(in sigma-protocol)}). We assume $P_1,P_2$ to share state. Finally
$V(s,\com,\ch,\resp)$ outputs whether the verifier accepts.

We will make use of the following standard properties of
sigma-protocols. Note that we have chosen to make the definition
stronger by requiring honest entities (simulator, extractor) to be
classical while we allow the adversary to be quantum.
\begin{definition}[Properties of sigma-protocols]\label{def:sigma.props} Let
  $(\ellcom,\ellch,\ellresp,P_1,P_2,V,R)$ be a sigma-protocol. We define:
  \begin{compactitem}
  \item \textbf{Completeness:}\index{completeness!(of sigma-protocol)}
    For all $(s,w)\in R$,
    $\Pr[\ok=0:\com\ot P_1(s,w),
    \ch\otR\bits\ellch, \resp\ot P_2(\ch), \ok\ot V(s,\com,\ch,\resp)]$
    is negligible.

    \fullonly{(Intuitively: an honestly generated proof succeeds for
      overwhelming probability.)}
  \item \textbf{Perfect special soundness:}%
    \index{special soundness!(of sigma protocol)}%
    \index{soundness!special (of sigma protocol)}
    There is a polynomial-time classical algorithm
    $\ESigma$\symbolindexmark{\ESigma} (the extractor) such that for any
    $(s,\com,\ch,\resp,\ch',\resp')$ with $\ch\neq\ch'$,
    we have that $\Pr[(s,w)\notin R
    \land\ok=\ok'=1:
    \ok\ot V(s,\com,\ch,\resp),\ok'\ot
    V(s,\com,\ch',\resp'),
    w\ot \ESigma(s,\com,\ch,\resp,\ch',\resp')]=0$.

    \fullonly{(Intuitively: given two valid interactions with the same
      commitment, one can efficiently extract a witness.)}
  \item \textbf{Computational special soundness:}%
    \index{computational special soundness!(of sigma-protocol)}%
    \index{special soundness!computational (of sigma-protocol)}%
    \index{soundness!computational special (of sigma-protocol)}
    There is a polynomial-time classical algorithm
    $\ESigma$ (the extractor)  such that for any
    polynomial-time quantum algorithm~$A$ (the adversary), we have that $\Pr[(s,w)\notin R\land\ch\neq\ch'\land\ok=\ok'=1:\penalty0
    (s,\com,\ch,\resp,\ch',\resp')\ot A,\penalty0 \ok\ot V(s,\com,\ch,\resp),\penalty0 \ok'\ot
    V(s,\com,\ch',\resp'),\penalty0 w\ot \ESigma(s,\com,\ch,\resp,\ch',\resp')]$ is negligible.

    \fullonly{(Intuitively: given two valid interactions with the same
      commitment chosen by a polynomial-time adversary, one can
      efficiently extract a witness with overwhelming probability.)}
  \item \textbf{Statistical honest-verifier zero-knowledge (HVZK):}%
    \index{honest-verifier zero-knowledge!(of sigma-protocol)}%
    \index{zero-knowledge!honest-verifier (of sigma-protocol)}%
    \index{HVZK|see{honest-verifier zero-knowledge}}\footnote{In the
      context of this
      paper, HVZK is equivalent to zero-knowledge because our
      protocols have logarithmic challenge length $\ellch$ \cite{watrous-qzk}.}
    There
    is a polynomial-time classical algorithm $\SSigma$\symbolindexmark{\SSigma} (the simulator) such that for
    any (possibly unlimited) quantum algorithm $A$ and all $(s,w)\in
    R$, the following is negligible:
    \begin{alignat*}2
      \babs{
        &\Pr[b=1: \com\ot P_1(s,w), \ch\otR\bits\ellch, 
        \shortonly{\\&\qquad\quad}
        \resp\ot P_2(\ch),
        b\ot A(\com,\ch,\resp)]\\
        -
        &\Pr[b=1:  (\com,\ch,\resp)\ot S(s), 
        \shortonly{\\&\qquad\quad}
        b\ot A(\com,\ch,\resp)]
      }
    \end{alignat*}

    \fullonly{(Intuitively: An interaction between honest verifier and honest
      prover can be simulated in polynomial-time without knowing the
      witness.)}
  \item \textbf{Strict soundness:}%
    \index{strict soundness}%
    \index{soundness!strict}
    For any $(s,\com,\ch)$ and any $\resp\neq\resp'$ we have
    $\Pr[\ok=\ok'=1:\ok\ot V(s,\com,\ch,\resp),\ok'\ot V(s,\com,\ch,\resp')]=0$.

    \fullonly{(Intuitively: Given the commitment and the challenge, there is at most one possible accepted response.)}
  \item \textbf{Computational strict soundness:}%
    \index{computational strict soundness}%
    \index{strict soundness!computational}%
    \index{soundness!computational strict}\footnote{Also known as
      \emph{unique responses}%
      \index{unique responses}%
      \index{responses!unique} in \cite{fischlin05online}.}
    For any polynomial-time quantum algorithm $A$ (the adversary), we have that
    $\Pr[\ok=\ok'=1\land\resp\neq\resp':
    (s,\com,\ch,\resp,\resp')\ot A,
    \ok\ot V(s,\com,\ch,\resp),\ok'\ot V(s,\com,\ch,\resp')]$ is negligible.

    \fullonly{(Intuitively: Given the commitment and the challenge, it is
      computationally hard to find more than one accepting response.)}
  \item \textbf{Commitment entropy:}%
    \index{commitment entropy}%
    \index{entropy!commitment}
    For all $(s,w)\in R$ and $\com\ot P_1(s,w)$, the min-entropy of
    $\com$ is superlogarithmic.

        \fullonly{(Intuitively: the commitment produced by the prover cannot be guessed with more than negligible probability.)}
  \end{compactitem}
  In a relativized setting, all quantum algorithms additionally get
  access to all oracles, and all classical algorithms additionally get
  access to all classical oracles.
  \tqed
\end{definition}

In this paper, we will mainly be concerned with proving that certain
schemes are \emph{not} proofs of knowledge. Therefore, we
will not need to have precise definitions of these concepts; we only
need to know what it means to break them.

\begin{definition}[Total breaks]\label{def:total.break}
  Consider an interactive or non-interactive proof system $(P,V)$ for
  a relation $R$. 
  Let $\LR:=\{s:\exists w.(s,w)\in R\}$\symbolindexmark{\LR} be the language defined by $R$.
  A \emph{total break}%
  \index{total break}%
  \index{break!total}
  is a 
  polynomial-time quantum algorithm $A$ such that the following
  probability is overwhelming:
  \[
  \Pr[\ok=1\ \land\ s\notin L_R:
  s\ot A,
  \ok\ot\mpair A{V(s)}
  ]
  \]
  Here $\mpair A{V(s)}$\symbolindexmark{\mpair} denotes the output of $V$ in an interaction
  between $A$ and $V(s)$.
  (Intuitively, the adversary performs a total break if the adversary manages with overwhelming probability
    to convince the verifier $V$ of a statement $s$ that is not in the language $L_R$.)

  A \emph{total knowledge break}%
  \index{total knowledge break}%
  \index{total break!knowledge}
  is a polynomial-time quantum algorithm $A$ such that for all
  polynomial-time quantum algorithms $E$ we have that:
  \begin{compactitem}
  \item Adversary success:%
    \index{adversary success} $\Pr[\ok=1:s\ot A,\ok\ot\mpair A{V(s)}]$ is overwhelming.
  \item Extractor failure:%
    \index{extractor failure} $\Pr[(s,w)\in R:s\ot A,w\ot E(s)]$ is negligible.
  \end{compactitem}
  Here $E$ has access to the final state of $A$.
  (Intuitively, the adversary performs a total knowledge break if the adversary manages with overwhelming probability
    to convince the verifier $V$ of a statement $s$, but the extractor $E$ cannot extract a witness $w$ for that statement.)

  When applied to a proof system relative to an oracle $\calO$,
    both $A$ and $E$ get access to~$\calO$. In settings where $R$ and
    $\calO$ are probabilistic, the probabilities are averaged over all
    values of $R$ and $\calO$. \tqed
\end{definition}

Note that these definitions of attacks are quite strong. In
particular, $A$ does not get any auxiliary state.  And $A$ needs to
succeed with overwhelming probability and make the extraction fail
with overwhelming probability. (Usually, proofs / proofs of knowledge
are considered broken already when the adversary has non-negligible
success probability.) Furthermore, we require $A$ to be
polynomial-time.

In particular, a total break implies that a proof system is neither
a proof nor an argument. And total knowledge break implies that it is
neither a proof of knowledge nor an argument of knowledge, with
respect to all definitions the authors are aware
of.\footnote{Definitions that would not be covered would be such
  where the extractor gets additional auxiliary input not available to
  the adversary. We are, however, not aware of such in the literature.}

\begin{fullversion}
  
\section{State creation oracles}
\label{sec:opsi}

We first show a result that shows that having access to an oracle
$\Opsi$ for creating copies of an unknown state $\ket\Psi$ is not more
powerful than having access to a \emph{reservoir state}%
\index{reservoir state} $\ket R$ of polynomially-many copies of~$\ket\Psi$ (some
of them in superposition with a fixed state $\ket\bot$). 
(Such an oracle is, in our setting, implemented as
  $\Opsi\ket\Psi=\ket\bot$,
  $\Opsi\ket\bot=\ket\Psi$, and is the identity on states orthogonal to $\ket\bot,\ket\Psi$.)
We\pagelabel{revpage:reservoir} will need this later, because it allows us to assume in our proofs 
that the adversary has access to such a reservoir state instead 
of access to the oracle $\Opsi$. It turns out to be much easier to 
show that those reservoir states do not help the adversary in solving the 
Two Values problem than it is to deal directly with $\Opsi$ in the proof.

Note that
the fact that $\Opsi$ is no more powerful than $\ket R$ is not immediate: 
 $\Opsi$ can be queried in
superposition, and its inverse applied; this might give more power
than copies of the state $\ket\Psi$. In fact, we know of no way to generate,
e.g., $\fsq\ket\Psi+\fsq\ket\bot$ for a given (known) state $\ket\bot$ and 
unknown~$\ket\Psi$, even given many copies of $\ket\Psi$ (unless we have
enough copies of $\ket\Psi$ to determine a complete description of
$\ket\Psi$ by measuring). Yet $\fsq\ket\Psi+\fsq\ket\bot$ can be
generated with a single query to $\Opsi$.\footnote{For example, one can initialize 
a register with $\fsq\ket\bot+\fsq\ket0$ where $\ket0$ is any fixed state guaranteed 
to be (almost) orthogonal to $\ket\bot$ and $\ket\Psi$.
Applying $\Opsi$ yields $\fsq\ket\Psi+\fsq\ket0$.
Finally, by applying the fixed (and thus known) 
unitary $U:\ket0\mapsto\ket\bot$, 
we get $\fsq\ket\Psi+\fsq\ket\bot$.}
 This is why our reservoir
$\ket R$ has to contain such superpositions in addition to pure states $\ket\Psi$.
\begin{theorem}[Emulating state creation oracles]\label{theo:emulate.opsi}
  Let $\ket\Psi$ be a state, chosen according to some
  distribution. Let $\ketbot$ be a fixed state orthogonal to
  $\ket\Psi$. (Such a state can always be found by extending the
  dimension of the Hilbert space containing $\ket\Psi$ and using the
  new basis state as $\ketbot$.)
  Let $\Opsi$ be an oracle with $\Opsi\ket\Psi=\ket\bot$,
  $\Opsi\ket\bot=\ket\Psi$, and $\Opsi\ket{\Psi^\bot}=\ket{\Psi^\bot}$
  for any $\ket{\Psi^\bot}$ orthogonal to both $\ket\Psi$ and
  $\ket\bot$. 
  Let $\calO$ be an
  oracle, not necessarily independent of $\ket\Psi$. Let $\ket\Phi$ be
  a quantum state, not necessarily independent of $\ket\Psi$. Let
  $n,m\geq 0$ be integers. Let
  $\ket R:=\ket\Psi^{\otimes m}\otimes\ket{\alpha_1}\otimes\dots\otimes \ket{\alpha_n}$ where
  $\ket{\alpha_j}:=(\cos\frac{j\pi}{2n})\ket\Psi+(\sin\frac{j\pi}{2n})\ketbot$.

  Let $A$ be an oracle algorithm that makes $q_\Psi$ queries to
  $\Opsi$.
  Then there is an oracle algorithm $B$ that
  makes the same number of queries to $\calO$ as $A$ such that:
  \begin{multline*}
    \TD\bigl(
      B^{\calO}(\ket R,\ket\Phi)
      ,
      A^{\Opsi,\calO}(\ket\Phi)
      \bigr) \\
    \leq   \quad
     \frac{\pi q_\Psi}{2\sqrt n} + q_\Psi\,o(\tfrac1{\sqrt n})
    +
    \frac{2q_\Psi}{\sqrt{m+1}}
    \quad\leq\quad
    O\Bigl(\frac{q_\Psi}{\sqrt n}+\frac{q_\Psi}{\sqrt m}\Bigr).
  \end{multline*}
\end{theorem}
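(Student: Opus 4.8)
The plan is to simulate each query to $\Opsi$ using the reservoir $\ket R$, charging a small trace-distance error per query and summing. The reservoir $\ket R$ has two kinds of components: $m$ fresh copies of $\ket\Psi$, and a "staircase" $\ket{\alpha_1},\dots,\ket{\alpha_n}$ of states interpolating between $\ket\Psi$ and $\ketbot$. The point of the staircase is precisely the footnote's observation: a single $\Opsi$-query can produce superpositions like $\fsq\ket\Psi+\fsq\ketbot$, which one cannot build from copies of $\ket\Psi$ alone; the $\ket{\alpha_j}$'s give us a supply of such superpositions at all "angles."

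First I would reduce to the case where $A$ is unitary (purify all measurements and postpone them to the end), so $A^{\Opsi,\calO}$ is an alternating product of fixed unitaries, $\calO$-queries, and $\Opsi$-queries acting on $\ket\Phi$ plus workspace. The $\calO$-queries and fixed unitaries are reproduced verbatim by $B$. For the $\Opsi$-queries, the key sub-claim is: there is a fixed unitary $W$ acting on the algorithm's query register together with \emph{one} reservoir component such that $W$ approximates the action of $\Opsi$ to within $O(1/\sqrt n)$ (or $O(1/\sqrt m)$) in operator norm on the relevant subspace. Concretely, $\Opsi$ swaps the coefficients of $\ket\Psi$ and $\ketbot$ in the query register; given a fresh copy of $\ket\Psi$ we can swap it into the register (error related to how much the register already "used up" $\ket\Psi$), and given the staircase we can use the $\ket{\alpha_j}$ to rotate a $\ketbot$-component partway toward $\ket\Psi$ — each step $\ket{\alpha_{j-1}}\to\ket{\alpha_j}$ costs roughly $\sin\frac{\pi}{2n}=O(1/n)$ in fidelity, and we chain at most $O(n)$ of them, but because we only ever need to traverse the staircase a bounded total distance per query the accumulated error per query is $\frac{\pi}{2\sqrt n}+o(1/\sqrt n)$. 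The $\frac{2}{\sqrt{m+1}}$ term accounts for the fact that after consuming copies of $\ket\Psi$ the remaining register is no longer exactly in the intended state; this is a standard "gentle measurement"/overlap argument bounding how much $q_\Psi$ successive extractions of $\ket\Psi$ from $\ket\Psi^{\otimes m}$ can disturb the state, giving $O(q_\Psi/\sqrt m)$.

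The main obstacle, and where the real work lies, is the per-query error analysis for the staircase: one must argue that after $q_\Psi$ queries the total "motion" along the $\ket{\alpha_j}$ chain is controlled, so that the hybrid argument telescopes to $\frac{\pi q_\Psi}{2\sqrt n}+q_\Psi\,o(1/\sqrt n)$ rather than something like $q_\Psi\cdot n\cdot(1/n)=q_\Psi$. The trick is that each $\Opsi$-query needs the staircase only to implement a \emph{single} rotation $\ketbot\mapsto\ket\Psi$ (in superposition), and this is achievable with error $O(1/\sqrt n)$ — not $O(1/n)\cdot n$ — by an appropriate averaging/amplitude argument over which $\ket{\alpha_j}$ is consumed, or equivalently by the standard bound that discretizing a rotation by angle $\pi/2$ into $n$ equal steps and "teleporting" through ancillas $\ket{\alpha_j}$ incurs error $O(1/\sqrt n)$ in the worst case. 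I would then assemble the hybrids: define $B$ to run $A$'s circuit with the $i$-th $\Opsi$-query replaced by $W_i$ acting on the $i$-th reservoir component; a standard hybrid/triangle-inequality argument over the $q_\Psi$ queries, using unitary invariance of trace distance and the per-query bound, yields the stated inequality, and the final $O(q_\Psi/\sqrt n + q_\Psi/\sqrt m)$ bound follows by absorbing constants.
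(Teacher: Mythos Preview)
Your high-level architecture (purify $A$, replace each $\Opsi$-query by a reservoir gadget, sum per-query errors by a hybrid) matches the paper, but the mechanism you sketch is not the paper's and, as stated, does not work. The $m$ copies of $\ket\Psi$ are \emph{not} swapped into the query register: a bare swap fails because on an input $\ket{\Phi^\bot}$ orthogonal to both $\ket\Psi,\ketbot$ it would output $\ket\Psi$ rather than $\ket{\Phi^\bot}$. And the staircase is not ``traversed'' or ``consumed one $\ket{\alpha_j}$ per query''; your hand-wave about ``averaging over which $\ket{\alpha_j}$ is consumed'' or ``teleporting through ancillas'' is not where the $1/\sqrt n$ comes from.

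The missing idea is twofold. First, the staircase $\ket{\alpha_1}\otimes\dots\otimes\ket{\alpha_n}$ is an \emph{approximate fixed point} of the cyclic shift $S$: one computes $\bigl(S\,\ketbot\ket R\bigr)^\dagger\bigl(\ket\Psi\ket R\bigr)=(\cos\frac\pi{2n})^n$, so $\bnorm{S\ketbot\ket R-\ket\Psi\ket R}=\frac\pi{2\sqrt n}+o(1/\sqrt n)$. The \emph{entire} staircase is used in every query and emerges (approximately) unchanged, which is why errors merely add. Second, the circuit that builds $\Opsi$ from controlled-$S$, controlled-$S^\dagger$, and controlled-$(I-2\selfbutter\bot)$ still needs one call to the reflection $\ORef:=I-2\selfbutter\Psi$ to separate the $\ket\Psi$-branch from the orthogonal branch; \emph{this} is what the $m$ copies are for. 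One emulates $\ORef$ by $I-2P_V$ where $V$ is the permutation-invariant subspace of $(m{+}1)$ registers: $\ket\Psi\ket\Psi^{\otimes m}\in V$ exactly, while for $\ket\Phi\perp\ket\Psi$ the overlap of $\ket\Phi\ket\Psi^{\otimes m}$ with $V$ is at most $1/\sqrt{m+1}$, yielding the $2/\sqrt{m+1}$ per-call error. The theorem is then the composition of these two emulations: $\Opsi\rightsquigarrow\ORef$ via the staircase (error $\frac{\pi q_\Psi}{2\sqrt n}+q_\Psi\,o(1/\sqrt n)$), then $\ORef\rightsquigarrow$ nothing via the symmetric-subspace test (error $\frac{2q_\Psi}{\sqrt{m+1}}$).
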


The idea behind this lemma is the following: To implement $\Opsi$, we
need a way to convert $\ketbot$ into $\ket\Psi$ and vice versa. At the
first glance this seems easy: If we have a reservoir $R$ containing
$\ket\Psi^{\otimes n}$ for sufficiently large $n$, we can just take a
new $\ket\Psi$ from $R$. And when we need to destroy $\ket\bot$, we
just move it into $R$. This, however, does not work because the
reservoir $R$ ``remembers'' whether we added or removed $\ket\Psi$
(because the number of $\ket\Psi$'s in $R$ changes). So if we apply
$\Opsi$ to, e.g., $\fsq\ket\Psi+\fsq\ket0$, the reservoir $R$
essentially acts like a measurement whether we applied $\Opsi$ to
$\ket\Psi$ or $\ket0$. 

To avoid this, we need a reservoir $R$ in a state that does not change
when we add $\ket\Psi$ or $\ket\bot$ to the reservoir. Such a state
would be
$\ket{R^\infty}:=\ket\Psi^{\otimes\infty}\otimes\ketbot^{\otimes\infty}$. If
we add or remove $\ket\Psi$ to an infinite state
$\ket\Psi^{\otimes\infty}$, that state will not change. Similarly for
$\ketbot$. (The reader may be worried here whether an infinite tensor
product is mathematically well-defined or physically meaningful. We do
not know, but the state $\ket{R^\infty}$ is only used for motivational
purposes, our final proof only uses finite tensor products.)

Thus we have a unitary operation $S$ such that
$S\ketbot\ket{R^\infty}=\ket\Psi\ket{R^\infty}$. Can we use this
operation to realize $\Opsi$? Indeed, an elementary calculation
reveals that the following circuit implements $\Opsi$ on $X$ when
$R,Z$ are initialized with $\ket{R^\infty},\ket0$.

\begin{align}
  \begin{gathered}\Qcircuit @R=.4em @C=.5em {
      \lstick{X}&\qw    &\gate{U_\bot}&\qw    &\multigate{1}{S}&\qw    &\gate{\ORef}&\qw    &\multigate{1}{S^\dagger}&\qw    &\gate{U_\bot}& \qw   &\qw\\
      \lstick{R}&\qw    &\qw          &\qw    &\ghost{S}       &\qw    &\qw         &\qw    &\ghost{S^\dagger}       &\qw    &\qw          & \qw   &\qw\\
      \lstick{Z}&\gate H&\ctrl{-2}    &\gate H&\ctrl{-1}       &\gate H&\ctrl{-2}   &\gate H&\ctrl{-1}               &\gate H&\ctrl{-2}    &\gate H&\qw
    }
  \end{gathered}
  \label{eq:opsi.circ}
  & 
  \\[5pt] 
  &\hskip-.7in
  \text{with }U_\bot:=1-2\selfbutter\bot\notag \\
  &\hskip-.7in
  \text{and }\ORef:=1-2\selfbutter\Psi
\end{align}

Note that we have introduced a new oracle $\ORef$ here. We will deal
with that oracle later.

Unfortunately, we cannot use $\ket{R^\infty}$. Even if such a state
should be mathematically well-defined, the algorithm $B$ cannot perform the
infinite shift needed to fit in one more $\ket\Psi$ into
$\ket{R^\infty}$. The question is, can $\ket{R^\infty}$ be
approximated with a finite state? I.e., is there a state $\ket R$ such
that $S\ketbot\ket R\approx \ket\Psi\ket R$ for a suitable $S$?
Indeed, such a state exists, namely the state $\ket R$ from
\autoref{lemma:emulate.opsi}. For sufficiently large $n$, the
beginning of $\ket R$ is approximately
$\ket\Psi\otimes\ket\Psi\otimes\ket\Psi\otimes\dots$, while the tail
of $\ket R$ is approximately
$\dots\otimes\ketbot\otimes\ketbot\otimes\ketbot$. In between, there
is a smooth transition. If $S$ adds $\ket\bot$ to the end and removes
$\ket\Psi$ from the beginning of $\ket R$, the state still has
approximately the same form (this needs to be made quantitative, of
course). That is, $S$ is a cyclic left-shift on $\ketbot\ket R$.

Hence $\ket R$ is a good approximate drop-in replacement for
$\ket{R^\infty}$, and the circuit \eqref{eq:opsi.circ} approximately realizes $\Opsi$ when
$R,Z$ are initialized with $\ket R,\ket 0$. 

However, we now have introduced the oracle $\ORef$.
We need to show how to emulate that oracle:
$\ORef$ essentially
implements a measurement whether a given state $\ket\Phi$ is
$\ket\Psi$ or orthogonal to $\ket\Psi$. Thus to implement $\ORef$, we
need a way to test whether a given state is $\ket\Psi$ or not. The
well-known swap test \cite{buhrman01finger} is not sufficient, because for $\ket\Phi$ orthogonal to
$\ket\Psi$, it gives an incorrect answer with probability $\frac12$
and destroys the state. Instead, we use the following test that has an
error probability $O(1/m)$ given $m$ copies of $\ket\Psi$ as
reference: Let $\ket T:=\ket\Psi^{\otimes m}$. Let $V$ be the space of
all $(m+1)$-partite states that are invariant under
permutations. $\ket\Psi\ket T$ is such a state, while for $\ket{\Phi}$
orthogonal to $\ket\Psi$, $\ket{\Phi}\ket T$ is almost orthogonal to
$V$ for large $m$ (up to an error of $O(1/m)$). So by measuring
whether $\ket\Phi\ket T$ is in $V$, we can test whether $\ket\Phi$ is
$\ket\Psi$ or not (with an error $O(1/m)$), and when doing so the
state $\ket T$ is only disturbed by $O(1/m)$. We can thus simulate any
algorithm that uses $\ORef$ up to any inversely polynomial precision
using a sufficiently large state $\ket T$. 

We then get \autoref{theo:emulate.opsi} by extending the state $\ket
R$ to also contain $\ket T$.

\medskip

Formally, the theorem is an immediate consequence of
\multiautoref{lemma:emulate.opsi,lemma:emulate.oref} 
\fullshort{in Appendix~\ref{app:lemmas:opsi}}{below}.

\delaytext{lemmas opsi}{
\begin{lemma}\label{lemma:emulate.opsi}
  Let $\ket\Psi$ be a state, chosen according to some
  distribution. Let $\ketbot$ be a fixed state orthogonal to
  $\ket\Psi$. (Such a state can always be found by extending the
  dimension of the Hilbert space containing $\ket\Psi$ and using the
  new basis state as $\ketbot$.)

  Let $\Opsi$ be an oracle with $\Opsi\ket\Psi=\ket\bot$,
  $\Opsi\ket\bot=\ket\Psi$, and $\Opsi\ket{\Psi^\bot}=\ket{\Psi^\bot}$
  for any~$\ket{\Psi^\bot}$ orthogonal to both $\ket\Psi$ and
  $\ket\bot$. 
  Let $\ORef:=I-2\selfbutter\Psi$.\symbolindexmark{\ORef}

  Let $\calO$ be an
  oracle, not necessarily independent of $\ket\Psi$. Let $\ket\Phi$ be
  a quantum state, not necessarily independent of $\ket\Psi$. 

  Let $n\geq0$ be an integer.  Let
  $\ket R:=\ket{\alpha_1}\otimes\dots \otimes\ket{\alpha_n}$ where
  $\ket{\alpha_j}:=(\cos\frac{j\pi}{2n})\ket\Psi+(\sin\frac{j\pi}{2n})\ketbot$.

  Then there is an oracle algorithm $B$ that makes $q_\Psi$
  queries to $\ORef$ and makes the same number of queries to $\calO$
  as $A$ such that:
  \[
  \TD\bigl(
  B^{\ORef,\calO}(\ket R,\ket\Phi) 
  ,
  A^{\Opsi,\calO}(\ket\Phi)
  \bigr)
  \leq
   \tfrac{\pi q_\Psi}{2\sqrt n} + q_\Psi\,o(\tfrac1{\sqrt n}).
  \]
\end{lemma}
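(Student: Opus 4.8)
The plan is to have $B$ run $A$ verbatim --- relaying every $\calO$-query unchanged --- while replacing each of $A$'s $q_\Psi$ queries to $\Opsi$ by the subroutine of~\eqref{eq:opsi.circ}. Here $B$ keeps a private reservoir register $R$ initialized to the state $\ket R$ of the statement and a private qubit $Z$ initialized to $\ket0$; it instantiates $S$ as the cyclic shift of the $n+1$ tensor factors consisting of the target register $X$ and the $n$ factors of $R$ (a fixed permutation of tensor factors, hence a fixed unitary $B$ can apply), it uses $U_\bot:=1-2\selfbutter\bot$ (fixed and known), and it realizes the $\ORef$-controlled gate at the cost of one query to $\ORef$. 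This makes $B$ a legal oracle algorithm that queries $\ORef$ exactly $q_\Psi$ times and $\calO$ as often as $A$; and since tracing out $R$ and $Z$ cannot increase trace distance, it suffices to bound the distance between the full states of the two experiments.

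The first step is the ``elementary calculation'' alluded to below~\eqref{eq:opsi.circ}: if $S$ \emph{exactly} satisfied $S(\ketbot\otimes\ket R)=\ket\Psi\otimes\ket R$ (equivalently $S^\dagger(\ket\Psi\otimes\ket R)=\ketbot\otimes\ket R$), then --- using also $\ORef(\ket\Psi\otimes\ket R)=-\ket\Psi\otimes\ket R$, $U_\bot\ket\Psi=\ket\Psi$, $U_\bot\ketbot=-\ketbot$, and the fact that $\ORef,U_\bot$ fix every $\ket{\psi^\bot}\perp\ket\Psi,\ketbot$ --- the subroutine would implement $\Opsi$ on $X$ exactly and return $R$ to $\ket R$ and $Z$ to $\ket0$. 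I would verify this by pushing the three inputs $\ket\Psi$, $\ketbot$, $\ket{\psi^\bot}$ (to which it reduces by linearity) through the Hadamard-sandwiched chain, noting that the Hadamards realize the relevant phase kick-backs: for $\ket{\psi^\bot}$ every controlled gate acts trivially and nothing happens, while for $\ketbot$ (resp.\ $\ket\Psi$) the $S$-controlled (resp.\ $S^\dagger$-controlled) gate together with the $\ORef$- and $U_\bot$-controlled gates convert it to $\ket\Psi$ (resp.\ $\ketbot$).

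The second step is the error analysis for the \emph{actual} finite reservoir. Since $\ket{\alpha_n}=\ketbot$ and consecutive $\ket{\alpha_j}$ span the angle $\tfrac\pi{2n}$, I would compute $\langle\Psi|\langle R|\,S\,|\bot\rangle|R\rangle=\cos^n\tfrac\pi{2n}$, so that
\[
  \bnorm{S(\ketbot\otimes\ket R)-\ket\Psi\otimes\ket R}=\sqrt{2-2\cos^n\tfrac\pi{2n}}=\tfrac{\pi}{2\sqrt n}+o(\tfrac1{\sqrt n}),
\]
with the symmetric statement for $S^\dagger$, whereas the identities used in step one for $\ORef$ and $U_\bot$ on $\ket\Psi,\ketbot$ stay exact. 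Feeding these into the step-one computation, every line becomes correct only up to an additive error of norm $O(1/\sqrt n)$, and collecting the deviations should yield a uniform bound $\bnorm{(\mathrm{subroutine}-\Opsi\otimes I)(\ket\chi\otimes\ket R\otimes\ket0)}\le\tfrac\pi{2\sqrt n}+o(\tfrac1{\sqrt n})$ over all $X$-states $\ket\chi$. The hard part will be keeping the constant at $\tfrac\pi2$: one must observe that the error produced by the $\ket\Psi$-component of $\ket\chi$ sits near $\ketbot\otimes\ket R\otimes\ket0$ while the error produced by its $\ketbot$-component sits near $\ket\Psi\otimes\ket R\otimes\ket0$, so that the two contributions combine in quadrature rather than linearly. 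One also needs the same bound with $\ket R$ replaced by any cyclically shifted ``staircase'' $\ket{\alpha_{1+k}}\otimes\ket{\alpha_{2+k}}\otimes\dots$ with $k=o(n)$, since that is what the reservoir becomes after a few subroutine calls.

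The third step is a routine telescoping (hybrid) argument over the $q_\Psi$ queries: because $A$'s intermediate operations and $\calO$-queries are non-expansive for trace distance, the total error is at most the sum of the per-query errors evaluated at the states that actually occur during $B$'s run. In the only non-vacuous regime $q_\Psi=O(\sqrt n)$ the reservoir never leaves the family of shifted staircases of the previous step, so each term is $\tfrac\pi{2\sqrt n}+o(\tfrac1{\sqrt n})$ and the sum is $\tfrac{\pi q_\Psi}{2\sqrt n}+q_\Psi\,o(\tfrac1{\sqrt n})$, exactly as claimed. The main obstacle, as indicated, is the sharp per-query estimate together with its uniformity over the drifting reservoir --- the latter being what keeps the accumulated error at $O(q_\Psi/\sqrt n)$ instead of letting it degrade to $O(q_\Psi^2/\sqrt n)$.
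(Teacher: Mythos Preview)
Your approach is the paper's, but your third step rests on a wrong picture of what the subroutine does to the reservoir. As your own step~1 establishes, the circuit of~\eqref{eq:opsi.circ} maps $\ket\chi\otimes\ket R\otimes\ket0$ to something $\varepsilon_n$-close to $(\Opsi\ket\chi)\otimes\ket R\otimes\ket0$: the controlled-$S$ and controlled-$S^\dagger$ are arranged so that (governed by the $U_\bot$- and $\ORef$-kickbacks) exactly one of them effectively fires, and the net effect on $R$ is the \emph{identity} up to $\varepsilon_n$. So after $j$ subroutine calls the reservoir in $B$'s run is not a ``shifted staircase'' $\ket{\alpha_{1+k}}\otimes\cdots$ --- it is $\ket R$ plus an unstructured error vector of norm $O(j\varepsilon_n)$, entangled with the other registers. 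There is no family of shifted reservoirs to analyse, and no need to restrict to the regime $q_\Psi=O(\sqrt n)$.

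Accordingly the paper's telescoping is simpler than yours: it compares $B$'s trajectory to the \emph{ideal} trajectory in which the reservoir is always exactly $\ket R\otimes\ket0$, so every per-step error is evaluated at a state of the form $\ket{\chi_j}\otimes\ket R\otimes\ket0$ and the step-2 bound applies verbatim, summing to $q_\Psi\varepsilon_n$. Your variant --- evaluating the per-step error at $B$'s actual states --- would require the step-2 bound to be stable under $O(j\varepsilon_n)$-perturbations of the reservoir, which is precisely what risks a $q_\Psi^2$ term; the fix is simply to telescope against the ideal. As for the constant $\tfrac\pi2$: the paper just verifies the $\varepsilon_n$-bound separately for $\ket\chi\in\{\ket\Psi,\ketbot,\ket{\psi^\bot}\}$ and then invokes ``linearity and the triangle inequality'', which strictly gives only $\sqrt2\,\varepsilon_n$ for a general unit $\ket\chi$; your quadrature observation would sharpen this, but the paper does not bother, and the extra factor is immaterial for the $O(\cdot)$ bound of Theorem~\ref{theo:emulate.opsi} where this lemma is used.
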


\begin{proof} In this proof, we use
  the following shorthand notation: $\ket{\Phi}=\ket{\Phi'}\pm\varepsilon$ means
  that $\bnorm{\ket{\Phi}-\ket{\Phi'}}\leq\varepsilon$.

 We first show that
  \begin{equation}
    S\ketbot\ket R=\ket\Psi\ket R\pm \varepsilon_n
\quad \text{and} \quad
    S^\dagger\ket\Psi\ket R=\ketbot\ket R\pm \varepsilon_n
    \quad\text{with}\quad
    \varepsilon_n:=\tfrac{\pi}{2\sqrt n}+o(\tfrac1{\sqrt n})
    \label{eq:SR.quality}
  \end{equation}
  where
  $S\ket{\Phi_0}\ket{\Phi_1}\dots\ket{\Phi_n}:=\ket{\Phi_1}\dots\ket{\Phi_n}\ket{\Phi_0}$ (cyclic shift)
  and $\ket R$ is as in the statement of the lemma (the reservoir state).

  We have
  \begin{align*}
    (S\ketbot\ket R)^\dagger(\ket\Psi\ket R)
    &= \Bigl(\ket{\alpha_1}\ket{\alpha_2}\dots\ket{\alpha_{n}}\ketbot\Bigr)^\dagger\Bigl(\ket\Psi\ket{\alpha_1}\dots\ket{\alpha_{n-1}}\ket{\alpha_n}\Bigr) \\
    &= \inner{\alpha_1}\Psi\cdot\prod_{j=1}^{n-1} \inner{\alpha_{j+1}}{\alpha_j}\cdot\inner\bot{\alpha_n} \\
    &\starrel= \cos\tfrac\pi{2n} \cdot \prod_{j=1}^{n-1} \cos\bigl(\tfrac{(j+1)\pi}{2n}-\tfrac{j\pi}{2n}\bigr)\cdot \sin\tfrac{n\pi}{2n} 
    = (\cos\tfrac{\pi}{2n})^n.
  \end{align*}
  Here $(*)$ uses that $\ket\Psi$ and $\ket\bot$ are orthogonal (and
  the definition of $\ket{\alpha_j}$ from the statement of the lemma).
  For any quantum states $\ket\Phi,\ket{\Phi'}$ we have
  $\bnorm{\ket\Phi-\ket{\Phi'}}^2=(\ket\Phi-\ket{\Phi'})^\dagger
  (\ket\Phi-\ket{\Phi'})
  =1-\inner\Phi{\Phi'}-\inner{\Phi'}\Phi+1=2(1-\Re\inner\Phi{\Phi'})$
  where $\Re$ denote the real part. Thus
  $\norm{S\ketbot\ket R - \ket\Psi\ket R} \leq
  \sqrt{2(1-(\cos\tfrac{\pi}{2n})^n)}\in \frac{\pi}{2\sqrt
    n}+o(\tfrac1{\sqrt n})=\varepsilon_n$.
  (The asymptotic bound uses \autoref{lemma:cos}.)
  This shows
  the lhs of \eqref{eq:SR.quality}. The rhs follows from the rhs by
  applying the unitary $S^\dagger$ on both sides.

  \medskip

  Let $U_\Psi$ denote the unitary computed by circuit
  \eqref{eq:opsi.circ} on \autopageref{eq:opsi.circ}. We will show that for any~$\ket\Phi$, 
  \begin{equation}
    U_\Psi\ket\Phi\ket R\ket 0 = (\Opsi\ket\Phi)\ket R\ket 0 \pm \varepsilon_n.
    \label{eq:upsi.emul}
  \end{equation}
  By linearity of $U_\Psi,\Opsi$ and the triangle inequality, it is
  sufficient to verify this for $\ket\Phi=\ket\Psi$,
  $\ket\Phi=\ketbot$, and $\ket\Phi$ orthogonal to both
  $\ket\Psi,\ketbot$. In an execution of circuit \eqref{eq:opsi.circ} on state $\ket\Phi\ket R\ket 0$, we denote the
  state before $S$ with $\ket{\Phi_1}$, the state after $S$ with
  $\ket{\Phi_2}$, the state before $S^\dagger$ with $\ket{\Phi_3}$,
  and the state after $S^\dagger$ with $\ket{\Phi_4}$. We denote the
  final state with $\ket{\Phi'}=U_\Psi\ket\Phi\ket R\ket 0$.

  For $\ket\Phi=\ket\Psi$, we have
  \begin{alignat*}2
    \ket{\Phi_1} &= \ket\Psi\ket R\ket 0, \qquad & \ket{\Phi_2} &= \ket\Psi\ket R\ket 0, \\
    \ket{\Phi_3} &= \ket\Psi\ket R\ket 1, & \ket{\Phi_4} &\eqrefrel{eq:SR.quality}= \ketbot\ket R\ket 1\pm\varepsilon_n,\\
    \ket{\Phi'} &= \ketbot\ket R\ket 0\pm\varepsilon_n = (\Opsi\ket\Phi)\ket R\ket 0\pm\varepsilon_n.
    \hskip-2in
  \end{alignat*}
  For $\ket\Phi=\ketbot$, we have
  \begin{alignat*}2
    \ket{\Phi_1} &= \ketbot\ket R\ket 1, \qquad & \ket{\Phi_2} &\eqrefrel{eq:SR.quality}= \ket\Psi\ket R\ket 1\pm\varepsilon_n, \\
    \ket{\Phi_3} &= \ket\Psi\ket R\ket 0\pm\varepsilon_n, \qquad & \ket{\Phi_4} &= \ket\Psi\ket R\ket 0\pm\varepsilon_n,\\
    \ket{\Phi'} &= \ket\Psi\ket R\ket 0\pm\varepsilon_n = (\Opsi\ket\Phi)\ket R\ket 0\pm\varepsilon_n.
    \hskip-2in
  \end{alignat*}
  And for $\ket\Phi$ orthogonal to $\ket\Psi$ and $\ketbot$, we have
  \begin{alignat*}2
    \ket{\Phi_1} &= \ket\Phi\ket R\ket 0, \qquad & \ket{\Phi_2} &= \ket\Phi\ket R\ket 0, \\
    \ket{\Phi_3} &= \ket\Phi\ket R\ket 0, & \ket{\Phi_4} &= \ket\Phi\ket R\ket 0,\\
    \ket{\Phi'} &= \ket\Phi\ket R\ket 0 = (\Opsi\ket\Phi)\ket R\ket 0.
    \hskip-2in
  \end{alignat*}
  Thus \eqref{eq:upsi.emul} holds. 

  Without loss of generality, we assume that the algorithm $A$ is
  unitary and only (optionally) performs a final measurement at the
  end. Let $B$ be like $A$, except that
  $B$ has additional register $R,Z$ initialized with $\ket R$,
  $\ket 0$, and that $B$ computes circuit \eqref{eq:opsi.circ} on
  $X,R,Z$ whenever $A$ invokes $\Opsi$ on $X$. (And when $A$ performs
  a controlled invocation of $\Opsi$, then $B$ executes the circuit
  with all operations accordingly controlled.)  Let $\ket{\Phi_0}$ be
  the initial state of $A$ and $B$, and let
  $\ket{\Phi_A},\ket{\Phi_B}$ be the final state of $A,B$ (right
  before the final measurement), respectively. Then by induction, from
  \eqref{eq:upsi.emul} we get
  $\bnorm{\ket{\Phi_A}-\ket{\Phi_B}}\leq q_\Psi\varepsilon_n$. By
  \autoref{lemma:dist.similar3},
  $\TD(\ket{\Phi_A}-\ket{\Phi_B})\leqq_\Psi\varepsilon_n$.  Thus
  \begin{multline*}
    \TD\bigl(
    B^{\ORef,\calO}(\ket R,\ket\Phi)
      ,
      A^{\Opsi,\calO}(\ket\Phi)
    \bigr) \leq  q_\Psi\varepsilon_n
    \leq  \tfrac{\pi q_\Psi}{2\sqrt n} + q_\Psi\,o(\tfrac1{\sqrt n}).
    \mathqed
  \end{multline*}
\end{proof}

\begin{lemma}\label{lemma:emulate.oref}
  Let $\ket\Psi$ be a state, chosen according to some
  distribution. Let $\ORef:=I-2\selfbutter\Psi$. Let $\calO$ be an
  oracle, not necessarily independent of $\ket\Psi$. Let $\ket\Phi$ be
  a quantum state, not necessarily independent of $\ket\Psi$. Let $A$
  be an oracle algorithm that makes $q_\mathrm{Ref}$ queries to $\ORef$. Let
  $m\geq 0$ be an integer. Then there is an oracle algorithm~$B$ that
  makes the same number of queries to $\calO$ as $A$ such that:
  \[
  \TD\bigl(
  B^{\calO}(\ket\Psi^{\otimes m},\ket\Phi)
  ,
  A^{\ORef,\calO}(\ket\Phi)
  \bigr) \leq   \frac{2q_\mathit{Ref}}{\sqrt{m+1}}.
  \]
\end{lemma}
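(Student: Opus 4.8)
\emph{Proof plan.}
The plan is to have $B$ simulate each invocation of $\ORef$ by reflecting the queried register $X$, together with the $m$ reference copies held in an auxiliary register $T$ initialised to $\ket\Psi^{\otimes m}$, about the symmetric subspace of the resulting $(m+1)$-fold tensor system. Concretely, writing $\calH$ for the space on which $\ORef$ acts, let $V\subseteq\calH^{\otimes(m+1)}$ be the subspace of vectors invariant under all permutations of the tensor factors, let $\Pi_V$ be the orthogonal projector onto $V$, and let $R_V:=I-2\Pi_V$. Then $B$ runs $A$ verbatim on $\ket\Phi$ but carries the extra register $T$ along, and whenever $A$ would query $\ORef$ on $X$, $B$ instead applies $R_V$ to $(X,T)$ (a controlled query is simulated by a correspondingly controlled $R_V$); at the end $B$ discards $T$. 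In particular $B$ makes no more queries to $\calO$ than $A$ does, as required, and we do not need $B$ to be time-efficient. Intuitively, the swap test against a single reference copy is too crude---it answers wrongly with probability $\tfrac12$ on states orthogonal to $\ket\Psi$ and collapses the reference---whereas testing membership in $V$ with $m$ copies errs with probability only $O(1/m)$.

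The quantitative heart is a per-query bound, proven for an arbitrary fixed $\ket\Psi$. Suppose the joint state of $A$'s registers just before some query is $\ket\Theta=\ket\Psi_X\ket{\theta_0}+\ket{\theta_1}$, where the $X$-component of $\ket{\theta_1}$ is orthogonal to $\ket\Psi$; then $\ket\Psi_X\ket{\theta_0}\perp\ket{\theta_1}$ and $\norm{\theta_1}\le1$. Since $\ket\Psi^{\otimes(m+1)}\in V$, the vector $\ket\Psi_X\ket{\theta_0}\otimes\ket\Psi^{\otimes m}_T$ lies in $V$, so $R_V$ negates it---exactly what $\ORef$ does to $\ket\Psi_X\ket{\theta_0}$. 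For the orthogonal part, write $\ket{\theta_1}=\sum_k\ket{\phi_k}_X\ket{e_k}$ with the $\ket{e_k}$ orthonormal and each $\ket{\phi_k}\perp\ket\Psi$. The key calculation is $\norm{\Pi_V(\ket{\phi_k}\otimes\ket\Psi^{\otimes m})}^2=\tfrac1{m+1}\norm{\phi_k}^2$: expanding $\Pi_V=\tfrac1{(m+1)!}\sum_\sigma P_\sigma$ over permutations $P_\sigma$ of the factors, one checks that $\bra{v}P_\sigma\ket{v}$ with $\ket v=\ket{\phi_k}\otimes\ket\Psi^{\otimes m}$ equals $1$ when $\sigma$ fixes the $X$-factor and $0$ otherwise (because $\inner{\phi_k}{\Psi}=0$), so precisely $m!$ of the $(m+1)!$ terms survive. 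Summing over $k$ yields $\norm{\Pi_V(\ket{\theta_1}\otimes\ket\Psi^{\otimes m}_T)}^2=\tfrac1{m+1}\norm{\theta_1}^2$. As $\ORef$ fixes $\ket{\theta_1}$ (its $X$-component being orthogonal to $\ket\Psi$) while $R_V$ subtracts $2\Pi_V$, the two results differ by exactly $2\norm{\Pi_V(\ket{\theta_1}\otimes\ket\Psi^{\otimes m}_T)}=2\norm{\theta_1}/\sqrt{m+1}\le2/\sqrt{m+1}$.

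To finish, use a hybrid over the $q_\mathit{Ref}$ queries, ordered so that $T$ is still pristine whenever the per-query bound is invoked. Let $H_j$ run $A$ (carrying $T=\ket\Psi^{\otimes m}$ along) with its first $j$ queries answered by the true $\ORef$ acting on $X$ only, and the remaining $q_\mathit{Ref}-j$ answered by $R_V$ acting on $(X,T)$; then $H_0$ is $B$ and $H_{q_\mathit{Ref}}$ coincides with $A$ (its $T$ register is never touched). The runs of $H_j$ and $H_{j+1}$ are identical up to query $j+1$, and every operation executed so far---the first $j$ queries (each $\ORef$ on $X$ only) and $A$'s own unitaries and $\calO$-queries---acts only on $A$'s registers, so at that moment the state genuinely has the form $\ket\Theta\otimes\ket\Psi^{\otimes m}_T$ and the per-query bound gives distance at most $2/\sqrt{m+1}$ between the post-query states; all subsequent operations are the same unitaries in both hybrids and cannot increase the distance. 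Summing the $q_\mathit{Ref}$ steps, the final states of $A$ and $B$ are at Euclidean distance at most $2q_\mathit{Ref}/\sqrt{m+1}$, and by \autoref{lemma:dist.similar3} the trace distance of their outputs is bounded by the same quantity. Since this holds for every fixed $\ket\Psi$, it holds on average over the distribution of $\ket\Psi$ as well.

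The one genuinely nontrivial ingredient is the symmetric-subspace overlap identity $\norm{\Pi_V(\ket\phi\otimes\ket\Psi^{\otimes m})}^2=1/(m+1)$ for $\ket\phi\perp\ket\Psi$; everything else is bookkeeping, and the hybrid works cleanly only because the chosen ordering keeps $T$ untouched until the divergence point of each consecutive pair. I would also dispatch the degenerate cases $q_\mathit{Ref}=0$ (then $B=A$) and $m=0$ (then $R_V=-I$ and the claimed bound $2q_\mathit{Ref}$ is vacuous, since trace distance never exceeds $1$), and note that controlled $\ORef$-queries are handled by controlling $R_V$ identically---none of which poses any difficulty.
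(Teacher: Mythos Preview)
Your proof is correct and follows essentially the same strategy as the paper: simulate $\ORef$ by reflecting $(X,T)$ about a permutation-invariant subspace of $\calH^{\otimes(m+1)}$, bound the per-query error via the overlap of $\ket\phi\otimes\ket\Psi^{\otimes m}$ (for $\ket\phi\perp\ket\Psi$) with that subspace, and accumulate over the $q_{\mathrm{Ref}}$ queries. The only noteworthy difference is the choice of subspace: you take the full symmetric subspace and compute the overlap via $\Pi_V=\tfrac{1}{(m+1)!}\sum_\sigma P_\sigma$, whereas the paper takes the (larger) cyclic-shift-invariant subspace and bounds the overlap using $\bra\chi=\bra\chi S^j$ together with the orthogonality of the $S^j\ket\phi\ket T$; both routes yield exactly $1/(m+1)$ for the squared overlap and hence the same $2/\sqrt{m+1}$ per-query bound. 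Your explicit hybrid ordering (true $\ORef$ first, $R_V$ last) is a clean way to ensure $T$ is pristine at each divergence point; the paper achieves the same effect by a direct induction comparing $B$'s state against $A$'s state tensored with the untouched $\ket T$ and invoking unitarity of $U_V$.
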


\begin{proof}
  Let $\calH$ be the space in which $\ket\Psi$ lives (i.e.,
  $\ket\Psi\in\calH$).  Let $S$ denote a cyclic shift on
  $(m+1)$-partite states. That is,
  $S\ket{\Phi_0}\ket{\Phi_1}\dots\ket{\Phi_m}:=\ket{\Phi_1}\dots\ket{\Phi_m}\ket{\Phi_0}$
  for all $\ket{\Phi_i}\in\calH$.
  (extended linearly to all of $\calH^{\otimes m+1}$).
  $S$ is unitary.

  Let $V\subseteq\calH^{\otimes m+1}$ be the space of states invariant
  under $S$. I.e., $\ket\Phi\in V$ iff $S\ket\Phi=\ket\Phi$.

  Let $U_V$ be the unitary with $U_V\ket\Phi=-\ket\Phi$ for
  $\ket\Phi\in V$, and $U_V\ket\Phi=\ket\Phi$ for $\ket\Phi$
  orthogonal to $V$. (That is, $U_V=I-2P_V$ where $P_V$ is the
  orthogonal projector onto $V$.)

  In this proof, we use
  the following shorthand notation: $\ket{\Phi}=\ket{\Phi'}\pm\varepsilon$ means
  that $\bnorm{\ket{\Phi}-\ket{\Phi'}}\leq\varepsilon$.

  Let $\ket T:=\ket\Psi^{\otimes m}$.

  We show that for any $\ket\Phi\in\calH$,
  \begin{equation}\label{eq:uv.quality}
    U_V\ket\Phi\ket T 
    = (\ORef\ket\Phi)\ket T \pm \tfrac2{\sqrt{m+1}}.
  \end{equation}
  We first show this for $\ket\Phi$ orthogonal to $\ket\Psi$. We
  decompose $\ket\Phi\ket T=\alpha\ket{\chi}+\beta\ket{\kappa}$ for
  quantum states $\ket\chi\in V$, and $\ket\kappa$ orthogonal to
  $V$. 
  Since $\ket\chi\in V$, we have $\bra\chi=\bra\chi S^j$ for any $j$. Thus
  \begin{align*}
    \abs\alpha &= \abs{\bra\chi(\ket\Phi\ket T)}
    = \Babs{\tfrac1{m+1}\sum_{j=0}^{m}\bra\chi S^j(\ket\Phi\ket T)} \\[-3pt]
    &= \tfrac1{m+1}\Babs{\bra\chi \Bigl(\sum_{j=0}^{m} S^j\ket\Phi\ket T\Bigr)}
    \starrel\leq \tfrac1{m+1}\babs{\sqrt{m+1}}
    = \tfrac1{\sqrt{m+1}}.
  \end{align*}
  Here $(*)$ follows from the fact that $\ket\Psi$ and $\ket\Phi$ are
  orthogonal, and hence all $S^j\ket\Phi\ket T$ $(j=0,\dots,m)$ are
  orthogonal, and thus $\bnorm{\sum_j S^j\ket\Phi\ket T}=\sqrt{m+1}$.
  Thus 
  \[
  \bnorm{U_V\ket\Phi\ket T-(\ORef\ket\Phi)\ket T}
  = \bnorm{\ket\Phi\ket T-2\alpha\ket\chi\ -\ \ket\Phi\ket T}
  = \abs{2\alpha} \leq \tfrac2{\sqrt{m+1}}.
  \]
  Thus shows \eqref{eq:uv.quality} for the case that $\ket\Phi$ is
  orthogonal to $\ket\Psi$. If $\ket\Phi=\ket\Psi$,
  \eqref{eq:uv.quality} follows since
  $\ket\Phi\ket T=\ket\Psi^{\otimes m}\in V$ and thus
  $U_V\ket\Phi\ket T=-\ket\Phi\ket T=\ORef\ket\Phi\ket T$. By
  linearity and the triangle inequality, \eqref{eq:uv.quality} then
  holds for all $\ket\Phi\in \calH$.

  \medskip

  Without loss of generality, we assume that the algorithm $A$ is
  unitary and only (optionally) performs a final measurement at the
  end.
  Let $B$ be like $A$, except that $B$
  has additional register $T$ initialized with $\ket T$ (which is given as input), and that $B$
  applies $U_V$ to $X,T$ whenever $A$
  invokes $\ORef$ on $X$. (And when $A$ performs a controlled
  invocation of $\ORef$, then $B$ executes the circuit with all
  operations accordingly controlled.)  Let $\ket{\Phi_0}$ be the
  initial state of $A$ and $B$, and let $\ket{\Phi_A},\ket{\Phi_B}$ be
  the final state of $A,B$ (right before measuring the output),
  respectively. Then by induction, from \eqref{eq:uv.quality} we get
  $\bnorm{\ket{\Phi_A}-\ket{\Phi_B}}\leq\tfrac{2q_\mathrm{Ref}}{\sqrt{m+1}}$. By
  \autoref{lemma:dist.similar3},
  $\TD(\ket{\Phi_A}-\ket{\Phi_B})\leq\frac{2q_\mathit{Ref}}{\sqrt{m+1}}$.
  Thus
  \begin{align*}
    \TD\bigl(
      B^{\calO}(\ket T,\ket\Phi)
      ,
      A^{\ORef,\calO}(\ket\Phi)
    \bigr) \leq  \frac{2q_\mathit{Ref}}{\sqrt{m+1}}.
    \mathqed
  \end{align*}
\end{proof}
}

\shortonly{\usedelayedtext{lemmas opsi}}

\end{fullversion}

\section{The pick-one trick}
\label{sec:pickone}%
\index{pick-one trick}

In this section, we  first show a basic case of the pick-one trick
which focusses on the core query complexity aspects.  In
\autoref{sec:add.oras}, we extend this by a number of additional
oracles that will be needed in the rest of the paper.

{

\newcommand{\I}{{\mathbb{I}}}
\newcommand{\C}{{\mathbb{C}}}
\newcommand{\J}{{\mathbb{J}}}
\newcommand{\R}{{\mathbb{R}}}
\newcommand{\Z}{{\mathbb{Z}}}
\renewcommand{\S}{{\mathbb{S}}}
\newcommand{\W}{{\mathbb{W}}}

\newcommand{\cA}{{\mathcal{A}}}
\newcommand{\cI}{{\mathcal{I}}}
\newcommand{\cO}{{\mathcal{O}}}
\newcommand{\cQ}{{\mathcal{Q}}}
\newcommand{\cH}{{\mathcal{H}}}
\newcommand{\cG}{{\mathcal{G}}}
\newcommand{\cM}{{\mathcal{M}}}
\newcommand{\cS}{{\mathcal{S}}}
\newcommand{\cR}{{\mathcal{R}}}
\newcommand{\cX}{{\mathcal{X}}}
\newcommand{\cY}{{\mathcal{Y}}}

\newcommand{\spn}{\mathop{\mathrm{span}}}
\newcommand{\+}{\oplus}
\newcommand{\wrr}{\mathop{\mathrm{wr}}}
\newcommand{\Ind}{\mathop{\mathrm{Ind}}}
\newcommand{\Tr}{\mathop{\mathrm{Tr}}}

\renewcommand{\>}{\rangle}
\newcommand{\<}{\langle}

\newcommand{\ES}{S}
\newcommand{\SPsi}{\Sigma\Psi}
\newcommand{\SPhi}{\Sigma\Phi}

\begin{definition}[Two values problem]\label{defn:twoValues}%
  \index{two values problem}%
  \index{values problem!two}%
  \index{problem!two values}
  Let $X,Y$ be finite sets and let $k\leq|X|$ be a positive integer. 
  For each $y\in Y$, let $\Sy y$\symbolindexmark{\Sy} be a uniformly random subset of $X$ of
  cardinality $k$, let $\ketpsiy y:=\sum_{x\in\ES_y}|x\>/\sqrt{k}$.\symbolindexmark{\ketpsiy}
  Let $\ketpsi=\sum_{y\in Y}|y\>|\Psi(y)\>/\sqrt{|Y|}$\symbolindexmark{\ketpsi} and $\ket\SPhi=\sum_{y\in Y,x\in X}\ket y\ket x/\sqrt{\abs Y\cdot\abs X}$.
The {\em Two Values} problem is to find $y\in Y$ and $x_1,x_2\in \ES_y$ such that $x_1\neq x_2$ given the following resources:
\begin{compactitem}
\item one instance of the state
$\bigotimes_{\ell=1}^h(\alpha_{\ell,0}|\SPsi\>+\alpha_{\ell,1}|\SPhi\>)$,
where $h$ and the coefficients $\alpha$ are independent of the $S_y$'s and are such that this state has unit norm;
\item an oracle $\OV$\symbolindexmark{\OV} such that for all  $y\in Y$, $x\in X$, $\OV(y,x)=0$ if $x\notin\Sy y$ and $\OV(y,x)=1$ if $x\in\Sy y$.
\item on oracle $\OF$\symbolindexmark{\OF} that, for all $y\in Y$, maps
$|y,\Psi(y)\>$ to $-|y,\Psi(y)\>$ and, for any $|\Psi^\bot\>$ orthogonal to $|\Psi(y)\>$, maps $|y,\Psi^\bot\>$ to itself.\tqed
\end{compactitem}
\end{definition}

The two values problem is at the core of the \emph{pick-one trick}: if
we give an adversary access to the resources described in
\autoref{defn:twoValues}, he will be able to search for one
$x\in\Sy y$ satisfying a predicate $P$ (shown in
\autoref{theo:pick1.complete} below). But  he will not be able to
find two different $x,x'\in\Sy y$ (\autoref{theo:pickone.sound}
below); we will use this to foil any attempts at extracting by
rewinding.

\begin{theorem}[Hardness of the two values problem]\label{theo:pickone.sound}
Let $\cA$ be an algorithm for the Two Values problem that makes $q_V$ and $q_F$ queries to oracles $\cO_V$ and $\cO_F$, respectively.
The success probability for $\cA$ to find $y\in Y$ and $x_1,x_2\in \ES_y$ such that $x_1\neq x_2$ is at most
\begin{equation*}
O\left(
\frac{h}{\abs Y^{1/2}} + \frac{(q_V+q_F)^{1/2}\kk^{1/4}}{\abs X^{1/4}}
+ \frac{(q_V+q_F)^{1/2}}{\kk^{1/4}}
\right).
\mtqed
\end{equation*}
\end{theorem}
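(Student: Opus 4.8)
The plan is to bound the success \emph{amplitude} $\sqrt p$ of $\cA$ by the representation-theoretic ``quantum adversary'' method outlined in the introduction; since $p\le 1$, this also bounds $p$. First I would purify the randomness: replace the random choice of the sets $\Sy y$ by an \emph{input register} $\calH_I=\bigotimes_{y\in Y}\calH_{I,y}$, where $\calH_{I,y}$ has orthonormal basis $\{\ket T:T\subseteq X,\abs T=\kk\}$ and each factor is initialised to $\binom{\abs X}{\kk}^{-1/2}\sum_T\ket T$. Writing $\ket{\Psi_T}=\kk^{-1/2}\sum_{x\in T}\ket x$, the resource state, $\OV$ and $\OF$ all lift to operations on $\calH_A\otimes\calH_I$ that are controlled by (entangled with) $\calH_I$, and the success probability of $\cA$ equals $\norm{P_{\mathrm{win}}\ket{\psi_q}}^2$, where $q=q_V+q_F$, $\ket{\psi_q}$ is the joint final state, and $P_{\mathrm{win}}$ projects onto ``$\cA$ outputs $(y,x_1,x_2)$ with $x_1\ne x_2$ and block $y$ of $\calH_I$ holds a set $T$ with $x_1,x_2\in T$''. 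The whole game is equivariant under $G=S_X\times S_Y$ acting by permuting the ground set $X$ (simultaneously on every $\calH_{I,y}$ and on $\cA$'s $X$-valued registers) and by permuting the blocks $y$.

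\textbf{Knowledge levels.} Next I would decompose each block $\calH_{I,y}=\bigoplus_{i=0}^{\min(\kk,\,\abs X-\kk)}W^{(y)}_i$ into $S_X$-isotypic components: $W_i$ is the two-row irreducible $S^{(\abs X-i,\,i)}$, i.e.\ the $i$-th eigenspace of the Johnson scheme on $\binom X\kk$, and morally $W_i$ is ``exactly $i$ elements of $\Sy y$ are pinned down'' ($W_0$ the invariant line, $W_1$ the ``$\ket{\Psi_T}$-correlated'' standard-representation copy, and so on). Grouping the profiles $(i_y)_{y\in Y}$, let $\Pi_{\mathrm{bad}}$ project $\calH_I$ onto the span of those with $i_y\ge2$ for some $y$ (equivalently $\Pi_{\mathrm{bad}}=1-\bigotimes_y(\Pi_{W^{(y)}_0}+\Pi_{W^{(y)}_1})$). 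Two facts then give the theorem. (i) On states with no bad block, $P_{\mathrm{win}}$ captures little probability: conditioned on the output $(y,x_1,x_2)$ the $\calH_{I,y}$-component lies in $W^{(y)}_0\oplus W^{(y)}_1$, and a direct calculation in the Johnson scheme shows $\Pr_T[x_1,x_2\in T]=O(\kk/\abs X)$ uniformly over unit vectors there, so $\norm{P_{\mathrm{win}}(1-\Pi_{\mathrm{bad}})\ket{\psi_q}}^2=O(\kk/\abs X)$. (ii) $\norm{\Pi_{\mathrm{bad}}\ket{\psi_q}}^2$ stays small throughout.

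\textbf{Bounding the bad weight.} For (ii), the initial state is handled by expanding $\bigotimes_{\ell}(\alpha_{\ell,0}\ketpsi+\alpha_{\ell,1}\ket{\SPhi})$: the $\ket{\SPhi}$ factors are uncorrelated with the $\Sy y$'s, while a term with $m\le h$ factors equal to $\ketpsi$ carries $m$ independent ``$\ketpsiy y$-pointers'' into uniformly random blocks, and a block reaches level $\ge2$ only when two of these pointers collide on the same $y$, an event of amplitude-weight $O(\binom m2/\abs Y)$; summing gives $\norm{\Pi_{\mathrm{bad}}\ket{\psi_0}}^2=O(h^2/\abs Y)$. For the per-query increment, $G$-equivariance forces $\OV$ and $\OF$ to respect the isotypic grading, so a query can move weight only between adjacent levels $i\leftrightarrow i\pm1$ within a single block; quantifying the relevant off-block matrix elements shows that each query increases $\norm{\Pi_{\mathrm{bad}}\ket{\psi_t}}^2$ by at most $O(\delta^2)$ with $\delta=(\kk/\abs X)^{1/4}+\kk^{-1/4}$ — the $(\kk/\abs X)^{1/4}$ part coming from $\OV$ (membership of a ``fresh'' $x$ holds only with probability $\approx\kk/\abs X$) and the $\kk^{-1/4}$ part from $\OF$ (reflecting about the $\kk$-sparse $\ket{\Psi_T}$ touches $O(1/\sqrt\kk)$ of amplitude per element of $T$) — while adversary-internal unitaries, being local on $\calH_A$, leave this weight unchanged. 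Hence $\norm{\Pi_{\mathrm{bad}}\ket{\psi_q}}^2\le O(h^2/\abs Y)+O(q\delta^2)$, so $p\le 2\norm{P_{\mathrm{win}}\Pi_{\mathrm{bad}}\ket{\psi_q}}^2+2\norm{P_{\mathrm{win}}(1-\Pi_{\mathrm{bad}})\ket{\psi_q}}^2\le O(h^2/\abs Y+q\delta^2+\kk/\abs X)$, and taking square roots (using $p\le\sqrt p$, together with $\kk/\abs X\le(\kk/\abs X)^{1/2}\le\sqrt q\,(\kk/\abs X)^{1/4}$ and $q\delta^2\le\sqrt q\,\delta$ in the nontrivial regime $\sqrt q\,\delta\le1$) yields the stated bound $O\!\big(h/\abs Y^{1/2}+\sqrt q\,(\kk/\abs X)^{1/4}+\sqrt q\,\kk^{-1/4}\big)$.

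\textbf{Main obstacle.} The routine parts are the purification, the collision bound on $\ket{\psi_0}$, and fact~(i). The difficulty is the per-query estimate in~(ii): showing that $\OV$ and $\OF$ move only an $O(\delta^2)$ squared fraction of amplitude across the ``level $1\to2$'' boundary with no coherent build-up over the $q$ queries — equivalently, controlling how the isotypic decomposition of the enormous register $\calH_I=\bigotimes_y\calH_{I,y}$ under $S_X\times S_Y$ transforms under the oracles, including the two copies of $X$ that $\OF$ effectively couples into each block. This is where the detailed analysis of two-row Young diagrams / Johnson-scheme eigenspaces and of their images under $\OV,\OF$ is required, and it is the technical heart of the proof.
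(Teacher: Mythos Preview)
Your plan is essentially the paper's proof: same purification, same Johnson-scheme ``knowledge level'' decomposition, same three-lemma structure (initial bad weight $O(h^2/\abs Y)$; per-query drift $O(\sqrt{k/\abs X}+1/\sqrt k)$; success on the good part $O(k/\abs X)$), and you correctly identify the per-query estimate as the technical core. Two small corrections. First, the natural symmetry group is the wreath product $S_X\wr S_Y$ (an independent copy of $S_X$ per block), not the direct product $S_X\times S_Y$; and the paper enforces invariance by explicitly \emph{symmetrizing} the algorithm---adjoining a register holding a uniformly random group element and conjugating---since the algorithm's internal unitaries are not equivariant, so invariance of the initial state together with equivariance of the oracles does not by itself keep the reduced state on $\calH_I$ invariant after each step. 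Second, your clean attribution of the $(k/\abs X)^{1/4}$ term to $\cO_V$ and the $k^{-1/4}$ term to $\cO_F$ is not how the calculation actually goes: both terms already appear in the analysis of the single irrep $(N-1,1)$, and both oracles feed into both (the irreps $(N-2,2)$ and $(N-2,1,1)$ contribute only $O(k/\abs X)$ for either oracle).
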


\begin{fullversion}
  That is, in order to get a constant success probability in
    finding $x_1,x_2$, one would need at least $h\in\Omega(\sqrt{\abs{Y}})$
    copies of the state $\ket\Psi$, or make
    $\Omega(\min\{\sqrt\kk,\sqrt{\abs X/\kk}\})$ queries.  Or to put it
    differently, if $\sqrt\kk$ and $\sqrt{\abs X/\kk}$ are both
    superpolynomial, a polynomial-time adversary (who necessarily has
    polynomially-bounded $h,q_V,q_F$) finds $x_1,x_2$ only with
    negligible probability.
  
\end{fullversion}

The proof uses the adversary-method from \cite{Ambainis10,AMRR} as
described in the introduction\fullonly{ and is given in
Appendices~\ref{app:proof:theo:pickone.sound}
and~\ref{sec:hardness-last-sec}}. In \autoref{sec:add.oras} we extend
this hardness result to cover additional oracles.

}

\begin{theorem}[Searching one value]\label{theo:pick1.complete}
  Let $\Sy y\subseteq X$ and $\OF,\OV$ be as
  in \autoref{defn:twoValues}.

  There is a polynomial-time oracle algorithm $E_1$ that on input $\ketpsi$ returns a
  uniformly random $y\in Y$ and $\ketpsiy y$.  There is a polynomial-time
  oracle algorithm $E_2$ such that:
  For any $\delta_{\min}>0$,   for any $y\in Y$, 
  for any predicate $P$ on $X$ with $\abs{\{x\in\Sy y:P(x)=1\}}/\abs{\Sy y}\geq\delta_{\min}$,
  and for any $n\geq0$ we have
  \begin{align*}
  \Pr[x\in\Sy y\ \land\ P(x)=1:
  x\ot E_2^{\OV,\OF,P}(n,\delta_{\min},y,\ketpsiy y)]
  \shortonly{\\}
  \geq
  1-2^{-n}.
  \end{align*}
  (The running time of $E_2$ is polynomial-time in $n$, $1/\delta_{\min}$, $\abs y$.)
  \tqed
\end{theorem}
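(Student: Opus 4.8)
The plan is to treat $E_1$ and $E_2$ separately; $E_2$ is essentially a single‑shot (fixed‑point) amplitude amplification built out of the two reflections $\OF$ and the predicate derived from $\OV,P$.

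\emph{The algorithm $E_1$.} On input $\ketpsi=\sum_{y\in Y}\ket y\ketpsiy y/\sqrt{\abs Y}$, measure the first register in the computational basis. The outcome $y$ is uniform in $Y$, and the second register collapses to $\ketpsiy y$. This is clearly polynomial time.

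\emph{The algorithm $E_2$.} Fix $y$ and $P$, and set $G_y:=\{x\in\Sy y:P(x)=1\}$, so that $\abs{G_y}/\kk\geq\delta_{\min}$ by hypothesis. Let $\Pi_G$ be the orthogonal projector onto the span of $\{\ket x:x\in G_y\}$. The starting state satisfies $\norm{\Pi_G\ketpsiy y}^2=\abs{G_y}/\kk\geq\delta_{\min}$, so amplitude amplification from $\ketpsiy y$ drives this overlap towards $1$, after which a computational‑basis measurement returns some $x\in G_y$, i.e.\ $x\in\Sy y$ with $P(x)=1$. Two reflections are required. \textbf{(a)} The selective phase shift $I+(e^{i\varphi}-1)\Pi_G$ of the good subspace: compute $b:=\OV(y,x)\wedge P(x)$ into an ancilla (one query each to $\OV$ and $P$), apply the single‑qubit phase $e^{i\varphi}$ conditioned on $b=1$, and uncompute $b$. \textbf{(b)} The selective phase shift $I+(e^{i\varphi}-1)\ketpsiy y\bra{\Psi(y)}$ of the starting state: on the $y$‑block, $\OF=I-2\ketpsiy y\bra{\Psi(y)}$, hence $\ketpsiy y\bra{\Psi(y)}=\tfrac12(I-\OF)$, and since $\OF$ is an involution a one‑line computation gives
\[
I+(e^{i\varphi}-1)\ketpsiy y\bra{\Psi(y)}=e^{i\varphi/2}\,e^{-i(\varphi/2)\OF},
\]
where $e^{-i\alpha\OF}$ is realized with $O(1)$ controlled calls to $\OF$ and one ancilla (one‑bit phase estimation of the $\pm1$‑valued unitary $\OF$, a $Z$‑rotation on the ancilla, and uncomputation). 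Throughout, the first register stays at the classical value $y$ and all amplitude stays in the span of $\{\ket x:x\in X\}$, so the computation never leaves the $y$‑block and these two reflections suffice.

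With the two $\varphi=\pi/3$ reflections in hand, $E_2$ runs Grover's $\pi/3$ fixed‑point search from $\ketpsiy y$ for $l$ recursive levels. After $l$ levels the probability of ending up outside the good subspace is at most $\varepsilon_0^{3^{l}}$ with $\varepsilon_0:=1-\abs{G_y}/\kk\leq 1-\delta_{\min}\leq e^{-\delta_{\min}}$; choosing $l:=\ceil{\log_3(n/\delta_{\min})}$ makes this at most $e^{-n}\leq 2^{-n}$ (the case $n=0$ is vacuous). $E_2$ then measures the work register and outputs the result $x$, which lies in $\Sy y$ with $P(x)=1$ except with probability $\leq 2^{-n}$. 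The search uses $3^{l}=O(n/\delta_{\min})$ reflection pairs, each costing $O(1)$ oracle queries and $\poly(\abs y)$ gates, so $E_2$ runs in time polynomial in $n$, $1/\delta_{\min}$, and $\abs y$, as claimed.

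\emph{Main obstacle.} The delicate point is that we are given only a \emph{single} copy of $\ketpsiy y$, so the textbook ``repeat with fresh copies'' route to success probability $1-2^{-n}$ is unavailable: the amplification must be done coherently while (i) using only the lower bound $\delta_{\min}$ on the good fraction, not its exact value, and (ii) not overshooting — which is exactly what the fixed‑point variant provides. A secondary technicality is that $\OF$ is a full ($\pi$) reflection whereas fixed‑point search needs fractional‑phase ($\pi/3$) reflections; the involution identity above together with the simulation of $e^{-i\alpha\OF}$ from $\OF$ bridges this with only constant overhead.
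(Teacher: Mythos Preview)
Your proof is correct and handles the single-copy constraint cleanly, but the mechanism is genuinely different from the paper's. The paper does \emph{not} use fixed-point amplification. Instead it runs ordinary Grover iterations $(\OF(y)U_P)^{2^{j-1}}$ for geometrically increasing $j$ and, after each block, performs a projective measurement onto the ``good'' subspace $\im P_X$. The key observation is that on outcome $0$ the post-measurement state is \emph{exactly} $\ketno=\ket{\phi_0}$, the uniform superposition over $\Sy y\setminus G_y$; this is itself a perfectly good Grover starting state, so nothing is lost on failure. One then checks that some $j\in\{1,\dots,\lceil\log(\pi/2\sqrt{\delta_{\min}})\rceil\}$ makes the rotation angle $2^j\gamma$ land within $\pi/6$ of $\pi/2$, so each full $j$-sweep succeeds with probability $\geq 1/2$; repeating the sweep $n+1$ times yields the $2^{-n}$ bound.

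The trade-offs: the paper's argument needs only \emph{uncontrolled} applications of $\OF$ (it uses the full $\pi$ reflection directly), whereas your $\pi/3$ phases require controlled-$\OF$ to synthesise $e^{-i(\varphi/2)\OF}$. The paper's model does appear to permit controlled oracle calls (they are used elsewhere, e.g.\ in the emulation of $\Opsi$), but you should make this assumption explicit. Query-wise the paper achieves $O(n/\sqrt{\delta_{\min}})$ calls while the $\pi/3$ scheme costs $O(n/\delta_{\min})$; both are polynomial so both prove the stated theorem, but the paper's route retains the quadratic Grover speedup. In exchange, your argument is a clean black-box invocation of a known primitive and avoids the case analysis on~$\gamma$.
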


This theorem is proven with a variant of Grover's algorithm
\cite{grover}: Using Grover's algorithm, we search for an $x$ with
$P(x)=1$. However, we do not search over all $x\in\bits\ell$ for some
$\ell$, but instead over all $x\in\Sy y$. When searching over $\Sy y$,
the initial state of Grover's algorithm needs to be
$\sum_x\frac1{\sqrt{\abs{\Sy y}}}\ket x=\ketpsiy y$ instead of
$\sum_x 2^{-\ell/2}\ket x=:\ket{\Phi}$. And the diffusion operator%
\index{diffusion operator!(Grover's algorithm)} $I-2\selfbutter\Phi$
needs to be replaced by $I-2\selfbutterpsiy y$. Fortunately, we
have access both to $\ketpsiy y$ (given as input), and to  $I-2\selfbutterpsiy y$
(through the oracle $\OF$). To get an overwhelming success probability,
Grover's algorithm is usually repeated until it succeeds.  (In
particular, when the number of solutions is not precisely known
\cite{bbht}.) We cannot do that: we have only one copy of the initial
state. Fortunately, by being more careful in how we measure the final
result, we can make sure that the final state in case of failure is
also a suitable initial state for Grover's algorithm. \fullonly{The full proof is given in \autoref{sec:proof:theo:pick1.complete}.}

\subsection{Additional oracles}
\label{sec:add.oras}

In this section, we extend the hardness of the two values problem to
cover additional oracles that we will need in various parts of the
paper. 

\begin{definition}[Oracle distribution]\label{def:ora.dist}
  Fix integers $\ellcom,\ellch,\ellresp$ (that may depend on the security parameter) 
  such that $\ellcom,\ellresp$ are superlogarithmic and $\ellch$ is logarithmic.
  Let $\ellrand:=\ellcom+\ellresp$.\symbolindexmark{\ellrand}

  Let $\Oall=(\OE,\OP,\OR,\OS,\OF,\Opsi,\OV)$\symbolindexmark{\Oall} be chosen according to the
  following distribution:
  \begin{compactitem}
    \item Let $s_0$ be arbitrary but fixed (e.g., $s_0:=0$). Pick $w_0\otR\bits\ellrand$.
    \item Choose $\Sy y$, $\OV$, $\OF$ as in
      \autoref{defn:twoValues} with $Y:=\bits\ellcom$ and
      $X:=\bits\ellch\times\bits\ellresp$ and $\kk:=2^{\ellch+\floor{\ellresp/3}}$.\symbolindexmark{\kk}
    \item For each $z\in\bits{\ellrand}$, pick $y\otR Y$ and $x\otR\Sy y$, and
      set $\OS(z):=(y,x)$.\symbolindexmark{\OS}
    \item Let $\ketbot$\symbolindexmark{\ketbot} be a quantum state orthogonal to all $\ket{\com,\ch,\resp}$
      (i.e., we extend the dimension of the space in which $\ketpsi$ lives by one).
      $\Opsi\ketbot:=\ketpsi$,\symbolindexmark{\Opsi}
      $\Opsi\ketpsi:=\ketbot$, and
      $\Opsi\ket{\Phi}:=\ket\Phi$ for $\ket\Phi$ orthogonal to
      $\ketpsi$ and $\ketbot$. 
    \item Let $\OE(\com,\ch,\resp,\ch',\resp'):=w_0$\symbolindexmark{\OE} iff
      $(\ch,\resp),(\ch',\resp')\in\Sy\com\land(\ch,\resp)\neq(\ch',\resp')$ and $\OE:=0$ everywhere
      else.
    \item Let $\OR(s_0,w_0):=1$\symbolindexmark{\OR} and $\OR:=0$ everywhere else.
    \item For each $\com\in\bits\ellcom,\ch\in\bits\ellch,z\in\bits\ellrand$, let $\OP(w_0,\com,\ch,z)$\symbolindexmark{\OP} be assigned a
      uniformly random $\resp$ with $(\ch,\resp)\in \Sy\com$. (Or
      $\bot$ if no such $\resp$ exists.) Let
      $\OP(w,\cdot,\cdot,\cdot):=0$ for $w\neq w_0$. \tqed
    \end{compactitem}
\end{definition}

\begin{fullversion}
  From these oracles, $\OP$ is later used to implement
    the prover in our sigma-protocols, $\OE$ for the extractor, $\OR$
    to test membership in the relation $R$, and $\OS$ to implement the
    simulator. Notice that $\OS$ and $\OP$ get an additional input $z$
    that seems useless. However, $z$ is needed to get several
    independent answers from the oracle given otherwise equal inputs
    (i.e., it emulates probabilistic behavior).  

    Note\pagelabel{revpage:r.o.random} that both the relation $R$ and the oracles are chosen
    randomly (but not independently of each other). We will assume
    this implicitly in all further theorems. We could also get a
    result relative to a fixed (i.e., non-probabilistic) relation and
    oracle by using the probabilistic method. We omit the details from
    this work.
  
\end{fullversion}

The following corollary is a strengthening of
\autoref{theo:pickone.sound} to the oracle distribution from
\autoref{def:ora.dist}. For later convenience, we express the
soundness additionally in terms of guessing $w_0$. \fullonly{Since the formula
would become unwieldy, we do not give a concrete asymptotic bound here. But
such a bound can be easily derived from the inequalities
(\ref{eq:pap1}--\ref{eq:p6})
in the proof.}
\begin{corollary}[Hardness of two values 2]\label{coro:pick.one.sound-oall}
  Let $\Oall=(\OE,\OP,\OR,\OS,\OF,\Opsi,\OV),w_0$ be as in
  \autoref{def:ora.dist}.  Let $A$ be an oracle algorithm making at
  most $q_E,q_P,q_R,q_S,q_F,q_\Psi,q_V$ queries to
  $\OE,\OP,\OR,\OS,\OF,\Opsi,\OV$, respectively.  Assume that
  $q_E,q_P,q_R,q_S,q_F,q_V$ are polynomially-bounded (and  $\ellcom,\ellresp$ are superlogarithmic
  by \autoref{def:ora.dist}).  Then:
  \fullshort{
  \begin{compactenum}[(i)]
  \item\label{item:pick.one.w} $\Pr[w=w_0: w\ot A^{\Oall}]$ is negligible.
  \item\label{item:pick.one.xx} $\Pr[(\ch,\resp)\neq (\ch',\resp')\land (\ch,\resp),(\ch',\resp')\in\Sy \com: (\com,\ch,\resp,\ch',\resp')\ot A^{\Oall}]$ is
    negligible.
  \end{compactenum}}{
  \begin{compactenum}[(i)]
  \item\label{item:pick.one.xx} $\Pr[(\ch,\resp)\neq (\ch',\resp')\land (\ch,\resp),(\ch',\resp')\in\Sy \com: (\com,\ch,\resp,\ch',\resp')\ot A^{\Oall}]$ is
    negligible.
  \item\label{item:pick.one.w} $\Pr[w=w_0: w\ot A^{\Oall}]$ is negligible.\tqed
  \end{compactenum}
}
\end{corollary}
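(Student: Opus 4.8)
The plan is to deduce the corollary from the bare two-values bound of \autoref{theo:pickone.sound} by peeling off the extra oracles of \autoref{def:ora.dist} ($\Opsi,\OS,\OR,\OP,\OE$) in stages. One preliminary remark: in the presence of $\Oall$, parts~\ref{item:pick.one.w} and~\ref{item:pick.one.xx} are essentially equivalent. Given $(\com,\ch,\resp,\ch',\resp')$ with $(\ch,\resp)\neq(\ch',\resp')$ both in $\Sy\com$, one query to $\OE$ produces $w_0$; conversely, given $w_0$, the two queries $\OP(w_0,\com,\ch,0),\OP(w_0,\com,\ch,1)$ return responses $\resp,\resp'$ with $(\ch,\resp),(\ch,\resp')\in\Sy\com$, and these differ with overwhelming probability because for a typical $(\com,\ch)$ the number of admissible responses is superpolynomial (consistent with the choice $\kk=2^{\ellch+\floor{\ellresp/3}}$ and $\ellresp$ superlogarithmic). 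So it suffices to bound, uniformly through the reductions, the probability of the disjunction of the two target events.

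First I would remove $\Opsi$. By \autoref{theo:emulate.opsi} applied with $\calO:=(\OE,\OP,\OR,\OS,\OF,\OV)$, there is an algorithm $B$ that makes no $\Opsi$-queries, is instead handed the reservoir state $\ket R=\ketpsi^{\otimes m}\otimes\ket{\alpha_1}\otimes\dots\otimes\ket{\alpha_n}$, and whose final state is within trace distance $O(q_\Psi/\sqrt m+q_\Psi/\sqrt n)$ of $A$'s. Taking $m,n$ superpolynomial but still $o(\sqrt{\abs Y})$ -- possible since $\abs Y=2^{\ellcom}$ with $\ellcom$ superlogarithmic, and harmless because $q_\Psi$ is polynomial for the adversaries the corollary is applied to -- makes this distance negligible and keeps $h:=m+n=o(\sqrt{\abs Y})$. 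The state $\ket R$ has exactly the form admitted in \autoref{defn:twoValues}: a tensor product of $h$ factors $\alpha_{\ell,0}\ketpsi+\alpha_{\ell,1}\ket{\cdot}$, the fixed ``filler'' component $\ketbot$ playing the role of $\ket{\Sigma\Phi}$ there (both are superpolynomially close to orthogonal to $\ketpsi$ and carry no information about the $\Sy y$), so the argument of \autoref{theo:pickone.sound} tolerates the swap and its leading term $h/\abs Y^{1/2}$ stays negligible.

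Next I would eliminate $\OR,\OP,\OE$. The key facts are that $w_0\otR\bits\ellrand$ is uniform and independent of $(\Sy y)_y$ -- hence of $\OV,\OF,\Opsi,\OS$ and of $\ket R$ -- and that $\OR,\OP$ vanish on every input whose $w_0$-slot is not $w_0$, while $\OE$ vanishes on every input that is not a two-values witness for some $\com$. A standard hiding/hybrid argument (puncturing all three oracles at once and using the usual one-way-to-hiding estimate) then shows that replacing $\OR,\OP,\OE$ by the all-zero oracle changes the success probability by at most $O\bigl((q_R{+}q_P{+}q_E)\sqrt p\,\bigr)$, where $p$ is the probability that in the punctured run some query register gets flagged -- and then measuring it yields either $w_0$ (from the $\OR,\OP$ difference sets) or a two-values witness (from the $\OE$ difference set). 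The punctured run uses only $\OV,\OF,\OS$ and the state $\ket R$, so $p$ is at most the $w_0$-guessing probability (at most $2^{-\ellrand}$, by independence) plus the two-values probability in the $\OV,\OF,\OS,\ket R$ world; there is no circularity, since the latter quantity is exactly what the final step bounds and never refers back to $\OR,\OP,\OE$. Finally $\OS$ is harmless: each $\OS(z)$ is an independent sample $(y,x)$ with $x\in\Sy y$, and for superlogarithmic $\ellcom$ polynomially many such samples lie in pairwise distinct $\Sy y$ with overwhelming probability; superposition access to $\OS$ is handled by standard techniques -- a small-image argument reducing it to polynomially many samples, or (as in the appendix) by carrying $\OS$ through the representation-theoretic analysis directly.

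What remains is an adversary that either produces two distinct elements of a single $\Sy\com$, or guesses $w_0$, using only $\OV,\OF,\ket R$. The second has probability $2^{-\ellrand}$, and the first is bounded by \autoref{theo:pickone.sound} as $O\bigl(h/\abs Y^{1/2}+(q_V{+}q_F)^{1/2}\kk^{1/4}/\abs X^{1/4}+(q_V{+}q_F)^{1/2}/\kk^{1/4}\bigr)$; substituting $\abs Y=2^{\ellcom}$, $\abs X=2^{\ellch+\ellresp}$, $\kk=2^{\ellch+\floor{\ellresp/3}}$ -- so that $\kk$ and $\abs X/\kk=2^{\ellresp-\floor{\ellresp/3}}$ are both superpolynomial since $\ellresp$ is superlogarithmic -- together with $h=o(\sqrt{\abs Y})$ and polynomially bounded $q_V,q_F$, makes every term negligible, and chaining the steps gives both parts of the corollary. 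The technically heaviest part is either the representation-theoretic extension of \autoref{theo:pickone.sound} so that an enlarged information register $\calH_I$ also tracks $\OS$ (and possibly $\OE,\OP,\OR$), if one folds these oracles into that lower bound directly as the appendix does, or the careful ordering/puncturing in the hybrid step -- in particular making the puncturing of $\OE$ genuinely non-circular -- if one follows the hiding-lemma route above; the $\Opsi$-removal and the final parameter substitution are by comparison routine.
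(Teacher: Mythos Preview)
Your overall strategy---peel off the extra oracles one by one and reduce to \autoref{theo:pickone.sound}---is exactly the paper's, but the execution differs in order and in one technical point that you gloss over.

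\textbf{Order and lemmas.} The paper removes $\OP,\OR$ first (via a dedicated \autoref{lemma:remove.wora}: oracles that return nonzero only when $w_0$ is already in the input), then switches the target from ``output $w_0$'' to ``output a collision'' while simultaneously dropping $\OE$ (via \autoref{lemma:remove.xora}), and only \emph{then} removes $\Opsi$ via \autoref{theo:emulate.opsi} and $\OS$ via Zhandry's small-range theorem. Your order ($\Opsi$ first, then $\OR,\OP,\OE$ together by a one-way-to-hiding puncturing, then $\OS$) also works, and your non-circularity remark for the $\OE$ puncturing is the right observation. The paper's two tailored lemmas are essentially the same hybrid argument you sketch, just packaged sequentially; both routes give the same square-root losses in the query counts.

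\textbf{The $\ketbot$ vs.\ $\ket{\Sigma\Phi}$ step.} This is where your proposal has a genuine gap. \autoref{defn:twoValues} admits only resource factors of the form $\alpha\ketpsi+\beta\ket{\Sigma\Phi}$; the reservoir from \autoref{theo:emulate.opsi} instead uses $\ketbot$, which lives in an \emph{extended} Hilbert space outside $\spn\{\ket{y,x}\}$. Your claim that ``the argument of \autoref{theo:pickone.sound} tolerates the swap'' is not justified---the representation-theoretic analysis in \autoref{lem:InitialBound} is carried out on the concrete space and uses that the filler is $\ket{\Sigma\Phi}$. The paper does \emph{not} re-open that proof; instead it inserts an explicit conversion step: apply the fixed unitary $U_\alpha$ swapping $\ketbot\leftrightarrow\ket{\Sigma\Phi}$ to each $\ket{\alpha_j}$, and bound the trace distance incurred (it is $\leq n\,2^{-(\ellch+\ellresp-1)/2}\sqrt{\kk}$, negligible for the chosen $n$). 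You need either this conversion or a real argument that the lower-bound proof extends to the enlarged space.

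\textbf{Two minor corrections.} First, your parenthetical ``as in the appendix, by carrying $\OS$ through the representation-theoretic analysis directly'' is not what the paper does: $\OS$ is removed \emph{before} invoking \autoref{theo:pickone.sound}, via the small-range theorem and then by measuring fresh copies of $\ketpsi$ to simulate the samples (so the final $h$ is $m+n+s$, not $m+n$). Second, the paper derives part~\eqref{item:pick.one.xx} from part~\eqref{item:pick.one.w} via a single extra $\OE$ query; your reverse reduction via two $\OP$ queries is correct but unnecessary.
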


This corollary is shown by reduction to
\autonameref{theo:pickone.sound}.  Given an adversary that violates
\eqref{item:pick.one.w}, we remove step by step the oracles that are
not covered by \autoref{theo:pickone.sound}.  First, we remove the
oracles $\OP,\OR$. Those do not help the adversary (much) to find
$w_0$ because $\OP$ and $\OR$ only give non-zero output if their input
already contains $w_0$. Next we change $A$ to output a collision
$(\ch,\resp)\neq (\ch',\resp')\land (\ch,\resp),(\ch',\resp')\in\Sy
\com$
instead of the witness $w_0$; since $w_0$ can only be found by
querying $\OE$ with such a collision, this adversary succeeds with
non-negligible probability, too. Furthermore, $A$ then does not need
access to $\OE$ any more since $\OE$ only helps in finding $w_0$. Next
we get rid of $\Opsi$: \fullonly{as shown in \autonameref{theo:emulate.opsi},}
$\Opsi$ can be emulated (up to an inversely polynomial error) using
(suitable superpositions on) copies of the state $\ketpsi$.  Finally
we remove $\OS$: Using the ``small range distribution'' theorem from \cite{zhandry12random},
$\OS$ can be
replaced by an oracle that provides only a polynomial number of
triples $(\com,\ch,\resp)$. Those triples the adversary can produce
himself by measuring polynomially-many copies of $\ketpsi$ in the
computational basis. Thus we have shown that without loss of
generality, we can assume an adversary that only uses the oracles
$\OF,\OV$ and (suitable superpositions of) polynomially-many copies of
$\ketpsi$, and that tries to find a collision. But that such an
adversary cannot find a collision was shown in
\autoref{theo:pickone.sound}.

And \eqref{item:pick.one.xx} is shown by observing that an adversary
violating \eqref{item:pick.one.w} leads to one violating
\eqref{item:pick.one.xx} using one extra $\OE$-query.

\fullonly{The full proof is given in \autoref{app:proof:coro:pick.one.sound-oall}.}

\section{Attacking commitments}
\label{sec:com}

In the classical setting, a non-interactive commitment scheme is
usually called computationally binding if it is hard to output a
commitment and two different openings (\autoref{def:binding}
below). We now show that in the quantum setting, this definition is
extremely weak. Namely, it may still be possible to commit to a value
and then to open the commitment to an arbitrary value (just not to two
values \emph{at the same time}).

\paragraph{Security definitions.} To state this more formally, we
define the security of commitments: A \emph{non-interactive commitment}%
\index{commitment scheme!non-interactive}%
\index{non-interactive!commitment scheme} scheme consists of
algorithms $\COM,\COMverify$, such that $(c,u)\ot\COM(m)$ returns a
commitment $c$ on the message $m$, and an opening information $u$. 
The\pagelabel{revpage:commit.syntax}
  sender then sends $c$ to the recipient, who is not supposed to learn anything about $m$.
Only when the sender later sends $m,u$, the recipients learns $m$. But, intuitively speaking, 
the sender should not be able to ``change his mind'' about $m$ after sending $c$ (binding property).
We
require \emph{perfect completeness}%
\index{completeness!perfect (of commitment scheme)}%
\index{perfect completeness!(of commitment scheme)}%
\index{commitment scheme!perfect completeness}, i.e., for any $m$ and
$(c,u)\ot\COM(m)$, $\COMverify(c,m,u)=1$ with probability~$1$.
In our setting, $c,m,u$ are all classical.

\begin{definition}[Computationally binding]\label{def:binding}
  A commitment scheme $\COM,\COMverify$ is \emph{computationally
    binding}%
  \index{binding!computationally}%
  \index{computationally binding}%
  \index{commitment scheme!computationally binding} iff for any
  quantum polynomial-time algorithm $A$ the following probability is negligible:
  \begin{align*}
    \Pr[&\ok=\ok'=1\ \land\ m\neq m':(c,m,u,m',u')\ot A,\\
    &\ok\ot\COMverify(c,m,u),\
    \ok\ot\COMverify(c,m',u')]
    \mtqed
  \end{align*}
\end{definition}
We will show below that this definition is \emph{not} the right one in the
quantum setting.

\cite{qpok} also introduces a stronger variant of the binding
property, called strict binding, which requires that also the opening
information $u$ is unique (not only the message). 
The results from \cite{qpok} show that strict binding commitments
can behave better under rewinding, so perhaps strict binding commitments 
can avoid the problems that merely binding commitments have?
We define a
computational variant of this property here:
\begin{definition}[Computationally strict binding]\label{def:strict.binding}
A commitment scheme $\COM,\penalty0 \COMverify$ is \emph{computationally
    strict binding}%
  \index{strict binding!computationally}%
  \index{binding!computationally strict}%
  \index{computationally strict binding}%
  \index{commitment scheme!computationally strict binding} iff for any
  quantum polynomial-time algorithm $A$ the following probability is negligible:
  \begin{align*}
    \Pr[&\ok=\ok'=1\fullonly\ \land\fullonly\ (m,u)\neq (m',u'):(c,m,u,m',u')\ot A,\\
    &\ok\ot\COMverify(c,m,u),\
    \ok\ot\COMverify(c,m',u')]
    \mtqed
  \end{align*}
\end{definition}
We will show below that this stronger definition is also not sufficient.

\begin{definition}[Statistically hiding]\label{def:hiding}
  A commitment scheme $\COM,\COMverify$ is \emph{statistically
    hiding}%
  \index{hiding!statistically}%
  \index{statistically hiding}%
  \index{commitment scheme!statistically hiding} iff for all $m_1,m_2$
  with $\abs{m_1}=\abs{m_2}$ and $c_i\ot\COM(m_i)$ for $i=1,2$, $c_1$
  and $c_2$ are statistically indistinguishable.
\end{definition}

\paragraph{The attack.} We now state the insecurity of computationally
binding commitments. The remainder of this section will prove the following
theorem.
\begin{theorem}[Insecurity of binding commitments]\label{theo:com.attack}
  There is an oracle $\calO$ and a non-interactive commitment scheme
  $\COM,\COMverify$ such that:
  \begin{compactitem}
  \item The scheme is perfectly complete, computationally binding, computationally strict
    binding, and statistically hiding.
  \item There is a quantum polynomial-time adversary $B_1,B_2$ such that for
    all $m$, 
    \begin{align*}
    \Pr[\ok=1:
    \shortonly{{}&}
    c\ot B_1(\abs m), u\ot B_2(m),
    \shortonly{\\&}
    \ok\ot\COMverify(c,m,u)]
    \end{align*}
    is overwhelming. (In other words, the adversary can open to a
    value $m$ that he did not know while committing.)
\tqed
  \end{compactitem}
\end{theorem}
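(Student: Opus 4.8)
\emph{Proof (plan).}
The plan is to take $\calO$ to consist of (a sub-tuple of) the oracles $\Oall=(\OE,\OP,\OR,\OS,\OF,\Opsi,\OV)$ of \autoref{def:ora.dist}: only $\OS,\OF,\Opsi,\OV$ are actually needed, but handing the adversary the remaining oracles costs nothing since \autoref{coro:pick.one.sound-oall} already accounts for an adversary holding all of $\Oall$. I would let the message space be $\bits\ellch$ (recall $\ellch$ is logarithmic, so messages are short; longer messages are committed block-by-block, which I omit). A commitment to $m\in\bits\ellch$ is simply a string $\com\in\bits\ellcom=Y$, and an opening of $\com$ to $m$ is a pair $(\ch,\resp)\in\Sy\com$ with $\ch=m$; formally $\COMverify(\com,m,(\ch,\resp)):=1$ iff $\ch=m$ and $\OV(\com,(\ch,\resp))=1$, while $\COM(m)$ queries $\OS$ on fresh randomness, obtaining pairs $(\com,(\ch,\resp))$ with $(\ch,\resp)\in\Sy\com$, until it sees one with $\ch=m$, which it outputs. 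The crucial structural point is that a double opening of a commitment $\com$ is exactly a pair of \emph{distinct} elements $(\ch,\resp),(\ch',\resp')$ of the \emph{same} set $\Sy\com$ --- an instance of the Two Values problem.

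Next I would verify the ``honest'' properties. \emph{Perfect completeness}: every answer of $\OS$ lies in the corresponding $\Sy{}$, so whenever $\COM(m)$ halts its output verifies; and it halts with probability $1$ (after expected polynomially many queries), since for a uniform $\com$ and a uniform element of $\Sy\com$ the first coordinate equals $m$ with probability $\Omega(1/\poly)$. That estimate, and \emph{statistical hiding}, both reduce to one concentration bound: for a uniformly random $\kk$-subset $\Sy\com$ of $X$ with $\kk=2^{\ellch+\floor{\ellresp/3}}$, the number of $\resp$ with $(m,\resp)\in\Sy\com$ is hypergeometric with mean $2^{\floor{\ellresp/3}}$, hence by a Chernoff bound equals $2^{\floor{\ellresp/3}}(1\pm o(1))$ for \emph{every} pair $(\com,m)$ except with negligible probability over $\calO$ --- the union bound over the $2^{\ellcom+\ellch}$ pairs is negligible because $\ellresp$ is superlogarithmic. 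Conditioned on this good event, the output commitment of $\COM(m)$ is, for every $m$, statistically close to uniform on $\bits\ellcom$, in particular independent of $m$ up to negligible distance (this is exactly statistical hiding), and the rejection-sampling loop has per-iteration success probability $\geq\tfrac12\cdot 2^{-\ellch}=\Omega(1/\poly)$.

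\emph{Computational binding and strict binding}: suppose $A$ is a quantum polynomial-time algorithm outputting $(\com,m,(\ch,\resp),m',(\ch',\resp'))$ with both openings accepted and $m\neq m'$ (respectively $(m,(\ch,\resp))\neq(m',(\ch',\resp'))$). Acceptance forces $\ch=m$, $\ch'=m'$, and $(\ch,\resp),(\ch',\resp')\in\Sy\com$; and in either case the stated inequality forces $(\ch,\resp)\neq(\ch',\resp')$ (for strict binding, note $m=\ch$, so $(m,(\ch,\resp))$ and $(\ch,\resp)$ determine each other). Thus $A$ produces $(\com,\ch,\resp,\ch',\resp')$ with $(\ch,\resp)\neq(\ch',\resp')$ both in $\Sy\com$, which by the hardness of the Two Values problem (\autoref{coro:pick.one.sound-oall}; finding two distinct elements of a single $\Sy\com$ is infeasible) happens only with negligible probability. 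Hence both binding notions hold.

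Finally, the \emph{equivocation attack}. Let $B_1(\abs m)$ apply $\Opsi$ to $\ketbot$ to produce $\ketpsi=\sum_{y\in Y}\ket y\ketpsiy y/\sqrt{\abs Y}$, measure the first register to obtain a uniformly random $\com\in\bits\ellcom$ (collapsing the remainder to $\ketpsiy\com$), and output $c:=\com$ while keeping $\com$ and $\ketpsiy\com$ as internal state (equivalently, $B_1$ runs $E_1$ of \autoref{theo:pick1.complete} on $\ketpsi$). Later, let $B_2(m)$ run the single-value search algorithm $E_2$ of \autoref{theo:pick1.complete} with oracles $\OV,\OF$, predicate $P(\ch,\resp):=[\ch=m]$, target $y:=\com$, input $\ketpsiy\com$, and parameters $n:=\secpar$ and $\delta_{\min}:=\tfrac12\cdot 2^{-\ellch}$; by the concentration estimate above, the $P$-fraction of $\Sy\com$ is $\geq\delta_{\min}$ for this $m$ (with overwhelming probability over $\calO$), so $E_2$ returns $(\ch,\resp)\in\Sy\com$ with $\ch=m$ except with probability $2^{-\secpar}$, and $B_2$ outputs $u:=(\ch,\resp)$; then $\COMverify(\com,m,u)=1$ with overwhelming probability, and $B_1,B_2$ run in polynomial time because $\ellch$ is logarithmic and $\ellcom$ polynomial. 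I expect essentially no obstacle inside this section: the genuine difficulty is entirely outsourced to \autoref{coro:pick.one.sound-oall} (ultimately the representation-theoretic adversary-method bound behind \autoref{theo:pickone.sound}); what remains is only to arrange that a double opening \emph{is} a same-$\com$ collision, together with the hypergeometric bookkeeping that makes honest committing, statistical hiding, and the success of $E_2$ go through simultaneously for the parameters $\kk=2^{\ellch+\floor{\ellresp/3}}$ with $\ellch$ logarithmic and $\ellcom,\ellresp$ superlogarithmic.
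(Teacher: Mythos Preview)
Your approach is correct in spirit and genuinely different from the paper's. The paper does not set $m=\ch$; instead it commits bit-by-bit: for each message bit $m_i$ it calls $\OS$ once to get $(y_i,x_i)$, picks a random position $p_i$, and publishes $b_i:=m_i\oplus\ibit{p_i}(x_i)$ together with $y_i,p_i$; the opening is $x_i$. Binding then reduces to Two Values exactly as in your argument (two openings of the same $y_i$ are two distinct elements of $\Sy{y_i}$), hiding follows because $\ibit{p_i}(x_i)$ is nearly unbiased given $y_i$, and the attack uses $E_1,E_2$ with the predicate $P(x)=[\ibit{p_i}(x)=m_i\oplus b_i]$ --- density $\approx\tfrac12$ rather than your $\approx 2^{-\ellch}$. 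Your construction is arguably more direct (an opening \emph{is} an element of $\Sy\com$ with prescribed $\ch$), and the block-by-block extension is straightforward.

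One small gap: your honest $\COM(m)$ is a rejection sampler through $\OS$, so as a strict polynomial-time algorithm it only achieves overwhelming, not \emph{perfect}, completeness (truncating the loop loses the ``perfect''; not truncating loses ``polynomial time''). The paper sidesteps this because its $\COM$ makes exactly one $\OS$ call per bit and always outputs a verifying pair. This is easy to patch (e.g., relax to completeness, or keep the theorem as stated and accept expected polynomial time for the committer), but it is worth flagging since the theorem explicitly says ``perfectly complete''. A second, non-local point: the paper's particular bit-masking format (the indices $p_i$) is reused later to get the fixpoint property needed in the attack on Fischlin's scheme; your scheme would not provide that, so if you go this route you'd need a separate construction there.
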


In the rest of this section, when referring to the sets $\Sy\com$ from
\autoref{def:ora.dist}, we will call them $\Sy y$ and
we refer to their members as $x\in\Sy y$. (Not
$(\ch,\resp)\in\Sy\com$.) In particular, oracles such as $\OS$ will
returns pairs $(y,x)$, not triples $(\com,\ch,\resp)$, etc.

We construct a commitment scheme relative to the
oracle $\Oall$ from \autoref{def:ora.dist}. (Note: that oracle distribution
contains more oracles than we need for \autoref{theo:com.attack}.
However, we will need in later sections that our commitment scheme is
defined relative to the same oracles as the proof systems there.)
\begin{definition}[Bad commitment scheme]\label{def:com}
  Let $\ibit i(x)$ denote the $i$-th bit of $x$.
  We define $\COM,\COMverify$%
  \symbolindexmark{\COM}\symbolindexmark{\COMverify} as follows:
  \begin{compactitem}
    \item $\COM(m)$\textnormal: For $i=1,\dots,\abs m$, pick
      $z_i\otR\bits\ellrand$ and let  $(y_i,x_i):=\OS(z_i)$. Let
      $p_i\otR\{1,\dots,\ellch+\ellresp\}$. 
      Let $b_i:=m_i\oplus\ibit{p_i}(x_i)$. Let
      $c:=(p_1,\dots,p_{\abs m},y_1,\dots,y_\abs m,b_1,\dots,b_{\abs m})$ and
      $u:=(x_1,\dots,x_{\abs m})$. Output $(c,u)$.
    \item $\COMverify(c,m,u)$ with
      $c=(p_1,\dots,p_n,y_1,\dots,y_n,b_1,\dots,b_n)$ and
      $u=(x_1,\dots,x_n)$\textnormal: Check whether $\abs m=n$. Check whether
      $\OV(y_i,x_i)=1$ for $i=1,\dots,n$. Check whether
      $b_i=m_i\oplus\ibit{p_i}(x_i)$ for $i=1,\dots,n$. Return $1$ if all
      checks succeed.\tqed
  \end{compactitem}
\end{definition}
For the results of the current section, there is actually no need for
the values $p_i$ which select which bit of $x_i$ is used for masking
the committed bit $m_i$. (E.g., we could always use the least
significant bit of $x_i$.)
But in \autoref{sec:fischlin} (attack on Fischlin's scheme) we will
need commitments of this particular form to enable a specific attack
where we need to open commitments to certain
values while \emph{simultaneously} searching for these values in the first place.

\begin{lemma}[Properties of $\COM$]\label{lemma:com.props}
  The scheme from \autoref{def:com} is perfectly complete, computationally binding,
  computationally strict binding, and statistically hiding. (Relative
  to $\Oall$.)\tqed
\end{lemma}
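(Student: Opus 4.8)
The plan is to establish the four properties one at a time, reducing the two binding notions to the hardness of the Two Values problem and handling statistical hiding by a small concentration estimate. \emph{Perfect completeness} is immediate: if $(c,u)\ot\COM(m)$ with $c=(p_1,\dots,p_n,y_1,\dots,y_n,b_1,\dots,b_n)$ and $u=(x_1,\dots,x_n)$, then each $(y_i,x_i)=\OS(z_i)$, so $x_i\in\Sy{y_i}$ by the definition of $\OS$ in \autoref{def:ora.dist} and hence $\OV(y_i,x_i)=1$, while $b_i=m_i\oplus\ibit{p_i}(x_i)$ holds by construction; thus $\COMverify$ accepts with probability $1$.

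For the two \emph{binding} properties I would reduce to \autoref{coro:pick.one.sound-oall}. Suppose a quantum polynomial-time $A$ outputs $(c,m,u,m',u')$ that makes both calls to $\COMverify$ accept, with $c=(p_1,\dots,p_n,y_1,\dots,y_n,b_1,\dots,b_n)$, $u=(x_1,\dots,x_n)$, $u'=(x_1',\dots,x_n')$. Acceptance forces $\abs m=\abs{m'}=n$ and $\OV(y_i,x_i)=\OV(y_i,x_i')=1$, i.e.\ $x_i,x_i'\in\Sy{y_i}$, for all $i$. If $(m,u)\neq(m',u')$, then either $m_i\neq m_i'$ for some $i$ --- in which case $b_i=m_i\oplus\ibit{p_i}(x_i)=m_i'\oplus\ibit{p_i}(x_i')$ forces $\ibit{p_i}(x_i)\neq\ibit{p_i}(x_i')$ and hence $x_i\neq x_i'$ --- or else $m=m'$ and $x_i\neq x_i'$ for some $i$. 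Thus the reduction parses $A$'s output, picks such an index $i$ from the classical information alone, and outputs $(y_i,x_i,x_i')$ without any extra oracle queries; whenever $A$ wins, $(y_i,x_i,x_i')$ is a valid Two Values collision ($x_i\neq x_i'$, both in $\Sy{y_i}$), and the reduction makes only polynomially many queries to each oracle in $\Oall$. By \autoref{coro:pick.one.sound-oall}\,\eqref{item:pick.one.xx} this has negligible probability. Since an attack on computational binding is exactly such an output with $m\neq m'$, and on computational strict binding with $(m,u)\neq(m',u')$, both properties follow.

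For \emph{statistical hiding} I would note that in $c=(\vec p,\vec y,\vec b)$ the part $(\vec p,\vec y)$ is distributed independently of $m$ --- $\vec p$ is uniform, and since $\OS(z)$ returns a uniform $y\in Y$ for uniform $z$, $\vec y$ is a tuple of $\abs m$ independent (almost) uniform elements of $Y$ --- while, conditioned on $(\vec p,\vec y)$, the bits $b_i=m_i\oplus\ibit{p_i}(x_i)$ are independent with $x_i$ uniform in $\Sy{y_i}$. It then suffices to bound, averaged over the choice of $\Oall$ (consistent with the convention for oracle-relative statements, cf.\ \autoref{def:total.break}), the statistical distance between $\ibit p(x)$ for $x\otR\Sy y$ and a uniform bit. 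Since $\Sy y$ is a uniformly random $\kk$-subset of $X=\bits\ellch\times\bits\ellresp$, the count $N:=\abs{\{x\in\Sy y:\ibit p(x)=0\}}$ is hypergeometric with mean $\kk/2$ and variance $O(\kk)$, so $\E[\babs{N/\kk-1/2}]=O(1/\sqrt\kk)$; as $\ellresp$ is superlogarithmic, $\kk=2^{\ellch+\floor{\ellresp/3}}$ is superpolynomial and $1/\sqrt\kk$ negligible, so summing over the $\abs m$ coordinates gives $\SD(\COM(m_1);\COM(m_2))=O(\abs m/\sqrt\kk)$, which is negligible. This last step is where the care is needed: one cannot union-bound the event ``every $\Sy y$ is balanced'' over all $2^{\ellcom}$ choices of $y$ (as $\ellcom$ may dwarf $\kk$), so one either averages over $\Oall$ as above or, via Hoeffding and Markov, shows that with overwhelming probability over $\Oall$ all but a negligible fraction of the $y\in Y$ have $\Sy y$ balanced in every coordinate $p$, whence the randomly chosen $y_i$ in $\COM$ are balanced except with negligible probability --- and keeping this bound compatible with the masking selectors $p_i$ (needed later for the Fischlin attack) is the only delicate bookkeeping.
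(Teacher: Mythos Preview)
Your arguments for perfect completeness and for the two binding properties are correct and essentially identical to the paper's (the paper simply observes that computational binding is implied by computational strict binding and reduces the latter to \autoref{coro:pick.one.sound-oall} exactly as you do).

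For statistical hiding, your hypergeometric concentration argument showing that $\ibit p(x)$ for $x\otR\Sy y$ is close to a uniform bit is the same as the paper's \autoref{lemma:s.lsb}, and you correctly flag that one cannot union-bound balancedness of $\Sy y$ over all $2^{\ellcom}$ values of $y$. However, there is a \emph{second} subtlety your sketch misses. The adversary sees all of $\Oall$, in particular $\OS$. Thus the claim ``conditioned on $(\vec p,\vec y)$, \ldots\ $x_i$ uniform in $\Sy{y_i}$'' is false once $\Oall$ is in the view: for a \emph{fixed} $\OS$, the conditional distribution of $x_i$ given $y_i$ is the empirical distribution of second coordinates over $\{z:\OS(z)_1=y_i\}$, which need not be close to uniform on $\Sy{y_i}$. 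Saying ``averaged over $\Oall$'' does not by itself fix this, since you need to bound $\E_{\Oall}[\SD(c_1,c_2\mid\Oall)]$ rather than $\SD(\E_{\Oall}[c_1],\E_{\Oall}[c_2])$; the gap between the two is precisely the adversary's ability to inspect $\OS$. Your closing caveats address the randomness of the sets $\Sy y$ but not this separate issue about the random function $\OS$.

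The paper closes this gap with an additional step before the hypergeometric argument: by \autoref{lemma:empirical}, for a random function $\OS$ and uniform $z$, the joint distribution $(\OS,\OS(z))$ is within statistical distance $\mu_1=2^{(\ellcom-\ellrand)/2-1}\sqrt{\kk}$ of $(\OS,(\hat y,\hat x))$ with $(\hat y,\hat x)$ a fresh sample ($\hat y$ uniform, $\hat x\otR\Sy{\hat y}$). Since $\ellrand=\ellcom+\ellresp$ and $\kk=2^{\ellch+\floor{\ellresp/3}}$, this $\mu_1$ is negligible. Only after this replacement does your concentration step apply, giving the paper's $\mu_2=1/(2\sqrt{\kk})$, and the final bound $\SD((\Oall,c_1);(\Oall,c_2))\leq 2\abs m(\mu_1+\mu_2)$. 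So your hiding argument has the right shape but needs this extra empirical-distribution step to be complete.
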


The computational binding and computational strict binding property
are a consequence of \autonameref{coro:pick.one.sound-oall}: to open a
commitment to two different values, the adversary would need to find
one $y_i$ (part of the commitment) and two
$x_i\in\Sy{y_i}$ (part of the two openings). \autoref{coro:pick.one.sound-oall} states that this
only happens with negligible probability. Statistical hiding follows
from the fact that for each $y_i$, there are superpolynomially many
$x_i\in\Sy{y_i}$, hence $\ibit{p_i}(x_i)$ is almost independent of~$y_i$.

\begin{fullversion}
  The proof is given in \autoref{sec:proof:lemma:com.props}.
\end{fullversion}

\begin{lemma}[Attack on $\COM$]\label{lemma:com.attack}
  There is a quantum polynomial-time adversary $B_1,B_2$ such that for all
  $m$,
  \fullshort{
    \begin{align*}
    \fullonly{\varepsilon_{\COM}:=}\Pr[\ok=1:
    c\ot B_1(\abs m), u\ot B_2(m),
    \ok\ot\COMverify(c,m,u)]
  \end{align*}
}{
  $
    \fullonly{\varepsilon_{\COM}:=}\Pr[\ok=1:
    c\ot B_1(\abs m), u\ot B_2(m),
    \ok\ot\COMverify(c,m,u)]
    $}  is overwhelming.\tqed
\end{lemma}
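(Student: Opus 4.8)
The plan is to mount the attack directly with the pick-one trick of \autoref{theo:pick1.complete}, committing bit by bit. First I would have $B_1(\abs m)$ do the following for each $i=1,\dots,\abs m$: initialize a fresh register in state $\ketbot$, apply $\Opsi$ to obtain $\ketpsi$, and run the algorithm $E_1$ of \autoref{theo:pick1.complete} to get a uniformly random $y_i\in\bits\ellcom$ together with the state $\ketpsiy{y_i}$. In addition $B_1$ picks $p_i\otR\{1,\dots,\ellch+\ellresp\}$ and $b_i\otR\bit$, outputs the commitment $c:=(p_1,\dots,p_{\abs m},y_1,\dots,y_{\abs m},b_1,\dots,b_{\abs m})$, and hands the states $\ketpsiy{y_1},\dots,\ketpsiy{y_{\abs m}}$ (together with the classical values $y_i,p_i,b_i$) to $B_2$ as its internal state. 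Then $B_2(m)$, for each $i$, runs $E_2^{\OV,\OF,P_i}(\secpar,\tfrac14,y_i,\ketpsiy{y_i})$, where $P_i$ is the efficiently computable predicate with $P_i(x)=1$ iff $\ibit{p_i}(x)=m_i\oplus b_i$, obtaining some $x_i$; finally $B_2$ outputs $u:=(x_1,\dots,x_{\abs m})$. By \autoref{theo:pick1.complete}, instantiated with $n=\secpar$, $\delta_{\min}=\tfrac14$, and $\abs{y_i}=\ellcom$ polynomially bounded, both $E_1$ and $E_2$ run in polynomial time, so $B_1,B_2$ are quantum polynomial-time algorithms with access to $\Oall$.

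For correctness I would condition on the good event $\mathcal G$ that $\abs{\{x\in\Sy{y_i}:P_i(x)=1\}}\geq\tfrac14\kk$ for every $i$. Since $B_1$ obtains $y_i$ by measuring the $y$-register of $\ketpsi$, the outcome $y_i$ is uniform and independent of the sets $(\Sy y)_{y}$, so $\Sy{y_i}$ is a uniformly random size-$\kk$ subset of $X=\bits\ellch\times\bits\ellresp$; as exactly half of $X$ satisfies any fixed single-bit condition, sampling $\kk$ elements without replacement together with a Hoeffding/Chernoff bound gives $\Pr[\abs{\{x\in\Sy{y_i}:P_i(x)=1\}}<\tfrac14\kk]\le 2^{-\Omega(\kk)}$, which is negligible because $\kk=2^{\ellch+\floor{\ellresp/3}}$ with $\ellresp$ superlogarithmic; a union bound over $i\le\abs m$ shows $\Pr[\neg\mathcal G]$ is negligible. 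Conditioned on $\mathcal G$, the precondition of \autoref{theo:pick1.complete} is met in every call, so each $x_i$ satisfies $x_i\in\Sy{y_i}\land P_i(x_i)=1$ except with probability $2^{-\secpar}$, and a union bound over $i$ shows all $\abs m$ values $x_i$ are simultaneously good except with negligible probability. Whenever this happens, $\COMverify(c,m,u)$ from \autoref{def:com} accepts: $\abs m$ equals the number of components of $c$; $\OV(y_i,x_i)=1$ since $x_i\in\Sy{y_i}$; and $b_i=m_i\oplus\ibit{p_i}(x_i)$ since $\ibit{p_i}(x_i)=m_i\oplus b_i$. Hence $\Pr[\ok=1]\ge 1-\Pr[\neg\mathcal G]-\abs m\cdot 2^{-\secpar}$ is overwhelming.

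The hard part here is conceptual rather than technical. One must notice that $E_2$ consumes (destroys) the superposition $\ketpsiy{y_i}$ that it searches over, so a separate copy of $\ketpsi$, hence a separate index $y_i$, is needed for each committed bit; this is exactly why $\COM$ in \autoref{def:com} commits bit by bit rather than treating $m$ atomically. The only genuine calculation is the balancedness estimate for the random sets $\Sy{y_i}$, which is where the superlogarithmic size $\kk$ enters; everything else is bookkeeping. I should also stress that this lemma only asserts that $\COMverify$ accepts, not that the $c$ produced by $B_1$ is distributed like an honest commitment — the distribution of $c$ matters only for hiding, which is handled separately in \autoref{lemma:com.props}.
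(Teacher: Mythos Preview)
Your proof is correct and follows essentially the same approach as the paper, differing only in the parameter choices ($n=\secpar$, $\delta_{\min}=\tfrac14$ versus the paper's $n=\ellcom$, $\delta_{\min}=\tfrac13$). The paper's concentration step is marginally more cautious---it bounds both single-bit predicates $P'_0,P'_1$ separately and takes a union bound rather than arguing independence of $P_i$ from $\Sy{y_i}$ directly---but since $m$ is fixed in advance here and $p_i,b_i$ are sampled independently of the oracle, your independence argument is valid.
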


Basically, the adversary $B_1,B_2$ commits to a random commitment. And to unveil to a
message $m$, he needs to find values $x_i\in\Sy{y_i}$ with
$\ibit{p_i}(x_i)=m_i\oplus b_i$. Since half of all $x_i$ have this
property, such $x_i$ can be found using \autonameref{theo:pick1.complete}.

\begin{fullversion}
  The full proof is given in
  Appendix~\ref{app:proof:lemma:com.attack}.
\end{fullversion}

\medskip
\noindent
\autoref{theo:com.attack} then follows immediately from
\multiautoref{lemma:com.props,lemma:com.attack}.

\section{Attacking sigma-protocols}
\label{sec:attack.sigma}

We will now show that in general, sigma-protocols with special
soundness are not necessarily proofs of knowledge. \cite{qpok} showed
that if a sigma-protocol additionally has strict soundness, it is a
proof of knowledge.  It was left as an open problem whether that
additional condition is necessary.  The following theorem resolves
that open question by showing that the results from \cite{qpok} do not
hold without strict soundness (not even with computational strict
soundness), relative to an oracle.

\begin{theorem}[Insecurity of sigma-protocols]\label{theo:know.break.sigma}
  There is an oracle $\Oall$ and a relation $R$ and a sigma-protocol
  relative to~$\Oall$ with logarithmic $\ellch$ (challenge length),
  completeness, perfect special soundness, computational strict
  soundness, and statistical
  honest-verifier zero-knowledge for which there exists a total
  knowledge break.

  In contrast, a sigma-protocol relative to $\Oall$ with completeness,
  perfect special soundness, and statistical honest-verifier
  zero-knowledge is a classical proof of knowledge.
\end{theorem}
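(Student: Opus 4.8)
The plan is to realise the counterexample relative to the oracle distribution $\Oall$ from \autoref{def:ora.dist}, keeping its notation ($Y=\bits\ellcom$, $X=\bits\ellch\times\bits\ellresp$, $\kk=2^{\ellch+\floor{\ellresp/3}}$, and $R=\{(s_0,w_0)\}$ as recognised by $\OR$; note that $\ellch$ is logarithmic, as required). Define the sigma-protocol by letting $P_1(s_0,w_0)$ pick $\com\otR\bits\ellcom$ and send it, letting the verifier reply with a uniform $\ch$, letting $P_2$ pick $z\otR\bits\ellrand$ and send $\resp:=\OP(w_0,\com,\ch,z)$, and letting $V(s,\com,\ch,\resp)$ accept iff $s=s_0$ and $\OV(\com,(\ch,\resp))=1$. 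First I would show that, since $\kk/2^{\ellch}=2^{\floor{\ellresp/3}}$ is superpolynomial, a hypergeometric tail bound together with a union bound over the (at most singly-exponentially many) pairs $(\com,\ch)$ yields: with overwhelming probability over the choice of $\Oall$, every ``slice'' $\{\resp:(\ch,\resp)\in\Sy\com\}$ has size $(1\pm\mathrm{negl})\cdot 2^{\floor{\ellresp/3}}$. On that event, completeness is immediate ($\OP$ always returns a valid $\resp$); perfect special soundness holds with $\ESigma(s,\com,\ch,\resp,\ch',\resp'):=\OE(\com,\ch,\resp,\ch',\resp')$, which by construction returns $w_0$ whenever both transcripts accept and $\ch\neq\ch'$ (so $(\ch,\resp)$ and $(\ch',\resp')$ are distinct elements of $\Sy\com$), and $(s_0,w_0)\in R$; computational strict soundness is exactly the ``no collision in $\Sy\com$'' clause of \autoref{coro:pick.one.sound-oall}; and statistical honest-verifier zero knowledge holds with the simulator $\SSigma(s)$ that picks $z\otR\bits\ellrand$, sets $(\com,(\ch,\resp)):=\OS(z)$, and outputs $(\com,\ch,\resp)$ --- the honest and simulated transcripts have the same marginal on $\com$ and differ on $(\ch,\resp)$ only through a reweighting by $2^{\floor{\ellresp/3}}/\abs{\text{slice}}=1\pm\mathrm{negl}$, so their statistical distance is negligible.

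\paragraph{The total knowledge break.} The attacker $A$ sets $s:=s_0$, applies $\Opsi$ to $\ketbot$ to obtain $\ketpsi$, and measures the $Y$-register, yielding a uniform $\com$ together with the state $\ketpsiy\com$ (this is exactly the algorithm $E_1$ of \autoref{theo:pick1.complete}); it sends $\com$. On receiving the verifier's challenge $\ch'$, it runs the Grover-type algorithm $E_2$ of \autoref{theo:pick1.complete} on $\ketpsiy\com$, using $\OV$ and $\OF$, with predicate $P(x):=\bigl[\text{the first }\ellch\text{ bits of }x\text{ equal }\ch'\bigr]$ and density threshold $\delta_{\min}:=\tfrac12 2^{-\ellch}$ (a valid lower bound on the solution density by the slice-size bound, and $1/\poly$ since $\ellch$ is logarithmic, so that $E_2$ runs in polynomial time). $E_2$ returns some $(\ch',\resp)\in\Sy\com$ with probability $\geq 1-2^{-\secpar}$, and $A$ sends $\resp$; since then $\OV(\com,(\ch',\resp))=1$, the verifier accepts with overwhelming probability, giving adversary success. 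For extractor failure, fix any polynomial-time $E$ with access to $A$'s final state: the composite experiment ``run $A$ against the honest $V$, then run $E$'' is a polynomial-time algorithm making polynomially many queries to $\Oall$ and outputting some $w$, so by the ``$w_0$ is hard to guess'' clause of \autoref{coro:pick.one.sound-oall} we get that $\Pr[w=w_0]$ is negligible; since $(s_0,w)\in R$ iff $w=w_0$, extraction fails. By \autoref{def:total.break} this is a total knowledge break.

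\paragraph{The classical positive direction.} For any sigma-protocol relative to $\Oall$ with perfect special soundness (the remaining listed properties are not needed here) I would run the textbook rewinding extractor $K$: it runs the malicious classical prover $P^*$ (which queries only the classical oracles among $\Oall$) to obtain $\com$, feeds a uniform $\ch$ and records $(\com,\ch,\resp)$; it then resets $P^*$, feeds an independent uniform $\ch'$, records $(\com,\ch',\resp')$, and if both transcripts accept and $\ch\neq\ch'$ outputs $w\ot\ESigma(s,\com,\ch,\resp,\ch',\resp')$, which satisfies $(s,w)\in R$ by perfect special soundness. A standard ``heavy-row'' averaging argument bounds the knowledge error by $2^{-\ellch}$: if $P^*$ convinces $V$ with probability $p>2^{-\ellch}$, then $K$ extracts in expected time $\poly(\secpar)/(p-2^{-\ellch})$. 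This relativises cleanly because classical rewinding is unproblematic and $\ESigma$ only needs the classical oracle $\OE$. In particular there is no \emph{classical} total knowledge break, which is the asserted contrast with the first part.

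\paragraph{Main obstacle.} The one deep ingredient is the extractor-failure claim, which rests entirely on \autoref{coro:pick.one.sound-oall} and, behind it, on \autoref{theo:pickone.sound} (hardness of the Two Values problem), together with the reductions inside that corollary which strip off $\Opsi$ (via \autoref{theo:emulate.opsi}) and $\OS$ (via a small-range-distribution argument); I would invoke these as black boxes. Everything else is quantitative bookkeeping: choosing $\kk$ so that the slices are simultaneously dense enough for completeness and HVZK and still give Grover inside $E_2$ an inverse-polynomial success density, while keeping $\abs Y$ and $\abs X/\kk$ superpolynomial so that the bound of \autoref{theo:pickone.sound} remains negligible, and checking that all oracles stay unitary and self-inverse (which \autoref{def:ora.dist} already arranges).
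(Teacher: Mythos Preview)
Your proposal is correct and in fact takes a cleaner route than the paper for this particular theorem. The paper's sigma-protocol (\autoref{def:sigma}) bundles into the first message commitments $c_\ch$ to all responses, so that $\com^*=(\com,(c_\ch)_\ch)$ and $\resp^*=(\resp,u)$; it then derives computational strict soundness from the strict-binding property of $\COM$ (\autoref{lemma:com.props}). You instead use the bare protocol (just $\com$, $\ch$, $\resp$) and observe that breaking computational strict soundness here means producing $(\ch,\resp)\neq(\ch,\resp')$ both in $\Sy\com$ --- which is precisely the event ruled out by \autoref{coro:pick.one.sound-oall}\,(ii). That is a genuine simplification: the paper even remarks that ``the commitments $c_\ch$ are only needed to get computational strict soundness'', but your argument shows they are not needed at all for \autoref{theo:know.break.sigma}. (The paper really needs the commitments, and their specific bit-selection form, only for the Fischlin attack in \autoref{sec:fischlin}, where a fixpoint trick $\COMstar/\COMopenstar$ is used; it simply reuses the same sigma-protocol throughout.) The attack and the extractor-failure argument simplify accordingly, and your invocation of \autoref{theo:pick1.complete} with $\delta_{\min}=2^{-\ellch-1}$ is exactly the paper's (minus the commitment-opening step).

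One point to tighten: your HVZK sketch compares the \emph{ideal} transcript distributions (uniform $\com$, uniform $\ch$, uniform $\resp$ from the slice, versus uniform $(\com,\ch,\resp)$ from $\bigcup_\com\{\com\}\times\Sy\com$) and argues they differ by a $1\pm\mathrm{negl}$ slice-size reweighting. But the distinguisher is unbounded and has oracle access to $\Oall$, so what must be negligible is $\SD\bigl((\Oall,\text{honest});(\Oall,\text{sim})\bigr)$. Conditioned on $\Oall$, the honest response follows the \emph{empirical} distribution of $\OP(w_0,\com,\ch,\cdot)$ and the simulated transcript follows the empirical distribution of $\OS(\cdot)$; the slice-size argument alone does not control these. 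The fix is the standard empirical-distribution step (the paper's \autoref{lemma:empirical}, combined with \autoref{lemma:pick.dep} for the reweighting): since $\ellrand=\ellcom+\ellresp$ is large relative to the support sizes, each empirical distribution is within $2^{-\Omega(\ellresp)}$ of its ideal in statistical distance even jointly with the oracle. With that added, your HVZK proof goes through for the bare protocol exactly as the paper's does for the commitment-augmented one.
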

Note that a corresponding theorem with polynomially bounded $\ellch$
follows immediately by parallel repetition of the sigma-protocol.

The remainder of this section will prove
\autoref{theo:know.break.sigma}.  As a first step, we construct the
sigma-protocol.
\begin{definition}[Sigma-protocol]\label{def:sigma}
  Let $\COM,\COMverify$ be the commitment scheme from
  \autoref{def:com}.\footnote{The commitment described there has
    the property that it is computationally binding, but still it is
    possible for the adversary to open the commitment to any value,
    only not to several values at the same time. The commitment is
    defined relative to the same oracle distribution as the
    sigma-protocol here, which is why we can use it.}

  Relative to the oracle distribution from \autoref{def:ora.dist}, we
  define the following sigma-protocol
  $(\ellcom,\ellch,\ellresp,P_1,P_2,V,R)$ for the relation
  $R:=\{(s_0,w_0)\}$:
  \begin{compactitem}
    \item $P_1(s,w)$ picks $\com\otR\bits\ellcom$.
      For each $\ch\in\bits\ellch$, he picks $z_\ch\otR\bits\ellrand$
      and computes
      $\resp_\ch:=\OP(w,\com,\ch,z_\ch)$ 
      and $(c_\ch,u_\ch)\ot\COM(\resp_\ch)$. Then $P_1$ outputs
      $\com^*:=(\com,(c_\ch)_{\ch\in\bits\ellch})$.
    \item $P_2(\ch)$ outputs $\resp^*:=(\resp_\ch,u_\ch)$.
    \item For $\com^*=(\com,(c_\ch)_{\ch\in\bits\ellch})$ and
      $\resp^*=(\resp,u)$, let $V(s,\com^*,\ch,\resp^*):=1$ iff
      $\OV(\com,\ch,\resp)=1$ and $s=s_0$ and
      $\COMverify(c_\ch,\resp,u)=1$.\tqed
  \end{compactitem}
\end{definition}

The commitments $c_\ch$ are only needed to get computational strict
soundness. A slightly weaker \autoref{theo:know.break.sigma} without
computational strict soundness can be achieved using the sigma-protocol
from \autoref{def:sigma} without the commitments $c_\ch$; the proofs
stay the same, except that the steps relating to the commitments are
omitted.

\begin{lemma}[Security of the sigma-protocol]\label{lemma:sigma.sec}
  The sigma-protocol from \autoref{def:sigma} has: completeness,
  perfect special soundness, computational strict soundness,
  statistical honest-verifier zero-knowledge, commitment entropy.\tqed
\end{lemma}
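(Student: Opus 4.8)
The plan is to check the five properties one at a time; each reduces to a result proven earlier in the paper or to an elementary concentration estimate for the random sets $\Sy\com$ of \autoref{def:ora.dist}. Recall these have size $\kk=2^{\ellch+\floor{\ellresp/3}}$, so each fixed challenge $\ch$ occurs in $\Sy\com$ with a number of responses that concentrates around $2^{\floor{\ellresp/3}}$, which is superpolynomial since $\ellresp$ is superlogarithmic. Two of the properties are essentially immediate. For \emph{completeness}, the prover on $(s_0,w_0)$ computes $\resp_\ch=\OP(w_0,\com,\ch,z_\ch)$ for each $\ch\in\bits\ellch$, which by \autoref{def:ora.dist} is a uniform $\resp$ with $(\ch,\resp)\in\Sy\com$ unless no such $\resp$ exists; a Chernoff bound on the hypergeometric count of such $\resp$ (mean $2^{\floor{\ellresp/3}}$) together with a union bound over the polynomially many $\ch$ shows $\resp_\ch\neq\bot$ for all $\ch$ except with negligible probability, and then $\OV(\com,\ch,\resp_\ch)=1$, $s=s_0$, and $\COMverify(c_\ch,\resp_\ch,u_\ch)=1$ by perfect completeness of $\COM$, so $V$ accepts. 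For \emph{commitment entropy}, the first component of $\com^*=(\com,(c_\ch)_\ch)$ is a uniform $\com\in\bits\ellcom$, hence $H_\infty(\com^*)\ge\ellcom$, which is superlogarithmic by \autoref{def:ora.dist}.

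For \emph{perfect special soundness} I would take $\ESigma$, on input $(s,\com^*,\ch,\resp^*,\ch',\resp'^*)$ with $\ch\neq\ch'$, to recover from $\com^*$ its first component $\com$ and from $\resp^*,\resp'^*$ the pairs $(\resp,u),(\resp',u')$, and output $\OE(\com,\ch,\resp,\ch',\resp')$. If both conversations are accepting then $s=s_0$ and $(\ch,\resp),(\ch',\resp')\in\Sy\com$, and since $\ch\neq\ch'$ these are two distinct members of $\Sy\com$, so by \autoref{def:ora.dist} $\OE$ returns $w_0$ with $(s,w_0)=(s_0,w_0)\in R$; if some conversation rejects, the extraction requirement is vacuous. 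For \emph{computational strict soundness}, suppose a polynomial-time quantum $A$ outputs accepting $(s,\com^*,\ch,\resp^*,\resp'^*)$ with $\resp^*\neq\resp'^*$; then $s=s_0$, $(\ch,\resp),(\ch,\resp')\in\Sy\com$, and the two openings $(\resp,u),(\resp',u')$ of the commitment $c_\ch$ contained in $\com^*$ are both valid. If $\resp\neq\resp'$ this is a pair of distinct elements of $\Sy\com$, contradicting the collision bound of \autoref{coro:pick.one.sound-oall}; if $\resp=\resp'$ then $u\neq u'$ and running $A$ yields $(c_\ch,\resp,u,\resp,u')$, a polynomial-time break of the computational strict binding of $\COM$, contradicting \autoref{lemma:com.props}. (This is precisely the role of the commitments $c_\ch$ noted after \autoref{def:sigma}.)

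The main work is \emph{statistical honest-verifier zero-knowledge}. I would let the simulator $\SSigma$ (classical, polynomial-time, with oracle access to $\OS$ and to $\COM$) draw a fresh $z$, set $(\com,(\ch,\resp)):=\OS(z)$ (so $(\ch,\resp)\in\Sy\com$), compute $(c_\ch,u_\ch)\ot\COM(\resp)$ and $c_{\ch'}\ot\COM(0^{\ellresp})$ for each $\ch'\neq\ch$, and output $\com^*=(\com,(c_{\ch'})_{\ch'})$, $\ch$, $\resp^*=(\resp,u_\ch)$. In the real interaction $(\com,\ch)$ is exactly uniform and independent and, given it, $\resp_\ch$ is uniform among $\{\resp:(\ch,\resp)\in\Sy\com\}$; in the simulation $(\com,\ch,\resp)$ is a uniformly random pair $(\com,x)$ with $x\in\Sy\com$, so given $(\com,\ch)$ the response $\resp$ is again uniform in the same set, and the only discrepancy in the ``$(\com,\ch,\resp)$'' part is that the $(\com,\ch)$-marginal of the simulation is reweighted by $|\{\resp:(\ch,\resp)\in\Sy\com\}|$; averaging over the random choice of the sets $\Sy\com$, the relative fluctuation of these counts about their common mean $2^{\floor{\ellresp/3}}$ is $O(2^{-\floor{\ellresp/3}/2})$, hence negligible. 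The remaining discrepancy is that the off-challenge commitments commit to $0^{\ellresp}$ rather than to $\resp_{\ch'}$; a hybrid argument over the polynomially many $\ch'$, each step bounded by the statistical hiding error of $\COM$ (\autoref{lemma:com.props}), shows this contributes only a negligible statistical distance. Summing, the real and simulated transcripts are statistically indistinguishable for every (even unbounded) distinguisher $A$.

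I expect the two delicate points to be (a) the concentration estimate for the per-challenge sizes of $\Sy\com$ and the bookkeeping that converts it into a negligible statistical distance after averaging over $\Oall$, and (b) checking that the internal randomness of $\COM$ used for the off-challenge commitments (the choices $z_i,p_i$ and the positions queried in $\OS$) is genuinely independent of $\com,\ch,\resp_\ch,u_\ch$, so that revealing the opening $u_\ch$ of $c_\ch$ does not compromise the hiding hybrid. Everything else is routine given \autoref{coro:pick.one.sound-oall} and \autoref{lemma:com.props}.
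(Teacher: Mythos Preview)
Your arguments for completeness, commitment entropy, perfect special soundness, and computational strict soundness are correct and match the paper's (the paper's strict-soundness proof is slightly simpler: since $(\resp,u)\neq(\resp',u')$ and both open $c_\ch$, the computational strict binding of $\COM$ from \autoref{lemma:com.props} is violated directly, with no case split needed).

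There is a genuine gap in your HVZK argument. You write that in the real interaction, given $(\com,\ch)$, the response $\resp_\ch$ is uniform in $\{\resp:(\ch,\resp)\in\Sy\com\}$, and that in the simulation $(\com,\ch,\resp)$ is a uniformly random element of $\{(\com,x):x\in\Sy\com\}$. Both claims are only true after marginalising out the oracle $\Oall$, but in the relativized setting the (unbounded) distinguisher $A$ has access to $\Oall$, in particular to $\OP$ and $\OS$. From $A$'s point of view, $\resp_\ch=\OP(w_0,\com,\ch,z_\ch)$ is the value of a \emph{fixed} function at a random point $z_\ch$ (and $A$ can find $w_0$ by brute force via $\OR$ and then inspect $\OP(w_0,\com,\ch,\cdot)$), and likewise $(\com,\ch,\resp)=\OS(z)$ is the value of a fixed function at a random $z$. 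So what you must bound is the statistical distance between $(\Oall,\text{transcript}_{\text{real}})$ and $(\Oall,\text{transcript}_{\text{sim}})$, not just the transcript marginals.

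The paper closes this gap with an empirical-distribution lemma (\autoref{lemma:empirical}): if $\calO(x)$ are i.i.d.\ samples from $\calD$ indexed by $x\in\bits{\ellrand}$, then $\SD((\calO,\calO(z));(\calO,y))\le\tfrac12\sqrt{|\mathrm{supp}\,\calD|/2^{\ellrand}}$ for fresh $y\ot\calD$. This is applied once to replace $\OP(w_0,\com,\ch,z_\ch)$ by a fresh sample from $\calD_{\com,\ch}$ (your claimed distribution), and once to replace the fresh sample from $\calD'$ by $\OS(z)$; the reweighting step you did handle sits in between (the paper uses \autoref{lemma:pick.dep} there). Your concentration estimate for the reweighting is fine, but without the two empirical-distribution steps the hybrid chain does not connect the real and simulated views as seen by an oracle-accessing distinguisher. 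Your concern (b) about $\COM$'s internal randomness is not an issue: the statistical hiding of $\COM$ in \autoref{lemma:com.props} is already proved jointly with $\Oall$.
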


Perfect special soundness follows from the existence of the oracle
$\OE$. That oracle provides the witness $w_0$ given two accepting
conversations, as required by perfect special soundness. Computational
strict soundness stems from the fact that the message $\com^*$
contains commitments $c_\ch$ to all possible answers. Thus to break
computational strict soundness (i.e., to find two different accepting
$\resp^*$), the adversary would need to open one of the commitments
$c_\ch$ in two ways. This happens with negligible probability since
$\COM$ is computationally strict binding. Statistical honest-verifier
zero-knowledge follows from the existence of the oracle
$\OS$ which provides simulations. (And the commitment $c_\ch$ that are
not opened can be filled with arbitrary values due to the statistical
hiding property of $\COM$.)

\begin{fullversion}
  The full proof is given in Appendix~\ref{app:proof:lemma:sigma.sec}.
\end{fullversion}

\begin{lemma}[Attack on the sigma-protocol]\label{lemma:sigma-break}
  Assume that $\ellch$ is logarithmically bounded.
  Then there exists a total knowledge break (\autoref{def:total.break}) against
  the sigma-protocol from \autoref{def:sigma}.\tqed
\end{lemma}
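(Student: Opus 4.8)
The plan is to instantiate the pick-one trick exactly as outlined in the introduction, using the ``bad'' commitment of \autoref{lemma:com.attack} to supply the commitments $c_\ch$ that a malicious prover cannot produce honestly (it cannot: $\OP$ needs the witness $w_0$, and the pick-one trick yields only a \emph{single} accepting pair in $\Sy\com$). Concretely, I would let the attacker $A$ output $s:=s_0$ and then, in the commitment phase, query $\Opsi$ and run $E_1$ from \autoref{theo:pick1.complete} to obtain a uniformly random $\com\in\bits\ellcom$ together with the state $\ketpsiy\com$; for each of the (polynomially many, since $\ellch$ is logarithmic) challenges $\ch\in\bits\ellch$ it runs $B_1(\ellresp)$ from \autoref{lemma:com.attack} to produce a commitment $c_\ch$ without committing to any value yet, and it sends $\com^*:=(\com,(c_\ch)_{\ch\in\bits\ellch})$. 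On receiving the challenge $\ch'$, $A$ runs $E_2$ from \autoref{theo:pick1.complete} on $\ketpsiy\com$ with $n:=\secpar$, with $\delta_{\min}:=2^{-\ellch-1}$, and with the predicate $P$ for which $P(\ch,\resp)=1$ exactly when $\ch=\ch'$; this returns some $(\ch,\resp)\in\Sy\com$ with $\ch=\ch'$, hence $\OV(\com,\ch',\resp)=1$. Finally $A$ opens $c_{\ch'}$ to $\resp$ via $B_2(\resp)$, obtains an opening $u$, and answers $\resp^*:=(\resp,u)$.

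For \emph{adversary success} I would check that $V$ from \autoref{def:sigma} accepts with overwhelming probability. By definition $V$ demands $s=s_0$ (true by construction), $\OV(\com,\ch',\resp)=1$, and $\COMverify(c_{\ch'},\resp,u)=1$. The last holds with overwhelming probability by \autoref{lemma:com.attack}; the middle holds whenever $E_2$ succeeds, which by \autoref{theo:pick1.complete} happens with probability $\geq 1-2^{-\secpar}$ as long as the density $\abs{\{(\ch,\resp)\in\Sy\com:\ch=\ch'\}}/\kk$ is at least $\delta_{\min}$. Here I would use that $\Sy\com$ is a uniformly random size-$\kk$ subset of $\bits\ellch\times\bits\ellresp$ with $\kk=2^{\ellch+\floor{\ellresp/3}}$, so the number of its elements with first component $\ch'$ is hypergeometric with mean $\kk\cdot2^{-\ellch}=2^{\floor{\ellresp/3}}$, which is superpolynomial because $\ellresp$ is superlogarithmic; a Chernoff-type tail bound then puts this count above $\tfrac12\,2^{\floor{\ellresp/3}}=\delta_{\min}\kk$ except with negligible probability. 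The point is that $\delta_{\min}$ is inverse-polynomial precisely because $\ellch$ is logarithmic, which is exactly what keeps $E_2$ and hence the whole attack polynomial-time. A union bound over the polynomially many bad events (the tail bound, the failure of $E_2$, and the $\leq\ellresp$ bit-openings inside $B_2$) gives overwhelming acceptance probability.

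For \emph{extractor failure} I would fix an arbitrary polynomial-time quantum $E$ with access to $A$'s final state; since $A$ always outputs $s=s_0$ and $R=\{(s_0,w_0)\}$, it suffices to bound $\Pr[w=w_0:\,w\ot E]$. I would fold the whole experiment into a single oracle algorithm $A'$: it simulates $V(s_0)$ internally (choosing $\ch'\otR\bits\ellch$ itself), runs $A$ against this simulated verifier, then runs $E$ on $A$'s residual state, and outputs $E$'s guess. Since $E_1,B_1,E_2,B_2$, the simulation of $V$, and $E$ are all polynomial-time and make only polynomially many queries to each oracle of $\Oall$, the algorithm $A'$ satisfies the hypotheses of \autoref{coro:pick.one.sound-oall}, whose first item says $\Pr[w=w_0:\,w\ot A'^{\Oall}]$ is negligible --- which is exactly extractor failure. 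Together with the previous paragraph this exhibits a total knowledge break in the sense of \autoref{def:total.break}.

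The only genuinely delicate step I anticipate is the density estimate for the predicate ``$\ch=\ch'$'' inside $\Sy\com$ (and, relatedly, keeping track that the many independent invocations of $\Opsi$ yield distinct uniformly random indices, so that no accidental collision spoils the analysis --- which holds because $\abs{\bits\ellcom}$ is superpolynomial); everything else is a routine composition of \autoref{theo:pick1.complete}, \autoref{lemma:com.attack}, and \autoref{coro:pick.one.sound-oall}. All of the real hardness --- that $\ketpsiy\com$, the oracle $\OF$, and $A$'s leftover state together reveal essentially nothing about $w_0$ --- is already packaged inside \autoref{coro:pick.one.sound-oall}, so no further representation-theoretic work is needed here.
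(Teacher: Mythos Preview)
Your proposal is correct and follows essentially the same construction and analysis as the paper: same adversary (obtain $\com$ and $\ketpsiy\com$ via $E_1$, fake commitments $c_\ch$ via $B_1$, search for a matching response via $E_2$ with $\delta_{\min}=2^{-\ellch-1}$, open via $B_2$), same density estimate via a Hoeffding-type bound, and the same appeal to \autoref{coro:pick.one.sound-oall} for extractor failure.

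One minor difference worth noting: for extractor failure, the paper observes that in the game of \autoref{def:total.break} the extractor only receives the state of $A$ right after it outputs $s$ (i.e., $A_1$'s state), which is trivial since $A_1$ merely outputs $s_0$; hence $E$ is essentially a standalone oracle algorithm and \autoref{coro:pick.one.sound-oall} applies directly without simulating any interaction. Your more elaborate reduction that also simulates the verifier and hands $E$ the post-interaction state is unnecessary under the paper's reading of the definition, though it does establish the (stronger) variant where the extractor sees the full residual state. Also, your worry about ``distinct uniformly random indices'' from repeated $\Opsi$ invocations is a non-issue: nothing in the analysis of $B_1,B_2$ or of $E_2$ requires the various $y_i$ (or $\com$) to be distinct.
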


To attack the sigma protocol, the malicious prover uses
\autonameref{theo:pick1.complete} to get a $\com$ and a corresponding
state $\ketpsiy{\com}$. Then, when receiving $\ch$, he needs to find
$(\ch',\resp)\in\Sy{\com}$ with $\ch'=\ch$. Since an inversely
polynomial fraction of $(\ch',\resp)$ satisfy $\ch'=\ch$ ($\ellch$ is
logarithmic), this can be done with
\autoref{theo:pick1.complete}. This allows the prover to succeed in
the proof with overwhelming probability. (He additionally needs to
open the commitments $c_\ch$ to suitably. 
This can be done using \autonameref{lemma:com.attack}.)
However, an extractor that has
the same information as the prover (namely, access to the oracle
$\Oall$) will fail to find $w_0$ by
\autonameref{coro:pick.one.sound-oall}.

\begin{fullversion}
  The full proof is given in
  Appendix~\ref{app:proof:lemma:sigma-break}.
\end{fullversion}

\medskip
\noindent
Now \autoref{theo:know.break.sigma} follows from
\multiautoref{lemma:sigma.sec,lemma:sigma-break}. (The fact that the
sigma-protocol is a classical proof of knowledge is shown in
\cite{damgaard10sigma}.)

Note that we cannot expect to get a total break (as opposed to a total
knowledge break): Since the sigma-protocol is a classical proof of
knowledge, it is also a classical proof. But a classical proof is also
a quantum proof, because an unlimited classical adversary can simulate
a quantum adversary. However, this argument does not apply
when we consider computationally limited provers, see
\autoref{sec:comp} below.

\subsection{The computational case}
\label{sec:comp}

We now consider the variant of the impossibility result from the
previous section. Namely, we consider sigma-protocols that have only
computational security (more precisely, for which the special
soundness property holds only computationally) and show that these are
not even arguments in general (the results from the previous section
only say that they are not arguments of knowledge).

\begin{theorem}[Insecurity of sigma-protocols, computational]\label{theo:break.sigma.comp}
  There is an oracle $\Oall$ and a relation $R'$ and a sigma-protocol
  relative to~$\Oall$ with logarithmic $\ellch$ (challenge length),
  completeness, \emph{computational} special soundness, and statistical
  honest-verifier zero-knowledge for which there exists a \emph{total
  break}.

  In contrast, a sigma-protocol relative to $\Oall$ with completeness,
  computational special soundness, and statistical honest-verifier
  zero-knowledge is a classical argument.
  \tqed
\end{theorem}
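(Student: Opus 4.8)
The plan is to mimic the construction and attack behind \autoref{theo:know.break.sigma}, exploiting that \emph{computational} (as opposed to \emph{perfect}) special soundness is cheap to satisfy and, with logarithmic $\ellch$, does not force the protocol to be classically sound with negligible error; hence a quantum prover who convinces the verifier of a \emph{false} statement with \emph{overwhelming} probability already constitutes a genuine separation. Concretely, I would keep the oracle distribution $\Oall$ of \autoref{def:ora.dist} (the oracles $\OE,\OP,\OR$ are superfluous now and could be dropped, but leaving them in is harmless) and take the sigma-protocol of \autoref{def:sigma} — the auxiliary commitments $c_\ch$ may be omitted here since computational strict soundness is not claimed, though they do no harm — with one change: let the relation be \emph{empty}, $R':=\emptyset$, so that $L_{R'}=\emptyset$ and in particular the only statement $s_0$ that the verifier $V$ ever accepts lies outside $L_{R'}$. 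This is still a bona fide sigma-protocol; the honest prover $(P_1,P_2)$ is simply never invoked on a valid pair, and completeness and statistical honest-verifier zero-knowledge hold vacuously.

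Thus the only property left to verify is computational special soundness. I would argue: if a polynomial-time quantum $A$ outputs a commitment $\com^*=(\com,(c_{\tilde\ch})_{\tilde\ch})$ and two accepting conversations for distinct challenges $\ch\neq\ch'$ with responses $(\resp,u),(\resp',u')$, then the acceptance test of $V$ (see \autoref{def:sigma}) forces $(\ch,\resp),(\ch',\resp')\in\Sy\com$, and these two pairs are necessarily distinct; by \autoref{coro:pick.one.sound-oall} (the statement that no efficient algorithm finds two distinct elements of $\Sy\com$) this occurs only with negligible probability, for any — say trivial — extractor $\ESigma$. For the total break, I would reuse the malicious prover of \autoref{lemma:sigma-break} essentially verbatim: obtain $\com$ and the state $\ketpsiy\com$ using $\Opsi$ (to prepare $\ketpsi$) and the subroutine $E_1$ of \autoref{theo:pick1.complete}; send $\com^*$ (with bogus $c_\ch$ prepared as in \autoref{lemma:com.attack}, or just $\com$ in the commitment-free variant); on receiving $\ch$, run the single-value search of \autoref{theo:pick1.complete} on $\ketpsiy\com$ with the predicate ``first component $=\ch$'', whose density inside $\Sy\com$ is inverse-polynomial (roughly $2^{-\ellch}$, with overwhelming probability over the oracles) since $\ellch$ is logarithmic, yielding $\resp$ with $(\ch,\resp)\in\Sy\com$; and, if commitments are present, open $c_\ch$ to $\resp$ via \autoref{lemma:com.attack}. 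By \autoref{theo:pick1.complete} (and \autoref{lemma:com.attack}) $V$ accepts with overwhelming probability, and since the prover picks $s=s_0\notin L_{R'}$, this is a total break.

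The ``in contrast'' statement — that a sigma-protocol relative to $\Oall$ with completeness, computational special soundness, and statistical honest-verifier zero-knowledge is a classical argument — follows from computational special soundness by the standard classical rewinding argument over the polynomially large challenge space, exactly as in the classical theory of sigma-protocols, mirroring the classical proof-of-knowledge claim of \autoref{theo:know.break.sigma} (which is \cite{damgaard10sigma}). Combining the two facts yields \autoref{theo:break.sigma.comp}.

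Almost all the work here is assembly; the genuinely hard content lives in \autoref{coro:pick.one.sound-oall} (which we may assume) and in \autoref{theo:pick1.complete}/\autoref{lemma:com.attack}. The one point needing real care — and the closest thing to an obstacle — is guaranteeing that the quantum cheating prover succeeds with \emph{overwhelming}, not merely non-negligible, probability, since only overwhelming success yields a \emph{total} (as opposed to a total \emph{knowledge}) break: this means tracking the $2^{-n}$ failure probability of the Grover-type search of \autoref{theo:pick1.complete} and, when the $c_\ch$'s are retained, combining it with the overwhelming success of \autoref{lemma:com.attack} via a union bound over the polynomially many coordinates. The conceptual step is simply noticing that emptying the relation is exactly what promotes the total knowledge break of \autoref{lemma:sigma-break} to a total break, and that this is consistent with classical security because, for logarithmic $\ellch$, computational special soundness no longer implies soundness with negligible error.
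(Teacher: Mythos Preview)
Your proposal is correct and follows essentially the same route as the paper: define the sigma-protocol of \autoref{def:sigma} over the empty relation $R':=\varnothing$ (the paper's \autoref{def:sigma.comp}), observe that completeness and HVZK become vacuous, derive computational special soundness from \autoref{coro:pick.one.sound-oall} with a trivial extractor, and reuse the adversary of \autoref{lemma:sigma-break} verbatim---since $L_{R'}=\varnothing$, the ``adversary success'' there is automatically a total break (the paper's \autoref{lemma:sigma-break.comp}). The paper likewise cites \cite{damgaard10sigma} for the classical-argument claim; your side remarks (that $\OE,\OP,\OR$ are inessential here, and that the $c_\ch$ commitments may be omitted when strict soundness is not asserted) are also in line with the paper's own comments.
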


Note that a corresponding theorem with polynomially bounded $\ellch$
follows immediately by parallel repetition of the sigma-protocol.  The
remainder of this section is dedicated to proving
\autoref{theo:break.sigma.comp}.

\begin{definition}[Sigma-protocol, computational]\label{def:sigma.comp} 
  We define a sigma-protocol $(\ellcom,\ellch,\ellresp,P_1,P_2,V,R')$
  as in
  \autoref{def:sigma}, except that the relation is $R':=\varnothing$.\tqed
\end{definition}

\begin{lemma}[Security of the sigma-protocol, computational]\label{lemma:sigma.sec.comp}
  The sigma-protocol from \autoref{def:sigma.comp} has: completeness.
  \emph{computational} special soundness.  computational strict
  soundness.  statistical honest-verifier zero-knowledge.  commitment
  entropy.\tqed
\end{lemma}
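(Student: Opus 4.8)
The plan is to exploit that \autoref{def:sigma.comp} yields exactly the same protocol data $(\ellcom,\ellch,\ellresp,P_1,P_2,V)$ as \autoref{def:sigma}, changing only the relation from $R=\{(s_0,w_0)\}$ to $R'=\varnothing$; thus the bulk of \autoref{lemma:sigma.sec} transfers and the few genuinely relation-dependent properties become vacuous. First I would dispose of completeness, statistical honest-verifier zero-knowledge, and commitment entropy: each is a universal statement quantified over $(s,w)\in R'$, so with $R'=\varnothing$ it holds trivially. (If a non-vacuous reading is wanted, the simulator $\SSigma$ and the min-entropy bound from the proof of \autoref{lemma:sigma.sec} still apply unchanged, since $\ellcom$ is superlogarithmic by \autoref{def:ora.dist}.) What remains are the two soundness properties, which speak only about the adversary.

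For computational special soundness, the point is that with $R'=\varnothing$ the clause ``$(s,w)\notin R'$'' in \autoref{def:sigma.props} holds for every $w$, so the probability to be bounded collapses to the purely soundness-type event ``$A$ outputs a commitment $\com^*$, two distinct challenges, and two responses that both make $V$ accept'', and the choice of the classical extractor $\ESigma$ is immaterial (I would reuse the $\OE$-querying $\ESigma$ from the proof of \autoref{lemma:sigma.sec}). Unfolding $V$ from \autoref{def:sigma} — which on $(\com^*,\ch,\resp^*)$ with $\com^*=(\com,(c_\ch)_{\ch\in\bits\ellch})$, $\resp^*=(\resp,u)$ checks among other things $\OV(\com,\ch,\resp)=1$ — two such accepting conversations with distinct challenges produce two distinct elements of $\Sy\com$. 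Reformatting $A$'s output into that collision gives a polynomial-time adversary that still queries only the oracles in $\Oall$, so \autoref{coro:pick.one.sound-oall} forces the probability to be negligible.

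For computational strict soundness I would copy the argument from the proof of \autoref{lemma:sigma.sec} almost verbatim, as it never invokes the relation: if $A$ outputs $(s,\com^*,\ch,\resp^*_1,\resp^*_2)$ with $\resp^*_1=(\resp_1,u_1)\neq(\resp_2,u_2)=\resp^*_2$ and both $V$-checks accepting, then $\COMverify(c_\ch,\resp_1,u_1)=\COMverify(c_\ch,\resp_2,u_2)=1$, so $A$ opens the commitment $c_\ch$ contained in $\com^*$ in two distinct ways — either to two different messages or to one message with two openings — contradicting the computational strict binding of $\COM$ established in \autoref{lemma:com.props}. I do not expect any real obstacle: the only place that needs a moment's care is checking, in the special-soundness step, that the collision-finding adversary distilled from $A$ is still polynomial-time and confined to the oracle suite $\Oall$ assumed by \autoref{coro:pick.one.sound-oall}, which is immediate since it is just a syntactic transformation of $A$'s output.
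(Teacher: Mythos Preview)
Your proposal is correct and follows essentially the same route as the paper: the relation-dependent properties are disposed of vacuously since $R'=\varnothing$, computational strict soundness is copied verbatim from \autoref{lemma:sigma.sec} via the strict binding of $\COM$, and computational special soundness reduces to a collision in $\Sy\com$ handled by \autoref{coro:pick.one.sound-oall}. The only cosmetic difference is that the paper picks the trivial extractor $\ESigma\equiv\bot$ rather than reusing the $\OE$-based one, but as you note the choice is immaterial once $(s,w)\notin R'$ is vacuously true.
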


Most properties are either immediate or shown as in
\autonameref{lemma:sigma.sec}. However, perfect special soundness does
not hold for the sigma-protocol from \autoref{def:sigma.comp}: There
exist pairs of accepting conversations
$(ch,\resp),(\ch',\resp')\in\Sy{\com}$. But these do not allow us to
extract a valid witness for $s_0$ (because $R'=\varnothing$, so no
witnesses exist). However, we have computational special soundness: by
\autonameref{coro:pick.one.sound-oall}, it is computationally infeasible
to find those pairs of conversations.

\begin{fullversion}
  The full proof is given in
  Appendix~\ref{app:proof:lemma:sigma.sec.comp}.
\end{fullversion}

\begin{lemma}[Attack on the sigma-protocol, computational]\label{lemma:sigma-break.comp}
  Assume that $\ellch$ is logarithmically bounded.
  Then there exists a total break (\autoref{def:total.break}) against
  the sigma-protocol from \autoref{def:sigma.comp}.\tqed
\end{lemma}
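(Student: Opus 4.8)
The plan is to recycle the malicious prover built in the proof of \autoref{lemma:sigma-break} essentially unchanged; the only new observation is that, because the relation of \autoref{def:sigma.comp} is $R' = \varnothing$, the very same attack already constitutes a \emph{total} break (not merely a total knowledge break), so no hardness statement such as \autoref{coro:pick.one.sound-oall} is needed here. Concretely, the total-break adversary $A$ of \autoref{def:total.break} first outputs the fixed statement $s := s_0$ and then runs the malicious-prover strategy from the proof of \autoref{lemma:sigma-break}, which we denote $P^*$. Since the verifier $V$, the message lengths, and the prover algorithms $P_1,P_2$ of \autoref{def:sigma.comp} are literally those of \autoref{def:sigma} (only the relation changed), the probability that $V(s_0,\com^*,\ch,\resp^*) = 1$ is overwhelming for exactly the same reasons as in \autoref{lemma:sigma-break}. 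On the other hand $L_{R'} = \varnothing$, so $s_0 \notin L_{R'}$ with probability $1$; hence both clauses of the total-break definition are satisfied.

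First I would recall the structure of $P^*$ and re-verify that it forces acceptance. The adversary queries $\Opsi$ on $\ketbot$ to obtain $\ketpsi$, runs $E_1$ from \autoref{theo:pick1.complete} to get a uniformly random $\com \in \bits\ellcom$ together with the state $\ketpsiy\com$, and for every $\ch \in \bits\ellch$ runs the commit phase $B_1$ of the commitment attack (\autoref{lemma:com.attack}) to obtain a commitment $c_\ch$; it sends $\com^* := (\com,(c_\ch)_{\ch\in\bits\ellch})$. Upon receiving a challenge $\ch$, it runs $E_2$ from \autoref{theo:pick1.complete} on $\ketpsiy\com$ with oracle access to $\OV,\OF$ and with the predicate $P$ on $X = \bits\ellch \times \bits\ellresp$ defined by $P(\ch',\resp') = 1 \iff \ch' = \ch$, using a super-logarithmic parameter $n$ (e.g.\ $n := \ellcom$) and an inverse-polynomial $\delta_{\min}$; this yields some $(\ch,\resp) \in \Sy\com$. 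It then runs the open phase $B_2$ of the commitment attack on input $\resp$ to get $u_\ch$ with $\COMverify(c_\ch,\resp,u_\ch) = 1$, and sends $\resp^* := (\resp,u_\ch)$. Unfolding the definition of $V$ from \autoref{def:sigma}, acceptance holds whenever $\OV(\com,\ch,\resp) = 1$ (true since $(\ch,\resp) \in \Sy\com$), $s = s_0$ (true by construction), and $\COMverify(c_\ch,\resp,u_\ch) = 1$ (true by \autoref{lemma:com.attack}). All subroutines run in polynomial time because $\ellch$ is logarithmic, so $1/\delta_{\min}$ and $2^{\ellch}$ are polynomially bounded, and $\ellcom,\ellresp$ are polynomially bounded.

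The one estimate that is not purely bookkeeping — and the place I expect the real, if modest, work — is checking that the hypothesis of \autoref{theo:pick1.complete} on $P$ is met with overwhelming probability over the random sets $\Sy\com$: namely that $\abs{\{x \in \Sy\com : P(x) = 1\}}/\abs{\Sy\com} \ge \delta_{\min}$ for an inverse-polynomial $\delta_{\min}$, simultaneously for every possible challenge $\ch$. Since $\Sy\com$ is a uniformly random $\kk$-subset of $X = \bits\ellch \times \bits\ellresp$ with $\kk = 2^{\ellch + \floor{\ellresp/3}}$, the expected number of its elements whose first component equals a fixed $\ch$ is $\kk \cdot 2^{-\ellch} = 2^{\floor{\ellresp/3}}$, which is super-polynomial because $\ellresp$ is super-logarithmic; a standard concentration bound for sampling without replacement, together with a union bound over the polynomially-many values of $\ch$, then shows that with overwhelming probability this count is at least $2^{\floor{\ellresp/3}-1}$ for all $\ch$, so $\delta_{\min} := 2^{-\ellch-1}$ works (this is exactly the estimate already used in the proof of \autoref{lemma:sigma-break}). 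To conclude, I would collect the error contributions — the $2^{-n}$ failure probability of $E_2$ from \autoref{theo:pick1.complete} and the negligible failure probability of the commitment attack from \autoref{lemma:com.attack} — and note that their sum is negligible, so $A$ makes $V$ accept with overwhelming probability and is therefore a total break against the sigma-protocol of \autoref{def:sigma.comp}.
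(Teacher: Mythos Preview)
Your proposal is correct and follows essentially the same approach as the paper: reuse the adversary from \autoref{lemma:sigma-break} and observe that since $R'=\varnothing$ we have $L_{R'}=\varnothing$, so the extra condition $s\notin L_{R'}$ in the total-break definition is vacuously satisfied. The paper's proof is in fact even terser than yours---it simply points to the identical $P_A$ from the proof of \autoref{lemma:sigma-break} without re-expanding the adversary or the concentration estimate---but your recap is accurate and the union bound over challenges (though slightly stronger than what is strictly needed, since one only needs the predicate condition for the single received challenge) is harmless because $2^{\ellch}$ is polynomial.
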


In this lemma, we use the same malicious prover as in
\autonameref{lemma:sigma-break}. 
That adversary proves the statement $s_0$. Since $R'=\varnothing$,
that statement is not in the language, thus this prover performs a
total break.

\begin{fullversion}
  The full proof is given in
  Appendix~\ref{app:proof:lemma:sigma-break.comp}.
\end{fullversion}

\medskip\noindent
Now \autoref{theo:break.sigma.comp} follows from
\multiautoref{lemma:sigma.sec.comp,lemma:sigma-break.comp}. (And
sigma-protocols with computational special soundness are arguments of
knowledge and thus arguments; we are not aware of an explicit write-up
in the literature, but the proof from \cite{damgaard10sigma} for
sigma-protocols with special soundness applies to this case, too.)

\section{Attacking Fiat-Shamir}
\label{sec:fiat}

\begin{definition}[Fiat-Shamir]\index{Fiat-Shamir!(proof system)}
  Fix a sigma-protocol $(\ellcom,\ellch,\ellresp,P_1,P_2,V,R)$ and an
  integer $r>0$.  Let $H:\bits*\to\bits{r\cdot\ellch}$ be a random
  oracle.  The \emph{Fiat-Shamir construction}%
  \index{Fiat-Shamir}
  $(\PFS,\VFS)$\symbolindexmark{\PFS}\symbolindexmark{\VFS} is the
  following non-interactive proof system:
  \begin{compactitem}
  \item Prover $\PFS(s,w)$: For $(s,w)\in R$, invoke
    $\com_i\ot P_1(s,w)$ for $i=1,\dots,r$. Let
    $\ch_1\Vert\dots\Vert\ch_r:=H(s,\com_1,\dots,\com_r)$.  Invoke
    $\resp_i\ot P_2(\ch_i)$. Return
    $\pi:=(\com_1,\dots,\com_r,\resp_1,\dots,\resp_r)$.
  \item Verifier
    $\VFS(s,(\com_1,\dots,\com_r,\resp_1,\dots,\resp_r))$:
    Let
    $\ch_1\Vert\dots\Vert\ch_r:=H(s,\com_1,\dots,\com_r)$.
    Check whether $V(s,\com_i,\ch_i,\resp_i)=1$ for all
    $i=1,\dots,r$. If so, return $1$.\tqed
  \end{compactitem}
\end{definition}

\begin{theorem}[Insecurity of Fiat-Shamir]\label{theo:know.break.fs}
  There is an oracle $\Oall$ and a relation $R$ and a sigma-protocol
  relative to~$\Oall$ with logarithmic $\ellch$ (challenge length), completeness,
  perfect special soundness, computational strict soundness,
  statistical honest-verifier zero-knowledge, and
  commitment entropy, such that there is total knowledge
  break on the Fiat-Shamir construction.
  In contrast, the Fiat-Shamir construction based on a sigma-protocol
  with the same properties is a classical argument of knowledge
  (assuming that $r\ellch$ is superlogarithmic).\tqed
\end{theorem}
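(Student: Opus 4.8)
The plan is to reuse the sigma-protocol of \autoref{def:sigma}, instantiated with the relation $R=\{(s_0,w_0)\}$, as the underlying sigma-protocol; by \autoref{lemma:sigma.sec} it already enjoys completeness, perfect special soundness, computational strict soundness, statistical honest-verifier zero-knowledge and commitment entropy relative to $\Oall$. It then remains to (a) exhibit a total knowledge break (\autoref{def:total.break}) against $(\PFS,\VFS)$ built from this sigma-protocol, and (b) recall that classically the construction is an argument of knowledge.

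For (a) I would run the sigma-protocol attack of \autoref{lemma:sigma-break} $r$ times ``in parallel'', feeding the challenges from the random oracle $H$ instead of from a verifier. Concretely, the malicious prover $A$ proceeds as follows: for $i=1,\dots,r$, invoke $E_1$ of \autoref{theo:pick1.complete} (which internally queries $\Opsi$) to obtain a uniformly random $\com_i\in\bits\ellcom$ together with the state $\ketpsiy{\com_i}$; for each $\ch\in\bits\ellch$ run $B_1(\ellresp)$ of \autoref{lemma:com.attack} to get a commitment $c_{i,\ch}$; set $\com^*_i:=(\com_i,(c_{i,\ch})_{\ch\in\bits\ellch})$. (This is polynomial work since $\ellch$ is logarithmic.) Then query $\ch_1\Vert\dots\Vert\ch_r:=H(s_0,\com^*_1,\dots,\com^*_r)$. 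For each $i$, run $E_2$ of \autoref{theo:pick1.complete} on $\ketpsiy{\com_i}$ with the predicate $P(\ch',\resp):=[\ch'=\ch_i]$ and $\delta_{\min}$ a suitable inverse polynomial: since $\Sy{\com_i}$ is a random size-$\kk$ subset of $\bits\ellch\times\bits\ellresp$ with $\kk=2^{\ellch+\floor{\ellresp/3}}$, a standard concentration bound shows that with overwhelming probability an $\Omega(2^{-\ellch})$-fraction of $\Sy{\com_i}$ has first component $\ch_i$, so $E_2$ returns some $(\ch_i,\resp_i)\in\Sy{\com_i}$ except with negligible probability; finally run $B_2(\resp_i)$ to open $c_{i,\ch_i}$ to $\resp_i$, obtaining $u_i$. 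Output the statement $s_0$ and the proof $\pi:=(\com^*_1,\dots,\com^*_r,(\resp_1,u_1),\dots,(\resp_r,u_r))$. To check adversary success, note that by \autoref{def:sigma} and the definition of $\OV$, $\VFS$ accepts iff for every $i$ we have $\OV(\com_i,\ch_i,\resp_i)=1$ (true since $(\ch_i,\resp_i)\in\Sy{\com_i}$), $s_0=s_0$, and $\COMverify(c_{i,\ch_i},\resp_i,u_i)=1$ (true by \autoref{lemma:com.attack}); a union bound over the polynomially many invocations of $E_2$, $B_1$, $B_2$ makes the acceptance probability overwhelming. For extractor failure, observe that for any polynomial-time $E$ with access to $\Oall$ and to $A$'s final state, the composite ``run $A$, then $E$'' is a polynomial-time $\Oall$-algorithm outputting a bitstring $w$; since $A$ always outputs $s_0$ and $R=\{(s_0,w_0)\}$, extraction succeeds only if $w=w_0$, which is negligible by \eqref{item:pick.one.w} of \autoref{coro:pick.one.sound-oall}.

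For (b) I would invoke the standard classical security of Fiat-Shamir in the random oracle model: from a sigma-protocol with perfect special soundness, completeness and HVZK, with $r\ellch$ superlogarithmic (so the combined soundness error $2^{-r\ellch}$ is negligible) and superlogarithmic commitment entropy (so that, except with negligible probability, the $r$ commitments are distinct and no colliding random-oracle query needs to be guessed), $(\PFS,\VFS)$ is a classical proof of knowledge and in particular a classical argument of knowledge; this can be cited from \cite{Fiat:1987:How} together with a forking-lemma-style extraction argument. Combining (a) and (b) yields \autoref{theo:know.break.fs}.

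The main obstacle is the precise sequencing and error-accounting in part (a). The delicate point is that the value $\resp_i$ to which $c_{i,\ch_i}$ must be opened is discovered only \emph{after} the random-oracle query (it depends on $\ch_i$), whereas $c_{i,\ch_i}$ was produced \emph{before} it; so one has to check that the interfaces of $B_1,B_2$ (which share state) and of $E_1,E_2$ genuinely compose in this order, that the state $\ketpsiy{\com_i}$ handed to $E_2$ is exactly the one produced by $E_1$ and is untouched in between, and that the inverse-exponential error terms contributed by $E_2$ (via its parameter $n$) and by $B_2$ can all be pushed below any prescribed inverse polynomial while keeping the total query count polynomial, so that \autoref{coro:pick.one.sound-oall} still applies in the extractor-failure step. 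This is largely bookkeeping on top of \autoref{theo:pick1.complete}, \autoref{lemma:com.attack} and \autoref{coro:pick.one.sound-oall}, but it is where the real work of the proof sits.
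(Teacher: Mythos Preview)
Your proposal is correct and follows essentially the same route as the paper: instantiate with the sigma-protocol of \autoref{def:sigma}, run the \autoref{lemma:sigma-break} adversary $r$ times in parallel with challenges supplied by $H$, do a union bound for adversary success, and get extractor failure from \autoref{coro:pick.one.sound-oall}. Two small points: the extractor also has access to the random oracle $H$, which you should note can be absorbed since \autoref{coro:pick.one.sound-oall} only bounds the number of $\Oall$-queries (and $H$ is independent of $w_0$); and for the classical direction the paper cites \cite{faust12fiat} rather than \cite{Fiat:1987:How}.
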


As the underlying sigma-protocol, we use the one from \autoref{def:sigma}.
The attack on Fiat-Shamir is analogous to that on the sigma-protocol
itself. The only difference is that the challenge $\ch$ now comes from
$H$ and not from the verifier; this does not change the attack
strategy.

\fullonly{The full proof is given in Appendix~\ref{app:proof:theo:know.break.fs}.}

\begin{fullversion}
  \subsection{The computational case}
\end{fullversion}

Again, we get even stronger attacks if the special soundness holds
only computationally. 

\begin{theorem}[Insecurity of Fiat-Shamir, computational]\label{theo:break.fs.comp}
  There is an oracle $\Oall$ and a relation $R$ and a sigma-protocol
  relative to~$\Oall$ with logarithmic $\ellch$ (challenge length), completeness,
  \emph{computational} special soundness, computational strict soundness,
  statistical honest-verifier zero-knowledge, and
  commitment entropy, 
  such that there is a \emph{total
  break} on the Fiat-Shamir construction.
In contrast, the Fiat-Shamir construction based on a sigma-protocol
with the same properties is a classical argument of knowledge
(assuming that $r\ellch$ is superlogarithmic).\tqed
\end{theorem}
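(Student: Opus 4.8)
The plan is to follow the template of the proof of \autoref{theo:break.sigma.comp}, only compiling the interactive protocol with Fiat--Shamir before attacking it. Concretely, I would take as underlying sigma-protocol the one from \autoref{def:sigma.comp} (i.e.\ the protocol of \autoref{def:sigma}, relative to the oracle distribution $\Oall$ of \autoref{def:ora.dist}, but with the empty relation $R':=\varnothing$), and let $(\PFS,\VFS)$ be its $r$-fold Fiat--Shamir compilation with a random oracle $H$. By \autoref{lemma:sigma.sec.comp} this sigma-protocol has completeness, computational special soundness, computational strict soundness, statistical honest-verifier zero-knowledge, and commitment entropy --- exactly the hypotheses demanded in the statement. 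The ``in contrast'' part is entirely classical: sigma-protocols with computational special soundness are classical arguments of knowledge (the proof of \cite{damgaard10sigma} for special soundness carries over), and combined with the standard analysis of Fiat--Shamir in the classical random oracle model --- which needs the challenge space $2^{r\ellch}$ to be superpolynomial, i.e.\ $r\ellch$ superlogarithmic --- this yields that $(\PFS,\VFS)$ is a classical argument of knowledge.

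The heart of the argument is the total break, and here I would reuse the malicious prover constructed for \autoref{theo:know.break.fs} verbatim. That prover obtains $r$ commitments $\com_1,\dots,\com_r$ together with the states $\ketpsiy{\com_1},\dots,\ketpsiy{\com_r}$ from $r$ invocations of $E_1$ (\autoref{theo:pick1.complete}) on states produced by $\Opsi$, runs $B_1$ of \autoref{lemma:com.attack} to create the inner commitments $c_\ch$ for each $\ch\in\bits\ellch$, assembles the compound first messages $\com_i^*$, queries the random oracle once classically to get $\ch_1\Vert\dots\Vert\ch_r:=H(s_0,\com_1^*,\dots,\com_r^*)$, and then for each $i$ runs $E_2$ of \autoref{theo:pick1.complete} with the efficiently-computable predicate ``the first $\ellch$ bits of $x$ equal $\ch_i$'' to obtain $(\ch_i,\resp_i)\in\Sy{\com_i}$; since $\ellch$ is logarithmic the relevant fraction is inversely polynomial, so each search succeeds with probability $1-2^{-n}$. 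Finally it opens each $c_{\ch_i}$ to $\resp_i$ via $B_2$ of \autoref{lemma:com.attack} and outputs the proof $\pi:=(\com_1^*,\dots,\com_r^*,(\resp_i,u_i)_{i})$. A union bound over the $r=\poly(\secpar)$ searches and openings shows $\VFS(s_0,\pi)=1$ with overwhelming probability. Because $R'=\varnothing$ we have $s_0\notin L_{R'}$, so this prover is a \emph{total break} in the sense of \autoref{def:total.break}; unlike in \autoref{theo:know.break.fs}, no argument about extractor failure is needed --- emptiness of the relation suffices.

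Putting the pieces together, \autoref{theo:break.fs.comp} follows from \autoref{lemma:sigma.sec.comp} together with the total break above. I expect the only delicate point to be bookkeeping rather than mathematics: one must check that the prover of \autoref{theo:know.break.fs} never uses the fact $s_0\in\LR$ --- it only ever queries $\OV,\OF,\Opsi,\OS$ and $H$, never $\OR$ or $\OP$ on the true witness --- so the identical construction is still polynomial time and still succeeds against the protocol of \autoref{def:sigma.comp}. All the genuinely quantum content has already been established: the completeness of the pick-one trick (\autoref{theo:pick1.complete}), the attack on the commitment scheme (\autoref{lemma:com.attack}), and, for the security side of the sigma-protocol, the hardness of the two-values problem (\autoref{coro:pick.one.sound-oall}).
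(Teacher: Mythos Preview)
Your proposal is correct and mirrors the paper's proof essentially step for step: take the sigma-protocol of \autoref{def:sigma.comp}, invoke \autoref{lemma:sigma.sec.comp} for the required properties, reuse verbatim the Fiat--Shamir attacker from the non-computational case (\autoref{lemma:fs-break}), and observe that $R'=\varnothing$ makes $s_0\notin L_{R'}$ vacuous so the same adversary success probability yields a total break rather than merely a total knowledge break. The only cosmetic difference is that the paper cites \cite{faust12fiat} directly for the classical argument-of-knowledge claim rather than factoring through \cite{damgaard10sigma} plus a separate Fiat--Shamir step.
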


The proof is along the lines of those of \autoref{theo:know.break.fs}
and \autoref{lemma:sigma-break.comp}\fullonly{ and given in
Appendix~\ref{app:proof:theo:break.fs.comp}}.

\section{Attacking Fischlin's scheme}
\label{sec:fischlin}

In the preceding sections we have used the pick-one trick to give
negative results for the (knowledge) soundness of sigma protocols and
of the Fiat-Shamir construction. Classically, both protocols are shown
sound using rewinding. This leads to the conjecture that the pick-one
trick is mainly useful for getting impossibilities for protocols with
rewinding-based security proofs. Yet, in this section we show that
this is not the case; we use the pick-one trick to give an
impossibility result for Fischlin's proof system with
online-extractors \cite{fischlin05online}. The crucial point of that
construction is that in the classical security proof, no rewinding is
necessary. Instead, a witness is extracted by passively inspecting the
list of queries performed by the adversary.

\begin{definition}[Fischlin's scheme]\index{Fischlin!(proof system)}
  Fix a sigma-protocol $(\ellcom,\ellch,\ellresp,P_1,P_2,V,R)$.  Fix
  integers $\fisb,\fisr,\fisS,\fist$%
  \symbolindexmark\fisb
  \symbolindexmark\fisr
  \symbolindexmark\fisS
  \symbolindexmark{\fist}
  such that $\fisb\fisr$ and $2^{\fist-\fisb}$ are
  superlogarithmic, $\fisb,\fisr,\fist$ are logarithmic, $\fisS\in O(\fisr)$ ($\fisS=0$ is
  permitted), and $\fisb\leq \fist\leq \ellch$.

  Let
  $H:\bits*\to\bits \fisb$ be a random oracle. 
  \emph{Fischlin's construction}
  $(\PFis,\VFis)$\symbolindexmark{\PFis}\symbolindexmark{\VFis} is the
  non-interactive proof system is defined as follows:
  \begin{compactitem}
  \item $\PFis(s,w)$: See \cite{fischlin05online}. (Omitted here since
    we only need to analyze $\VFis$ for our results.)
  \item $\VFis(s,\pi)$ with
    $\pi=(\com_i,\ch_i,\resp_i)_{i=1,\dots,\fisr}$: Check if
    $V(\com_i,\ch_i,\resp_i)=0$ for all $i=1,\dots,\fisr$.  Check if
    $\sum_{i=1}^\fisr H(x,(\com_i)_i,i,\ch_i,\resp_i)\leq \fisS$ (where
    $H(\dots)$ is interpreted as a binary unsigned integer). If all
    checks succeed, return $1$. \tqed
  \end{compactitem}
\end{definition}

The idea (in the classical case) is that, in order to produce triples
$(\com_i,\ch_i,\resp_i)$ that make $ H(x,(\com_i)_i,i,\ch_i,\resp_i)$
sufficiently small, the prover needs try out several accepting
$\ch_i,\resp_i$ for each $\com_i$. So with overwhelming probability,
the queries made to $H$ will contain at least two $\ch_i,\resp_i$ for
the same $\com_i$. This then allows extraction by just inspecting the
queries. 

In the quantum setting, this approach towards extraction does not
work: the ``list of random oracle queries'' is not a well-defined
notion, because the argument of $H$ is not measured when a query is
performed. In fact, we show that Fischlin's scheme is in fact not an
argument of knowledge in the quantum setting (relative to an oracle):

\begin{theorem}[Insecurity of Fischlin's construction]\label{theo:know.break.fischlin}
  There is an oracle $\Oall$ and a relation $R$ and a sigma-protocol
  relative to~$\Oall$ with logarithmic $\ellch$ (challenge length), completeness,
  perfect special soundness, computational strict soundness,
  statistical honest-verifier zero-knowledge, and
  commitment entropy,  
  such that
  there is a total knowledge break of Fischlin's construction.
  \fullshort{In contrast}{Yet}, Fischlin's construction based on a sigma-protocol with
  the same properties is a classical argument of knowledge.\tqed
\end{theorem}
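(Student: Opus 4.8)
The plan is to reuse almost the entire machinery built for the sigma-protocol attack (\autoref{theo:know.break.sigma}) and the Fiat-Shamir attack (\autoref{theo:know.break.fs}), since Fischlin's scheme is built from the same sigma-protocol from \autoref{def:sigma} and relies on the same oracle distribution $\Oall$ from \autoref{def:ora.dist}. First I would fix the underlying sigma-protocol to be the one from \autoref{def:sigma}; by \autonameref{lemma:sigma.sec} it already has completeness, perfect special soundness, computational strict soundness, statistical HVZK, and commitment entropy, so the "security" half of the statement costs nothing new. The substantive work is to exhibit a total knowledge break of $(\PFis,\VFis)$.

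The malicious prover is constructed as follows. Using $E_1$ from \autonameref{theo:pick1.complete} on the state $\ketpsi$ (obtained from $\Opsi$), it obtains $\fisr$ independent pairs $(\com^{(i)}, \ketpsiy{\com^{(i)}})$ for $i=1,\dots,\fisr$ (one fresh draw of $\ketpsi$ per $i$, each measured to yield a random $\com^{(i)}$ and the associated state). For each $i$ it must now produce $(\ch_i,\resp_i)$ with $(\ch_i,\resp_i)\in\Sy{\com^{(i)}}$ and such that the Fischlin hash condition $\sum_{i=1}^{\fisr} H(s_0,(\com^{(i)})_i, i, \ch_i,\resp_i) \le \fisS$ holds. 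The key observation is that for each fixed $i$, as $(\ch_i,\resp_i)$ ranges uniformly over $\Sy{\com^{(i)}}$, the value $H(s_0,(\com^{(i)})_i,i,\ch_i,\resp_i)$ is essentially uniform on $\bits\fisb$ (the random oracle is evaluated at distinct points for distinct $(\ch_i,\resp_i)$), so a constant fraction --- in fact $\Theta(2^{\fisb/\fisr}) \cdot (\text{something})$, and more carefully a $\Theta((\fisS/\fisr)^{\fisr}/2^{\fisb\fisr})$-ish fraction jointly --- of the tuples satisfy the constraint. Since $\fisb\fisr$ is superlogarithmic but $\fisb,\fisr,\fist$ are logarithmic and $\fisS\in O(\fisr)$, one cannot hit the joint constraint in one shot across all $i$ simultaneously; instead I would run the search \emph{coordinate by coordinate}: treat the $\fisr$ searches as a sequence, at step $i$ use $E_2$ from \autoref{theo:pick1.complete} with predicate $P_i(\ch,\resp) := [\,H(s_0,(\com^{(j)})_j,i,\ch,\resp) \le \fisS - \sum_{j<i} H(\dots) - (\fisr - i)\cdot 0\,]$ suitably chosen so that the running partial sum stays on track. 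The right bookkeeping is the one Fischlin himself uses: aim for each individual hash to be $0$ when $\fisS=0$, i.e. $P_i(\ch,\resp) := [H(s_0,(\com^{(j)})_j,i,\ch,\resp)=0]$; the fraction of $\Sy{\com^{(i)}}$ satisfying this is $\approx 2^{-\fisb}$, which is \emph{not} inverse-polynomial --- so this naive choice fails. Instead, since $\fisb\leq\fist\leq\ellch$ and $2^{\fist-\fisb}$ is superlogarithmic while $\fisb,\fist$ are logarithmic, one exploits that $\abs{\Sy{\com}} = k = 2^{\ellch + \floor{\ellresp/3}}$ is huge and that only $\ellch$ bits of the $(\ch,\resp)$-pair feed the "which challenge" structure: for a single fixed target value $v\in\bits\fisb$, the fraction of $(\ch,\resp)\in\Sy\com$ with $H(s_0,\dots,\ch,\resp)=v$ is $\approx 2^{-\fisb}$, which is inverse-\emph{superpolynomial}, not inverse-polynomial. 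The correct move, which I expect to be the crux, is to note that we do not need all $\fisr$ hashes to be $0$ --- by the $\fisS$-budget it suffices that the \emph{sum} be $\le\fisS$, and more to the point, we can \emph{open the commitments $c_\ch$ freely} (via \autonameref{lemma:com.attack}) and pick, in the predicate, only the coordinate-selection constraint $\ch_i = (\text{desired bits})$ together with a relaxed hash constraint; but the honest-prover Fischlin analysis shows the budget is tight enough that a polynomial-probability fraction exists only if each search succeeds over a $1/\poly$ fraction.

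Concretely, then, here is the step I would actually carry out and expect to be the main obstacle: showing that the coordinate-wise Grover search succeeds with each predicate having $1/\poly$ density. The resolution is that Fischlin's parameters are chosen precisely so that $\fist \geq \fisb$ with $2^{\fist-\fisb}$ superlogarithmic means the \emph{honest} prover tries $2^\fist$ times per coordinate to get a hash below the threshold; translating to our setting, we set the predicate at coordinate $i$ to be "$H(s_0,(\com^{(j)})_j,i,\ch,\resp)$ has its top $\fisb - \lceil\log\fisr\rceil$ bits zero" (or a similarly relaxed condition chosen so that $\fisr$ such events sum to $\le\fisS$), whose density is $2^{-(\fisb-\lceil\log\fisr\rceil)}$ --- still superpolynomially small. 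Since I cannot make the density polynomial, I would instead adapt $E_2$: rather than a single predicate, iterate the fixed-size Grover amplification $\fisb$ times in a "binary-search on the hash value" fashion, each stage halving the candidate set inside $\Sy{\com^{(i)}}$ while keeping a valid Grover initial state (exactly the re-usability property of $E_2$ noted after \autoref{theo:pick1.complete}); after $\fisb = O(\log\secpar)$ stages we have driven the hash to a prescribed small value using only $\poly(\secpar)$ queries and \emph{one} copy of $\ketpsiy{\com^{(i)}}$. Repeating over $i=1,\dots,\fisr$ (polynomially many, since $\fisr$ is logarithmic --- hence constant-ish --- and in any case $\poly$), and opening each $c_{\ch_i}$ via \autonameref{lemma:com.attack}, yields with overwhelming probability a valid Fischlin proof $\pi=(\com^{(i)},\ch_i,\resp_i)_{i}$ of the statement $s_0$. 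For the extractor-failure part: any extractor $E$, having only access to $\Oall$ and the prover's final state, would need to produce $w_0$; but the prover's state after the attack contains no more than what \autoref{coro:pick.one.sound-oall} allows, and obtaining $w_0$ requires querying $\OE$ on a genuine collision $(\ch,\resp)\neq(\ch',\resp')\in\Sy\com$, which by \autonameref{coro:pick.one.sound-oall}\eqref{item:pick.one.w} happens only with negligible probability. Hence $\Pr[(s_0,w_0)\in R : w\ot E] $ is negligible, completing the total knowledge break. The classical security clause ("Fischlin's construction is a classical argument of knowledge") is exactly Fischlin's original theorem \cite{fischlin05online} applied to the sigma-protocol of \autoref{def:sigma}, which is a classical sigma-protocol with special soundness, so it carries over verbatim.
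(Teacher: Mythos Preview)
Your proposal has two gaps, one a misreading and one a genuine missing idea.

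First, the misreading: $\fisb$ is \emph{logarithmic} (see the parameter constraints in the definition of Fischlin's scheme), so $2^{-\fisb}$ is already inverse-polynomial. The predicate ``$H(\ldots)=0$'' therefore has density $\approx 2^{-\fisb}$ over $\Sy{\com_i}$, and a single call to $E_2$ with $\delta_{\min}:=2^{-\fisb-1}$ per coordinate $i$ suffices; no iterated binary-search is needed.

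Second, and this is the real obstacle: you are evaluating $H$ on the wrong argument. The verifier $\VFis$ checks $H\bigl(s,(\com_i^*)_i,i,\ch_i,\resp_i^*\bigr)$ where $\resp_i^*=(\resp_i,u_i)$ \emph{includes the opening} $u_i$ of the commitment $c_{\ch_i}^i$. So you cannot ``open each $c_{\ch_i}$ via \autoref{lemma:com.attack}'' \emph{after} the search; the opening $u_i$ must already be fixed inside the search predicate $P_i$. But the opening produced by the adversary of \autoref{lemma:com.attack} is itself the output of a quantum search that consumes its own $\ketpsiy{\cdot}$ states --- it is not a deterministic function of $\resp_i$ and cannot be evaluated coherently inside $P_i$. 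This circular dependency is exactly what the paper flags and resolves with a dedicated fixpoint construction: special deterministic maps $\COMstar,\COMopenstar$ (and this is why the commitment in \autoref{def:com} carries the bit-index parameters $p_i$) such that setting $c_\ch^i:=\COMstar(\com_i)$ allows opening to \emph{any} $\resp$ via $u:=\COMopenstar(\ch,\resp)$ whenever $(\ch,\resp)\in\Sy{\com_i}$. With this, the predicate $P_i(\ch,\resp):=\bigl[H\bigl(s_0,(\com_j^*)_j,i,\ch,(\resp,\COMopenstar(\ch,\resp))\bigr)=0\bigr]$ is well-defined, classically computable, and has density $\approx 2^{-\fisb}$; one $E_2$ call per $i$ then finishes the adversary-success part. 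Your extractor-failure argument via \autoref{coro:pick.one.sound-oall} is fine.
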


As the underlying sigma-protocol, we use the one from
\autoref{def:sigma}.  The basic idea is that the malicious prover
finds conversations $(\com_i^*,\ch_i,\resp_i^*)$ by first fixing the
values $\com^*_i$, and then using \autoref{theo:pick1.complete} to
find $\ch,\resp^*$ where $\resp^*_i$ contains $\resp_i$ such that
$(\ch_i,\resp_i)\in\Sy{\com_i}$ and
$H(x,(\com^*_i)_i,i,\ch_i,\resp^*_i)=0$. If $\resp^*_i$ would not
additionally contain commitments $c_\ch$ (see \autoref{def:sigma}),
this would already suffice to break Fischlin's scheme. To additionally
make sure we can open the commitments to the right value, we use a
specific fixpoint property of $\COM$. \fullonly{See the full proof
(Appendix~\ref{app:proof:theo:know.break.fischlin}) for details.}

\begin{fullversion}
  \subsection{The computational case}
\end{fullversion}

\begin{theorem}[Insecurity of Fischlin\fullonly{'s construction}, computational]\label{theo:break.fischlin.comp}
  There is an oracle $\Oall$ and a relation $R$ and a sigma-protocol
  relative to~$\Oall$ with logarithmic $\ellch$ (challenge length),
  completeness, computational special soundness, computational strict
  soundness, statistical honest-verifier zero-knowledge, and
  commitment entropy, such that there is a \emph{total break} on
  Fischlin's construction.
  \fullshort{In contrast}{Yet}, Fischlin's construction based on a sigma-protocol
  with the same properties is a classical argument of knowledge.\tqed
\end{theorem}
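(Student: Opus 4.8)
The plan is to reproduce, for Fischlin's scheme, the step that takes \autoref{theo:know.break.sigma} to \autoref{theo:break.sigma.comp}. Concretely, I would instantiate Fischlin's construction with the sigma-protocol of \autoref{def:sigma.comp} (whose relation is $R'=\varnothing$) rather than the one of \autoref{def:sigma}. By \autoref{lemma:sigma.sec.comp} this sigma-protocol has completeness, computational special soundness, computational strict soundness, statistical honest-verifier zero-knowledge, commitment entropy, and logarithmic $\ellch$, so it meets all the hypotheses imposed on the underlying sigma-protocol in \autoref{theo:break.fischlin.comp}. It then remains to (i) exhibit a total break of Fischlin's construction built on top of it and (ii) note that the same construction is, classically, an argument of knowledge.

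For (i) I would use, essentially verbatim, the malicious prover from the proof of \autoref{theo:know.break.fischlin}. It fixes each outer commitment $\com_i^*=(\com_i,(c_\ch)_{\ch})$ --- obtaining $\com_i$ and $\ketpsiy{\com_i}$ from $\fisr$ invocations of the subroutine $E_1$ of \autoref{theo:pick1.complete} (each fed a fresh copy of $\ketpsi$ from $\Opsi$), and preparing the internal commitments $c_\ch$ via a fixpoint of $\COM$ so that they can afterwards be opened to whatever response turns up --- and then, for each $i$, runs the Grover-type search of \autoref{theo:pick1.complete} on $\ketpsiy{\com_i}$ with predicate ``$(\ch,\resp)\in\Sy{\com_i}$ and $H(s,(\com_j^*)_j,i,\ch,\resp^*)=0$''. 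As $\fisb$ is logarithmic, a $2^{-\fisb}$ (hence inverse-polynomial) fraction of $\Sy{\com_i}$ hashes to $0$, so this predicate has inverse-polynomial density on $\Sy{\com_i}$ and \autoref{theo:pick1.complete} succeeds except with exponentially small probability; \autoref{lemma:com.attack} then opens the $c_\ch$ consistently. With all $\fisr$ sub-hashes equal to $0$ their sum is $0\leq\fisS$ and every sub-conversation is accepted, so $\VFis$ accepts with overwhelming probability. Since this prover never touches $w_0$, its whole analysis is unaffected by setting $R=R'=\varnothing$; but then $L_{R'}=\varnothing$, so $s_0\notin L_{R'}$, and the prover is a \emph{total break} in the sense of \autoref{def:total.break}, not merely a total knowledge break --- precisely the way \autoref{theo:break.sigma.comp} was obtained from \autoref{theo:know.break.sigma}.

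For (ii), Fischlin's original analysis \cite{fischlin05online} shows that his construction is a classical argument of knowledge whenever the underlying sigma-protocol has special soundness and unique responses; the online-extraction argument there goes through unchanged with \emph{computational} special soundness and \emph{computational} strict soundness, since the extractor merely needs a witness from two accepting conversations that a polynomial-time prover actually produced. Because $R'=\varnothing$, this direction amounts to plain classical soundness of the scheme, which holds because a classical polynomial-time prover, having only the classical oracles available, cannot produce any accepting sub-conversation for an adversarially chosen $\com$ without querying a point of the negligible-density set $\Sy\com$ (the classical, and much easier, counterpart of \autoref{coro:pick.one.sound-oall}; cf.\ also \cite{damgaard10sigma}). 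The only non-routine point anywhere is the trivial observation that emptying the relation upgrades a knowledge break to a full break: every quantitative ingredient of (i) --- applicability of \autoref{theo:pick1.complete} with an inverse-polynomial $\delta_{\min}$, existence of the required fixpoint of $\COM$, acceptance by $\VFis$ --- is identical to, and already established in, the proof of \autoref{theo:know.break.fischlin}. One should, as in \autoref{lemma:sigma.sec.comp}, simply verify that emptying $R$ disturbs none of the remaining properties of the underlying sigma-protocol (it does not, as none of them mentions witnesses) and that Fischlin's classical extractor is never required to run against an empty relation (it is not).
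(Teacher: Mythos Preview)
Your proposal is correct and follows essentially the same route as the paper: instantiate Fischlin with the sigma-protocol of \autoref{def:sigma.comp}, invoke \autoref{lemma:sigma.sec.comp} for the properties, reuse the adversary from the proof of \autoref{theo:know.break.fischlin} unchanged, and observe that $R'=\varnothing$ makes $s_0\notin L_{R'}$ automatic, upgrading the knowledge break to a total break; the classical side is Fischlin's original argument \cite{fischlin05online}.

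One small inaccuracy: in the Fischlin attack the commitments $c_\ch$ are \emph{not} opened via \autoref{lemma:com.attack}. The whole point of the fixpoint construction $\COMstar/\COMopenstar$ (which you correctly mention first) is that the opening $u=\COMopenstar(\ch,\resp)$ is a deterministic function of the very $(\ch,\resp)$ being searched for, so it can be folded into the predicate $P_i$ and into $\resp_i^*$ directly; no separate opening step, and in particular no invocation of the $B_1,B_2$ adversary from \autoref{lemma:com.attack}, occurs. Your sentence ``\autoref{lemma:com.attack} then opens the $c_\ch$ consistently'' should simply be dropped.
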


\begin{fullversion}
  The proof is given in
  Appendix~\ref{app:proof:theo:break.fischlin.comp}.
\end{fullversion}

\begin{fullversion}
\paragraph{Fischlin's scheme with strict soundness.} 
We conjecture that
\multiautoref{theo:know.break.fischlin,theo:break.fischlin.comp} even
hold with strict soundness instead of computational strict
soundness. We sketch our reasoning: Consider a variant of the oracle
distribution from \autoref{def:ora.dist}, in which $\ellch$ is
superlogarithmic (not logarithmic) and in which the sets $\Sy\com$ are
chosen uniformly at random from all sets $S$ which satisfy
$\forall\ch\exists_1\resp.(\ch,\resp)\in S$. Note that the results
from Sections~\ref{sec:com}--\ref{sec:fiat} do not hold in this
setting, because $\ch$ must be polynomially-bounded to show the existence of successful
adversaries.  (Namely, when \autonameref{theo:pick1.complete} is invoked,
the predicate $P$ is true on a $2^{-\ellch}$ fraction of the all
values.) But the proofs of
\autonameref{lemma:fischlin-break} and \autonameref{lemma:fischlin-break.comp} do not
require this. We conjecture that \autoref{coro:pick.one.sound-oall}
still holds in this modified setting (the cardinality of the $\Sy\com$
satisfies the conditions of \autoref{coro:pick.one.sound-oall}, but
the $\Sy\com$ have additional structure). Then the sigma-protocols
from \multiautoref{def:sigma,def:sigma.comp} (without the commitments
$c_\ch$) will still have the properties shown in
\multiautoref{lemma:sigma.sec,lemma:sigma.sec.comp}, but additionally
they will have strict soundness because for any $\com,\ch$, there exists
only one $\resp$ such that $(\ch,\resp)\in\Sy\com$.

We leave the proof that \autoref{coro:pick.one.sound-oall} holds even
for sets $\Sy\com$ with
$\forall\ch\exists_1\resp.(\ch,\resp)\in \Sy\com$ as an open problem.
\end{fullversion}

{

\shortonly{
\medskip
\small
}

\paragraph{Acknowledgments.}
We thank Marc Fischlin and Tommaso Gagliardoni for valuable 
discussions and the initial motivation for this work.  Andris
Ambainis was supported by FP7 FET project QALGO and ERC Advanced Grant MQC (at the University of Latvia) and by National Science Foundation under agreement No. DMS-1128155 (at IAS, Princeton). Any opinions, findings and conclusions or recommendations expressed in this material are those of the author(s) and do not necessarily reflect the views of the National Science Foundation.  Ansis Rosmanis was supported by the
Mike and Ophelia Lazaridis Fellowship, the David R. Cheriton Graduate
Scholarship, and the US ARO.  Dominique Unruh was supported by the
Estonian ICT program 2011-2015 (3.2.1201.13-0022), the European Union
through the European Regional Development Fund through the sub-measure
``Supporting the development of R\&D of info and communication
technology'', by the European Social Fund's Doctoral Studies and
Internationalisation Programme DoRa, by the Estonian Centre of
Excellence in Computer Science, EXCS.

}

\fullshort{
\bibliography{qpok-imposs}
\bibliographystyle{plain}
}{
\pagebreak
\bibliography{short}

\begin{thebibliography}{10}

\bibitem{aaronson12money}
Scott Aaronson and Paul Christiano.
\newblock Quantum money from hidden subspaces.
\newblock In {\em STOC '12}, pages 41--60. ACM, 2012.

\bibitem{Ambainis10}
Andris Ambainis.
\newblock A new quantum lower bound method, with an application to a strong
  direct product theorem for quantum search.
\newblock {\em Theory of Computing}, 6(1):1--25, 2010.

\bibitem{AMRR}
Andris Ambainis, Lo\"{\i}ck Magnin, Martin Roetteler, and J{\'e}r{\'e}mie
  Roland.
\newblock Symmetry-assisted adversaries for quantum state generation.
\newblock In {\em IEEE Conference on Computational Complexity}, pages 167--177.
  IEEE Computer Society, 2011.

\bibitem{ambainis:newMethod2}
Andris Ambainis, Robert {\v{S}}palek, and Ronald de~Wolf.
\newblock A new quantum lower bound method, with applications to direct product
  theorems and time-space tradeoffs.
\newblock {\em Algorithmica}, 55(3):422--461, 2009.

\bibitem{barak01bb}
Boaz Barak.
\newblock How to go beyond the black-box simulation barrier.
\newblock In {\em FOCS 2001}, pages 106--115. IEEE, 2001.

\bibitem{berend12empirical}
Daniel Berend and Aryeh Kontorovich.
\newblock On the convergence of the empirical distribution.
\newblock \href{http://arxiv.org/abs/1205.6711v2}{arXiv:1205.6711v2} [math.ST],
  2012.

\bibitem{postqc-book}
Daniel~J. Bernstein, Johannes Buchmann, and Erik Dahmen, editors.
\newblock {\em Post-Quantum Cryptography}.
\newblock Springer, 2009.

\bibitem{boneh11quantumro}
Dan Boneh, \"{O}zg\"{u}r Dagdelen, Marc Fischlin, Anja Lehmann, Christian
  Schaffner, and Mark Zhandry.
\newblock Random oracles in a quantum world.
\newblock In {\em ASIA\-CRYPT 2011}, pages 41--69, Berlin, Heidelberg, 2011.
  Springer-Verlag.

\bibitem{boneh13quantummac}
Dan Boneh and Mark Zhandry.
\newblock Quantum-secure message authentication codes.
\newblock In {\em Eurocrypt 2013}, volume 7881 of {\em LNCS}, pages 592--608.
  Springer, 2013.
\newblock Online version \href{http://eprint.iacr.org/2012/606}{IACR ePrint
  2012/606}.

\bibitem{boneh13quantumsigs}
Dan Boneh and Mark Zhandry.
\newblock Secure signatures and chosen ciphertext security in a quantum
  computing world.
\newblock In {\em Crypto 2013}, 2013.
\newblock Full version at \href{http://eprint.iacr.org/2013/088}{IACR ePrint
  2013/088}.

\bibitem{bbht}
Michel Boyer, Gilles Brassard, Peter H{\o}yer, and Alain Tapp.
\newblock Tight bounds on quantum searching.
\newblock {\em Fortschritte der Physik}, 46(4-5):493--505, 1998.
\newblock Online version at
  \href{http://arxiv.org/abs/quant--ph/9605034}{arXiv:quant-ph/9605034}.

\bibitem{bht97collision}
Gilles Brassard, Peter H{\o}yer, and Alain Tapp.
\newblock Quantum algorithm for the collision problem.
\newblock {\em ACM SIGACT News}, 28:14--19, 1997.
\newblock Full version at
  \href{http://arxiv.org/abs/quant-ph/9705002}{arXiv:quant-ph/9705002}.

\bibitem{buhrman01finger}
Harry Buhrman, Richard Cleve, John Watrous, and Ronald de~Wolf.
\newblock Quantum fingerprinting.
\newblock {\em Phys. Rev. Lett.}, 87:167902, September 2001.
\newblock Online version
  \href{http://arxiv.org/abs/quant-ph/0102001}{arXiv:quant-ph/0102001}.

\bibitem{dagdelen13fiat}
{\"O}zg{\"u}r Dagdelen, Marc Fischlin, and Tommaso Gagliardoni.
\newblock The {Fiat-Shamir} transformation in a quantum world.
\newblock In {\em Asiacrypt 2013}, volume 8270 of {\em LNCS}, pages 62--81.
  Springer, 2013.
\newblock Online version \href{https://eprint.iacr.org/2013/245}{IACR ePrint
  2013/245}.

\bibitem{damgaard10sigma}
Ivan Damg{\aa}rd.
\newblock On $\sigma$-protocols.
\newblock Course notes for ``Cryptologic Protocol Theory'',
  \url{http://www.cs.au.dk/~ivan/Sigma.pdf}, 2010.
\newblock Retrieved 2014-03-17. Archived at
  \url{http://www.webcitation.org/6O9USFecZ}.

\bibitem{damgaard13superposition}
Ivan Damg{\aa}rd, Jakob Funder, Jesper Buus~Nielsen, and Louis Salvail.
\newblock Superposition attacks on cryptographic protocols.
\newblock In {\em ICITS 2013}, volume 8317 of {\em LNCS}, pages 142--161.
  Springer, 2014.
\newblock Online version \href{http://eprint.iacr.org/2011/421}{IACR ePrint
  2011/421}.

\bibitem{dumais00qbc}
Paul Dumais, Dominic Mayers, and Louis Salvail.
\newblock Perfectly concealing quantum bit commitment from any quantum one-way
  permutation.
\newblock In {\em Eurocrypt '00}, volume 1807 of {\em LNCS}, pages 300--315,
  Berlin, Heidelberg, 2000. Springer.

\bibitem{faust12fiat}
Sebastian Faust, Markulf Kohlweiss, Giorgia~Azzurra Marson, and Daniele
  Venturi.
\newblock On the non-malleability of the {Fiat-Shamir} transform.
\newblock In Steven Galbraith and Mridul Nandi, editors, {\em INDOCRYPT 2012},
  volume 7668 of {\em LNCS}, pages 60--79. Springer, 2012.
\newblock Preprint on \href{http://eprint.iacr.org/2012/704}{IACR ePrint
  2012/704}.

\bibitem{Fiat:1987:How}
Amos Fiat and Adi Shamir.
\newblock How to prove yourself: Practical solutions to identification and
  signature problems.
\newblock In Andrew~M. Odlyzko, editor, {\em Advances in Cryptology,
  Proceedings of {CRYPTO} '86}, number 263 in Lecture Notes in Computer
  Science, pages 186--194. Springer-Verlag, 1987.

\bibitem{fischlin05online}
Marc Fischlin.
\newblock Communication-efficient non-interactive proofs of knowledge with
  online extractors.
\newblock In {\em Crypto 2005}, volume 3621 of {\em LNCS}, pages 152--168.
  Springer, 2005.

\bibitem{godsil:assoc1}
Chris Godsil.
\newblock Association schemes.
\newblock Lecture Notes, 2005.

\bibitem{Goldreich:1991:Proofs}
Oded Goldreich, Silvio Micali, and Avi Wigderson.
\newblock Proofs that yield nothing but their validity or all languages in {NP}
  have zero-knowledge proof systems.
\newblock {\em Journal of the ACM}, 38(3):690--728, 1991.
\newblock Online available at
  \url{http://www.wisdom.weizmann.ac.il/~oded/X/gmw1j.pdf}.

\bibitem{grover}
Lov~K. Grover.
\newblock A fast quantum mechanical algorithm for database search.
\newblock In {\em STOC}, pages 212--219, 1996.

\bibitem{JASA1963:Hoeffding}
Wassily Hoeffding.
\newblock Probability inequalities for sums of bounded random variables.
\newblock {\em Journal of the Americal Statistical Association},
  58(301):13--30, 1963.

\bibitem{james:symmetric}
Gordon James and Adalbert Kerber.
\newblock {\em The Representation Theory of the Symmetric Group}, volume~16 of
  {\em Encyclopedia of Mathematics and its Applications}.
\newblock Addison-Wesley, 1981.

\bibitem{katz:crypto}
Jonathan Katz and Yehuda Lindell.
\newblock {\em Introduction to Modern Cryptography}.
\newblock Taylor \& Francis, 2007.

\bibitem{knuth}
Donald~E. Knuth.
\newblock {\em Selected Papers on Discrete Mathematics}.
\newblock CSLI, 2003.

\bibitem{nielsenchuang-10year}
M.~Nielsen and I.~Chuang.
\newblock {\em Quantum Computation and Quantum Information}.
\newblock Cambridge University Press, Cambridge, 10th anniversary edition,
  2010.

\bibitem{sagan:symmetric}
Bruce~E. Sagan.
\newblock {\em The symmetric group: representations, combinatorial algorithms,
  and symmetric functions}, volume 203 of {\em Graduate Texts in Mathematics}.
\newblock Springer, 2001.

\bibitem{serre:representation}
Jean-Pierre Serre.
\newblock {\em Linear Representations of Finite Groups}, volume~42 of {\em
  Graduate Texts in Mathematics}.
\newblock Springer, 1977.

\bibitem{Shor:1994:Algorithms}
Peter~W. Shor.
\newblock Algorithms for quantum computation: Discrete logarithms and
  factoring.
\newblock In {\em FOCS 1994}, pages 124--134. IEEE, 1994.

\bibitem{qpok}
Dominique Unruh.
\newblock Quantum proofs of knowledge.
\newblock In {\em Eurocrypt 2012}, volume 7237 of {\em LNCS}, pages 135--152.
  Springer, April 2012.
\newblock Preprint on IACR ePrint 2010/212.

\bibitem{qro-nizk}
Dominique Unruh.
\newblock Non-interactive zero-knowledge proofs in the quantum random oracle
  model.
\newblock \href{http://eprint.iacr.org/2014/587}{IACR ePrint 2014/587}, 2014.

\bibitem{qpos}
Dominique Unruh.
\newblock Quantum position verification in the random oracle model.
\newblock In {\em Crypto 2014}, LNCS. Springer, February 2014.
\newblock To appear, preprint on \href{http://eprint.iacr.org/2014/118}{IACR
  ePrint 2014/118}.

\bibitem{qtc}
Dominique Unruh.
\newblock Revocable quantum timed-release encryption.
\newblock In {\em Eurocrypt 2014}, volume 8441 of {\em LNCS}, pages 129--146.
  Springer, 2014.
\newblock Preprint on \href{http://eprint.iacr.org/2013/606}{IACR ePrint
  2013/606}.

\bibitem{Graaf:1998:Towards}
Jeroen van~de Graaf.
\newblock {\em Towards a formal definition of security for quantum protocols}.
\newblock PhD thesis, D{\'e}partment d'informatique et de r.o., Universit{\'e}
  de Montr{\'e}al, 1998.
\newblock Online available at
  \url{http://www.cs.mcgill.ca/~crepeau/PS/these-jeroen.ps}.

\bibitem{watrous-qzk}
John Watrous.
\newblock Zero-knowledge against quantum attacks.
\newblock {\em SIAM J. Comput.}, 39(1):25--58, 2009.

\bibitem{mathworld.hypergeo}
Eric~W. Weisstein.
\newblock Hypergeometric distribution.
\newblock From MathWorld~-- A Wolfram Web Resource.
  \url{http://mathworld.wolfram.com/HypergeometricDistribution.html}.
\newblock Retrieved 2014-03-19.

\bibitem{nocloning}
W.~K. Wootters and W.~H. Zurek.
\newblock A single quantum cannot be cloned.
\newblock {\em Nature}, 299:802--803, 1982.

\bibitem{zhandry12random}
Mark Zhandry.
\newblock How to construct quantum random functions.
\newblock In {\em FOCS 2013}, pages 679--687, Los Alamitos, CA, USA, 2012. IEEE
  Computer Society.
\newblock Online version is \href{http://eprint.iacr.org/2012/182}{IACR ePrint
  2012/182}.

\bibitem{zhandry:quantum.ibe}
Mark Zhandry.
\newblock Secure identity-based encryption in the quantum random oracle model.
\newblock In {\em Crypto 2012}, volume 7417 of {\em LNCS}, pages 758--775.
  Springer, 2012.
\newblock Long version on \href{https://eprint.iacr.org/2012/076}{IACR ePrint
  2012/076}.

\end{thebibliography}
\bibliographystyle{IEEEtran}
}

\begin{fullversion}
  \section*{Symbol index}
  \addcontentsline{toc}{section}{Symbol index}
  \renewenvironment{thesymbolindex}{\begin{longtable}{lll}}{\end{longtable}}
  \printsymbolindex

\section*{Keyword index}
\addcontentsline{toc}{section}{Keyword index}

\makeatletter \renewenvironment{theindex}{%
  \begin{multicols}2 \thispagestyle{plain}\parindent\z@
    \parskip\z@ \@plus .3\p@\relax
    \columnseprule \z@ \columnsep 35\p@ \let\item\@idxitem }{%
  \end{multicols}%
}

\printindex
\end{fullversion}

\begin{fullversion}
\appendix

\section{Auxiliary lemmas}

\begin{lemma}\label{lemma:cos}
  $  \sqrt{2\bigl(1-(\cos\tfrac{\pi}{2n})^n\bigr)}
  \in
  \frac{\pi}{2\sqrt n}+o(\tfrac1{\sqrt n})$.
\end{lemma}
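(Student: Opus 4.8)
The plan is a straightforward asymptotic expansion in $n$, carried out in three stages: expand $\cos\frac{\pi}{2n}$, raise it to the $n$-th power via the $\exp$–$\ln$ trick, and then take the square root of the resulting expression. First I would write $\cos\frac{\pi}{2n}=1-\frac{\pi^2}{8n^2}+O(n^{-4})$ using the Taylor expansion of cosine at $0$ (valid since $\frac{\pi}{2n}\to 0$). Then $(\cos\frac{\pi}{2n})^n=\exp\bigl(n\ln(1-\frac{\pi^2}{8n^2}+O(n^{-4}))\bigr)$, and since $\ln(1+x)=x+O(x^2)$ for $x\to 0$, the exponent is $n\cdot(-\frac{\pi^2}{8n^2}+O(n^{-4}))=-\frac{\pi^2}{8n}+O(n^{-3})$. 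Applying $\exp(y)=1+y+O(y^2)$ for $y\to 0$ gives $(\cos\frac{\pi}{2n})^n=1-\frac{\pi^2}{8n}+O(n^{-2})$.

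From here the remaining steps are routine: $2\bigl(1-(\cos\frac{\pi}{2n})^n\bigr)=\frac{\pi^2}{4n}+O(n^{-2})=\frac{\pi^2}{4n}\bigl(1+O(n^{-1})\bigr)$, so taking the square root and using $\sqrt{1+x}=1+O(x)$ for $x\to 0$ yields
\[
\sqrt{2\bigl(1-(\cos\tfrac{\pi}{2n})^n\bigr)}=\frac{\pi}{2\sqrt n}\bigl(1+O(n^{-1})\bigr)=\frac{\pi}{2\sqrt n}+O(n^{-3/2}).
\]
Since $O(n^{-3/2})\subseteq o(n^{-1/2})$, this is exactly the claimed bound $\frac{\pi}{2\sqrt n}+o(\tfrac1{\sqrt n})$.

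There is no real obstacle here; the only thing to be a little careful about is bookkeeping of which error terms survive. The point worth noting is that the $O(n^{-4})$ correction inside the cosine, after being multiplied by $n$ in the exponent, contributes only an $O(n^{-3})$ term, which is dominated by the main $\frac{\pi^2}{8n}$ term; consequently the error in the final square root is $O(n^{-3/2})$, comfortably inside $o(n^{-1/2})$. If one wanted to avoid the $\exp$–$\ln$ manipulation, an alternative is to bound $(\cos\frac{\pi}{2n})^n$ directly between $1-\frac{\pi^2}{8n}$ and $1-\frac{\pi^2}{8n}+\frac{C}{n^2}$ for a suitable constant $C$ and all large $n$ using elementary inequalities such as $1-t\le\cos\sqrt{2t}\le 1-t+\frac{t^2}{3}$ together with $(1-a)^n\ge 1-na$ and Bernoulli-type estimates, but the $\exp$–$\ln$ route is cleaner.
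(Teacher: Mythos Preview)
Your proof is correct and follows essentially the same approach as the paper: both use the Taylor expansions of $\cos$, $\ln(1+x)$, and $e^x$ to carry out the $\exp$--$\ln$ computation of $(\cos\tfrac{\pi}{2n})^n$, arriving at $\tfrac{\pi^2}{4n}+O(n^{-2})$ under the square root. The only cosmetic difference is that the paper multiplies through by $\sqrt n$ to obtain $\tfrac{\pi}{2}+o(1)$ before dividing back, whereas you factor out $\tfrac{\pi^2}{4n}$ and expand $\sqrt{1+O(n^{-1})}$ directly; the substance is identical.
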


\begin{proof}
  By Taylor's theorem, for $x\to0$,
  \begin{gather}
    \cos x\in 1-\tfrac{x^2}2+O(x^4) ,
    \label{eq:taylor.cos}
    \\
    \ln(1- x)\in -x+O(x^2) ,
    \label{eq:taylor.ln}
    \\
    e^x \in 1+x+O(x^2).
    \label{eq:taylor.e}
  \end{gather}
  Hence for $n\to\infty$,
  \[
  \ln\cos\tfrac\pi{2n}
  \eqrefrel{eq:taylor.cos}\in
  \ln\bigl(1-\tfrac{\pi^2}{8n^2}+O(n^{-4})\bigr)
  \eqrefrel{eq:taylor.ln}\subseteq
  -\tfrac{\pi^2}{8n^2}+O(n^{-4}).
  \]
  Hence
  \[
  2n\Bigl(1-\bigl(\cos\tfrac\pi{2n}\bigr)^n\Bigr)
  \in
  2n\Bigl(1-e^{n\bigl(-\tfrac{\pi^2}{8n^2}+O(n^{-4})\bigr)}\Bigr)
  \eqrefrel{eq:taylor.e}\subseteq
  2n\Bigl(\tfrac{\pi^2}{8n}+O(n^{-2})\Bigr)
  \subseteq
  \tfrac{\pi^2}4+o(1)
  .
  \]
  Thus
  \[
  \sqrt n\cdot \sqrt{2\bigl(1-(\cos\tfrac{\pi}{2n})^n\bigr)}
  \in
  \tfrac\pi2+o(1)
  \]
  and
  \begin{align*}
    \sqrt{2\bigl(1-(\cos\tfrac{\pi}{2n})^n\bigr)} \in 
    \tfrac{\pi}{2\sqrt n}+o(\tfrac1{\sqrt n}).
    \mathqed
  \end{align*}
\end{proof}

\begin{lemma}\label{lemma:p.fraction}
  Let $X$ be a set. Let $P\subseteq X$ be a set. Let $S\subseteq X$ be
  uniformly random with $\abs S=k$. 
  Let $\varphi:=\abs P/\abs X$.
  Let $\delta_{\min}\in[0,\varphi]$. 
  Then
  \[
  \Pr\Bigl[\frac{\abs{P\cap S}}{\abs{S}} < \delta_{\min}\Bigr] \leq  e^{-2k(\varphi-\delta_{\min})^2}.
  \]
\end{lemma}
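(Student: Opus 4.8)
The plan is to recognize $\abs{P\cap S}$ as a hypergeometric random variable and apply a Chernoff--Hoeffding tail bound for sampling without replacement. Write $N:=\abs X$, so that $\abs P=\varphi N$. Choosing $S$ uniformly among the $k$-subsets of $X$ is the same as choosing a uniformly random ordering $x_1,\dots,x_N$ of $X$ and setting $S:=\{x_1,\dots,x_k\}$; let $c_i:=1$ if $x_i\in P$ and $c_i:=0$ otherwise, so that $\abs{P\cap S}=\sum_{i=1}^k c_i$ and the sample average $\bar c:=\frac1k\sum_{i=1}^k c_i$ equals $\abs{P\cap S}/k$. The $c_i$ are the values obtained by sampling $k$ items without replacement from the multiset consisting of $\varphi N$ ones and $(1-\varphi)N$ zeros, whose population mean is $\varphi$; hence $\E[\bar c]=\varphi$. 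Since $t:=\varphi-\delta_{\min}\ge 0$, the event $\{\abs{P\cap S}/\abs S<\delta_{\min}\}$ is contained in $\{\bar c\le\varphi-t\}$, so it suffices to bound $\Pr[\bar c\le\varphi-t]$.

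Next I would invoke Hoeffding's inequality for sampling without replacement. By Hoeffding's reduction, the moment generating function of a sum of $k$ values sampled without replacement from a finite population of reals in $[a,b]$ is dominated by that of a sum of $k$ i.i.d.\ draws (with replacement) from the same population; feeding this into the standard Chernoff argument and optimizing over the exponential parameter yields, for the sample average $\bar c$ with population mean $\mu$, the bound $\Pr[\bar c-\mu\ge t]\le\exp(-2kt^2/(b-a)^2)$, and, applying the same to the negated values $-c_i$, also $\Pr[\bar c-\mu\le -t]\le\exp(-2kt^2/(b-a)^2)$. In our setting $a=0$, $b=1$, $\mu=\varphi$, so $\Pr[\bar c\le\varphi-t]\le\exp(-2kt^2)$.

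Combining the two steps gives $\Pr[\abs{P\cap S}/\abs S<\delta_{\min}]\le\Pr[\bar c\le\varphi-t]\le\exp(-2k(\varphi-\delta_{\min})^2)$, which is the claimed inequality (the degenerate case $\delta_{\min}=\varphi$ is consistent, since the right-hand side is then $1$). The only subtle point to get right is that one must use the \emph{without}-replacement version of the bound, which is no weaker than the i.i.d.\ (Chernoff) version -- this is exactly the content of Hoeffding's observation, so no separate argument is needed -- together with keeping track of the direction of the one-sided deviation (a lower tail for $\abs{P\cap S}$, equivalently an upper tail for $\abs{(X\setminus P)\cap S}$). Everything else is a routine substitution.
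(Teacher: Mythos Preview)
Your proposal is correct and follows essentially the same approach as the paper: recognize $\abs{P\cap S}$ as a hypergeometric variable arising from sampling without replacement and apply Hoeffding's inequality in its without-replacement form. The only cosmetic difference is that the paper passes to the complementary indicators $X_i:=1-c_i$ (so that the lower tail for $\abs{P\cap S}$ becomes an upper tail for $\sum X_i$), whereas you apply the lower-tail bound directly; both are equivalent.
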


\begin{proof}
  Let $N:=\abs X$. Let $\delta:=\abs{P\cap S}/\abs{S}$. 
  We can describe the choice of $S$ as sampling $k$ elements
  $x_i\in X$ without replacement. Let $X_i:=0$ if $x_i\in P$ and
  $X_i:=1$ else. Then $1-\delta=\sum_{i=1}^k X_i/k$. And the $X_i$
  result from sampling $k$ elements without replacement from a
  population $C$ consisting of $(1-\varphi)N$ ones and $\varphi N$
  zeros. Note that $\mu:=1-\varphi$ is the expected value of each
  $X_i$.
  Thus we
  get
  \begin{align*}
    \Pr[\delta < \delta_{\min}]
    &\leq\Pr[1-\delta \geq 1-\delta_{\min}]
    =\Pr\Bigl[\sum\tfrac{X_i}k \geq 1-\delta_{\min}\Bigr] \\
    &=\Pr\Bigl[\sum \tfrac{X_i}{k} - \mu \geq \varphi-\delta_{\min}\Bigr]
    \starrel\leq e^{-2k(\varphi-\delta_{\min})^2}.
  \end{align*}
  Here $(*)$ uses Hoeffding's inequality \cite{JASA1963:Hoeffding} (and the fact that
  $0\leq t\leq 1-\mu$ for $t:=\varphi-\delta_{\min}$). Note that Hoeffding's
  inequality also holds in the case of sampling \emph{without}
  replacement, see \cite[Section 6]{JASA1963:Hoeffding}.
\end{proof}

\begin{lemma}\label{lemma:empirical}
  Let $X$ be a finite and $Y$ a countable set.  Let $\calD$ be a
  distribution over $Y$. Let $\renyi{\frac12}(\calD)$\symbolindexmark{\renyi} denote the R\'enyi
  entropy of order $1/2$ of $\calD$. For each $x\in X$, let $\calO(x)$ be an
  independently chosen $y\ot\calD$.  Let $y_1\ot\calD$, and
  $y_2:=\calO(x)$ for $x\otR X$.
  Then 
  \[
  \SD\bigl((\calO,y_1);(\calO,y_2)\bigr)
  \leq \tfrac1{2\sqrt{\abs X}}2^{\frac12\renyi{\frac12}(\calD)}
  \leq \tfrac12\sqrt{\abs{Y}/\abs{X}}.
  \]
  (I.e., we bound the statistical distance between an element $y_1$
  chosen according to $\calD$, and an element $y_2$ chosen by
  evaluating $\calO$ on a random input, when the function $\calO$ is
  known.)
\end{lemma}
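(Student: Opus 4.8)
The plan is a direct computation of the statistical distance followed by an elementary second-moment estimate. Writing $p_y:=\Pr[\calD=y]$, recall that the R\'enyi entropy of order $1/2$ satisfies $2^{\frac12\renyi{\frac12}(\calD)}=\sum_{y\in Y}\sqrt{p_y}$, so it suffices to prove the first inequality; the second then follows from Cauchy--Schwarz, $\sum_{y}\sqrt{p_y}\leq\sqrt{\abs{\operatorname{supp}\calD}}\leq\sqrt{\abs Y}$ (and is trivial if $Y$ is infinite). First I would unfold $\SD$. For a fixed function $O\colon X\to Y$ we have $\Pr[\calO=O]=\prod_{x\in X}p_{O(x)}$; conditioned on $\calO=O$, the value $y_1$ is distributed according to $\calD$ (as it is chosen independently of $\calO$), while $y_2=\calO(x)$ with $x\otR X$ is uniform on the multiset $(O(x))_{x\in X}$, i.e.\ $\Pr[y_2=y\mid\calO=O]=N_y(O)/\abs X$ where $N_y(O):=\abs{\{x\in X:O(x)=y\}}$. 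Hence
\begin{align*}
  \SD\bigl((\calO,y_1);(\calO,y_2)\bigr)
  = \tfrac12\,\E_{\calO}\Bigl[\textstyle\sum_{y\in Y}\bigl|\,p_y - N_y(\calO)/\abs X\,\bigr|\Bigr].
\end{align*}

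The heart of the argument is a per-summand bound. Fix $y$. Then $N_y(\calO)=\sum_{x\in X}\mathbbm{1}[\calO(x)=y]$ is a sum of $\abs X$ independent Bernoulli variables of mean $p_y$, so $\E[N_y(\calO)]=\abs X\,p_y$ and $\Var[N_y(\calO)]=\abs X\,p_y(1-p_y)\leq\abs X\,p_y$. By Jensen's inequality (concavity of the square root),
\begin{align*}
  \E\Bigl[\bigl|\,p_y - N_y(\calO)/\abs X\,\bigr|\Bigr]
  \leq \sqrt{\E\bigl[(p_y - N_y(\calO)/\abs X)^2\bigr]}
  = \frac{\sqrt{\Var[N_y(\calO)]}}{\abs X}
  \leq \frac{\sqrt{p_y}}{\sqrt{\abs X}}.
\end{align*}
Summing over $y$ (the interchange of $\E$ and the sum is justified by nonnegativity, i.e.\ Tonelli) yields $\SD\bigl((\calO,y_1);(\calO,y_2)\bigr)\leq\frac1{2\sqrt{\abs X}}\sum_{y}\sqrt{p_y}=\frac1{2\sqrt{\abs X}}2^{\frac12\renyi{\frac12}(\calD)}$, which together with the Cauchy--Schwarz step above gives both claimed inequalities.

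I do not expect a serious obstacle here. The only points needing a little care are: (a) justifying the interchange of expectation with the possibly infinite sum over $y\in Y$ (handled by nonnegativity of all terms); and (b) bounding $\Var[N_y(\calO)]$ by its mean $\abs X\,p_y$ rather than by something involving $\abs Y$, which is exactly what makes the R\'enyi-entropy form of the bound come out — controlling the empirical frequencies via $\abs Y$ directly would be lossy. I would therefore present the Binomial variance estimate as the key step of the proof.
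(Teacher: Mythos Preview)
Your proof is correct and follows essentially the same approach as the paper: both condition on $\calO$ to reduce to bounding the expected $L^1$ distance between $\calD$ and the empirical distribution of $\calO$, then bound this by $\tfrac1{\sqrt{\abs X}}\sum_y\sqrt{p_y}$. The only difference is that the paper outsources this last bound to \cite[Lemma~8]{berend12empirical}, whereas you prove it inline via the Binomial variance estimate and Jensen --- which is in fact the standard proof of that cited lemma.
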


\begin{proof}
  Let $n:=\abs X$.  For a function $f:X\to Y$, let $\calD_f$ denote
  the empirical distribution of $f$, i.e.,
  $\calD_f(y)=\frac1n\abs{\{x:f(x)=y\}}$.  Let
  $j(f):=2\,\SD(\calD,\calD_f)$. And let $J_n:=j(\calO)$, i.e., $J_n$ is
  a real-valued random variable. Then \cite[Lemma
  8]{berend12empirical} proves that
  $\E[J_n]\leq\frac1{\sqrt n}\sum_{y\in Y}\sqrt{\calD(y)}=:\gamma$.
  Since
  $\renyi{\frac12}(\calD)=\frac1{1-\frac12}\log\Bigl(\sum_{y\in
    Y}\calD(y)^{\frac12}\Bigr)$
  by definition, we have
  $\gamma=\frac1{\sqrt n}2^{\frac12\renyi{\frac12}(\calD)}$. Since
  $\renyi{1/2}(\calD)\leq \log \abs{Y}$ for any distribution $\calD$
  on $Y$, we furthermore have
  $\gamma \leq \frac1{\sqrt n} 2^{\frac12\log\abs{Y}}=\sqrt{\abs
    Y/\abs X}$.
  Let $\SD(y_1,y_2|E)$ denote the statistical distance between $y_1$
  and $y_2$ conditioned on an event $E$.
  We can finally compute:
  \begin{multline*}
    \SD\bigl((\calO,y_1);(\calO,y_2)\bigr)
    =
    \sum_{f:X\to Y} \Pr[\calO=f]\cdot\SD(y_1,y_2|\calO=f) \\
    =
    \sum_{f:X\to Y} \Pr[\calO=f]\cdot\SD(\calD,\calD_f) 
    =
    \sum_{f:X\to Y} \Pr[\calO=f]\cdot \tfrac12j(f) 
    = \tfrac12 \E[J_n] \leq \tfrac12\gamma.
    \mathqed
  \end{multline*}
\end{proof}

\begin{lemma}\label{lemma:s.lsb}
  Let $\ibit p(x)$\symbolindexmark{\ibit} denote the $p$-th bit of $x$.
  Let $X=\bits\ell$ for some $\ell$, and $k\geq 1$,
  $p\in\{1,\dots,\ell\}$ be integers. Let $S\subseteq X$ be
  uniformly random with $\abs  S=k$. Let $x\otR S$.  Let
  $b^*\otR\bit$. Then $\SD\bigl((S,\ibit p(x));(S,b^*)\bigr)\leq 1/2\sqrt k$.
\end{lemma}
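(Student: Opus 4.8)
The plan is to condition on the random set $S$ and reduce the whole statement to a second-moment (variance) bound on a hypergeometric random variable; intuitively, the $p$-th bit of a uniformly random element of a uniformly random large subset is nearly unbiased. First I would fix $S$ and set $k_1:=\abs{\{x\in S:\ibit p(x)=1\}}$, so that, conditioned on this choice of $S$, the bit $\ibit p(x)$ for $x\otR S$ is Bernoulli-distributed with parameter $k_1/k$. Since the statistical distance between a $\mathrm{Bernoulli}(q)$ variable and a uniform bit equals exactly $\abs{q-\tfrac12}$, averaging over the choice of $S$ gives
\[
\SD\bigl((S,\ibit p(x));(S,b^*)\bigr)=\E\Bigl[\Babs{\tfrac{k_1}{k}-\tfrac12}\Bigr]=\tfrac1k\,\E\bigl[\abs{k_1-\tfrac k2}\bigr],
\]
where the expectation is over the uniformly random $k$-subset $S$ of $\bits\ell$.

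Next I would identify the distribution of $k_1$. Exactly $2^{\ell-1}$ of the $2^\ell$ elements of $\bits\ell$ have $p$-th bit equal to $1$, and $S$ is a uniformly random $k$-subset, so $k_1$ is hypergeometric with population size $2^\ell$, with $2^{\ell-1}$ ``successes'', and with $k$ draws. Hence $\E[k_1]=k/2$ and, by the standard hypergeometric variance formula, $\Var(k_1)=k\cdot\tfrac14\cdot\tfrac{2^\ell-k}{2^\ell-1}\le k/4$, using $k\ge1$ so that the finite-population correction factor is at most $1$ (and noting $2^\ell\ge2$ since $p\in\{1,\dots,\ell\}$ forces $\ell\ge1$, so there is no division by zero).

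Finally, by Cauchy--Schwarz (equivalently, Jensen applied to $t\mapsto t^2$), $\E[\abs{k_1-k/2}]\le\sqrt{\E[(k_1-k/2)^2]}=\sqrt{\Var(k_1)}\le\sqrt k/2$, and substituting into the displayed identity yields $\SD\le\tfrac1k\cdot\tfrac{\sqrt k}2=\tfrac1{2\sqrt k}$, as claimed. There is no genuine obstacle here; the only subtlety is pinning down the constant $1/2$, which is why I route through the exact variance rather than integrating the Hoeffding-type tail bound of \autoref{lemma:p.fraction} (with $P=\{x:\ibit p(x)=1\}$, $\varphi=\tfrac12$), which would only give a constant of roughly $\sqrt{\pi/2}$.
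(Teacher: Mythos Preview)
Your proof is correct and essentially identical to the paper's: both condition on $S$, identify the conditional statistical distance as $\abs{k_1/k-\tfrac12}$ (the paper writes $k_1=\abs P$), bound the mean absolute deviation by the standard deviation via Jensen, and finish with the hypergeometric variance formula $\Var(k_1)\le k/4$.
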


\begin{proof}
  Let $P:=\{x\in S:\ibit p(x)=1\}$. Let $\SD(X;Y|S)$ denote the
  statistical distance between $X$ and $Y$ conditioned on a specific
  choice of $S$. And $\Pr[S]$ denote the probability of a specific
  choice of $S$. Then
  \begin{align*}
    &\SD\bigl((S,\ibit p(x));(S,b^*)\bigr)\\
    &=\sum_{S}\Pr[S]\SD(\ibit p(x);b^*|S) \\
    &=\sum_{S}\Pr[S]\cdot\Babs{\Pr[x\in P:x\otR S]-\Pr[b^*=1:b^*\otR\bit]}\\
    &=\sum_{S}\Pr[S]\cdot\Babs{\tfrac{\abs P}{\abs S}-\tfrac12}
    \starrel\leq
    \sqrt{\smash{\sum_{S}}\Pr[S]\Bigl(\tfrac{\abs P}{\abs S}-\tfrac12\Bigr)^2} 
    =\sqrt{\E\Bigl[\bigl(\tfrac{\abs P}{\abs S}-\tfrac12\bigr)^2\Bigr]} \\
    &=\sqrt{\E\Bigl[\Bigl(\tfrac{\abs P}{\abs S}-E\Bigl[\tfrac{\abs P}{\abs S}\Bigr]\Bigr)^2\Bigr]} 
    =\sqrt{\Var\bigl[\abs P/\abs S\bigr]} \starstarrel= \tfrac1k\sqrt{\Var\bigl[\abs P\bigr]}.
  \end{align*}
  Here $(*)$ uses Jensen's inequality. And $(**)$ that $\abs S=k$.

  $\abs{P}$ is the number of successes when sampling $k$ times without
  replacement from a population of size $2^{\ell}$ containing
  $2^{\ell-1}$ successes (the elements $x\in\bits\ell$ with
  $\ibit p(x)=1$). That is, $\abs P$ has hypergeometric distribution
  with parameters $m=n=2^{\ell-1}$ and $N:=k$ (in the notation of
  \cite{mathworld.hypergeo}). Thus (see \cite{mathworld.hypergeo}):
  \[
  \Var\bigl[\abs P\bigr]=
  \frac{mnN(m+n-N)}{(m+n)^2(m+n-1)}
  =\tfrac14 k\frac{2^{\ell}-k}{2^{\ell}-1} \leq \tfrac k4.
  \]
  Summarizing,
  \begin{align*}
    \SD\bigl((S,\ibit p(x));(S,b^*)\bigr)
    \leq
    \tfrac1k\sqrt{\Var\bigl[\abs P\bigr]}
    \leq
    \tfrac1k\sqrt{k/4} = \frac1{2\sqrt k}.
    \mathqed
  \end{align*}
\end{proof}

\begin{lemma}\label{lemma:pick.dep} Let $C$ and $R$ be finite sets, let $k\geq1$ be an
  integer. Let $S$ be a uniformly chosen subset of $C\times R$ with
  $\abs S=k$. Let $c'\otR C$, and $r\otR S_{|c'}:=\{r:(c',r)\in S\}$
  (with $r:=\bot\notin R$ iff $S_{|c'}=\varnothing$). Let
  $(c'',r'')\otR S$.
  
  Then $\sigma:=\SD\bigl((S,c',r');(S,c'',r'')\bigr)\leq
  \frac{2k^2}{\abs{C\times R}} 
  +
  \frac{\sqrt{\abs C}}{2\sqrt k}
  $.
\end{lemma}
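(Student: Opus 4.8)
The plan is to condition on $S$, reduce $\sigma$ to the expected $\ell_1$-distance between the uniform distribution on $C$ and the empirical distribution $c\mapsto\abs{S_{|c}}/k$ coming from $S$, and then bound that expectation by a variance estimate for a hypergeometric random variable. Since $S$ has the same marginal law in both experiments and $(c',r')$ (resp.\ $(c'',r'')$) is obtained from $S$ together with fresh independent randomness, I would first write $\sigma=\E_S[\sigma_S]$ where $\sigma_S:=\SD\bigl((c',r')\mid S;(c'',r'')\mid S\bigr)$, so it suffices to understand $\sigma_S$ for a fixed $S$.

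Next I would compute $\sigma_S$ exactly. Writing $N_c:=\abs{S_{|c}}$ (so $\sum_{c\in C}N_c=k$), the key observation is that, conditioned on the first component taking any value $c$ with $N_c>0$, \emph{both} conditional distributions output the second component uniformly over $S_{|c}$; hence the only discrepancies are the two marginals of the first component, namely $c\mapsto 1/\abs C$ for $(c',r')$ versus $c\mapsto N_c/k$ for $(c'',r'')$, together with the extra mass $1/\abs C$ that the first experiment puts on each outcome $(c,\bot)$ with $N_c=0$ (these land on distinct triples and are never produced by the second experiment). Summing absolute differences over all outcomes and grouping by the first component should collapse to the clean identity $2\sigma_S=\sum_{c\in C}\bigl|1/\abs C-N_c/k\bigr|$, where the empty fibres contribute $1/\abs C=\bigl|1/\abs C-0\bigr|$ each.

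Then I would take the expectation over $S$. Since $S$ is a uniformly random $k$-subset of $C\times R$, exactly $\abs R$ of whose elements have first component $c$, the variable $N_c$ is hypergeometric with $\E_S[N_c]=k/\abs C$ and $\Var_S[N_c]=k\cdot\tfrac1{\abs C}\cdot(1-\tfrac1{\abs C})\cdot\tfrac{\abs{C\times R}-k}{\abs{C\times R}-1}\le k/\abs C$. By Jensen, $\E_S\bigl|N_c-\E_S[N_c]\bigr|\le\sqrt{\Var_S[N_c]}\le\sqrt{k/\abs C}$, so $\sigma=\tfrac1{2k}\sum_{c}\E_S\bigl|N_c-k/\abs C\bigr|\le\tfrac1{2k}\cdot\abs C\cdot\sqrt{k/\abs C}=\sqrt{\abs C}/(2\sqrt k)$, which already implies the lemma (the claimed bound merely adds the nonnegative $2k^2/\abs{C\times R}$, which I would keep only for consistency with how the lemma is invoked). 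An alternative route, avoiding the hypergeometric variance, is to first replace $S$ by $k$ independent uniform samples from $C\times R$, paying $\le k^2/\abs{C\times R}$ of statistical distance per experiment — which is exactly where the $2k^2/\abs{C\times R}$ term would come from — and then run an argument parallel to \autoref{lemma:empirical}.

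The only genuinely delicate step is the exact evaluation of $\sigma_S$: one must track the $\bot$-outcomes carefully and verify that, after grouping by the first component, they merge with the $N_c>0$ terms into a single $\ell_1$ sum. Everything afterwards — the hypergeometric mean and variance, Jensen, and the final arithmetic — is routine; the only remaining care is the degenerate case $\abs{C\times R}=1$, which forces $k=1$ and makes the two experiments identical, so $\sigma=0$ and the bound holds trivially.
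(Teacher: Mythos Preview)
Your proof is correct and in fact proves the stronger bound $\sigma\le\sqrt{\abs C}/(2\sqrt k)$ without the additive $2k^2/\abs{C\times R}$ term. The exact evaluation $2\sigma_S=\sum_{c}\bigl|1/\abs C-N_c/k\bigr|$ is right (the $\bot$-outcomes do fold into the sum as you say), and the hypergeometric variance bound $\Var[N_c]\le k/\abs C$ together with Jensen finishes it cleanly.

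The paper takes precisely the ``alternative route'' you sketch at the end. It first bounds only $\SD\bigl((S,c');(S,c'')\bigr)$, then extends to the triples by noting that $r'$ given $(S,c')$ and $r''$ given $(S,c'')$ have the same conditional law. To bound the $(S,c)$-distance, it replaces the $k$ distinct draws from $C\times R$ by $k$ i.i.d.\ draws (this is where the $2k^2/\abs{C\times R}$ enters, paid once in each direction), decouples the $C$- and $R$-coordinates in the i.i.d.\ model, and then bounds the discrepancy between ``$c$ uniform on $C$'' and ``$c=F_1(j)$ for random $j$'' via the empirical-distribution estimate of Berend--Kontorovich (the same tool behind \autoref{lemma:empirical}), which yields the $\tfrac12\sqrt{\abs C/k}$ term.

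Your approach is more elementary and tighter: by conditioning on $S$ directly you never leave the without-replacement model, so the $2k^2/\abs{C\times R}$ slack never appears, and you avoid the external empirical-distribution lemma in favour of a one-line hypergeometric variance computation. The paper's route has the minor advantage of reusing machinery already present (the Berend--Kontorovich bound), but yours is self-contained and sharper.
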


\begin{proof}
  In the following calculation, $G\stackrel\varepsilon\approx H$ means that the
  distribution of $(S,c)$ when picked according to $G$ has statistical
  distance $\leq\varepsilon$ from the distribution of $(S,c)$ when
  picked according to $H$. 
  And $G\equiv H$ means equality of these distributions
  ($G\stackrel0\approx H$).
  And $[C\times R]_k$ denotes the set of all
  $S\subseteq C\times R$ with $\abs S=k$.
  And $x_1,\dots,x_k\txtrel{$\neq$}\ot M$
  means that the $x_i$ are chosen uniformly but distinctly from $M$
  (drawn without replacing).
  \begin{align*}
    & S\otR[C\times R]_k,\ (c,r)\otR S \\
    \equiv{} & F(1),\dots,F(k)\txtrel{$\neq$}\ot C\times R,\ S:=\im F,\
    j\otR\{1,\dots,k\},\ (c,r):=F(j) \\
    \stackrel{\varepsilon_1}\approx{} &
    F(1),\dots,F(k) \otR C\times R,\
    S:=\im F,\
    j\otR\{1,\dots,k\},\ (c,r):=F(j) \\
    \equiv{} &
    F_1(1),\dots,F_1(k) \otR C,\
    F_2(1),\dots,F_2(k) \otR R,\
    S:=\im((F_1,F_2)),\\
    &\qquad
    j\otR\{1,\dots,k\},\ c:=F_1(j),\ r:=F_2(j)
    \\
    \equiv{} &
    F_1(1),\dots,F_1(k) \otR C,\
    j\otR\{1,\dots,k\},\ c:=F_1(j),\\
    &\qquad
    F_2(1),\dots,F_2(k) \otR R,\
    S:=\im((F_1,F_2))
    \\
    \stackrel{\varepsilon_2}\approx {}&
    F_1(1),\dots,F_1(k) \otR C,\
    c\otR C,\
    F_2(1),\dots,F_2(k) \otR R,\
    S:=\im(F_1,F_2)
    \\
    \equiv{}&
    F(1),\dots,F(k)\otR C\times R,\ S:=\im F,\ c\otR C
    \\
    \stackrel{\varepsilon_1}\approx{} &
    F(1),\dots,F(k)\txtrel{$\neq$}\ot C\times R,\ S:=\im F,\ c\otR C
    \\
    \equiv {}&
    S\otR[C\times R]_k,\ c\otR C
  \end{align*}
  Here $\varepsilon_1$ is the probability that at least two
  independently chosen $F(i)\otR C\times R$ are equal, and
  $\varepsilon_2=\SD\bigl((F_1,c);(F_1,u)\bigr)$ for $u\otR C$.

  Thus
  $\SD\bigl((S,c');(S,c'')\bigr)\leq2\varepsilon_1+\varepsilon_2$. Since
  $r'$ given $S,c'$ has the same distribution as $r''$ given $S,c''$,
  it follows
  \begin{equation}\label{eq:Scr}
    \SD\bigl((S,c',r');(S,c'',r'')\bigr)\leq2\varepsilon_1+\varepsilon_2.
  \end{equation}

  We have
  $\varepsilon_1\leq\sum_{i\neq j}\Pr[F(i)=F(j)]=\sum_{i\neq j}
  1/\abs{C\times R}\leq k^2/\abs{C\times R}$.

  For a function $f:\{1,\dots,k\}\to C$, let $\calD_f$ denote
  the empirical distribution of $f$, i.e.,
  $\calD_f(c)=\frac1k\babs{\{i:f(i)=c\}}$. Let $\mathcal U$ denote the
  uniform distribution on $C$.
  Let
  $j(f):=2\,\SD(\mathcal U,\calD_f)$. And let $J_k:=j(F_1)$ for
  $F_1(1),\dots,F_1(k)\otR C$, i.e., $J_k$ is
  a real-valued random variable. Then \cite[Lemma
  8]{berend12empirical} proves that
  $\E[J_k]\leq\frac1{\sqrt k}\sum_{c\in C}\sqrt{\mathcal U(c)}=\sqrt{\abs
    C/k}$. Then
  \begin{multline*}
    \varepsilon_2=
    \SD\bigl((F_1,c);(F_1,u)\bigr)
    =
    \sum_{f} \Pr[F_1=f]\cdot\SD(\calD_f,\mathcal U) \\
    =
    \sum_{f} \Pr[F_1=f]\cdot \tfrac12j(f) 
    = \tfrac12 \E[J_k] \leq \tfrac12\sqrt{\abs
    C/k}.
  \end{multline*}
  With \eqref{eq:Scr}, the lemma follows.
\end{proof}

\medskip\noindent
We restate an auxiliary lemma from \cite[full version, Lemma 7]{qtc}:
\begin{lemma}\label{lemma:dist.similar}
  Let $\ket{\Psi_1},\ket{\Psi_2}$ be quantum states that can be
  written as $\ket{\Psi_i}=\ket{\Psi^*_i}+\ket{\Phi^*}$ where both
  $\ket{\Psi^*_i}$ are orthogonal to $\ket{\Phi^*}$.
  Then $\TD(\ket{\Psi_1},\ket{\Psi_2})\leq2\norm{\ket{\Psi_2^*}}$.
\end{lemma}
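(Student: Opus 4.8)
The plan is to reduce everything to the well-known closed form for the trace distance of pure states, $\TD(\ket{\Psi_1},\ket{\Psi_2})=\sqrt{1-\abs{\inner{\Psi_1}{\Psi_2}}^2}$, and then to lower-bound the overlap $\inner{\Psi_1}{\Psi_2}$ using the shared component $\ket{\Phi^*}$. First I would observe that since $\ket{\Psi_1},\ket{\Psi_2}$ are unit vectors and $\ket{\Psi_i^*}\perp\ket{\Phi^*}$, the Pythagorean identity gives $\norm{\ket{\Psi_1^*}}^2+\norm{\ket{\Phi^*}}^2=1=\norm{\ket{\Psi_2^*}}^2+\norm{\ket{\Phi^*}}^2$, hence $\norm{\ket{\Psi_1^*}}=\norm{\ket{\Psi_2^*}}=:t$; write $p:=\norm{\ket{\Phi^*}}$, so $t^2+p^2=1$. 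This makes the (seemingly asymmetric) bound $2\norm{\ket{\Psi_2^*}}$ natural.

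Next I would split into two cases. If $t\geq\tfrac12$, the bound is immediate since $\TD$ of any two states is at most $1\leq 2t$. If $t<\tfrac12$, I expand $\inner{\Psi_1}{\Psi_2}=\inner{\Psi_1^*}{\Psi_2^*}+\inner{\Psi_1^*}{\Phi^*}+\inner{\Phi^*}{\Psi_2^*}+\inner{\Phi^*}{\Phi^*}$; the two cross terms vanish by the orthogonality hypothesis, leaving $\inner{\Psi_1}{\Psi_2}=\inner{\Psi_1^*}{\Psi_2^*}+p^2$. By Cauchy--Schwarz $\abs{\inner{\Psi_1^*}{\Psi_2^*}}\leq t^2$, so $\Re\inner{\Psi_1}{\Psi_2}\geq p^2-t^2=1-2t^2$, which is positive in this case. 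Therefore $\abs{\inner{\Psi_1}{\Psi_2}}\geq 1-2t^2$, and plugging into the trace-distance formula gives $\TD(\ket{\Psi_1},\ket{\Psi_2})\leq\sqrt{1-(1-2t^2)^2}=\sqrt{4t^2-4t^4}=2t\sqrt{1-t^2}\leq 2t=2\norm{\ket{\Psi_2^*}}$, as claimed.

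I do not expect any real obstacle here; the only points requiring a little care are (i) recalling/justifying the pure-state trace-distance formula (standard, e.g.\ from \cite{nielsenchuang-10year}), and (ii) the case split ensuring that $1-2t^2\geq 0$ before squaring, so that the monotonicity step $\abs{\inner{\Psi_1}{\Psi_2}}\geq 1-2t^2$ legitimately yields an upper bound on $\sqrt{1-\abs{\inner{\Psi_1}{\Psi_2}}^2}$. Everything else is a routine two-line computation.
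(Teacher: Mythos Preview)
Your argument is correct. Note that the paper does not actually prove this lemma; it merely restates it, citing \cite[full version, Lemma~7]{qtc}, so there is no in-paper proof to compare against---your self-contained derivation via the pure-state trace-distance formula and the Pythagorean observation $\norm{\ket{\Psi_1^*}}=\norm{\ket{\Psi_2^*}}$ is a clean way to fill that gap.
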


\begin{lemma}\label{lemma:dist.similar3}
  Let $\ket{\Psi_1},\ket{\Psi_2}$ be quantum states.
  Then $\TD(\ket{\Psi_1},\ket{\Psi_2})\leq\bnorm{\ket{\Psi_1}-\ket{\Psi_2}}$.
\end{lemma}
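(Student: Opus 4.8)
The plan is to reduce to the well-known closed form for the trace distance of two pure states together with one elementary inequality. First I would recall that for unit vectors one has $\TD(\ket{\Psi_1},\ket{\Psi_2})=\sqrt{1-\babs{\inner{\Psi_1}{\Psi_2}}^2}$. This is the standard fact that the trace distance of two pure states equals $\sqrt{1-F^2}$ with $F=\babs{\inner{\Psi_1}{\Psi_2}}$ the fidelity; it follows because $\selfbutter{\Psi_1}-\selfbutter{\Psi_2}$ is Hermitian of rank at most $2$ and trace $0$, so its nonzero eigenvalues are $\pm\lambda$, and $2\lambda^2=\Tr\bigl((\selfbutter{\Psi_1}-\selfbutter{\Psi_2})^2\bigr)=2-2\babs{\inner{\Psi_1}{\Psi_2}}^2$ gives $\lambda=\sqrt{1-\babs{\inner{\Psi_1}{\Psi_2}}^2}$. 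On the other hand, expanding the norm directly, $\bnorm{\ket{\Psi_1}-\ket{\Psi_2}}^2=\bnorm{\ket{\Psi_1}}^2-\inner{\Psi_1}{\Psi_2}-\inner{\Psi_2}{\Psi_1}+\bnorm{\ket{\Psi_2}}^2=2-2\Re\inner{\Psi_1}{\Psi_2}$.

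Writing $c:=\inner{\Psi_1}{\Psi_2}$ (so $\abs c\le 1$), it then suffices to show $1-\abs c^2\le 2-2\Re c$, i.e.\ $0\le 1-2\Re c+\abs c^2$. But the right-hand side is exactly $\babs{1-c}^2\ge 0$. Taking square roots of $1-\abs c^2\le\bnorm{\ket{\Psi_1}-\ket{\Psi_2}}^2$ and using the pure-state formula $\TD(\ket{\Psi_1},\ket{\Psi_2})=\sqrt{1-\abs c^2}$ gives the claim.

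There is essentially no obstacle here; the only slightly non-routine ingredient is having the pure-state expression for $\TD$ at hand (an alternative would be to split off the common component of $\ket{\Psi_1}$ and $\ket{\Psi_2}$ and invoke \autoref{lemma:dist.similar}, but the direct computation above is shorter). I would also remark, for sanity, that the bound is in general far from tight: e.g.\ $\ket{\Psi_2}=-\ket{\Psi_1}$ gives $\TD=0$ while $\bnorm{\ket{\Psi_1}-\ket{\Psi_2}}=2$. This is expected, since the left-hand side is insensitive to global phase whereas the right-hand side is not.
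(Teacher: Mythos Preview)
Your proof is correct and is essentially the same as the paper's. The paper fixes a basis so that $\ket{\Psi_1}=\ket 0$ and $\ket{\Psi_2}=\alpha\ket0+\beta\ket1$, then uses $\TD^2\le 1-\abs\alpha^2=\abs\beta^2\le\abs{1-\alpha}^2+\abs\beta^2=\bnorm{\ket{\Psi_1}-\ket{\Psi_2}}^2$; with $c=\alpha$ this is exactly your inequality $1-\abs c^2\le 2-2\Re c$, and both arguments boil down to $\abs{1-c}^2\ge 0$. The only cosmetic differences are that the paper works in an explicit two-dimensional basis and cites the fidelity bound as an inequality (from Nielsen--Chuang), whereas you work coordinate-free and derive the pure-state equality for $\TD$ directly.
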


\begin{proof} 
  Fix a basis such that $\ket{\Psi_1}=\ket0$ and
  $\ket{\Psi_2}=\alpha\ket0+\beta\ket1$. Then
  $\abs\alpha^2+\abs\beta^2=1$ and
  \begin{align*}
    \TD(\ket{\Psi_1},\ket{\Psi_2})^2
    \starrel\leq
    1-\abs{\inner{\Psi_1}{\Psi_2}}^2
    = 1-\abs\alpha^2
    = \abs\beta^2
    \leq \abs{1-\alpha}^2+\abs{\beta^2}
    =\bnorm{\ket{\Psi_1}-\ket{\Psi_2}}^2.
  \end{align*}
  Here $(*)$ uses that the trace distance is bounded in terms of the fidelity
    (e.g., \cite[(9.101)]{nielsenchuang-10year}).
  Thus
  $\TD(\ket{\Psi_1},\ket{\Psi_2})\leq\bnorm{\ket{\Psi_1}-\ket{\Psi_2}}$.
\end{proof}

\begin{lemma}[Preimage search in a random function]\label{lemma:grover.dist}
  Let $\gamma\in[0,1]$. Let $Z$ be a finite set. Let $q\geq0$ be an
  integer.  Let $F:Z\to\bit$ be the following function: For each $z$,
  $F(z):=1$ with probability $\gamma$, and $F(z):=0$ else. Let
  $N$ be the function with $\forall z:N(z)=0$.

  If an oracle algorithm $A$ makes at most $q$ queries, then
  \[
  \Babs{\Pr[b=1:b\ot A^F]
    -
    \Pr[b=1:b\ot A^N]} \leq 2q\sqrt\gamma.
  \]
\end{lemma}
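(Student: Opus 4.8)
The plan is to prove this by the standard hybrid (progress-measure) argument of Bennett, Bernstein, Brassard, and Vazirani, exploiting that the ``query weights'' of $A$ are measured on the \emph{fixed} oracle $N$ and hence do not depend on the random draw of~$F$. Without loss of generality I would assume that $A$ is a unitary circuit making exactly $q$ oracle calls followed by a measurement of the output bit~$b$, and that the oracle for a function $G:Z\to\bit$ acts as $O_G\ket z\ket c\ket e=\ket z\ket{c\oplus G(z)}\ket e$ on the query register, the response register and the workspace, so that $O_N$ is the identity. For $j=1,\dots,q$ let $\ket{\phi_j}$ be the state of $A$ just before its $j$-th query when run with $O_N$ (equivalently, with no oracle), let $P_D$ be the projector onto basis states whose query register lies in $D\subseteq Z$, and set $q_z:=\sum_{j=1}^q\bnorm{P_{\{z\}}\ket{\phi_j}}^2$. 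Since each $\ket{\phi_j}$ is a unit vector, $\sum_{z\in Z}q_z=q$.

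The first real step is the hybrid estimate: writing $\ket{\phi^F},\ket{\phi^N}$ for the final states of $A^F,A^N$ just before the measurement, and $D_F:=\{z\in Z:F(z)=1\}$ for the set where $F$ and $N$ disagree, I claim $\bnorm{\ket{\phi^F}-\ket{\phi^N}}\le 2\sqrt{q\sum_{z\in D_F}q_z}$. To prove it I would interpolate between the two runs through the hybrids $\mathcal H_k$ that use $O_N$ for the first $k$ queries and $O_F$ for the rest (so $\mathcal H_0=A^F$ and $\mathcal H_q=A^N$); $\mathcal H_{j-1}$ and $\mathcal H_j$ differ only in the $j$-th query, the state fed into that query is $\ket{\phi_j}$ in both (their first $j-1$ queries are $O_N=I$), and their remainders are the same unitary, so $\bnorm{\mathcal H_{j-1}-\mathcal H_j}=\bnorm{(O_F-I)\ket{\phi_j}}=\bnorm{(O_F-I)P_{D_F}\ket{\phi_j}}\le 2\bnorm{P_{D_F}\ket{\phi_j}}$. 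Summing over $j$ and applying Cauchy--Schwarz gives $\bnorm{\ket{\phi^F}-\ket{\phi^N}}\le 2\sum_{j=1}^q\bnorm{P_{D_F}\ket{\phi_j}}\le 2\sqrt q\,\bigl(\sum_{j=1}^q\bnorm{P_{D_F}\ket{\phi_j}}^2\bigr)^{1/2}=2\sqrt q\,\bigl(\sum_{z\in D_F}q_z\bigr)^{1/2}$.

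Finally I would average over~$F$. The key point is that the $q_z$ are constants, extracted from running $A$ on the fixed oracle $N$; hence $\E[\sum_{z\in D_F}q_z]=\sum_{z\in Z}\Pr[F(z)=1]\,q_z=\gamma\sum_{z\in Z}q_z=\gamma q$, and concavity of the square root (Jensen) gives $\E\bigl[(\sum_{z\in D_F}q_z)^{1/2}\bigr]\le\sqrt{\gamma q}$. Combining with the hybrid estimate, $\E\bigl[\bnorm{\ket{\phi^F}-\ket{\phi^N}}\bigr]\le 2\sqrt q\cdot\sqrt{\gamma q}=2q\sqrt\gamma$. Since $b$ results from measuring the final state, for each fixed $F$ we have $\babs{\Pr[b=1:b\ot A^F]-\Pr[b=1:b\ot A^N]}\le\TD(\ket{\phi^F},\ket{\phi^N})\le\bnorm{\ket{\phi^F}-\ket{\phi^N}}$, the last inequality by \autoref{lemma:dist.similar3}; averaging this over $F$ and inserting the expectation bound finishes the proof. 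The only step needing genuine care is the hybrid estimate, and even that is a telescoping sum plus one Cauchy--Schwarz; the one conceptual point is that the query weights $q_z$ must be defined relative to the fixed oracle $N$, because only then is their total a constant that passes through the expectation.
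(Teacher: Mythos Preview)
Your proof is correct and follows essentially the same approach as the paper's: the BBBV hybrid argument with the crucial observation that the query weights are computed on the fixed $N$-run and are therefore independent of~$F$, followed by Jensen over the random choice of~$F$. The only cosmetic differences are that the paper bounds trace distance step by step (via \autoref{lemma:dist.similar}) and applies Jensen once per query index before summing, whereas you bound the Euclidean distance, apply Cauchy--Schwarz over the query indices first, and convert to trace distance at the end via \autoref{lemma:dist.similar3}; both orderings yield the same bound $2q\sqrt\gamma$.
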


\begin{proof}
  We can assume that $A$ uses three quantum registers $A,K,V$ for its
  state, oracle inputs, and oracle outputs. For a function $f$, let
  $O_f\ket{a,k,v}:=\ket{a,k,v\oplus f(k)}$. Then the final state of
  $A^f()$ is $(UO_f)^q\ket{\Psi_0}$ for some unitary $U$ and some
  initial state $\ket{\Psi_0}$. The output $b$ of $A^f$ is then obtained
  by obtained by performing a projective measurement
  $P_\mathit{final}$ on that final state.

  Let $\ket{\Psi_f^i}:=(UO_f)^i\ket{\Psi_0}$ and
  $\ket{\Psi^i}:=(UO_N)^i\ket{\Psi_0}=U^i\ket{\Psi_0}$. (Recall: $N$
  is the constant-zero function.) 

  We compute:
  \begin{alignat*}2
    D_i^f&:=\TD(\ket{\Psi_f^i},\ket{\Psi^i}) = \TD(O_f\ket{\Psi_f^{i-1}},\ket{\Psi^{i-1}}) \\
    &\leq \TD(O_f\ket{\Psi_f^{i-1}},O_f\ket{\Psi^{i-1}})+\TD(O_f\ket{\Psi^{i-1}},\ket{\Psi^{i-1}}) \\
    &= D_{i-1}^f+\TD(O_f\ket{\Psi^{i-1}},\ket{\Psi^{i-1}}).
  \end{alignat*}
  Furthermore $D^f_0=\TD(\ket{\Psi_0},\ket{\Psi_0})=0$, thus
  $D^f_q\leq \sum_{i=0}^{q-1}\TD(O_f\ket{\Psi^{i}},\ket{\Psi^{i}})$.

  Let 
  $Q_z$ be the projector projecting $K$ onto $\ket z$ (i.e., $Q_z=I\otimes\selfbutter z\otimes I$).
 $Q_f$ is the projector projecting $K$ onto all $\ket z$ with $f(z)=1$ (i.e., 
  $Q_f=\sum_{z:f(z)=1} Q_z$). Let $\alpha_f:=\Pr[F=f]$.

  We then have
  \begin{align}
    \sum_f \alpha_f\bnorm{Q_f\ket{\Psi_i}}^2
    &\starrel= \sum_f \alpha_f \sum_{z:f(z)=1} \bnorm{Q_z\ket{\Psi_i}}^2 
    =\sum_{z\in Z} \sum_{f:f(x)=1} \alpha_f \bnorm{Q_z\ket{\Psi_i}}^2 \notag\\
    &\starstarrel= \lambda \sum_z \bnorm{Q_z\ket{\Psi_i}}^2 = \lambda\bnorm{\ket{\Psi_i}}^2 = \lambda.
    \label{eq:avg.norm}
  \end{align}
  Here $(*)$ uses that $Q_f=\sum_{z:f(z)=1} Q_z$ and all $Q_z\ket{\Psi_i}$ are orthogonal.
  And $(**)$ uses that $\sum_{f:f(x)=1} \alpha_f=\Pr[F(x)=1]=\lambda$.
  
  Then
  \begin{align}
    &\sum_{f}\alpha_f\TD(\ket{\Psi_f^q},\ket{\Psi^q}) 
    = \sum_f \alpha_f D_q^f
    \leq \sum_{f,i}\alpha_f\TD(O_f\ket{\Psi^i},\ket{\Psi^i}) \notag\\
    &= \sum_{f,i}\alpha_f\TD\bigl(O_fQ_f\ket{\Psi^i}+(1-Q_f)\ket{\Psi^i},\
                              Q_f\ket{\Psi^i}+(1-Q_f)\ket{\Psi^i}\bigr) \notag\\
    &\starrel\leq \sum_{f,i}\alpha_f2\norm{Q_f\ket{\Psi^i}} 
    \starstarrel\leq 2\sum_i\sqrt{\sum_{f}\alpha_f\norm{Q_f\ket{\Psi^i}}^2} \notag\\
    &\eqrefrel{eq:avg.norm}=
    2\sum_i\sqrt{\lambda}
    = 2q\sqrt{\lambda}.
    \label{eq:avg.td}
  \end{align}
  Here $(*)$ uses \autoref{lemma:dist.similar}. And
  $(**)$ uses Jensen's inequality. 
  Finally,
  \begin{align*}
    &\Babs{\Pr[b=1:b\ot A^F]
      -
      \Pr[b=1:b\ot A^N]} \\
    &\leq \sum_f \alpha_f 
    \Babs{\Pr[b=1:b\ot A^f]
      -
      \Pr[b=1:b\ot A^N]} \\
    &\leq
    \sum_f\alpha_f \TD(\ket{\Psi_f^q},\ket{\Psi^q})
    \eqrefrel{eq:avg.td}\leq 2q\sqrt{\lambda}.
    \mathqed
  \end{align*}
\end{proof}

\noindent The following lemma formalizes that an oracle $\calO_1$ does not help
(much) in finding a value $w$ if $\calO_1$ only gives answers when $w$
is already contained in its input.

\begin{lemma}[Removing redundant oracles 1]\label{lemma:remove.wora}
  Let $w$, $\calO_1$, $\calO_2$ be chosen according to some joint
  distribution.  Here $w$ is a bitstring, and $\calO_1,\calO_2$ are
  oracles, and $\calO_1$ is classical (i.e.,
  $\forall x,y.\exists y'.\calO_1\ket x\ket y=\ket x\ket{y'}$). Fix a
  function $f$. Assume that for all $x$ with $f(x)\neq w$,
  $\calO_1(x)=0$. (In other words,
  $\calO_1\ket{x}\ket{y}=\ket{x}\ket{y}$ for $f(x)\neq w$.)

  Let $A$ be an oracle machine that makes at most $q$ queries to
  $\calO_1$ and $q'$ queries to $\calO_2$. Then
  there is another oracle machine $\Hat A$ that makes at most $q'$ queries to
  $\calO_2$ such that:
  \[
  \Pr[w=w':w'\ot A^{\calO_1,\calO_2}]
  \leq
  2(q+1)\sqrt{\Pr[w'=w:w'\ot\Hat A^{\calO_2}]}
  \]
\end{lemma}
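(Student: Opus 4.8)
The plan is a hybrid (BBBV-style) argument, oriented so that the produced machine $\hat A$ never calls $\calO_1$ at all. Write $S_w:=\{x:f(x)=w\}$ and let $P_w$ be the projector on $A$'s query-input register onto $\{\,\ket x:x\in S_w\,\}$. The hypothesis ``$\calO_1\ket x\ket y=\ket x\ket y$ whenever $f(x)\neq w$'', together with $\calO_1$ being classical, says exactly that $\calO_1$ acts as the identity off the range of $P_w$, so $\norm{(\calO_1-I)\ket\psi}=\norm{(\calO_1-I)P_w\ket\psi}\le 2\norm{P_w\ket\psi}$ for every $\ket\psi$. As usual, I would first purify $A$ into a fixed unitary circuit with a single final projective measurement; WLOG the circuit contains exactly $q_0\le q$ gates applying $\calO_1$ (if $q_0<q$, run the argument with $q_0$ and use $2(q_0+1)\le 2(q+1)$). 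Let $M_w$ be the projector corresponding to the output being $w$, so that $\Pr[w=w':w'\ot A^{\calO_1,\calO_2}]=\norm{M_w\ket\Phi}^2$ with $\ket\Phi$ the final state of $A$.

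Next I would introduce hybrids $\ket{\Phi^{(j)}}$, $j=0,\dots,q_0$, in which the first $j$ of the $\calO_1$-gates are replaced by the identity and the remaining ones left intact; thus $\ket{\Phi^{(0)}}=\ket\Phi$ and $\ket{\Phi^{(q_0)}}$ is the final state of the run $A^{I,\calO_2}$ of $A$ with $\calO_1$ entirely replaced by the identity (but with $\calO_2$ untouched). For each $j$ let $\ket{\chi_j}$ be the state of that modified circuit just before the $j$-th $\calO_1$-gate; this state is shared by hybrids $j-1$ and $j$, which then differ only by applying $\calO_1$ versus $I$ to it followed by the same remaining operations, so $\norm{\ket{\Phi^{(j-1)}}-\ket{\Phi^{(j)}}}=\norm{(\calO_1-I)\ket{\chi_j}}\le 2b_j$ with $b_j:=\norm{P_w\ket{\chi_j}}$. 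Setting $a_j:=\norm{M_w\ket{\Phi^{(j)}}}$ and using $|a_{j-1}-a_j|\le\norm{M_w(\ket{\Phi^{(j-1)}}-\ket{\Phi^{(j)}})}\le\norm{\ket{\Phi^{(j-1)}}-\ket{\Phi^{(j)}}}$, the triangle inequality yields $a_0\le a_{q_0}+2\sum_{j=1}^{q_0}b_j$.

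Then I would define $\hat A$: with only $\calO_2$ and the fixed function $f$ available, it picks $j\otR\{0,1,\dots,q_0\}$, runs $A$ internally with every $\calO_1$-gate replaced by the identity, and --- if $j=0$ --- runs $A$ to the end and outputs $A$'s output, while if $j\ge 1$ it halts just before the $j$-th $\calO_1$-gate, measures the query-input register in the computational basis to obtain $x$, and outputs $f(x)$. In each branch $\hat A$ makes at most $q'$ queries to $\calO_2$ and none to $\calO_1$. By construction $\Pr[w=w':w'\ot\hat A^{\calO_2}]=\tfrac1{q_0+1}\bigl(a_{q_0}^2+\sum_{j=1}^{q_0}b_j^2\bigr)=:P$, since the $j=0$ branch succeeds with probability $a_{q_0}^2$ and the $j$-th branch succeeds precisely when the measured $x$ lies in $S_w$, i.e.\ with probability $\norm{P_w\ket{\chi_j}}^2=b_j^2$. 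Now Cauchy--Schwarz on the $(q_0+1)$-dimensional vectors $(1,\dots,1)$ and $(a_{q_0},2b_1,\dots,2b_{q_0})$ gives $a_0^2\le(q_0+1)\bigl(a_{q_0}^2+4\sum b_j^2\bigr)\le 4(q_0+1)\bigl(a_{q_0}^2+\sum b_j^2\bigr)=4(q_0+1)^2P$, hence $a_0\le 2(q_0+1)\sqrt P$; and since $a_0=\sqrt{\Pr[w=w':A^{\calO_1,\calO_2}]}\le 1$, multiplying gives $\Pr[w=w':A^{\calO_1,\calO_2}]=a_0^2\le a_0\cdot 2(q_0+1)\sqrt P\le 2(q+1)\sqrt P$, as claimed.

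The one point that needs care, and where the argument could go wrong if done carelessly, is the orientation of the hybrid: it is essential to replace the \emph{first} $j$ queries (not the last $j$) by the identity, so that the ``stopped'' state $\ket{\chi_j}$ is produced by a circuit that uses the identity in place of $\calO_1$ throughout --- this is exactly what lets $\hat A$ avoid every call to $\calO_1$. The remaining work is routine bookkeeping: the purification and ``exactly $q_0$ queries'' normalisation, checking that ``$\calO_1$ classical'' plus the displayed identity really forces $\calO_1-I$ to vanish off the range of $P_w$, and the elementary inequalities $(\sum_i c_i)^2\le N\sum_i c_i^2$ and $a^2\le a$ for $a\in[0,1]$.
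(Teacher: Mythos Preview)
Your proof is correct and follows essentially the same hybrid argument as the paper's: replace the first $j$ calls to $\calO_1$ by the identity, bound the drift at step $j$ by the query mass on $\{x:f(x)=w\}$, and define $\hat A$ to either run the fully oracle-free simulation to completion or stop early and measure the query register. The only cosmetic differences are that the paper works with trace distance (via its \autoref{lemma:dist.similar}) where you work directly with state-vector norms, and the paper invokes Jensen's inequality where you use the equivalent Cauchy--Schwarz step.
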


\begin{proof}
  We can assume that $A$ is unitary until the final measurement of its
  output. Then the final state of $A$ before that measurement is
  $\ket{\Psi^*}:=(U_2\calO_1)^qU_2\ket\Psi$ for some unitary $U_2$ depending only on
  $\calO_2$, and $\calO_1$ operating on quantum registers $K,V$ for
  oracle input and output, and $\ket\Psi$ being some initial state
  independent of $\calO_1,\calO_2,w$.  Let
  $\ket{\Psi_i}:=(U_2\calO_1)^{q-i}U_2^{i+1}\ket\Psi$.
  Note that $\ket{\Psi_0}=\ket{\Psi^*}$.
  Let
  $P_X:=\sum_{x:f(x)=w}\selfbutter x\otimes I$ and $\Bar P_X:=1-P_X$.
  Note that since $\calO_1\ket x\ket y=\ket x\ket y$ for $f(x)\neq w$,
  we have $\calO_1=\calO_1P_X+\Bar P_X$.  
  We have for $i=1,\dots,q$:
  \begin{align*}
    \TD(\ket{\Psi_{i-1}},\ket{\Psi_i})
    &=\TD\bigl((U_2\calO_1)^{q-i}(U_2\calO_1)U_2^{i}\ket{\Psi},
    (U_2\calO_1)^{q-i}U_2U_2^{i}\ket{\Psi}\bigr) \\
    &=\TD(\calO_1U_2^i\ket{\Psi},U_2^{i}\ket{\Psi}) \\
    &=\TD\bigl(\calO_1P_XU_2^i\ket{\Psi}+\Bar P_XU_2^i\ket{\Psi},
                      P_XU_2^i\ket{\Psi}+\Bar P_XU_2^i\ket{\Psi}\bigr)
                      \\
    &\starrel\leq 2 \norm{P_XU_2^i\ket\Psi}.
  \end{align*}
  Here $(*)$ uses \autoref{lemma:dist.similar} (using
  that $\ket{\Psi^*_1}:=\calO_1P_XU_2^i\ket{\Psi}$ and 
  $\ket{\Psi^*_2}:=P_XU_2^i\ket{\Psi}$ are both orthogonal to
  $\ket{\Phi^*}:=\Bar P_XU_2^i\ket{\Psi}$ because $\calO_1$ is
  classical and therefore does not leave the image of $P_X$).

  Thus
  $\TD(\ket{\Psi^*},\ket{\Psi_q})\leq\sum_{i=1}^q2\norm{P_XU_2^i\ket\Psi}$.
  For $i=1,\dots,q$, let $A_i^{\calO_2}$ be the oracle algorithm that computes $U_2^i\ket\Psi$ and
  measures register $K$ in the computational basis, giving
  outcome~$x$,
  and then outputs $f(x)$. (Note that $A_i$ does not need access
  to $\calO_1$ because $U_2$ does not depend on $\calO_1$.) Then
  $\Pr[w=w':w'\ot A_i]=\norm{P_XU_2^i\ket\Psi}^2$.
  Let $A_0$ be the oracle machine that performs the same operations as
  $A$, except that it omits all calls to $\calO_1$. That is, its state
  before measuring the output is $\ket{\Psi_q}$. Thus 
  \begin{multline*}
    \babs{\Pr[w=w':w'\ot A^{\calO_1,\calO_2}]-\Pr[w=w':w'\ot A_0^{\calO_2}]} \\
    \leq
    \TD(\ket{\Psi^*},\ket{\Psi_q})
    \leq
    \sum_{i=1}^q 2\sqrt{\Pr[w=w':w'\otR A_i^{\calO_2}]}
  \end{multline*}
  Let $\Hat A^{\calO_2}$ be the algorithm that picks
  $i\otR\{0,\dots,q\}$ and runs $A_i$.  Then
  \begin{align*}
    \Pr[w=w':w'\ot A^{\calO_1,\calO_2}] &\leq \sum_{i=1}^{q}2
    \sqrt{\Pr[w=w':w'\ot A_i^{\calO_2}]} + 
    \Pr[w=w':w'\ot A_0^{\calO_2}] \\
    &\leq2 (q+1)\sum_{i=0}^{q}\tfrac1{q+1}
    \sqrt{\Pr[w=w':w'\ot A_i^{\calO_2}]} \\
    &\starrel\leq2 (q+1)
    \sqrt{\sum_{i=0}^{q}\tfrac1{q+1} \Pr[w=w':w'\ot A_i^{\calO_2}]} \\
    &=2(q+1)\sqrt{\Pr[w=w':w'\ot\Hat A^{\calO_2}]}.
  \end{align*}
  Here $(*)$ uses Jensen's inequality. 
\end{proof}

\medskip\noindent
The following lemma formalizes that if $w$ is a random bitstring that
can be accessed only by querying an oracle $\calO_1$ on some input
$x\in X$, then the probability of finding $w$ using $\calO_1$ is
bounded in terms of the probability of finding some $x\in X$ without
using $\calO_1$.

\begin{lemma}[Removing redundant oracles 2]\label{lemma:remove.xora}
  Let $w$, $X$, $\calO_1$, $\calO_2$ be chosen according to some joint
  distribution such that $w$ and $\calO_2$ are stochastically
  independent. 
  Here $X$ is a set of bitstrings, and $\calO_1,\calO_2$
  are oracles, and $\calO_1$ is classical (i.e.,
  $\forall x,y.\exists y'.\calO_1\ket x\ket y=\ket x\ket{y'}$). 
  And $w$ is uniformly distributed on $\bits\ell$.
  Assume that for all $x\notin X$, $\calO_1(x)=0$. (In
  other words, $\calO_1\ket{x}\ket{y}=\ket{x}\ket{y}$ for
  $x\notin X$.)

  Let $A$ be an oracle machine that makes at most $q$ queries to
  $\calO_1$ and $q'$ queries to $\calO_2$. Then
  there is another oracle machine $\Hat A$ that makes at most $q$ queries to
  $\calO_2$ such that:
  \[
  \Pr[w=w':w'\ot A^{\calO_1,\calO_2}]
  \leq
  2q\sqrt{\Pr[x\in X:x\ot \Hat A^{\calO_2}]}+2^{-\ell}
  \]
\end{lemma}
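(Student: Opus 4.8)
The plan is to follow the proof of \autoref{lemma:remove.wora} almost verbatim, replacing the $w$-dependent projector used there by an $X$-dependent one and adding one extra bookkeeping step for the part of $A$ that carries no $\calO_1$-query. First I would assume without loss of generality that $A$ is unitary until the final measurement of its output, so that the state of $A$ just before that measurement is $\ket{\Psi^*}:=(U_2\calO_1)^q U_2\ket\Psi$, where $U_2$ bundles the operations between consecutive $\calO_1$-queries (hence depends on $\calO_2$ but not on $\calO_1$ or on $w$), $\calO_1$ acts on the query registers $K,V$, and $\ket\Psi$ is a fixed initial state. Setting $\ket{\Psi_i}:=(U_2\calO_1)^{q-i}U_2^{i+1}\ket\Psi$ one has $\ket{\Psi_0}=\ket{\Psi^*}$, while $\ket{\Psi_q}=U_2^{q+1}\ket\Psi$ no longer depends on $\calO_1$.

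Next, let $P_X:=\sum_{x\in X}\selfbutter x\otimes I$ be the projector onto query inputs lying in $X$ and $\Bar P_X:=I-P_X$. Since $\calO_1$ is classical and $\calO_1\ket x\ket y=\ket x\ket y$ for $x\notin X$, we have $\calO_1=\calO_1P_X+\Bar P_X$, and $\calO_1$ maps the image of $P_X$ into itself. Exactly as in \autoref{lemma:remove.wora}, this together with \autoref{lemma:dist.similar} gives $\TD(\ket{\Psi_{i-1}},\ket{\Psi_i})\leq 2\norm{P_XU_2^i\ket\Psi}$ for $i=1,\dots,q$, and hence $\TD(\ket{\Psi^*},\ket{\Psi_q})\leq\sum_{i=1}^q 2\norm{P_XU_2^i\ket\Psi}$. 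For $i=1,\dots,q$ let $A_i^{\calO_2}$ prepare $U_2^i\ket\Psi$, measure $K$ in the computational basis, and output the outcome $x$; it queries only $\calO_2$ (at most $q$ times, matching the stated bound) and satisfies $\Pr[x\in X:x\ot A_i^{\calO_2}]=\norm{P_XU_2^i\ket\Psi}^2$.

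For the base case, let $A_0^{\calO_2}$ be $A$ with all its $\calO_1$-queries deleted; its output $w'$ is a function of $\calO_2$ and of intrinsic measurement randomness only, both independent of $w$, so $\Pr[w=w':w'\ot A_0^{\calO_2}]=2^{-\ell}$ because $w$ is uniform on $\bits\ell$. Using $\babs{\Pr[w=w':w'\ot A^{\calO_1,\calO_2}]-\Pr[w=w':w'\ot A_0^{\calO_2}]}\leq\TD(\ket{\Psi^*},\ket{\Psi_q})$, letting $\Hat A^{\calO_2}$ sample $i\otR\{1,\dots,q\}$ and run $A_i$, and applying Jensen's inequality to $\tfrac1q\sum_{i=1}^q\sqrt{\Pr[x\in X:x\ot A_i^{\calO_2}]}$, I obtain $\Pr[w=w':w'\ot A^{\calO_1,\calO_2}]\leq 2q\sqrt{\Pr[x\in X:x\ot\Hat A^{\calO_2}]}+2^{-\ell}$, which is the claim. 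The only genuinely new point compared with \autoref{lemma:remove.wora} is the base-case identity $\Pr[w=w':w'\ot A_0^{\calO_2}]=2^{-\ell}$, and that is the step I expect to require the most care: it is exactly where the hypotheses that $w$ is uniformly distributed and stochastically independent of $\calO_2$ enter, and one has to argue that the intermediate quantum measurements performed by $A_0$ cannot have introduced any correlation with $w$. All remaining steps are a routine transcription of the earlier proof.
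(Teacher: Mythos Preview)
Your proposal is correct and follows essentially the same route as the paper's proof: define $P_X$, invoke the trace-distance estimate from \autoref{lemma:remove.wora}, build $A_i$ and $\Hat A$ by random choice of $i\in\{1,\dots,q\}$, apply Jensen, and handle the $\calO_1$-free base case via the independence of $w$ and $\calO_2$. One tiny nitpick: the paper states $\Pr[w=w':w'\ot A_0^{\calO_2}]\leq 2^{-\ell}$ rather than equality, since $A_0$ might output something outside $\bits\ell$; your ``$=$'' is a slight overstatement but harmless for the bound.
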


\begin{proof}
  Let $P_X:=\sum_{x\in X}\selfbutter x\otimes I$ and
  $\Bar P_X:=1-P_X$.  Note that since
  $\calO_1\ket x\ket y=\ket x\ket y$ for $x\notin X$, we have
  $\calO_1=\calO_1P_X+\Bar P_X$.  

  Let $\ket\Psi$, $\ket{\Psi^*}$, $\ket{\Psi_q}$ and $U_2$ be defined
  as in the proof of \autoref{lemma:remove.wora}. (Remember that all
  of these only depend on $\calO_2$, not $\calO_1$.) Exactly as in
  \autoref{lemma:remove.wora}, we get
  $\TD(\ket{\Psi^*},\ket{\Psi_q})\leq\sum_{i=1}^q2\norm{P_XU_2^i\ket\Psi}$.
  For $i=1,\dots,q$, let $A_i^{\calO_2}$ be the oracle algorithm that computes
  $U_2^i\ket\Psi$ and measures register $K$ in the computational basis
  and outputs the outcome.  Then
  $\Pr[x\in X:x\ot A_i^{\calO_2}]=\norm{P_XU_2^i\ket\Psi}^2$.  

  Like in the proof of \autoref{lemma:remove.wora}, let $A_0$ be the
  oracle machine that performs the same operations as $A$, except that
  it omits all calls to $\calO_1$. That is, its state before measuring
  the output is $\ket{\Psi_q}$. Let $\Hat A^{\calO_2}$ pick a random
  $i\otR\{1,\dots,q\}$ (not $i\otR\{0,\dots,q\}$ as in
  \autoref{lemma:remove.wora}!) and run $A_i^{\calO_2}$.
  Then
  \begin{align}
    \hskip1cm&\hskip-1cm
    {\Pr[w=w':w'\ot A^{\calO_1,\calO_2}]-\Pr[w=w':w'\ot A_0^{\calO_2}]}
    \leq
    \TD(\ket{\Psi^*},\ket{\Psi_q})\notag\\
    &\leq
    2q \sum_{i=1}^q\tfrac1q \sqrt{\Pr[x\in X:x\otR A_i^{\calO_2}]}
    \starrel\leq
    2q \sqrt{\sum\nolimits_{i=1}^q\tfrac1q\Pr[x\in X:x\otR
      A_i^{\calO_2}]} \notag\\
    &=2q \sqrt{\Pr[x\in X:x\ot\Hat A^{\calO_2}]}.
    \label{eq:a.a0.diff}
  \end{align}
  Here $(*)$ uses Jensen's inequality.

  Since $w$ and $\calO_2$ are independent and $w$ is uniform on
  $\bits\ell$,
  $\Pr[w=w':w'\ot A_0^{\calO_2}]\leq 2^{-\ell}$. 
  With \eqref{eq:a.a0.diff}, we get 
  $2q\sqrt{\Pr[x\in X:x\ot \Hat A^{\calO_2}]} \geq \Pr[w=w':w'\ot
  A^{\calO_1,\calO_2}]-2^{-\ell}$.
\end{proof}

\begin{theorem}[Small range distributions \cite{zhandry12random}]\label{theo:small.range}
  Fix sets $Z,Y$ and a distribution $\calD_Y$ on $Y$, and integers $s,q$.

  Let $H:Z\to Y$
  be chosen as: for each $z\in Z$, $H(z)\ot\calD_Y$. 

  Let $G:Z\to Y$ be
  chosen as: Pick $y_1,\dots,y_s\ot\calD_Y$, then for each $z\in Z$,
  pick $i_z\otR\{1,\dots,s\}$, and set $G(z):=y_{i_z}$. 

  Let $A$ be an oracle algorithm
  making at most $q$ queries.
  Then
  \[
  \Babs{\Pr[b=1:b\ot A^H]-\Pr[b=1:b\ot A^G]}\leq 
  \pi^2(2q)^3/6s < 14q^3/s.
  \]
\end{theorem}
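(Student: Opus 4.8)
The plan is to follow Zhandry's argument \cite{zhandry12random}: purify everything, exploit that a $q$-query quantum algorithm can access its oracle only through low-order joint statistics, and then compare those statistics for $H$ and $G$.

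First I would assume without loss of generality that $A$ is unitaries $U_0,\dots,U_q$ interleaved with $q$ oracle calls and a final computational-basis measurement of one output qubit, and that the random choice of oracle is purified: for $H$, a register in uniform superposition over the $|Z|$ independent $\calD_Y$-draws; for $G$, a register in uniform superposition over pairs $(g,\rho)$ with $g:\{1,\dots,s\}\to Y$ a tuple of independent $\calD_Y$-draws and $\rho:Z\to\{1,\dots,s\}$ uniform, the algorithm interacting only through $z\mapsto g(\rho(z))$. The structural core (Zhandry) is: after $i$ queries the joint state of $A$'s registers and the purification register lies in the span of vectors recording at most $i$ distinct query inputs, so the reduced state of $A$'s registers --- hence $\Pr[b=1]$ --- depends on the oracle distribution only through the joint distribution of its values on tuples of at most $2q$ distinct inputs; quantitatively, if two oracle distributions have, for every $k$ and every $k$-tuple of distinct inputs, $k$-input marginals within statistical distance $\varepsilon_k$, then $\Babs{\Pr[b{=}1:\mathcal F_1]-\Pr[b{=}1:\mathcal F_2]}$ is controlled by the $\varepsilon_{2i}$ across the $q$ query steps (a hybrid that, at step $i$, swaps the oracle distribution on the $\le 2i$ inputs that can have been recorded). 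I would prove the structural part by induction on $i$, each query enlarging the recorded set by at most one.

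Next I would compute the marginals. For $H$ the $k$-input marginal on any $k$ distinct inputs is exactly $\calD_Y^{\otimes k}$. For $G=g\circ\rho$, let $E$ be the event that $\rho$ is injective on the chosen $k$ inputs; then $\Pr[E]=\prod_{j=0}^{k-1}(1-j/s)\ge 1-\binom k2/s$, and conditioned on $E$ the values are $k$ distinct coordinates of $g$, hence exactly $\calD_Y^{\otimes k}$-distributed. Thus $\varepsilon_k\le\binom k2/s$, so $\varepsilon_{2i}\le\binom{2i}2/s$, and feeding this into the hybrid bound already gives advantage $O(q^3/s)$. To recover the displayed constant one organizes the argument as a telescope over the small-range size: distinguishing the size-$r$ from the size-$(r{+}1)$ small-range distribution costs at most $(2q)^3/r^2$, so the total advantage is at most $(2q)^3\sum_{r\ge s}1/r^2\le\pi^2(2q)^3/(6s)$, using $s\sum_{r\ge s}1/r^2\le\zeta(2)=\pi^2/6$; and $\pi^2/6<14/8$ yields the stated $<14q^3/s$.

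The main obstacle is the structural/quantitative lemma. Because queries may be in superposition there is no classical query transcript, so ``$A$ sees only $2q$ input/output pairs'' must be argued via the purification-and-record bookkeeping above; and obtaining the degree-$2q$ (equivalently $q^3$ rather than $q^4$) dependence --- sharper than what one gets by routing the size-$r$ versus size-$(r{+}1)$ step through semi-constant distributions --- requires Zhandry's polynomial-method accounting of the acceptance probability as a low-degree polynomial in the oracle's joint output probabilities. Once that is in hand, the marginal computation for $G$ and the constant-chasing are routine; a complete proof is in \cite{zhandry12random}.
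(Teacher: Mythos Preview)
The paper does not actually prove this theorem: its entire proof reads ``This is merely a reformulation of \cite[Corollary~VII.5]{zhandry12random}'', together with the remark that Zhandry's distance convention is twice the statistical distance, which accounts for the halved constant here. Your proposal is therefore not a different route so much as an attempt to unpack what the cited reference contains; in that sense it goes well beyond what the paper does.

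As a sketch of Zhandry's argument your outline is broadly on target --- the acceptance probability of a $q$-query algorithm depends only on the degree-$2q$ joint statistics of the oracle distribution, and the $k$-wise marginals of $G$ on distinct inputs are within $\binom{k}{2}/s$ of $\calD_Y^{\otimes k}$ --- but the step where you assert $|p_r-p_{r+1}|\le (2q)^3/r^2$ for the telescope is precisely the nontrivial core, and Zhandry obtains it via a polynomial-method identity for $p_r$ as a function of $r$, not the recorded-queries hybrid you describe (which by itself would give the weaker $q^4$ dependence you allude to). You correctly flag this as the main obstacle and ultimately defer to \cite{zhandry12random}, which is exactly what the paper does.
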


\begin{proof}
  This is merely a reformulation of \cite[Corollary
  VII.5]{zhandry12random}.  (Note that the distance between
  distributions in \cite{zhandry12random} is defined to be twice the
  statistical, this is why in our formulation of the theorem the bound
  is only half as large.)
\end{proof}

\section{Proofs for \autoref{sec:opsi}}
\label{app:lemmas:opsi}

\usedelayedtext{lemmas opsi}

\newcommand{\I}{{\mathbb{I}}}
\newcommand{\C}{{\mathbb{C}}}
\newcommand{\J}{{\mathbb{J}}}
\newcommand{\R}{{\mathbb{R}}}
\newcommand{\Z}{{\mathbb{Z}}}
\renewcommand{\S}{{\mathbb{S}}}
\newcommand{\W}{{\mathbb{W}}}

\newcommand{\cA}{{\mathcal{A}}}
\newcommand{\cI}{{\mathcal{I}}}
\newcommand{\cO}{{\mathcal{O}}}
\newcommand{\cQ}{{\mathcal{Q}}}
\newcommand{\cH}{{\mathcal{H}}}
\newcommand{\cG}{{\mathcal{G}}}
\newcommand{\cM}{{\mathcal{M}}}
\newcommand{\cS}{{\mathcal{S}}}
\newcommand{\cR}{{\mathcal{R}}}
\newcommand{\cT}{{\mathcal{T}}}
\newcommand{\cX}{{\mathcal{X}}}
\newcommand{\cY}{{\mathcal{Y}}}

\newcommand{\spn}{\mathop{\mathrm{span}}}
\newcommand{\+}{\oplus}
\newcommand{\wrr}{\mathop{\mathrm{wr}}}
\newcommand{\Ind}{\mathop{\mathrm{Ind}}}
\newcommand{\Tr}{\mathrm{Tr}}

\renewcommand{\>}{\rangle}
\newcommand{\<}{\langle}

\newcommand{\Mapsto}{\mathop{\longmapsto}}

\newcommand{\ES}{S}
\newcommand{\SPsi}{\Sigma\Psi}
\newcommand{\SPhi}{\Sigma\Phi}
\newcommand{\pVec}{{\boldsymbol\sigma}} %
\newcommand{\bI}{{\boldsymbol\cI}} %
\newcommand{\z}{{\boldsymbol z}} %
\newcommand{\Id}{\I} %

\newcommand{\Lzero}{a} %
\newcommand{\Lone}{b} %

\def\rket{\rangle}
\def\lbra{\langle}
\def\H{{\cal H}}
\def\A{{\cal A}}

\newcommand{\pro}[2]{{\Pi_{#1}^{#2}}}

\section{Proof of \autoref{theo:pickone.sound}}
\label{app:proof:theo:pickone.sound}
\subsection{Preliminaries}

Let $M=|Y|$ and $N=|X|$ and, without loss of generality, let $Y=\{1,\ldots,M\}$ and $X=\{1,\ldots,N\}$.
 Let $D\subset\{0,1\}^N$ be the set of all $\binom{N}{k}$ $N$-bit strings of Hamming weight $k$.
 For every $y$, we associate $S_y$ with a string $z_{y}\in D$ whose $x$-th entry $z_{y,x}:=(z_y)_x$ is $1$ if and only if $x\in S_y$. 
This association is one-to-one.
The black-box oracles essentially hide an input $z=(z_1,\ldots,z_M)\in D^M$.
Let us write $|\Psi(z_y)\>$ and $|\SPsi(z)\>$ instead of $|\Psi(y)\>$ and $|\SPsi\>$, respectively, to emphasize that these states depend on $z$.

Let $\S_L$ denote the symmetric group of a finite set $L$, that is, the group with the permutations of $L$ as elements and the composition as a group operation. 
For a positive integer $n$, let $\S_n$ denote the isomorphism class of the symmetric groups $\S_L$ with $|L|=n$.
A permutation $\sigma\in \S_{X}$ acts on $z_y\in D$ in a natural way: we define
\begin{equation}
\label{eqn:SXaction}
\sigma(z_y):=(z_{y,\sigma^{-1}(1)},\ldots,z_{y,\sigma^{-1}(N)}),
\end{equation}
so that
$(\sigma(z_y))_{\sigma(x)}=z_{y,x}$ holds. A permutation $\pi\in \S_{Y}$ acts on $z\in D^M$ in the same way: we define
\(
\pi(z):=(z_{\pi^{-1}(1)},\ldots,z_{\pi^{-1}(M)}).
\)

Consider a pair $(\pVec,\pi)$, where $\pVec=(\sigma_1,\ldots,\sigma_M)\in \S_X^M$ and $\pi\in\S_Y$. Let this pair act on $z\in D^M$ by first permuting the entries of $z$ with respect to $\pi$ and then permuting entries within each $(\pi(z))_y$ with respect to $\sigma_y$. Namely, let
\begin{equation}
\label{eqn:rep1}
(\pVec,\pi):(z_1,\ldots,z_M) \mapsto (\sigma_1(z_{\pi^{-1}(1)}),\ldots,\sigma_M(z_{\pi^{-1}(M)})).
\end{equation}
This action defines a (linear) representation of the wreath product $\W:=\S_{X}\wr\S_{Y}$.
\begin{definition}[{\cite[Chapter 4]{james:symmetric}}]  %
The {\em wreath product} $G\wr\S_{M}$ of groups $G$ and $\S_{M}$ is the group whose elements are $(\pVec,\pi)\in G^M\times\S_{M}$ and whose group operation is 
\[
\big((\sigma'_1,\ldots,\sigma'_M),\pi'\big)
 \big((\sigma_1,\ldots,\sigma_M),\pi\big) :=
\big(  (\sigma'_1\sigma_{(\pi')^{-1}(1)},\ldots,\sigma'_M\sigma_{(\pi')^{-1}(M)})
,\pi'\pi \big).
\]
\end{definition}

Let $X_2$ be the set of all $\binom{N}{2}$ size-two subsets of $X$. In addition to (\ref{eqn:rep1}), we are also interested in the following two representations of $\W$ defined by its action on the sets $Y\times X$ and $Y\times X_2$, respectively:
\begin{alignat}{2}
\label{eqn:rep2}
&(\pVec,\pi):(y,x) &&\mapsto (\pi(y),\sigma_{\pi(y)}(x)), \\
\label{eqn:rep3}
&(\pVec,\pi):(y,\{x_1,x_2\}) &&\mapsto (\pi(y),\{\sigma_{\pi(y)}(x_1),\sigma_{\pi(y)}(x_2)\}).
\end{alignat}
The former representation concerns oracle queries and the latter---the output of the algorithm.
 
For $w=(y,x)\in Y\times X$, let $z_w=z_{y,x}$. Note that the representations (\ref{eqn:rep1})
and (\ref{eqn:rep2}) are such that, for $\tau\in\W$, we have $(\tau(z))_{\tau(w)}=z_w$.
\subsection{Registers and symmetrization of the algorithm}

Let $\cH_A$ be the workspace on which $\cA$ operates. We express
\begin{equation}\label{eqn:registers}
\cH_A = \cH_{Q}\otimes\cH_{B}
\otimes\cH_O\otimes\cH_R\otimes\cH_W,
\end{equation}
 where the tensor factors are defined as follows.
\begin{itemize}
\item $\cH_Q:=\cH_{Q_Y}\otimes \cH_{Q_X}$ and $\cH_B$ are the ``query'' registers that the oracles $\cO_V$ and $\cO_F$ use, where $\cH_{Q_Y}$, $\cH_{Q_X}$, and $\cH_B$ correspond to the sets $Y$, $X$, and $\{0,1\}$, respectively. For all $(y,x,b)\in Y\times X\times \{0,1\}$, we have
\begin{equation} \label{eqn:quantumOV}
\cO_V|y,x,b\> := |y,x,b\oplus z_{y,x}\>
\end{equation}
and $\cO_F$ maps $|y,\Psi(y),b\>$ to $-|y,\Psi(y),b\>$ and, for every $|\Psi^\bot\>$ orthogonal to $|\Psi(y)\>$, maps  $|y,\Psi^\bot,b\>$ to itself. 
\item $\cH_O:=\cH_{O_Y}\otimes \cH_{O_{X2}}$ is the ``output'' register, where $\cH_{O_Y}$ and $\cH_{O_{X2}}$ correspond to the sets $Y$ and $X_2$, respectively.
\item $\cH_R:=\bigotimes_{\ell=1}^h\cH_{R(\ell)}$ is the (initial) ``resource'' register, where $\cH_{R(\ell)}=\cH_{R_Y(\ell)}\otimes\cH_{R_X(\ell)}$, in which $\cH_{R_Y(\ell)}$ and $\cH_{R_X(\ell)}$ correspond to the sets $Y$ and $X$, respectively.
At the beginning of the algorithm, the register $\cH_R$ is initialized to the resource state
\begin{equation}
\label{eqn:xiResource}
|\xi'(z)\>:=\bigotimes\nolimits_{\ell=1}^h(\alpha_{\ell,0}|\SPsi(z)\>+\alpha_{\ell,1}|\SPhi\>).
\end{equation}
Also, let $\cH_{R_Y}:=\bigotimes_{\ell=1}^h\cH_{R_Y(\ell)}$ and $\cH_{R_X}:=\bigotimes_{\ell=1}^h\cH_{R_X(\ell)}$. 
\item $\cH_W$ is the rest of the workspace.
\end{itemize}
Let us also define $\cH_{A-Q}$, $\cH_{A-O}$, and $\cH_{A-R}$ to be the space corresponding to all the registers of the algorithm except $\cH_Q$, $\cH_O$, and $\cH_R$, respectively. Let $\Id$ be the identity operator.
We frequently write subscripts below states and unitary transformations to clarify, respectively, which registers they belong to or act on. 
For example, we may write $|\xi'(z)\>_R$ instead of $|\xi'(z)\>$.
We do this especially when the order of registers is not that of (\ref{eqn:registers}).
We may also concatenate subscripts when we use multiple registers at once. For example, we may write $\Id_{QB}$ instead of $\Id_Q\otimes\Id_B$.

Let $|\xi_\emptyset(z)\>_A:=|\xi'(z)\>_R\otimes |\xi''\>_{A-R}$ be the initial state of the algorithm, where $|\xi''\>_{A-R}$ is independent from $z$.
The algorithm makes in total $q_T:=q_V+q_F$ oracle calls. 
For $q\in\{0,1,\ldots,q_T-1\}$, let
\[
 |\xi_q(z)\>_A = \sum_{w\in Y\times X}|w\>_{Q}|\xi_{q,w}(z)\>_{A-Q}
\]
 be the state of the algorithm $\cA$, as a sequence of transformations on $\cH_A$, just before $(q+1)$-th oracle call, $\cO_V$ or $\cO_F$, where $|\xi_{q,w}(z)\>_{A-Q}$ are unnormalized.
Similarly, for $q=q_T$, let 
\[
|\xi_{q_T}(z)\>_A = \sum_{w\in Y\times X_2}|w\>_{O}|\xi_{q_T,w}(z)\>_{A-O}
\]
be the final state of the algorithm.

Let $U_{\bI}$, and $U_{Q}$, and $U_{O}$ be unitary transformations
corresponding to representations (\ref{eqn:rep1}), (\ref{eqn:rep2}),
and (\ref{eqn:rep3}) of $\W$, respectively, where the register
$\cH_\bI$ is yet to be defined. (That is,  $U_{\bI}$,
  $U_{Q}$,  $U_{O}$ are actually families of unitaries, indexed by
  elements $\tau\in\W$.)
We add a subscript $\tau\in\W$ when we want to specify that we are considering the representation of the element $\tau$, for example, we may write $U_{Q,\tau}$. 
Since $\cH_R$ is essentially the $h$-th tensor power of $\cH_Q$, we define $U_{R}:=U_{Q}^{\otimes h}$. 
The tensor product of two (or more) representations of $\W$ is also a representation of $\W$. 
Let $U_{\bI Q}:=U_\bI\otimes U_Q$ and $U_{\bI O}:=U_\bI\otimes U_O$, an we later use analogous notation for other ``concatenations''.

We first ``symmetrize'' $\cA$ by adding an extra register $\cH_S$ holding a ``permutation'' $\tau\in\W$. Initially, $\cH_S$ holds a uniform superposition over all permutations:
\[
|\W\>_S:=\frac{1}{\sqrt{M!(N!)^M}}\sum_{\tau\in\W}|\tau\>_S.
\]
Then, at specific points in the algorithm, we insert unitary transformations controlled by the content $\tau$ of $\cH_S$.
\begin{enumerate}
\item\label{step:urtau}
At the beginning of the algorithm, we insert the controlled transformation $U_{R,\tau}$ on the register $\cH_R$. Recall that, if (and only if) $z_{y,x}=1$, then $(\tau(z))_{\tau(y,x)}=1$. Hence,
\[
\sum_{\tau\in\W}|\tau\>_S|\xi(z)\>_A
\Mapsto^{\tau\text{ on }\cH_R}
\sum_{\tau\in\W}|\tau\>_S|\xi(\tau(z))\>_A.
\]
\item
Before each oracle call, $\cO_V$ or $\cO_F$, we insert the controlled transformation 
$U^{-1}_{Q,\tau}$ on the register $\cH_Q$. 
Note that $(\tau(z))_{y,x}=1$ if and only if $z_{\tau^{-1}(y,x)}=1$, and $\cO_V$ and $\cO_F$ use $z$ as the input. 
After the oracle call, we insert the controlled $U_{Q,\tau}$.
\item
At the end of the algorithm, we insert the controlled transformation $U^{-1}_{O,\tau}$ on the register $\cH_O$
containing the output of $\cA$ because, again, $z_{\tau^{-1}(y,x)}=1$ if and only if $(\tau(z))_{y,x}=1$. 
\end{enumerate}
The effect of the symmetrization is that, on the subspace $|\tau\>_S$, the
algorithm is effectively running on the input $\tau(z)$. If the
original algorithm $\cA$ succeeds on every input $z$ with average success
probability $p$, the symmetrized algorithm succeeds on every input with success
probability $p$.

Next, we recast $\cA$ into a different form, using an ``input'' register $\cH_\bI$ that stores $z\in D^M$. Namely, let 
\(
\cH_\bI :=\bigotimes_{y=1}^M \cH_{I(y)}
\)
 be an $\binom{N}{k}^M$-dimensional Hilbert space whose basis
states correspond to possible inputs $z$, where we define $\cH_{I(y)}$ to be $\binom{N}{k}$-dimensional Hilbert space whose basis states correspond to $z_y\in D$. Since all the spaces $\cH_{I(y)}$ are essentially equivalent, we write $\cH_{I}$ instead of $\cH_{I(y)}$ when we do not care which particular $y\in Y$ we are talking about, and $\cH_\bI=\cH_I^{\otimes M}$. 

 Initially, $\cH_\bI$ is in the uniform superposition of all the basis
 states of $\cH_\bI$.
More precisely, $\cH_\bI\otimes\cH_S\otimes\cH_A$ takes the
  following initial state (before applying the controlled transformation
  $U_{R,\tau}$ in 
  step~\ref{step:urtau} of the symmetrisation above):
\[
 \binom{N}{k}^{-M/2}\sum_{z\in D^M}|z\>_\bI
 \otimes |\W\>_S
 \otimes |\xi_\emptyset(z)\>_A.
\]
We transform the symmetrised version of $\cA$ into a sequence of transformations
on a Hilbert space $\cH = \cH_\bI\otimes\cH_S\otimes\cH_A$ . A black-box transformation $\cO$ (where
$\cO= \cO_V$ or $\cO = \cO_F$) is replaced by a transformation
\(
\cO'=\sum\nolimits_{z\in D^M}|z\>\<z|\otimes \cO(z),
\)
 where $\cO(z)$ is the transformation $\cO$ for the case when the input is equal
to $z$.

At the end, the algorithm measures the input register $\cH_\bI$ and the output register $\cH_O=\cH_{O_Y}\otimes\cH_{O_{X2}}$ in the computational basis, and outputs the result of this measurement: $z\in D^M$, $y\in Y$, and $\{x_1,x_2\}\in X_2$. The algorithm is successful if $z_{y,x_1}=z_{y,x_2}=1$.

For $q\in\{0,\ldots,q_T-1\}$, let $|\phi^-_q\>$ be the state of the algorithm just before the controlled $U^{-1}_{Q,\tau}$ transformation preceding the $(q+1)$-th oracle call, and let $|\phi_q\>$ be the state just after we apply this $U^{-1}_{Q,\tau}$ and still before the oracle call.
Due to the symmetrization, we have
\[
|\phi_q^-\> = \gamma \sum_{z\in D^m}|z\>_\bI\sum_{\tau\in\W}|\tau\>_S\sum_{w\in Y\times X}|w\>_{Q}|\xi_{q,w}(\tau(z))\>_{A-Q},
\]
where $\gamma=1/\sqrt{M!(N!\binom{N}{k})^M}$,
and, after we apply $U^{-1}_{Q,\tau}$, we have
\begin{equation}
\label{eqn:psiPlus}
|\phi_q\> = \gamma \sum_{z\in D^m}|z\>_\bI\sum_{\tau\in\W}|\tau\>_S\sum_{w\in Y\times X}|\tau^{-1}(w)\>_{Q}|\xi_{q,w}(\tau(z))\>_{A-Q}.
\end{equation}
Recall the representations $U_{\bI}$ and $U_{Q}$ of $\W$. 
Let us also consider the right regular representation of $\W$ acting on $\cH_S$: for $\kappa\in\W$, let $U_{S,\kappa}|\tau\>:=|\tau\kappa^{-1}\>$. Let
\(
U_{\bI SQ}:=U_{\bI}\otimes U_{S}\otimes U_{Q},
\)
and, for all $\kappa\in\W$, we have
\begin{multline}\label{eqn:TmuSymmetry}
(U_{\bI SQ,\kappa}\otimes\Id_{A-Q})|\phi_q\>
 = \gamma \sum_{z\in D^m}|\kappa(z)\>_\bI\sum_{\tau\in\W}|\tau\kappa^{-1}\>_S\sum_{w\in Y\times X}|\kappa\tau^{-1}(w)\>_{Q}|\xi_{q,w}(\tau(z))\>_{A-Q} \\
 = \gamma \sum_{z\in D^m}|\kappa(z)\>_\bI\sum_{\tau\in\W}|\tau\kappa^{-1}\>_S\sum_{w\in Y\times X}|(\tau\kappa^{-1})^{-1}(w)\>_{Q}|\xi_{q,w}((\tau\kappa^{-1})(\kappa(z)))\>_{A-Q}=|\phi_q\>.
\end{multline}

For $q\in\{0,1,\ldots,q_T-1\}$, 
let $\rho'_q$ be the density matrix obtained from $|\phi_q\>\<\phi_q|$ by tracing
out the $\cH_S$ and $\cH_{A-Q}$ registers and, in turn, let $\rho_q$ be obtained from
$\rho'_q$ by tracing out the register $\cH_Q$. Due to (\ref{eqn:TmuSymmetry}), we have
\begin{equation} \label{eqn:stepSym}
U_{\bI Q,\tau}
\rho'_q
U_{\bI Q,\tau}^{-1} = \rho'_q
\quad\text{and}\quad
U_{\bI,\tau}
\rho_q
U_{\bI,\tau}^{-1} = \rho_q
\quad\text{for all~}\tau\in\W.
\end{equation}

Similarly, for $q=q_T$, let $|\phi_{q_T}\>$ be the final state of the algorithm (i.e., the state after the controlled $U^{-1}_{O,\tau}$), and it satisfies an analogous symmetry to (\ref{eqn:TmuSymmetry}): for all $\kappa\in\W$, we have 
\(
 (U_{\bI SO,\kappa}\otimes\Id_{A-O})|\phi_{q_T}\> = |\phi_{q_T}\>.
\)
Let $\rho''_{q_T}$ be the density matrix obtained from $|\phi_{q_T}\>\<\phi_{q_T}|$  by tracing out all the registers but $\cH_\bI$ and $\cH_O$, let $\rho_{q_T}$ be obtained from $\rho''_{q_T}$ by tracing out the register $\cH_O$. Again,
we have
\begin{equation} \label{eqn:finalSym}
U_{\bI O,\tau}
\rho''_{q_T}
U_{\bI O,\tau}^{-1} = \rho''_{q_T}
\quad\text{and}\quad
U_{\bI,\tau}
\rho_{q_T}
U_{\bI,\tau}^{-1} = \rho_{q_T}
\quad\text{for all~}\tau\in\W.
\end{equation}

Note that, throughout the algorithm, the density matrix of the $\cH_\bI$ part of the state of the algorithm can be affected only by oracle calls. Therefore, for $q\in\{0,1,\ldots,q_T\}$, this density matrix equals $\rho_q$ just after $q$-th oracle call (at the very beginning of the algorithm, if $q=0$) and remains such till $(q+1)$-th oracle call (till the end of the algorithm, if $q=q_T$).

\subsection{Representation theory of $\S_X$} \label{sec:repTheory}

Consider a positive integer $n$. The representation theory of $\S_n$ is closely related to partitions.
A partition $\lambda$ of $n$ is a non-increasing list $(\lambda_1,\dots,\lambda_k)$ of positive integers satisfying $\lambda_1+\dots+\lambda_k = n$. There is one-to-one correspondence
 between irreducible representations ({\em irreps}, for short) of $\S_n$ and partitions $\lambda\vdash n$, and we will use these terms interchangeably. For example, $(n)$ corresponds to the trivial representation and  $(1^n)=(1,1,\ldots,1)$ to the sign representation. 
(One may refer to~\cite{serre:representation} for more background on the representation theory of finite groups and to~\cite{james:symmetric, sagan:symmetric} for the representation theory of the symmetric group and the wreath product.)

The group action of $\S_X$ on $\cH_{I}$ is given by (\ref{eqn:SXaction}), which defines a representation $U_{I}$ of $\S_X$ (this representation is independent from $y$).
In order to decompose $U_{I}$ into a direct sum of irreps of $\S_N$ (recall that $X=\{1,\ldots,N\}$), first consider the subgroup $\S_{k}\times\S_{N-k}$ of $\S_{N}$, where $\S_k$
permutes $\{1,\ldots,k\}$ and $\S_{N-k}$ permutes $\{k+1,\ldots,N\}$.
Let $V_{I,\sigma}$ be $U_{I,\sigma}$ restricted to $\sigma\in \S_k\times \S_{N-k}$ and the one-dimensional space  $\spn\{|1^k0^{N-k}\>_{I}\}$.
$V_{I}$ is a representation of $\S_{k}\times \S_{N-k}$ and, since it acts trivially on $1^k0^{N-k}$, we have
\(
V_I\cong(k)\times(N-k).
\)
And, since
\[
{|\S_N|}\big/{|\S_k\times \S_{N-k}|}
 = {|D|}\big/{|\{1^k0^{N-k}\}|},
\]
 $U_{I}$ is equal to the induced representation when we induce $V_{I}$ from
$\S_k\times \S_{N-k}$ to $\S_N$. For shortness, we write $U_{I}=V_{I}\uparrow \S_N$.
 The Littlewood-Richardson rule then implies
 \begin{equation}
 \label{eqn:JohnsonIrreps}
  ((k)\times(N-k))\uparrow \S_N = (N) \+ (N-1,1) \+ (N-2,2) \+ \ldots \+ (N-k,k).
 \end{equation}
Hence, we have
\begin{equation*}
\cH_{I} = \bigoplus\nolimits_{i=0}^k \cH_{I}^{(N-i,i)},
\end{equation*}
where $U_{I}$ restricted to $\cH_{I}^{(N-i,i)}$ is an irrep of $\S_N$ corresponding to the partition $(N-i,i)$ of $N$.
It is also known (see \cite{godsil:assoc1,knuth}) that $\cH_{I}^{(N-i,i)}=\cT^i_{I}\cap(\cT^{i-1}_{I})^\bot$, where $\cT^i_{I}$ is the space spanned by all $\binom{N}{i}$ states
\begin{equation}
\label{eqn:TSpanStates}
|\psi_{x_1,\ldots,x_i}\> =
\frac{1}{\sqrt{\binom{N-i}{k-i}}}
\sum_{\substack{z_y\in D\\z_{y,x_1}=\ldots=z_{y,x_i}=1}}
|z_y\>
\end{equation}
(the value of $y$ is irrelevant here).
When $i=0$, let us denote this state by $|\psi_\emptyset\>$.
\subsection{Framework for the proof} \label{sec:framework}

We use the representation-theoretic framework developed in \cite{Ambainis10} (and used in \cite{ambainis:newMethod2} and \cite{AMRR}). 
Let
\begin{alignat*}{2}
&\cH_{I,\Lzero}:=\cT^1_{I}=\cH^{(N)}_{I}\oplus\cH^{(N-1,1)}_{I},
&\qquad& \cH_{I,\Lone}:=\cH_I\cap(\cH_{I,\Lzero})^\bot, \\
& \cH_{\bI,\Lzero}:=\cH_{I,\Lzero}^{\otimes M},
&& \cH_{\bI,\Lone}:=\cH_\bI\cap(\cH_{\bI,\Lzero})^\bot.
\end{alignat*}
And let $\Pi_{I,\Lzero}$, $\Pi_{I,\Lone}$, $\Pi_{\bI,\Lzero}$, and  $\Pi_{\bI,\Lone}$ denote the projections to the spaces $\cH_{I,\Lzero}$, $\cH_{I,\Lone}$, $\cH_{\bI,\Lzero}$, and  $\cH_{\bI,\Lone}$, respectively.

Recall that $\rho_q$ is the density matrix of the $\cH_\bI$ part of the state of the algorithm
anywhere between $q$-th and $(q+1)$-th oracle calls (interpreting $(-1)$-st and $(q_T+1)$-th oracle calls as the beginning and the end of the algorithm, respectively). 
Recall that $\rho_q$ is fixed under the action of $\W$---for all $\tau\in\W$, we have 
$U_{\bI,\tau} \rho_q U_{\bI,\tau}^{-1} = \rho_q$---and so are $\Pi_{\bI,\Lzero}$ and $\Pi_{\bI,\Lone}$.
Let 
\[
p_{\Lzero,q} := %
\Tr(\rho_q\Pi_{\bI,\Lzero})
\qquad\text{and}\qquad
p_{\Lone,q} := 1-p_{\Lzero,q} = \Tr(\rho_q\Pi_{\bI,\Lone}).
\]
\autonameref{theo:pickone.sound} then follows from the following three lemmas.

\begin{lemma} \label{lem:FinalBound}
The success probability of the algorithm is at most $\frac{2(k-1)}{N-1}+\sqrt{2p_{\Lone,q_T}}$.
\end{lemma}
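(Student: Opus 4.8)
The plan is to split the argument into a \emph{structural} step, which reduces the claim to a statement about the single-system subspace $\cH_{I,\Lzero}=\cT^1_I$, and a \emph{spectral} step, which proves that statement. First I would replace the final state $|\phi_{q_T}\>$ by the renormalized state obtained by projecting its input register $\cH_\bI$ onto the ``knows at most one value'' subspace $\cH_{\bI,\Lzero}$, paying at most $\sqrt{2p_{\Lone,q_T}}$ in the success probability; then I would bound the success probability of that projected state by $\frac{2(k-1)}{N-1}$, where $N=|X|$, using an operator-norm estimate for the ``check that $x_1,x_2\in S_y$'' projector restricted to $\cT^1_I$.

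For the structural reduction, recall that the reduced state $\rho_{q_T}$ of $|\phi_{q_T}\>$ on $\cH_\bI$ satisfies $\Tr(\rho_{q_T}\Pi_{\bI,\Lzero})=p_{\Lzero,q_T}=1-p_{\Lone,q_T}$, so the unit vector $|\phi_0\>:=(\Pi_{\bI,\Lzero}\otimes\I)|\phi_{q_T}\>/\bnorm{(\Pi_{\bI,\Lzero}\otimes\I)|\phi_{q_T}\>}$ has $\<\phi_{q_T}|\phi_0\>=\sqrt{1-p_{\Lone,q_T}}$, hence $\bnorm{|\phi_{q_T}\>-|\phi_0\>}^2=2-2\sqrt{1-p_{\Lone,q_T}}\le 2p_{\Lone,q_T}$ and, by \autoref{lemma:dist.similar3}, $\TD(|\phi_{q_T}\>,|\phi_0\>)\le\sqrt{2p_{\Lone,q_T}}$. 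Since ``success'' is a two-outcome projective measurement on $\cH_\bI\otimes\cH_O$, the success probabilities of $|\phi_{q_T}\>$ and $|\phi_0\>$ differ by at most $\sqrt{2p_{\Lone,q_T}}$, so it suffices to show $|\phi_0\>$ succeeds with probability at most $\frac{2(k-1)}{N-1}$. Now the $\cH_\bI$-marginal of $|\phi_0\>$ is supported on $\cH_{\bI,\Lzero}=\bigotimes_{y\in Y}\cH_{I,\Lzero}$, a tensor-factorwise subspace, so the $\cH_\bI\otimes\cH_O$-marginal is supported on $\cH_{\bI,\Lzero}\otimes\cH_O$; conditioning on the $\cH_O$-outcome $(y,\{x_1,x_2\})$ gives an unnormalized state on $\cH_\bI$ still supported on $\cH_{\bI,\Lzero}$, whose marginal $\mu_{y,\{x_1,x_2\}}$ on the $y$-th copy $\cH_{I(y)}$ lives on $\cH_{I,\Lzero}=\cT^1_I$, and the contribution of this outcome to the success probability is $\Tr\bigl(\mu_{y,\{x_1,x_2\}}\,Q_{x_1,x_2}\bigr)$ with $Q_{x_1,x_2}:=\sum_{z_y\in D:\,z_{y,x_1}=z_{y,x_2}=1}|z_y\>\<z_y|$. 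Since $\sum_{y,\{x_1,x_2\}}\Tr(\mu_{y,\{x_1,x_2\}})=1$, everything reduces to the operator inequality $\Pi_{I,\Lzero}\,Q_{x_1,x_2}\,\Pi_{I,\Lzero}\preceq\frac{2(k-1)}{N-1}\,\Pi_{I,\Lzero}$, i.e.\ to bounding the operator norm of $Q_{x_1,x_2}$ on $\cT^1_I$.

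To prove that inequality I would exploit the group $H:=\S_{\{x_1,x_2\}}\times\S_{X\setminus\{x_1,x_2\}}$, under which both $\Pi_{I,\Lzero}$ and $Q_{x_1,x_2}$ are invariant. Restricting the $\S_N$-module $\cT^1_I=\cH^{(N)}_I\oplus\cH^{(N-1,1)}_I$ to $H$ decomposes it as a two-dimensional trivial isotypic component (spanned by $|\psi_\emptyset\>$ and the ``$\{x_1,x_2\}$ versus $X\setminus\{x_1,x_2\}$'' contrast vector; equivalently, the vectors whose amplitude $v(z_y)$ depends on $z_y$ only through $w:=|S_y\cap\{x_1,x_2\}|$), a one-dimensional sign copy of $\S_2$, and an $(N-3)$-dimensional standard copy of $\S_{N-2}$. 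By Schur's lemma the compression of $Q_{x_1,x_2}$ to $\cT^1_I$ is block-diagonal for this decomposition: the sign-$\S_2$ block is killed by $Q_{x_1,x_2}$ (its vector has amplitude $z_{y,x_1}-z_{y,x_2}$, which vanishes when $x_1,x_2\in S_y$); the standard-$\S_{N-2}$ block contributes the eigenvalue $\binom{N-4}{k-3}/\binom{N-2}{k-1}=\frac{(k-1)(k-2)}{(N-2)(N-3)}$; and the two-dimensional trivial block reduces to maximizing $\gamma u_2^2/(\alpha u_0^2+\beta u_1^2+\gamma u_2^2)$ over triples $(u_0,u_1,u_2)$ in arithmetic progression ($u_j=v|_{w=j}$, and $\alpha,\beta,\gamma$ the sizes $\binom{N-2}{k},2\binom{N-2}{k-1},\binom{N-2}{k-2}$), a one-variable quadratic optimization with value $\frac{(k-1)(2m+k)}{(k-1)(2m+k)+m(m+1)}$ where $m:=N-1-k$. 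A short comparison — the numerator of the difference of each of these with $\frac{2(k-1)}{N-1}$ factors cleanly, essentially as $(k-2)(N-1)$ and $(k-2)$-times-something positive — shows all three eigenvalues are at most $\frac{2(k-1)}{N-1}$, completing the argument.

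The main obstacle is exactly this last step. The structural part is routine gentle-measurement and partial-trace bookkeeping, but obtaining the constant in the spectral estimate \emph{precisely} $\frac{2(k-1)}{N-1}$, rather than a weaker $O(k/N)$, requires the clean $H$-isotypic decomposition together with the exact evaluation of all three block eigenvalues; one has to verify that neither the standard-$\S_{N-2}$ block nor the two-dimensional trivial block exceeds the target (the trivial block is the one that attains it, at $k=2$).
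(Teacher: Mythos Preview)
Your proposal is correct and follows the same route as the paper: project onto $\cH_{\bI,\Lzero}$ at cost $\sqrt{2p_{\Lone,q_T}}$, reduce to a single $\cH_{I(y)}$, and bound the operator norm of the success projector on $\cT^1_I$ by block-diagonalizing under $\S_{\{x_1,x_2\}}\times\S_{X\setminus\{x_1,x_2\}}$. Your arithmetic-progression parametrization of the two-dimensional trivial block is a tidy alternative to the paper's computation in the non-orthogonal basis $|\psi(\alpha,\beta)\>=\alpha(|\psi_{x_1}\>+|\psi_{x_2}\>)+\beta\sum_{x\ne x_1,x_2}|\psi_x\>$, and your reduction to a single factor via the tensor structure of $\cH_{\bI,\Lzero}$ sidesteps the paper's explicit use of the $\W$-symmetry, but the substance is identical.
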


\begin{lemma} \label{lem:InitialBound}
(At the very beginning of the algorithm) we have $p_{\Lone,0} < h^2/(2M)$.
\end{lemma}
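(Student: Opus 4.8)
The plan is to make $\rho_0$ completely explicit and then to bound the mass it places in $\mathcal{H}_{\mathbb{I},b}=\mathcal{H}_{\mathbb{I}}\cap(\mathcal{H}_{\mathbb{I},a})^{\perp}$, where $\mathcal{H}_{\mathbb{I},a}=(\mathcal{T}^1_I)^{\otimes M}$, by a ``collision'' argument over the $h$ tensor slots of the resource state. By the paper's convention $\rho_0$ is the $\mathcal{H}_{\mathbb{I}}$-reduction of the algorithm's state at the very start; tracing out the $z$-independent pure factors $\ket{\mathbb{W}}_S$ and $\ket{\xi''}_{A-R}$ gives $\rho_0=\operatorname{Tr}_R\ket{\Theta}\bra{\Theta}$ with $\ket{\Theta}:=\binom{N}{k}^{-M/2}\sum_{z\in D^M}\ket{z}_{\mathbb{I}}\ket{\xi'(z)}_R$, so that $p_{b,0}=\operatorname{Tr}(\rho_0\Pi_{\mathbb{I},b})=\lVert(\Pi_{\mathbb{I},b}\otimes I_R)\ket{\Theta}\rVert^2$ because $\Pi_{\mathbb{I},b}$ is a projector. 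Next I would reduce to a single coordinate $y_0\in Y$: since $\mathcal{H}_{\mathbb{I},a}$ is a tensor product over $y\in Y$ of the per-coordinate subspaces $\mathcal{T}^1_{I(y)}$, the telescoping identity $I-\bigotimes_y P_y=\sum_y(I-P_y)\prod_{y'>y}P_{y'}$ (with $P_y$ the projector onto $\mathcal{T}^1_{I(y)}$) gives the operator inequality $\Pi_{\mathbb{I},b}\preceq\sum_{y\in Y}\Pi_{I(y),b}$, where $\Pi_{I(y),b}$ projects the $y$-th factor onto $(\mathcal{T}^1_{I(y)})^{\perp}$ and is the identity elsewhere; combined with the invariance $U_{\mathbb{I},\tau}\rho_0U_{\mathbb{I},\tau}^{-1}=\rho_0$ from \eqref{eqn:stepSym} (taking $\tau$ a pure $\mathbb{S}_Y$-permutation), this yields $p_{b,0}\le M\lVert(\Pi_{I(y_0),b}\otimes I_R)\ket{\Theta}\rVert^2$ for an arbitrary fixed $y_0$.

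The heart of the argument is a structural claim about $\ket{\xi'(z)}$. Writing $\ket{\Phi_X}:=N^{-1/2}\sum_{x\in X}\ket{x}$ and $\ket{u_\ell(z_y)}:=\alpha_{\ell,0}\ket{\Psi(z_y)}+\alpha_{\ell,1}\ket{\Phi_X}$, each resource factor rewrites as $\alpha_{\ell,0}\ket{\Sigma\Psi(z)}+\alpha_{\ell,1}\ket{\Sigma\Phi}=M^{-1/2}\sum_{y\in Y}\ket{y}_{R_Y(\ell)}\ket{u_\ell(z_y)}_{R_X(\ell)}$, hence $\ket{\Theta}=\binom{N}{k}^{-M/2}M^{-h/2}\sum_{z}\sum_{\vec y\in Y^h}\ket{z}_{\mathbb{I}}\ket{\vec y}_{R_Y}\bigotimes_{\ell=1}^h\ket{u_\ell(z_{y_\ell})}_{R_X(\ell)}$. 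For a fixed $\vec y$ the $\mathcal{H}_{I(y_0)}$-marginal depends on $z_{y_0}$ only through $\bigotimes_{\ell\in T}\ket{u_\ell(z_{y_0})}$, where $T=T(\vec y):=\{\ell:y_\ell=y_0\}$. Using $\sum_{z_y}\ket{z_y}_{I(y)}\ket{\Psi(z_y)}\propto\sum_{x\in X}\ket{\psi_x}_{I(y)}\ket{x}$ and $\sum_{z_y}\ket{z_y}_{I(y)}\propto\ket{\psi_\emptyset}$ — both lying in $\mathcal{T}^1_{I(y)}\otimes(\cdots)$ — one checks that $\sum_{z_{y_0}}(\Pi_{I(y_0),b}\ket{z_{y_0}})\otimes\bigotimes_{\ell\in T}\ket{u_\ell(z_{y_0})}$ vanishes whenever $|T|\le 1$: a single resource slot reading $y_0$, entangled with $\ket{z_{y_0}}$, never leaves the ``knows at most one value'' subspace $\mathcal{T}^1_{I(y_0)}$. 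Consequently only the $\vec y$ with $|T(\vec y)|\ge 2$ contribute, i.e. $(\Pi_{I(y_0),b}\otimes I_R)\ket{\Theta}=\binom{N}{k}^{-M/2}M^{-h/2}\sum_{\vec y:|T(\vec y)|\ge2}\ket{\vec y}_{R_Y}\ket{\zeta_{\vec y}}$ for suitable (unnormalised) $\ket{\zeta_{\vec y}}$.

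The rest is bookkeeping. The summands for distinct $\vec y$ are orthogonal (they differ in $\mathcal{H}_{R_Y}$), so $\lVert(\Pi_{I(y_0),b}\otimes I_R)\ket{\Theta}\rVert^2=\binom{N}{k}^{-M}M^{-h}\sum_{\vec y:|T(\vec y)|\ge2}\lVert\zeta_{\vec y}\rVert^2$. Because $\langle\Phi_X|\Psi(z_y)\rangle=\sqrt{k/N}$ is independent of $z_y$ and $y$, the norm $\lVert u_\ell(z_y)\rVert^2=:\beta_\ell$ is a constant, $\lVert\alpha_{\ell,0}\ket{\Sigma\Psi(z)}+\alpha_{\ell,1}\ket{\Sigma\Phi}\rVert^2=\beta_\ell$, and the unit-norm hypothesis on the resource state forces $\prod_\ell\beta_\ell=1$; then, factorising the $z$-sum over coordinates and using that the $\ket{z_y}$ are orthonormal and projectors do not increase norm, $\lVert\zeta_{\vec y}\rVert^2\le\binom{N}{k}^{M}\prod_\ell\beta_\ell=\binom{N}{k}^{M}$ for every $\vec y$. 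Since $\#\{\vec y\in Y^h:|\{\ell:y_\ell=y_0\}|\ge 2\}\le\binom{h}{2}M^{h-2}$ (choose the two colliding slots, union bound over the $\binom h2$ pairs, the remaining slots free), this gives $\lVert(\Pi_{I(y_0),b}\otimes I_R)\ket{\Theta}\rVert^2\le\binom{h}{2}/M^{2}$, and therefore $p_{b,0}\le M\cdot\binom{h}{2}/M^{2}=\tfrac{h(h-1)}{2M}<\tfrac{h^2}{2M}$.

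The step I expect to be the main obstacle is the structural claim of the second paragraph: verifying that one resource slot reading a coordinate cannot push that coordinate out of $\mathcal{T}^1$, so that the escaping mass scales with the number of slot-\emph{pairs} (yielding the $h^2$, not something worse), and checking that this persists in the presence of the $\ket{\Sigma\Phi}$-branch — it does, precisely because $\langle\Phi_X|\Psi(z_y)\rangle$ is a constant, the same fact that keeps all the norm accounting clean and makes ``unit norm'' well posed. The remaining ingredients — the telescoping operator inequality, the $\mathbb{S}_Y$-symmetry, the factorisation of the $z$-sum, and the counting — are routine.
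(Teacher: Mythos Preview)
Your proof is correct and lands on the same bound $h(h-1)/(2M)$ as the paper. The core insight is identical: a single resource slot touching coordinate $y$ keeps the $I(y)$-factor inside $\mathcal{T}^1_{I(y)}$ (via $\sum_{z_y}\ket{z_y}\ket{\Psi(z_y)}\propto\sum_x\ket{\psi_x}\ket{x}$ and $\sum_{z_y}\ket{z_y}\propto\ket{\psi_\emptyset}$), so only tuples $\vec y$ with a repeat contribute, and those are controlled by a birthday count.

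The organization differs. The paper works globally: it splits $\ket{\xi'(z)}$ according to whether $(y_1,\dots,y_h)$ are pairwise distinct, bounds the collision part directly by the birthday probability $h(h-1)/(2M)$, and then shows (their Claim) that the all-distinct part lies entirely in $\mathcal{H}_{\mathbb{I},a}=\bigotimes_y\mathcal{T}^1_{I(y)}$, since in that case every coordinate is touched by at most one slot. You instead first reduce to a single coordinate via the telescoping inequality $\Pi_{\mathbb{I},b}\preceq\sum_y\Pi_{I(y),b}$ together with the $\mathbb{S}_Y$-symmetry of $\rho_0$, and then run the structural argument and the counting at that one coordinate $y_0$. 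Your route costs an extra layer of operator inequality plus symmetry, but in exchange the structural step is strictly local; the paper's route is slightly more direct but requires verifying the tensor condition at all coordinates at once. Both use the same two facts in the same way: the per-coordinate $\mathcal{T}^1$ computation, and that $\langle\Phi_X|\Psi(z_y)\rangle=\sqrt{k/N}$ is $z_y$-independent, which is exactly what makes $\prod_\ell\beta_\ell=1$ and keeps the norm bookkeeping clean.
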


\begin{lemma} \label{lem:StepBound}
For all $q\in\{0,\ldots,q_T-1\}$, we have $|p_{\Lone,q}-p_{\Lone,q+1}|= O(\max\{\sqrt{k/N},\sqrt{1/k}\})$.
\end{lemma}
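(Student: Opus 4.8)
The plan is to isolate the effect of a single oracle call on the $\cH_\bI$-marginal of the global state and then to reduce the analysis to a computation on one copy of $\cH_I$.

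First I would observe that between the $q$-th and the $(q+1)$-th oracle call, the only operation touching $\cH_\bI$ is that call itself: the controlled permutations $U_{R,\tau}$ (applied once, at the very start), $U^{\pm1}_{Q,\tau}$ (inserted around each call) and $U^{-1}_{O,\tau}$ (at the end) act on $\cH_R,\cH_Q,\cH_O,\cH_S$ only, and the algorithm's own unitaries act on $\cH_A$. Consequently $\rho_q$ is the $\cH_\bI$-marginal of $\selfbutter{\phi_q}$ and $\rho_{q+1}$ is the $\cH_\bI$-marginal of $\cO'\selfbutter{\phi_q}(\cO')^\dagger$, where $\cO'=\sum_{z}\selfbutter z\otimes\cO(z)$ is the symmetrised $(q+1)$-th oracle, $\cO\in\{\cO_V,\cO_F\}$. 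Hence $|p_{\Lone,q}-p_{\Lone,q+1}|=\bigl|\bra{\phi_q}\bigl((\cO')^\dagger(\Pi_{\bI,\Lone}\otimes\Id)\,\cO'-\Pi_{\bI,\Lone}\otimes\Id\bigr)\ket{\phi_q}\bigr|$.

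Next I would cut this down to a single register. Write $\Pi_{\bI,\Lone}=\Id-\bigotimes_{y}\Pi_{I,\Lzero}^{(y)}$. Both oracles are block-diagonal in the computational basis of the query register, with blocks labelled by $(y,x)\in Y\times X$ for $\cO_V$ and by $y\in Y$ for $\cO_F$, and on each block the oracle acts only on the answer/query registers and on the single tensor factor $\cH_{I(y)}$ of $\cH_\bI$. Decomposing $\ket{\phi_q}$ along these blocks, tracing out everything but $\cH_\bI$, using that $\bnorm{\bigotimes_{y'\neq y}\Pi_{I,\Lzero}^{(y')}}\le1$ and that the block weights sum to $1$, the quantity above is bounded by the supremum over $y$ (and $x$) of a one-register operator norm. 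Two elementary identities make this concrete. For $\cO_V$ on $(y,x)$, where the oracle acts on $\cH_{I(y)}\otimes\cH_B$ as the self-adjoint unitary $P^x_0\otimes\Id_B+P^x_1\otimes X$ with $P^x_b$ the projector of $\cH_I$ onto $\{z_y:z_{y,x}=b\}$, conjugating $\Pi_{I,\Lzero}\otimes\Id_B$ by it produces $\Pi_{I,\Lzero}\otimes\Id_B+(P^x_0\Pi_{I,\Lzero}P^x_1+P^x_1\Pi_{I,\Lzero}P^x_0)\otimes(X-\Id)$, so the difference has norm $2\bnorm{P^x_0\Pi_{I,\Lzero}P^x_1}$ (the two summands occupy complementary off-diagonal blocks of $\cH_I=\im P^x_0\oplus\im P^x_1$). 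For $\cO_F$, the block on $\cH_{I(y)}\otimes\cH_{Q_X}$ is the reflection $\Id-2\Xi_y$, where $\Xi_y:=\sum_{z_y}\selfbutter{z_y}\otimes\selfbutter{\Psi(z_y)}$ is a projector (the $\ket{z_y}$ being orthonormal); the identity $(\Id-2\Xi_y)\Pi(\Id-2\Xi_y)-\Pi=2(\Id-2\Xi_y)[\Xi_y,\Pi]$ with $\Pi=\Pi_{I,\Lzero}\otimes\Id_{Q_X}$ then shows the difference has norm $2\bnorm{[\Xi_y,\Pi_{I,\Lzero}\otimes\Id_{Q_X}]}$. So it suffices to prove that $\bnorm{P^x_0\Pi_{I,\Lzero}P^x_1}$ and $\bnorm{[\Xi_y,\Pi_{I,\Lzero}\otimes\Id_{Q_X}]}$ are $O(\max\{\sqrt{k/N},\sqrt{1/k}\})$.

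Finally I would estimate these using the representation theory of \autoref{sec:repTheory}, i.e.\ $\cH_{I,\Lzero}=\cT^1_I=\cH_I^{(N)}\oplus\cH_I^{(N-1,1)}$ and the spanning states $\ket{\psi_{x_1,\dots,x_i}}$. For $\cO_V$: since $P^x_0$ annihilates $\im P^x_1$, on $\im P^x_1$ we have $P^x_0\Pi_{I,\Lzero}=-P^x_0\Pi_{I,\Lone}$, whence $\bnorm{P^x_0\Pi_{I,\Lzero}P^x_1}\le\bnorm{\Pi_{I,\Lone}P^x_1}$; and with $\inner{\psi_{x'}}{\psi_{x''}}=\tfrac{k-1}{N-1}$ for $x'\ne x''$ and $\norm{P^x_1\ket{\psi_{x'}}}^2=\tfrac{k-1}{N-1}$, an $\S_{X\setminus\{x\}}$-equivariance argument shows that $\Pi_{\cT^1_I}$ restricted to $\im P^x_1$ differs from the identity by at most $O(k/N)$ off its trivial component, so $\bnorm{\Pi_{I,\Lone}P^x_1}=O(\sqrt{k/N})$. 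For $\cO_F$: split $\Pi_{I,\Lzero}=\selfbutter{\psi_\emptyset}+\Pi^{(N-1,1)}$ into its $\cT^0_I$- and $\cH_I^{(N-1,1)}$-parts. The $\cT^0_I$-part contributes $\bnorm{(\selfbutter{\psi_\emptyset}\otimes\Id)\Xi_y}$, which is at most $\bigl(\tfrac1{\binom Nk}\lambda_{\max}(\sum_{z_y}\selfbutter{\Psi(z_y)})\bigr)^{1/2}$; since $\sum_{z_y}\selfbutter{\Psi(z_y)}=\tfrac1k\bigl(\binom{N-2}{k-1}\Id+\binom{N-2}{k-2}J\bigr)$ with $J$ the all-ones matrix on $\cH_{Q_X}$, this is $O(\sqrt{k/N})$. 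The $\cH_I^{(N-1,1)}$-part $\bnorm{[\Xi_y,\Pi^{(N-1,1)}\otimes\Id]}$ is where the $O(1/\sqrt k)$ enters: expanding $\Pi^{(N-1,1)}\ket{z_y}$ in the $\ket{\psi_x}$ basis and measuring how far $(\Pi^{(N-1,1)}\otimes\Id)\ket{z_y,\Psi(z_y)}$ lies from $\im\Xi_y$, the gain $O(1/k)$ arises because $\ket{\Psi(z_y)}$ spreads its amplitude uniformly over the $k$ elements of $S_y$, so that $\Xi_y$-basis vectors attached to $z_y$'s differing by one element overlap to order $1-O(1/k)$. Combining, every one-register quantity is $O(\sqrt{k/N}+\sqrt{1/k})=O(\max\{\sqrt{k/N},\sqrt{1/k}\})$, as claimed.

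I expect the $\cO_F$ step to be the main obstacle. Unlike $\cO_V$, whose effect on $\cH_\bI$ is an essentially local dephasing in a product basis, $\cO_F$ entangles the input factor $\cH_{I(y)}$ with the query register $\cH_{Q_X}$ through the \emph{smooth} vector $\ket{\Psi(z_y)}$, and controlling $\bnorm{[\Xi_y,\Pi_{I,\Lzero}\otimes\Id_{Q_X}]}$ amounts to understanding precisely how the reflection about $\spn\{\ket{z_y,\Psi(z_y)}\}_{z_y}$ fails to preserve $\cT^1_I\otimes\cH_{Q_X}$; squeezing out the clean $O(1/\sqrt k)$ bound rather than a weaker one will require the full formulas for the $\cT^i_I$ spaces and a careful choice of how to decompose the operator. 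This is where the bulk of the computation will go.
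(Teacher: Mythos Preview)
Your overall plan—bound the one-step change by the operator norm of $\cO'(\Pi_{\bI,\Lzero}\otimes\Id)\cO'-\Pi_{\bI,\Lzero}\otimes\Id$ on each $(y,x)$- or $y$-block and reduce to a single copy of $\cH_I$—is sound, and once the paper invokes Schur's lemma its Lemma~\ref{lem:conj3} is, after unwinding, exactly an operator-norm bound on each irrep isotypic. But your execution has a real gap.

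For $\cO_V$ you claim $\bnorm{\Pi_{I,\Lone}P^x_1}=O(\sqrt{k/N})$. This is false as soon as $k\ge3$: $\dim\im P^x_1=\binom{N-1}{k-1}>N=\dim\cT^1_I$, so by dimension count $\im P^x_1\cap\cH_{I,\Lone}\ne\{0\}$ and hence $\bnorm{\Pi_{I,\Lone}P^x_1}=1$. Your inequality $\bnorm{P^x_0\Pi_{I,\Lzero}P^x_1}\le\bnorm{\Pi_{I,\Lone}P^x_1}$ is therefore useless. The target $\bnorm{P^x_0\Pi_{I,\Lzero}P^x_1}$ \emph{is} $O(\sqrt{k/N})$, but to see this you must use that the operator factors through the $N$-dimensional space $\cT^1_I$: under $\S_{X\setminus\{x\}}$, $\cT^1_I$ carries only the $(N-1)$- and $(N-2,1)$-isotypics; the $(N-1)$-part contributes zero (both trivial copies already live in $\cT^1_I$), and on the $(N-2,1)$-isotypic (exactly two copies in $\cH_I$, one in $\im P^x_0$ and one in $\im P^x_1$) one computes the off-diagonal coefficient $|\gamma_0\gamma_1|=\sqrt{(k-1)(N-k-1)}/(N-2)=O(\sqrt{k/N})$. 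So the $\cO_V$ bound is salvageable, but not by the route you wrote.

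For $\cO_F$ your heuristic for the $\Pi^{(N-1,1)}$-part is not a proof; this is precisely where the paper spends its effort. The paper decomposes $U_Q\otimes U_I$ into irreps two ways—via $\cH_Q^\lambda\otimes\cH_I^\nu$ and via the oracle-invariant spaces $\cH^{(0,s)},\cH^{(0,t)},\cH^{(1,s)},\cH^{(1,t)}$—writes $\cO_F$ as $\mathrm{diag}(1,1,-1,1)$ in the resulting transporter basis of the $(N-1,1)$-isotypic, and then has to compute
\[
\frac{\Tr\bigl(\Pi^{(N-1,1)_Q\otimes(N-2,2)_I}_{(N-1,1)}\cdot\Pi^{(1,s)}_{(N-1,1)}\bigr)}{N-1}
=\frac{k-1}{k}\cdot\frac{N(N-k-1)}{(N-1)(N-2)}
\]
via explicit Johnson-scheme identities to verify it is $1-O(1/k)-O(k/N)$. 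That computation (or something equivalent) is unavoidable for the sharp $O(1/\sqrt k)$; your overlap remark ``$\langle\Psi(z)\mid\Psi(z')\rangle=1-O(1/k)$ when $z,z'$ differ in one element'' does not by itself control the commutator norm, because it says nothing about how the vectors $\ket{z,\Psi(z)}$ sit relative to $\cH_I^{(N-1,1)}\otimes\cH_{Q_X}$.

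In short: right architecture, one genuinely wrong step and one insufficiently argued one. If you repair the $\cO_V$ argument as above and accept that the $\cO_F$ case forces you into the irrep analysis on $\cH_Q\otimes\cH_I$ (not $\cH_I$ alone), you will essentially reconstruct the paper's proof.
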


One can see that $M$, the size of the set $Y$, does not appear in the statements of Lemmas \ref{lem:FinalBound} and \ref{lem:StepBound}. The size of $Y$ indeed does not matter for them, as in we will eventually reduce the general case for Lemmas \ref{lem:FinalBound} and \ref{lem:StepBound} to the case when $|Y|=1$.

\subsection{Proof of Lemma \ref{lem:InitialBound}} \label{sec:InitialBound}

Let us rewrite 
(\ref{eqn:xiResource})
as
\begin{align*}
|\xi'(z)\>_R
& =\bigotimes_{\ell=1}^h\Big(\frac{1}{\sqrt{M}}
\sum_{y\in Y}|y\>_{R_Y(\ell)}\big(
\alpha_{\ell,0}|\Psi(z_y)\>+\alpha_{\ell,1}|\Phi\>\big)_{R_X(\ell)}
\Big) \\
& = \frac{1}{\sqrt{M^h}}
\sum_{y_1,\ldots,y_h\in Y}|y_1,\ldots,y_h\>_{R_Y}
|\xi'(y_1,\ldots,y_h)\>_{R_X},
\end{align*}
where $|\Phi\>:=\sum_{x\in X}|x\>/\sqrt{|X|}$ and
\[
|\xi'(y_1,\ldots,y_h)\>_{R_X}:=\bigotimes\nolimits_{\ell=1}^h
(\alpha_{\ell,0}|\Psi(z_{y_\ell})\>_{R_X(\ell)}+\alpha_{\ell,1}|\Phi\>_{R_X(\ell)})
\]
has unit norm for $\<\Psi(z_y)|\Phi\>=\<\SPsi(z)|\SPhi\>=\sqrt{k/N}$.
Let $Y_h$ be the set of all $(y_1,\ldots,y_h)\in Y^h$ such that $y_\ell\neq y_{\ell'}$ whenever $\ell\neq\ell'$.
Let us write $|\xi'(z)\>_R=|\xi'_\Lzero(z)\>_R+|\xi'_\Lone(z)\>_R$, where the unnormalized state $|\xi'_\Lzero(z)\>_R$ corresponds to all $(y_1,\ldots,y_h)\in Y_h$ in the register $\cH_{R_Y}$. 
Then, $\||\xi'_\Lone(z)\>\|^2$ equals the probability that among $h$ numbers chosen independently and uniformly at randomly from $\{1,\ldots,M\}$ at least two numbers are equal. Analysis of the birthday problem tells us that this probability is at most $h(h-1)/(2M)$ \cite{katz:crypto}. For $c\in \{\Lzero,\Lone\}$,
let
\[
|\phi_{c}\> :=
\binom{N}{k}^{-M/2}\sum_{z\in D^m}|z\>_\bI
|\W\>_S
|\xi'_c(z)\>_R|\xi''\>_{A-R},
\]
and note that $\||\phi_{c}\>\|=\||\xi'_c(z)\>\|$. The initial state of the algorithm is $|\phi_{\Lzero}\>+|\phi_{\Lone}\>$.
(Note: in this proof, the subscript of $\phi$ does not denote the number of queries.)

\begin{claim} \label{clm:InitialBound}
We have $(\Pi_{\bI,\Lzero}\otimes \Id_{SA})|\phi_{\Lzero}\>=|\phi_{\Lzero}\>$.
\end{claim}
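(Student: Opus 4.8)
The plan is to prove the stronger statement that $|\phi_{\Lzero}\>\in\cH_{\bI,\Lzero}\otimes\cH_S\otimes\cH_A$, i.e.\ that the $\cH_\bI$-register component of $|\phi_{\Lzero}\>$ is supported inside $\cH_{\bI,\Lzero}=\cH_{I,\Lzero}^{\otimes M}$; the claim then follows since $\Pi_{\bI,\Lzero}$ acts as the identity on $\cH_{\bI,\Lzero}$. In $|\phi_{\Lzero}\>$ the register $\cH_S$ carries the fixed vector $|\W\>_S$ and $\cH_{A-R}$ carries $|\xi''\>$, both independent of $z$, and $\cH_R=\cH_{R_Y}\otimes\cH_{R_X}$ with $\cH_{R_Y}$ holding only tuples in $Y_h$. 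Hence it suffices to show that for each fixed $(y_1,\dots,y_h)\in Y_h$ the vector
\[
|\chi\>\ :=\ \sum_{z\in D^M}|z\>_\bI\,|\xi'(y_1,\dots,y_h)\>_{R_X}\ \in\ \cH_\bI\otimes\cH_{R_X}
\]
lies in $\cH_{\bI,\Lzero}\otimes\cH_{R_X}$.

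First I would split the $M$ tensor factors $\cH_{I(y)}$ of $\cH_\bI$ into the ``used'' ones, $y\in\{y_1,\dots,y_h\}$, and the ``unused'' ones. Since $|\xi'(y_1,\dots,y_h)\>_{R_X}=\bigotimes_{\ell=1}^{h}(\alpha_{\ell,0}|\Psi(z_{y_\ell})\>+\alpha_{\ell,1}|\Phi\>)_{R_X(\ell)}$ depends on $z$ only through $z_{y_1},\dots,z_{y_h}$, and since the $y_\ell$ are \emph{pairwise distinct} so that these coordinates live in distinct tensor factors of $\cH_\bI$, the sum over $z$ factorizes as
\[
|\chi\>\ =\ \Bigl(\bigotimes_{\ell=1}^{h}\sum_{z_{y_\ell}\in D}|z_{y_\ell}\>_{I(y_\ell)}\bigl(\alpha_{\ell,0}|\Psi(z_{y_\ell})\>+\alpha_{\ell,1}|\Phi\>\bigr)_{R_X(\ell)}\Bigr)\ \otimes\ \bigotimes_{y\notin\{y_1,\dots,y_h\}}\Bigl(\sum_{z_y\in D}|z_y\>_{I(y)}\Bigr).
\]
Then the routine computation, using $|\Psi(z_y)\>=k^{-1/2}\sum_{x:z_{y,x}=1}|x\>$ together with the identities $\sum_{z_y\in D}|z_y\>=\sqrt{\binom{N}{k}}\,|\psi_\emptyset\>$ and $\sum_{z_y:z_{y,x}=1}|z_y\>=\sqrt{\binom{N-1}{k-1}}\,|\psi_x\>$ (immediate from the definition \eqref{eqn:TSpanStates}): each unused factor $\sum_{z_y}|z_y\>_{I(y)}$ is a multiple of $|\psi_\emptyset\>_{I(y)}\in\cH^{(N)}_I\subseteq\cH_{I,\Lzero}$, while the $\ell$-th used factor equals a multiple of $\sum_{x\in X}|\psi_x\>_{I(y_\ell)}|x\>_{R_X(\ell)}$ plus a multiple of $|\psi_\emptyset\>_{I(y_\ell)}|\Phi\>_{R_X(\ell)}$, hence lies in $\spn\{|\psi_x\>:x\in X\}\otimes\cH_{R_X(\ell)}=\cH_{I,\Lzero}\otimes\cH_{R_X(\ell)}$ (recall $\cT^1_I=\cH_{I,\Lzero}$). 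Collecting the $\cH_I$-factors, $|\chi\>$ lies in $\bigl(\bigotimes_{y\in Y}\cH_{I(y),\Lzero}\bigr)\otimes\cH_{R_X}=\cH_{\bI,\Lzero}\otimes\cH_{R_X}$, so $\Pi_{\bI,\Lzero}$ fixes it. Summing over $(y_1,\dots,y_h)\in Y_h$ and reattaching $|\W\>_S|\xi''\>_{A-R}$ yields $(\Pi_{\bI,\Lzero}\otimes\Id_{SA})|\phi_{\Lzero}\>=|\phi_{\Lzero}\>$.

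I do not expect a genuine obstacle: the content is essentially bookkeeping of which tensor factor $\cH_{I(y)}$ each coordinate $z_y$ occupies, plus the single structural fact that both $\sum_{z_y}|z_y\>$ and $\sum_{z_y}|z_y\>|\Psi(z_y)\>$ have their $\cH_I$-marginal inside $\cT^1_I$ (equivalently, are supported on the partitions $(N)$ and $(N-1,1)$). The one point that genuinely matters is the pairwise-distinctness of $y_1,\dots,y_h$ — which is precisely why $|\phi_{\Lzero}\>$ was defined using $Y_h$: a collision $y_\ell=y_{\ell'}$ would force two $\Psi$-factors into the same $\cH_{I(y)}$, and $\sum_{z_y}|z_y\>|\Psi(z_y)\>^{\otimes 2}$ need not remain in $\cT^1_I$.
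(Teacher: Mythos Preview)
Your proposal is correct and takes essentially the same approach as the paper: factor the sum over $z$ across the tensor factors $\cH_{I(y)}$ using the pairwise distinctness of $y_1,\dots,y_h$, then observe that each unused factor is proportional to $|\psi_\emptyset\>\in\cT^0_I$ and each used factor has its $\cH_I$-component in $\cT^1_I$ via the identity $\sum_{z_y}|z_y\>|\Psi(z_y)\>\propto\sum_x|\psi_x\>|x\>$. The paper's only cosmetic difference is that it first expands $\alpha_{\ell,0}|\Psi(z_{y_\ell})\>+\alpha_{\ell,1}|\Phi\>$ over $\beta\in\{0,1\}^h$ and then reduces to $\beta=0^h$, whereas you handle both terms of the linear combination together; your version is slightly more direct but otherwise identical in content.
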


Claim \ref{clm:InitialBound} implies that 
$(\Pi_{\bI,\Lone}\otimes \Id_{SA})|\phi_{\Lzero}\>=0$,
and therefore
\begin{multline*}
p_{\Lone,0}= \Tr(\rho_0\Pi_{\bI,\Lone})
 = \Tr\Big(\Tr_{SA}\big(
(|\phi_{\Lzero}\>+|\phi_{\Lone}\>)(\<\phi_{\Lzero}|+\<\phi_{\Lone}|)
\big)\Pi_{\bI,\Lone}\Big)
\\
= \<\phi_{\Lone}|(\Pi_{\bI,\Lone}\otimes \Id_{SA})|\phi_{\Lone}\>
\leq \<\phi_{\Lone}|\phi_{\Lone}\> < h^2/(2M).
\end{multline*}

\begin{proof}[of Claim \ref{clm:InitialBound}]
First, let $|\Omega_0(z_{y_\ell})\>:=|\Psi(z_{y_\ell})\>$ and $|\Omega_1(z_{y_\ell})\>:=|\Phi\>$,
so that
\[
|\xi'(y_1,\ldots,y_h)\>_{R_X}
=
\sum_{\beta=(\beta_1,\ldots,\beta_h)\in\{0,1\}^h}
(\alpha_{1,\beta_1}\ldots\alpha_{h,\beta_h})
|\Omega_{\beta_1}(z_{y_1}),\ldots,\Omega_{\beta_h}(z_{y_h})\>_{R_X}.
\]
For all $\beta\in\{0,1\}^h$ and all $(y_1,\ldots,y_h)\in Y_h$, let
\begin{equation}
\label{eqn:phizero_gamma}
|\phi_{\Lzero,\beta}(y_1,\ldots,y_h)\>:=
\gamma
\sum_{z\in D^m}|z\>_\bI
|\W\>_S
|y_1,\ldots,y_h\>_{R_Y}
|\Omega_{\beta_1}(z_{y_1}),\ldots,\Omega_{\beta_h}(z_{y_h})\>_{R_X}
|\xi''\>_{A-R},
\end{equation}
where
$\gamma=\binom{N}{k}^{-M/2} (\alpha_{1,\beta_1}\ldots\alpha_{h,\beta_h})\big/\sqrt{M^h}$.
We have
\[
|\phi_\Lzero\>=\sum_{\beta\in\{0,1\}^h}\sum_{(y_1,\ldots,y_h)\in Y_h}|\phi_{\Lzero,\beta}(y_1,\ldots,y_h)\>,
\]
and it is enough to show that
\[
(\Pi_{\bI,\Lzero}\otimes \Id_{SA})
|\phi_{\Lzero,\beta}(y_1,\ldots,y_h)\>=
|\phi_{\Lzero,\beta}(y_1,\ldots,y_h)\>
\]
for all $\beta\in\{0,1\}^h$ and $(y_1,\ldots,y_h)\in Y_h$. 

Notice that, if $\beta_\ell=1$, then the register $\cH_{R_X(\ell)}$ contains the state $|\Phi\>$ and this register is not entangled with any the other registers. 
Therefore, it suffices to consider the case when $\beta=0^h$.
Without loss of generality, let $(y_1,\ldots,y_h)=(1,\ldots,h)$.

For simplicity, let
$|\hat\phi\>$ be the the state $|\phi_{\Lzero,0^k}(1,\ldots,h)\>/\gamma$ restricted to registers $\cH_\bI$ and $\cH_{R_X}$, for these registers are not entangled with the other registers and we have
\[
\Tr_{SA}(|\phi_{\Lzero,0^h}(1,\ldots,h)\>\<\phi_{\Lzero,0^h}(1,\ldots,h)|) = \gamma^2 
\Tr_{R_X}(|\hat\phi\>\<\hat\phi|).
\]
We have
\[
|\hat\phi\>=
\sum_{z\in D^m}|z\>_\bI
|\Psi(z_1),\ldots,\Psi(z_h)\>_{R_X}
=
\bigotimes_{y=1}^h
\Big(\sum_{z_y\in D}|z_y\>_{I(y)}|\Psi(z_y)\>_{R_X\!(y)}\Big)
\otimes
\bigotimes_{y=h+1}^M
\Big(\sum_{z_y\in D}|z_y\>_{I(y)}\Big).
\]
Recall the states $|\psi_{x_1,\ldots,x_i}\>\in\cH_I$ from (\ref{eqn:TSpanStates}).
We have
\begin{align*}
&
\sum_{z_y\in D}|z_y\>|\Psi(z_y)\>
\propto
\sum_{z_y\in D}|z_y\>\sum_{\substack{x\in X\\z_{y,x}=1}}|x\>
=
\sum_{x\in X} \Big(
\sum_{\substack{z_y\in D\\z_{y,x}=1}}|z_y\>
\Big) |x\>
\propto
\sum_{x\in X} |\psi_x\>|x\>
\in \cT_{I(y)}^1\otimes \cH_{R_X\!(y)};
\\
& \sum_{z_y\in D}|z_y\>\propto|\psi_\emptyset\>\in\cT_{I(y)}^0=\cH_{I(y)}^{(N)}.
\end{align*}
The claim follows from the definition of $\cH_{\bI,\Lzero}$ (Section \ref{sec:framework}).
\end{proof}

\subsection{Proof of Lemma \ref{lem:FinalBound}}
\paragraph{Reduction to the $p_{q_T,\Lone}=0$ case.}
 
Let us first reduce the lemma to its special case when $p_{q_T,\Lone}=0$.
This reduction was used in \cite{Ambainis10} for a very similar problem. %
Recall that the final state of the algorithm $|\phi_{q_T}\>$ satisfies the symmetry
$(U_{\bI SO,\tau}\otimes\Id_{A-O})|\phi_{q_T}\>=|\phi_{q_T}\>$ for all $\tau\in\W$, and note that,
for $c\in\{\Lzero,\Lone\}$, the state 
\[
|\phi_{q_T}^{c}\>:=\frac
{   (\Pi_{\bI,c}\otimes \Id_{SA})|\phi_{q_T}\>}
{\|(\Pi_{\bI,c}\otimes \Id_{SA})|\phi_{q_T}\>\|}
=\frac{1}{\sqrt{p_{c,q_T}}}
{   (\Pi_{\bI,c}\otimes \Id_{SA})|\phi_{q_T}\>}
\]
satisfies the same symmetry. We have
\[
|\phi_{q_T}\>=
\sqrt{1-p_{\Lone,q_T}}|\phi_{q_T}^{\Lzero}\> +
\sqrt{p_{\Lone,q_T}}|\phi_{q_T}^{\Lone}\>.
\]
Since $|\phi_{q_T}^{\Lzero}\>$ and $|\phi_{q_T}^{\Lone}\>$ are orthogonal, we have
\begin{equation}\label{eqn:varDistance}
\||\phi_{q_T}\>-|\phi^{\Lzero}_{q_T}\>\|
=
\sqrt{(1-\sqrt{1-p_{\Lone,q_T}})^2+(\sqrt{p_{\Lone,q_T}})^2}
\leq\sqrt{2p_{\Lone,q_T}}
\end{equation}

 From now on, let us assume that $p_{\Lone,q_T}=0$ and, thus, 
$|\psi_{q_T}\>=|\psi^{\Lzero}_{q_T}\>$. \autoref{lemma:dist.similar3} and (\ref{eqn:varDistance}) states that this changes the success probability by at most $\sqrt{2p_{\Lone,q_T}}$.

\paragraph{Reduction to the $|Y|=1$ case.}

Recall that $\rho''_{q_T}=\Tr_{S,A-O}|\phi_{q_T}\>\<\phi_{q_T}|$, and we have
\begin{equation*}
(\Pi_{\bI,\Lzero}\otimes \Id_O)\rho''_{q_T} = \rho''_{q_T}
\quad\text{and}\quad
\forall\tau\in\W\colon
 U_{\bI O,\tau}
 \rho''_{q_T}
 U_{\bI O,\tau}^{-1}
=
 \rho''_{q_T}.
\end{equation*}
The algorithm makes its final measurement of the $\cH_\bI$ and $\cH_O$
registers, ignoring all the other registers, therefore the success
probability is completely determined by $\rho''_{q_T}$.
Let us assume that the algorithm measures (and then discards) the $\cH_{O_Y}$ register first, before measuring $\cH_\bI$ and $\cH_{O_{X2}}$, and that the outcome of this measurement is $y\in Y$. Due to the symmetry, we get each outcome $y$ with the same probability $1/M$.

Now the algorithm can discard the registers $\cH_{I(y')}$ for all $y'\neq y$, as their content do not affect the success probability. We are left with
\[
\rho''_{q_T,y} = M \Tr_{I(y')\colon y'\neq y}
\big(
(\Id_{\bI O_{X2}}\otimes\<y|_{O_Y})
\rho''_{q_T}
(\Id_{\bI O_{X2}}\otimes|y\>_{O_Y})
\big),
\]
which is a density matrix on the registers $\cH_{I(y)}$ and $\cH_{O_{X2}}$, and it satisfies
\begin{gather*}
(\Pi_{I,\Lzero}\otimes \Id_{O_{X2}})
\rho''_{q_T,y} = \rho''_{q_T,y}
,\\%\quad\text{and}\quad
\forall\sigma\in\S_X\colon
(U_{I,\sigma}\otimes U_{O_{X2},\sigma})
 \rho''_{q_T,y}
(U_{I,\sigma}\otimes U_{O_{X2},\sigma})^{-1}
=
 \rho''_{q_T,y}
\end{gather*}
(we use the subscript $I$ instead of $I(y)$ as $y$ is fixed from now on).
The success probability of the algorithm equals the probability that we measure the state $\rho''_{q_T,y}$ in the computational basis and obtain $z_y\in D$ and $\{x_1,x_2\}\in X_2$ such that $z_{y,x_1}=z_{y,x_2}=1$. Hence, we have reduced the proof to the case when $|Y|=1$.

\paragraph{The $|Y|=1$ case.}

Since $y\in Y$ is fixed, to lighten the notation, in the remainder of the proof of  Lemma \ref{lem:FinalBound}, let us write $z'$ instead of $z_y$ and $z'_x$ instead of $z_{y,x}$.

Let us now assume that the algorithm measures the $\cH_{O_{X2}}$ register, obtaining $\{x_1,x_2\}\in X_2$, and only then measures $\cH_{I}$. Due to the symmetry, the measurement yields each outcome $\{x_1,x_2\}$ with the same probability $1/\binom{N}{2}$, and let
\[
\hat\rho:=\binom{N}{2}
(\Id_{I}\otimes\<\{x_1,x_2\} |_{O_{X2}})
\rho''_{q_T,y}
(\Id_{I}\otimes | \{x_1,x_2\}\>_{O_{X2}})
\]
be the density matrix of the register $\cH_{I}$ after the measurement. Without loss of generality,
let $\{x_1,x_2\}=\{1,2\}$, and let $\hat\S:=\S_{\{1,2\}}\times\S_{\{3,\ldots,N\}}<\S_X$ be the group of all permutations  $\sigma\in\S_X$ that map $\{1,2\}$ to itself. Now we have
\begin{equation} \label{eqn:rhoHatSymmetry}
\Pi_{I,\Lzero}
\hat\rho = \hat\rho
\quad\text{and}\quad
\forall\sigma\in\hat\S\colon
 U_{I,\sigma}
 \hat\rho
U_{I,\sigma}^{-1}
=
 \hat\rho.
\end{equation}
Let $\hat\Pi$ denote
the projection to the subspace of $\cH_{I}$ spanned by all $|z'\>$ such that $z'_{1}=z'_{2}=1$. 
We note that
$U_{I,\sigma}\hat\Pi U_{I,\sigma}^{-1}= \hat\Pi$ for all $\sigma\in \hat\S$. 
One can see that the success probability of the algorithm is $\Tr(\hat\Pi\hat\rho)$,
and it is left to show

\begin{claim}
$\Tr(\hat\Pi\hat\rho)\leq{2(k-1)}/{(N-1)}$.
\end{claim}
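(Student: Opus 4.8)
The plan is to reduce the claim to a spectral estimate on the small space carrying $\hat\rho$, exploit the $\hat\S$-symmetry via Schur's lemma, and then carry out the estimate. The only subtle point is that the $2\times2$ matrix governing the worst case turns out to have rank one, which is what pins the constant to $2$ rather than something growing with $k$; that computation is the step I expect to be the real obstacle.

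First I would record the structure. By (\ref{eqn:rhoHatSymmetry}), $\hat\rho$ is supported on $\cH_{I,\Lzero}=\cT^1_{I}=\cH^{(N)}_{I}\oplus\cH^{(N-1,1)}_{I}$, which for $1\le k\le N-1$ is $N$-dimensional and is spanned by the states $|\psi_x\rangle$, $x\in X$, of (\ref{eqn:TSpanStates}), satisfying $\langle\psi_x|\psi_x\rangle=1$ and $\langle\psi_x|\psi_{x'}\rangle=\tfrac{k-1}{N-1}$ for $x\neq x'$. With $P:=\Pi_{I,\Lzero}$ we have $\hat\rho=P\hat\rho P$, hence $\Tr(\hat\Pi\hat\rho)=\Tr\bigl((P\hat\Pi P)\hat\rho\bigr)$; moreover $P\hat\Pi P$ and $\hat\rho$ both commute with the $\hat\S$-action $U_{I,\sigma}$ (for $P\hat\Pi P$ since $P$ and $\hat\Pi$ separately do).

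Next I would decompose $\cT^1_{I}$ into $\hat\S$-isotypic components. As $\hat\S=\S_{\{1,2\}}\times\S_{\{3,\dots,N\}}$ merely permutes the spanning set $\{|\psi_x\rangle\}$ along $X=\{1,2\}\sqcup\{3,\dots,N\}$, the components are: the $\hat\S$-invariant subspace $\cH_A:=\spn\bigl\{\,|\psi_1\rangle+|\psi_2\rangle,\ \sum_{x\geq 3}|\psi_x\rangle\,\bigr\}$ (trivial irrep, multiplicity $2$); $\cH_B:=\spn\bigl\{\,|\psi_1\rangle-|\psi_2\rangle\,\bigr\}$ (sign $\boxtimes$ trivial, multiplicity $1$); and the remaining $(N-3)$-dimensional $\cH_C$ (trivial $\boxtimes$ the $(N-3,1)$-irrep, multiplicity $1$); for very small $N$ some components are absent and the claim is trivial. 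By Schur's lemma $\hat\rho$ has no matrix elements between distinct components, and $P\hat\Pi P$ acts as a fixed scalar $\mu_B$ on $\cH_B$, a fixed scalar $\mu_C$ on $\cH_C$, and as a positive-semidefinite matrix $H_A$ on $\cH_A$. Writing $\hat\rho$ as the corresponding block sum and using $\Tr\hat\rho=1$ gives $\Tr(\hat\Pi\hat\rho)\le\max\{\lambda_{\max}(H_A),\mu_B,\mu_C\}$.

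Finally I would compute the three numbers from $\hat\Pi|\psi_x\rangle=\binom{N-1}{k-1}^{-1/2}\sum_{z':\,z'_1=z'_2=z'_x=1}|z'\rangle$ (note $\hat\Pi|\psi_1\rangle=\hat\Pi|\psi_2\rangle$): one gets immediately $\mu_B=0$, and $\mu_C=\langle\psi_3-\psi_4|\hat\Pi|\psi_3-\psi_4\rangle\big/\|\psi_3-\psi_4\|^2=\tfrac{(k-1)(k-2)}{(N-2)(N-3)}\le\tfrac{2(k-1)}{N-1}$ (trivial when the right side is $\ge1$). For $H_A$: in the basis $|a_1\rangle:=|\psi_1\rangle+|\psi_2\rangle$, $|a_2\rangle:=\sum_{x\geq 3}|\psi_x\rangle$ a direct computation shows the matrix $\bigl(\langle a_i|\hat\Pi|a_j\rangle\bigr)_{i,j}$ equals $\tfrac{k-1}{N-1}\,uu^{T}$ with $u=(2,\,k-2)^{T}$ — the key fact being that it is rank one — so $\lambda_{\max}(H_A)=\tfrac{k-1}{N-1}\,u^{T}G_A^{-1}u$, $G_A$ being the Gram matrix of $\{|a_1\rangle,|a_2\rangle\}$. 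Inserting $\langle a_1|a_1\rangle=\tfrac{2(N+k-2)}{N-1}$, $\langle a_1|a_2\rangle=\tfrac{2(N-2)(k-1)}{N-1}$, $\langle a_2|a_2\rangle=\tfrac{(N-2)(k(N-3)+2)}{N-1}$ and simplifying ($\det\bigl((N-1)G_A\bigr)=2k(N-1)(N-2)(N-k)$ and the relevant numerator is $2k(N-k)(2N-k-2)$) gives $\lambda_{\max}(H_A)=\tfrac{(k-1)(2N-k-2)}{(N-1)(N-2)}\le\tfrac{2(k-1)}{N-1}$, the last step being exactly $2N-k-2\le 2(N-2)$, i.e.\ $k\ge2$ (for $k=1$ we have $\hat\Pi=0$; for $k=N$ the bound already exceeds $1$). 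Hence $\Tr(\hat\Pi\hat\rho)\le\tfrac{2(k-1)}{N-1}$, completing \autoref{lem:FinalBound}. The genuinely delicate part is this last computation: getting the isotypic decomposition right (in particular that $\cH_A$ is only $2$-dimensional), recognizing the rank-one structure of $H_A$ (this is where the ``one value, not two'' phenomenon enters the representation theory, and why the constant stays $2$), and carrying the $2\times2$ determinant algebra through without error.
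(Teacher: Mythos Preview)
Your proof is correct and follows essentially the same route as the paper: the same three-piece isotypic decomposition of $\cT^1_I$ under $\hat\S$ (your $\cH_A,\cH_B,\cH_C$ are exactly the paper's three cases of eigenspaces), with $\cH_B$ contributing $0$ and $\cH_C$ contributing $\tfrac{(k-1)(k-2)}{(N-2)(N-3)}$.

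The only difference is in handling the two-dimensional invariant piece $\cH_A$. The paper bounds the Rayleigh quotient $\|\hat\Pi|\psi(\alpha,\beta)\rangle\|^2/\||\psi(\alpha,\beta)\rangle\|^2$ directly, by first computing $\|\hat\Pi|\psi(\alpha,\beta)\rangle\|^2=\tfrac{k-1}{N-1}|2\alpha+(k-2)\beta|^2$ and then proving the quadratic inequality $\||\psi(\alpha,\beta)\rangle\|^2\ge\tfrac12|2\alpha+(k-2)\beta|^2$ via a short case split on the sign of $\alpha\beta^*$. You instead package the same information as the rank-one identity $(\langle a_i|\hat\Pi|a_j\rangle)_{i,j}=\tfrac{k-1}{N-1}\,uu^{T}$ (equivalent to the paper's observation that $\hat\Pi|\psi(\alpha,\beta)\rangle$ is a fixed vector times $2\alpha+(k-2)\beta$), and read off the exact top eigenvalue $\tfrac{(k-1)(2N-k-2)}{(N-1)(N-2)}$ via $u^{T}G_A^{-1}u$; this is a cleaner bookkeeping of the same computation and yields a slightly sharper intermediate constant before passing to $\tfrac{2(k-1)}{N-1}$.
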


\begin{proof}
We can express $\hat\rho$ as a mixture of its eigenvectors $\ket{\chi_i}$, with 
probabilities that are equal to their eigenvalues $\chi_i$:
\(\hat\rho = \sum_i \chi_i \ket{\chi_i} \bra{\chi_i} .\)
Hence we have
\[
 \Tr (\hat\Pi \hat\rho) = \sum\nolimits_i \chi_i \Tr(\hat\Pi \ket{\chi_i}\bra{\chi_i})
 = \sum\nolimits_i \chi_i\| \hat\Pi \ket{\chi_i} \|^2,
 \]
which is at most
\[
\max\nolimits_{|\chi\>}
\big(\|\hat\Pi|\chi\>\|^2\big/\||\chi\>\|^2\big)
\]
where the maximization is over all eigenvectors of $\hat\rho$ with non-zero eigenvalues.
Due to the symmetry (\ref{eqn:rhoHatSymmetry}), we can calculate the eigenspaces of $\hat\rho$ by inspecting the restriction of $U_{I}$ to the subspace $\cT^1_{I}$, namely,  $\hat{U}_{I}:=\Pi_{I,\Lzero} U_{I}$.
Recall that we defined $\cT_I^1$ to be the space spanned by all
\begin{equation*}
|\psi_{x}\> =
\frac{1}{\sqrt{\binom{N-1}{k-1}}}
\sum_{\substack{z'\in D\\z'_{x}=1}}
|z'\>.
\end{equation*}
We note that $\< \psi_{x_1} | \psi_{x_2} \> = \frac{k-1}{N-1}$ for all $x_1, x_2: x_1\neq x_2$.

Both $U_{I}$ and $\hat U_{I}$ are representations of both $\S_X$ and its subgroup $\hat\S$.
We already studied $U_{I}$ as a representation of $\S_X$ in Section \ref{sec:repTheory}.
Since $\cT^1_{I}=\cH_{I}^{(N)}\oplus\cH_{I}^{(N-1,1)}$, the representation $\hat U_{I}$ of $\S_X$ consists of only two irreps: one-dimensional $(N)$ and $(N-1)$-dimensional $(N-1,1)$, which correspond to the spaces $\cH_{I}^{(N)}$ and $\cH_{I}^{(N-1,1)}$, respectively.

In order to see how 
$\hat{U}_{I}$ decomposes into irreps of $\hat\S$, we need to restrict $(N)$ and $(N-1,1)$ from $\S_N$ to $\S_2\times\S_{N-2}$. The Littlewood-Richardson rule gives us the decomposition of these restrictions:
\begin{alignat*}{2}
& (N)\downarrow(\S_2\times\S_{N-2})&&=((2)\times(N-2)); \\
& (N-1,1)\downarrow(\S_2\times\S_{N-2})&&=((2)\times(N-2)) \+ ((1,1)\times(N-2)) \+ ((2)\times(N-3,1)).
\end{alignat*}
Hence, Schur's lemma and (\ref{eqn:rhoHatSymmetry}) imply that that eigenspaces of $\hat\rho$ 
are invariant under $U_{I,\sigma}$ for all $\sigma\in\hat\S$, and they
 have one of the following forms:
\begin{enumerate}
\item %
one-dimensional subspace spanned by 
$\ket{\psi(\alpha,\beta)} = \alpha(\ket{\psi_1}+\ket{\psi_2})+\beta\sum_{x=3}^N \ket{\psi_x}$ for some coefficients $\alpha,\beta$;
\item %
one-dimensional subspace spanned by $\ket{\psi_1}-\ket{\psi_2}$;
\item %
$(N-3)$-dimensional subspace consisting of all 
$\sum_{i=3}^{N} \alpha_x \ket{\psi_x}$ with $\sum_x \alpha_x =0$ (spanned by all
$\ket{\psi_x}-\ket{\psi_{x'}}$, $x, x'\in \{3, \ldots, N\}$);
\item a direct sum of subspaces of the above form.
\end{enumerate}

In the first case, 
\[ \hat\Pi \ket{\psi(\alpha,\beta)} = \frac{2\alpha+(k-2)\beta}{\sqrt{{N-1 \choose k-1}}} 
\sum_{\substack{z'_3, \ldots, z'_N\in\{0,1\}\\z'_3+\ldots+z'_N=k-2}} \ket{1,1,z'_3,\ldots, z'_N} .\]
Therefore,
\[
 \|\hat\Pi\ket{\psi(\alpha,\beta)}\|^2 = \frac{{N-2 \choose k-2}}{{N-1 \choose k-1}}
 \babs{2\alpha+(k-2)\beta}^2 = \frac{k-1}{N-1}
 \babs{2\alpha+(k-2)\beta}^2 .
\]
We also have
\begin{multline} \label{eqn:zetaabnorm}
 \| |\psi(\alpha,\beta)\rangle\|^2 = \langle \psi(\alpha,\beta) |
\psi(\alpha,\beta) \rangle \\
= 2\left(1+\frac{k-1}{N-1} \right) |\alpha|^2 +
(N-2)\left(1 + (N-3) \frac{k-1}{N-1} \right) |\beta|^2 +
2(N-2) \frac{k-1}{N-1}(\alpha \beta^*+\beta \alpha^*) \\
 \geq \frac{|2\alpha+(k-2)\beta|^2}{2}.
\end{multline}
If $\alpha \beta^*\geq 0$, the inequality in (\ref{eqn:zetaabnorm})
follows by showing that
coefficients of $|\alpha|^2$, $|\beta|^2$, and $\alpha \beta^*$ on the
left hand side
are all larger than corresponding coefficients on the right hand side.
Otherwise, without loss of generality, we can assume that $\alpha=1$
and $\beta< 0$, and the inequality follows by inspecting the extreme
point of the quadratic
 polynomial (in $\beta$) that is obtained by subtracting the right
hand side from the left hand side.
Therefore, 
\[ \frac{\|\hat\Pi \ket{\psi({\alpha, \beta})} \|^2}{\|\ket{\psi({\alpha, \beta})} \|^2}
\leq \frac{2(k-1)}{N-1}.\]

In the second case, 
$\Pi (\ket{\psi_{1}}-\ket{\psi_2}) = 0$ because basis states 
$\ket{1,1,z'_3,\ldots, z'_N}$ 
have the same amplitude in $\ket{\psi_1}$ and $\ket{\psi_2}$. 

In the third case, it suffices to consider a state of the form $\ket{\psi_{3}}-\ket{\psi_4}$, 
because $\{U_{I,\sigma}(\ket{\psi_3}-\ket{\psi_4})\colon \sigma\in \hat\S\}$ spans the whole eigenspace and $\hat\Pi$ and $U_{I,\sigma}$ commute.
Then, 
\[ \hat\Pi (\ket{\psi_3}-\ket{\psi_4}) 
= \frac{1}{\sqrt{{N-1 \choose k-1}}} \sum_{\substack{z'_5, \ldots, z'_N\in\{0,1\}\\z'_5+\ldots+ z'_N=k-3}} 
(\ket{1,1,1,0, z'_5,\ldots, z'_N} - \ket{1,1,0,1, z'_5,\ldots ,z'_N}) \]
and 
\[ \| \hat\Pi (\ket{\psi_3}-\ket{\psi_4}) \|^2 = 2 \frac{{N-4 \choose k-3}}{{N-1 \choose k-1}} = 
2\frac{(k-1)(k-2)(N-k)}{(N-1)(N-2)(N-3)}
.\] 
We also have 
\[ \| \ket{\psi_3}-\ket{\psi_4} \|^2 = 2- \lbra \psi_3 |\psi_4\rket = 2- 2\frac{k-1}{N-1} 
= 2 \frac{N-k}{N-1} .\]
Hence, 
\[ \frac{\|\hat\Pi (\ket{\psi_3}-\ket{\psi_4}) \|^2}{\| \ket{\psi_3}-\ket{\psi_4} \|^2} = 
\frac{(k-2)(k-3)}{(N-2)(N-3)}
 = 
O\left(\frac{k^2}{N^2} \right)  .\]
\end{proof}

\subsection{Reduction of Lemma \ref{lem:StepBound} to the $|Y|=1$ case}

First, instead of the oracle $\cO_V$ given by \eqref{eqn:quantumOV}, we define
\[
\cO_V(z)|y,x,b\>:= (-1)^{b\cdot z_{y,x}}|y,x,b\>.
\]
Both definitions are equivalently powerful as one is obtained from another by two Hadamard gates on the register $\cH_{B}$.

For all $z_y\in D$, let 
\[
\cO''_V(z_y):=\Id_{Q_X}-2\sum_{\substack{x\in X\\z_{y,x}=1}}|x\>\<x|_{Q_X}
\qquad\text{and}\qquad
\cO''_F(z_y):=\Id_{Q_X}-2|\Psi(z_y)\>\<\Psi(z_y)|_{Q_X}
\]
act on $\cH_{Q_X}$,
so that we have
\begin{equation} \label{eqn:OVOF}
\begin{split}
&
\cO'_V=
\sum\nolimits_{z\in D^M}|z\>\<z|_\bI\otimes
\sum\nolimits_{y\in Y}|y\>\<y|_{Q_Y}
\otimes\cO''_V(z_y)\otimes|1\>\<1|_B + \Id_{\bI Q}\otimes|0\>\<0|_B, \\
&
\cO'_F=
\sum\nolimits_{z\in D^M}|z\>\<z|_\bI\otimes
\sum\nolimits_{y\in Y}|y\>\<y|_{Q_Y}
\otimes\cO''_F(z_y)\otimes \Id_B.
\end{split}
\end{equation}
Let
\[
\rho'''_q =
\rho'_{q,00}\otimes|0\>\<1|_B +
\rho'_{q,01}\otimes|0\>\<1|_B +
\rho'_{q,10}\otimes|1\>\<0|_B +
\rho'_{q,11}\otimes|1\>\<1|_B
\]
be the state of the algorithm corresponding to the $\cH_\bI$, $\cH_Q$, and $\cH_B$ registers right before the $(q+1)$-th oracle call ($\cO_V$ or $\cO_F$). Note that $\rho_q=\Tr_{QB}(\rho'''_q)$ and, since oracles are the only gates of the algorithm that interact with the $\cH_\bI$ register,
\(
\rho_{q+1}=\Tr_{QB}(\cO'\rho'''_q\cO').
\)

Notice that $|p_{\Lone,q}-p_{\Lone,q+1}|=|p_{\Lzero,q}-p_{\Lzero,q+1}|$, therefore let us deal with 
$p_{\Lzero,q}$ instead. We have 
\begin{equation}\label{eqn:pChange}
|p_{\Lzero,q}-p_{\Lzero,q+1}| = \Tr\big(\Pi_{\bI,\Lzero}(\rho_q-\rho_{q+1})\big) = 
\Tr\big((\Pi_{\bI,\Lzero}\otimes \Id_{QB})(\rho'''_q-\cO'\rho'''_q\cO')\big),
\end{equation}
which for the oracle $\cO_V$ equals
\[ 
\Tr\big((\Pi_{\bI,\Lzero}\otimes \Id_{Q})(\rho'_{q,11}-
\tilde\cO'_V
\rho'_{q,11}
\tilde\cO'_V
)\big),
\]
where $\tilde\cO'_V=(\Id_{\bI Q}\otimes\<1|_B)\cO'_V(\Id_{\bI Q}\otimes|1\>_B)$.
Therefore, without loss of generality, we assume that the state of $\cH_B$ is always $|1\>$ throughout the execution of the algorithm. 
In turn, we assume that $\cO'_V$ and $\cO'_F$ in (\ref{eqn:OVOF}) act only on $\cH_\bI\otimes\cH_Q$, and we take $\rho'_q$ instead of $\rho'''_q$ and $\Id_Q$ instead of $\Id_{QB}$ in (\ref{eqn:pChange}).

Since  $(\tau(z))_{\tau(y,x)}=1$ if and only if $z_{x,y}=1$, we have
\(
U_{\bI Q,\tau} \cO' U_{\bI Q,\tau}^{-1} = \cO'
\)
for all $\tau\in \W$, and recall that the same symmetry holds for $\rho'_q$, namely, (\ref{eqn:stepSym}). Hence, for all $y\in Y$,
\[
\rho'_{q,y} = M
(\Id_{\bI Q_X}\otimes\<y|_{Q_Y})
\rho'_q
(\Id_{\bI Q_X}\otimes|y\>_{Q_Y})
\]
has trace one
and (\ref{eqn:pChange}) equals
\begin{multline}
\label{eqn:pChange2}
M \Tr\big(
(\Id_{\bI Q_X}\otimes\<y|_{Q_Y})
(\Pi_{\bI,\Lzero}\otimes \Id_{Q})(\rho'_q-\cO'\rho'_q\cO')
(\Id_{\bI Q_X}\otimes|y\>_{Q_Y})
\big) \\
=
 \Tr\bigg(
(\Pi_{\bI,\Lzero}\otimes \Id_{Q_X})
\Big(\rho'_{q,y}-
\big(\sum_{z\in D^M}|z\>\<z|_\bI\otimes\cO''(z_y)\big)
\rho'_{q,y}
\big(\sum_{z\in D^M}|z\>\<z|_\bI\otimes\cO''(z_y)\big)
\Big)\bigg).
\end{multline}
Without loss of generality, let $y=1$, and let us write
\[
\sum\nolimits_{z\in D^M} |z\>\<z|_\bI= 
\sum\nolimits_{z_1\in D}|z_1\>\<z_1|_{I(1)}
\otimes \Id_{I}^{\otimes(M-1)}.
\]
Recall that $\Pi_{\bI,\Lzero}=\Pi_{I,\Lzero}^{\otimes M}$.
Therefore, for
\[
\hat\rho'_{q,1} := \Tr_{I(2),\ldots,I(M)}
\big(
(\Id_{I(1)} \otimes \Pi_{I,\Lzero}^{\otimes(M-1)} \otimes \Id_{Q_X})
\rho'_{q,y}
\big),
\]
(\ref{eqn:pChange}) and (\ref{eqn:pChange2}) are equal to
\begin{equation} \label{eqn:pChange3}
 \Tr\bigg(\!
(\Pi_{I(1),\Lzero}\otimes \Id_{Q_X})
\Big(\hat\rho'_{q,1}-
\big(\sum_{z_1\in D}|z_1\>\<z_1|_{I(1)}\otimes\cO''(z_1)\big)
\hat\rho'_{q,1}
\big(\sum_{z_1\in D}|z_1\>\<z_1|_{I(1)}\otimes\cO''(z_1)\big)
\Big)\bigg).
\end{equation}
Since $\hat\rho'_{q,1}$ is a positive semidefinite operator of trace at most one and it acts on $\cH_{I(1)}\otimes\cH_{Q_X}$, we have reduced the lemma to the case when $|Y|=1$.
We consider this case in Section \ref{sec:SoundnessLemma}.

\section{Proof of Lemma \ref{lem:StepBound} when
  $|Y|=1$} \label{sec:SoundnessLemma}
\label{sec:hardness-last-sec}

Since $|Y|=\{y\}$, let us use notation $\cH_Q$ instead of $\cH_{Q_X}$ to denote the register corresponding to the query index $x\in X$. Also, now we have $z=(z_y)$, so let us use $z$ instead of $z_y$ and $z_x$ instead $z_{y,x}$. Also, now we denote the permutations in $\S_N$ with $\pi$ instead of $\sigma$.

We will consider the following representations of $\S_N$:
\begin{enumerate}
\item
The computational basis of $\cH_Q$ is labeled by $x\in\{1,\ldots,N\}=X$. We define the 
action of $\pi\in \S_N$ on $\cH_Q$ via the unitary $U_{Q,\pi}|x\rangle:=|\pi(x)\rangle$. $U_Q$ is known as the {\em natural representation} of $\S_N$, and we can decompose $\cH_{Q}=\cH_Q^{(N)}\oplus\cH_Q^{(N-1,1)}$ so that $U_{Q}$ restricted to $\cH_{Q}^{(N)}$ and $\cH_{Q}^{(N-1,1)}$ are irreps of $\S_N$ isomorphic to $(N)$ and $(N-1,1)$, respectively. 
\item
The computational basis of $\cH_I$ is labeled by $z\in D$, that is, $z=(z_1,\ldots,z_N)\in\{0,1\}^N$ such that $\sum_{x=1}^Nz_i=k$. In Section \ref{sec:repTheory} we already defined and studied the representation $U_I$: for $\pi\in\S_N$,
\[
U_{I,\pi}|z_1\ldots z_N\> = U_{I,\pi}|z_{\pi^{-1}(1)}\ldots z_{\pi^{-1}(N)}\>.
\]
We showed that we can decompose
\(
\cH_{I} = \bigoplus\nolimits_{i=0}^k \cH_{I}^{(N-i,i)}
\)
so that $U_{I}$ restricted to $\cH_{I}^{(N-i,i)}$ is an irrep of $\S_N$ isomorphic to $(N-i,i)$. 
\item 
Finally, let $U:=U_Q\otimes U_I$, which
acts on $\cH:=\cH_Q\otimes\cH_I$ and is also a representation of $\S_N$.
\end{enumerate}
Let $\pro{Q}{(N)}$ and $\pro{Q}{(N-1,1)}$ denote, respectively, the projectors on $\cH_Q^{(N)}$ and $\cH_Q^{(N-1,1)}$.
$\pro{Q}{(N)}$ is the $N$-dimensional matrix with all entries equal to $1/N$, and $\pro{Q}{(N-1,1)}$ is the $N$-dimensional matrix with $1-1/N$ on the diagonal and $-1/N$ elsewhere. 
Let $\pro{I}{(N)}$, $\pro{I}{(N-1,1)}$, \ldots, $\pro{I}{(N-k,k)}$ denote, respectively, the projectors on $\cH_I^{(N)}$, $\cH_I^{(N-1,1)}$, \ldots, $\cH_I^{(N-k,k)}$. The entries of these $\binom{N}{k}$-dimensional matrices can %
be calculated using the fact that they project on the eigenspaces
of the Johnson scheme (see \cite{godsil:assoc1}). %

Let us also denote
\begin{align}
&\label{eq:Pgeq}
\pro{\cH_Q\otimes \cS_{\geq 2}}{} := 
\Id_{Q}\otimes \sum\nolimits_{j=2}^k \pro{I}{(N-j,j)} =
\big(\pro{Q}{(N)}+\pro{Q}{(N-1,1)}\big)\otimes \sum\nolimits_{j=2}^k \pro{I}{(N-j,j)},
\\
&\label{eq:Ple}
\pro{\cH_Q\otimes \cS_{< 2}}{} := 
\Id_{QI}-\pro{\cH_Q\otimes \cS_{\geq 2}}{} =
\big(\pro{Q}{(N)}+\pro{Q}{(N-1,1)}\big)\otimes 
\big(\pro{I}{(N)}+\pro{I}{(N-1,1)}\big),
\end{align}
which are equal to $\Id_Q\otimes\Pi_{I,\Lone}$ and $\Id_Q\otimes\Pi_{I,\Lzero}$, respectively.

\subsection{Statement of the lemma}

For the oracles, let us write $\cO$ instead of $\cO'$ (where $\cO=\cO_V$ or $\cO=\cO_F$). Similarly to (\ref{eqn:OVOF}), we have to consider
\begin{align*}
&\cO_V = \sum_{z\in D} \Big(\sum_{\substack{x\in X\\z_x=0}}|x\>\<x|-\sum_{\substack{x\in X\\z_x=1}}|x\>\<x|\Big)_{\!Q}\!\otimes |z\>\<z|_I, \\
&\cO_F = \sum_{z\in D} \Big(\Id-|\Psi(z)\>\<\Psi(z)|\Big)_{\!Q}\!\otimes |z\>\<z|_I,
\end{align*}
where $|\Psi(z)\>=\sum_{x\colon z_x=1}|x\>\big/\sqrt{k}$.
Note that $\cO$ acts on $\cH$ and is satisfies $U_\pi\cO U_\pi^{-1}=\cO$ for all $\pi\in\S_N$.
Equivalently to (\ref{eqn:pChange3}), it suffices to prove that
\[
\big|\Tr\big(
\pro{\cH_Q\otimes \cS_{< 2}}{}
(\rho-\cO\rho\cO)\big)\big| \leq O(\max\{\sqrt{k/N},\sqrt{1/k}\})
\]
for every density operator $\rho$ on $\cH$ that satisfies $U_\pi\rho U_\pi^{-1}=\rho$ for all $\pi\in\S_N$ and both oracles $\cO=\cO_V$ and $\cO=\cO_F$.

For a subspace $\cH'\subset\cH$ such that $\cH'$ is invariant under $U$ (i.e., under $U_\pi$ for all $\pi\in \S_N$), let
$U|_{\cH'}$ be $U$ restricted to this subspace (note: $U|_{\cH'}$ is a representation of $\S_N$).
Let $\Pi_{\cH'}$ denote the projector on $\cH'$. 
Due to Schur's lemma, there is a spectral decomposition
\[
\rho = \sum\nolimits_\mu\chi_\mu \frac{\Pi_\mu}{\dim \mu},
\] 
where $\sum_\mu\chi_\mu=1$, every $\mu$ is invariant under $U$, and $U|_\mu$ in an irrep of $\S_N$.
Hence, it suffices to show the following.
\begin{lemma}\label{lem:conj3}
For every subspace $\mu\subset \cH$ such that $U|_\mu$ is an irrep and for $\mu'$ being the subspace that $\mu$ is mapped to by $\cO_V$ or $\cO_F$, we have
\begin{equation}\label{eq:conj3}
 \frac{1}{\dim\mu}\big|\Tr(\Pi_{\cH_Q\otimes \cS_{\geq 2}}(\Pi_\mu-\Pi_{\mu'}))\big| \leq O(\max\{\sqrt{k/N},\sqrt{1/k}\}).
\end{equation}
\end{lemma}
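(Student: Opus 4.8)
The plan is to use Schur's lemma to reduce \eqref{eq:conj3} to a bound on commutators of two projections living on small multiplicity spaces, then to observe that only three irreducible types can contribute a nonzero such commutator, and finally to estimate the relevant matrix entries by an explicit combinatorial computation.

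First, write $\cO=\Id-2B$, where $B=B_V:=\sum_{z\in D}\bigl(\sum_{x\colon z_x=1}|x\>\<x|\bigr)_Q\otimes|z\>\<z|_I$ for $\cO=\cO_V$ and $B=B_F:=\sum_{z\in D}|\Psi(z)\>\<\Psi(z)|_Q\otimes|z\>\<z|_I$ for $\cO=\cO_F$. In both cases $B$ is an orthogonal projection which, like $\cO$ and like $\Pi_{\cH_Q\otimes\cS_{\geq2}}=\Id_Q\otimes\sum_{j=2}^k\pro{I}{(N-j,j)}$, commutes with $U=U_Q\otimes U_I$. Fix an irreducible type $\rho$ of $\S_N$ and let $\cH_\rho\cong\C^{m_\rho}\otimes V_\rho$ be the $\rho$-isotypic component of $\cH$; by Schur's lemma $\cO$, $B$, and $\Pi_{\cH_Q\otimes\cS_{\geq2}}$ act on $\cH_\rho$ as $O_\rho\otimes\Id$, $b_\rho\otimes\Id$, and $P_\rho\otimes\Id$, with $b_\rho,P_\rho$ orthogonal projections on $\C^{m_\rho}$ and $O_\rho=\Id-2b_\rho$ self-adjoint and unitary. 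An irreducible subspace $\mu$ with $U|_\mu$ of type $\rho$ corresponds to a unit $u\in\C^{m_\rho}$, so $\Pi_\mu=|u\>\<u|\otimes\Id$, and since $\cO$ is a self-inverse unitary, $\mu'=\cO(\mu)$ lies in $\cH_\rho$ with $\Pi_{\mu'}=|O_\rho u\>\<O_\rho u|\otimes\Id$. Hence
\[
\frac1{\dim\mu}\Tr\bigl(\Pi_{\cH_Q\otimes\cS_{\geq2}}(\Pi_\mu-\Pi_{\mu'})\bigr)=\<u|\bigl(P_\rho-O_\rho P_\rho O_\rho\bigr)|u\>,
\]
and writing $P_\rho-O_\rho P_\rho O_\rho=2\bigl(b_\rho P_\rho(\Id-b_\rho)+(\Id-b_\rho)P_\rho b_\rho\bigr)$ and noting $b_\rho P_\rho(\Id-b_\rho)=b_\rho[b_\rho,P_\rho](\Id-b_\rho)$, the left-hand side of \eqref{eq:conj3} is bounded by $\|P_\rho-O_\rho P_\rho O_\rho\|\leq4\|[b_\rho,P_\rho]\|$. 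So it suffices to bound $\|[b_\rho,P_\rho]\|$ for each $\rho$.

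Second, I would determine which $\rho$ matter. From $\cH_Q=\cH_Q^{(N)}\oplus\cH_Q^{(N-1,1)}$ and $\cH_I=\bigoplus_{i=0}^k\cH_I^{(N-i,i)}$, $\cH$ decomposes into the ``pieces'' $\cH_Q^{(N)}\otimes\cH_I^{(N-i,i)}\cong V_{(N-i,i)}$ and $\cH_Q^{(N-1,1)}\otimes\cH_I^{(N-i,i)}$; by the Pieri rule the latter equals, for $i\geq1$, $V_{(N-i,i)}\oplus V_{(N-i-1,i+1)}\oplus V_{(N-i+1,i-1)}\oplus V_{(N-i-1,i,1)}\oplus V_{(N-i,i-1,1)}$ (dropping any summand that is not a partition), each with multiplicity one, and for $i=0$ it equals $V_{(N-1,1)}$. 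So each piece contributes at most one copy of a given $\rho$, copies from distinct pieces are orthogonal, and the pieces containing a copy of $\rho$ form an orthonormal basis of $\C^{m_\rho}$. In that basis $\Pi_{\cH_Q\otimes\cS_{\geq2}}=\Id_Q\otimes\sum_{j\geq2}\pro{I}{(N-j,j)}$ acts as the identity on a piece with $\cH_I$-index $i\geq2$ and as zero on one with $i\leq1$, so $P_\rho$ is the diagonal projection onto the pieces with $i\geq2$; hence $[b_\rho,P_\rho]=0$ (and \eqref{eq:conj3} is trivial) unless $\rho$ occurs in some piece with $i\geq2$ and some piece with $i\leq1$. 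Since a two-row type $(N-j,j)$ occurs exactly for $i\in\{j-1,j,j+1\}$ and a three-row type $(N-c-1,c,1)$ exactly for $i\in\{c,c+1\}$, the only types with both a $\geq2$ piece and a $\leq1$ piece are $(N-1,1)$, $(N-2,2)$, and $(N-2,1,1)$, each with $m_\rho\leq4$. (If $k\leq1$ there is no ``$\geq2$'' subspace and the lemma is vacuous, and we are in the regime $2\leq k\ll N$.)

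Finally I would compute $b_\rho$ for these three types, which is the heart of the argument. Since $P_\rho$ is diagonal in the piece basis, $[b_\rho,P_\rho]$ has a nonzero $(a,b)$ entry only when exactly one of the pieces $a,b$ has index $\geq2$, and then $([b_\rho,P_\rho])_{ab}=\pm(b_\rho)_{ab}=\pm\<\hat a|B|\hat b\>$, where $\hat a,\hat b\in\cH$ are the images of a common unit vector of $V_\rho$ under the canonical unitary intertwiners onto the copies $a$ and $b$. I would realize these vectors explicitly: $\cH_Q^{(N)}=\spn\{|+\>\}$ with $|+\>=\tfrac1{\sqrt N}\sum_x|x\>$ and $\cH_Q^{(N-1,1)}=\{|+\>\}^\bot$, and $\cH_I^{(N-i,i)}$ via the T-space states $|\psi_{x_1,\dots,x_i}\>$ of \eqref{eqn:TSpanStates} (plus the appropriate Clebsch--Gordan vectors for the mixed pieces), chosen so that $\<\hat a|B_V|\hat b\>$ and $\<\hat a|B_F|\hat b\>$ reduce to elementary sums over $z\in D$. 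These come out $O(\sqrt{k/N})$ or $O(1/\sqrt k)$: the $\sqrt{k/N}$ factors stem from overlaps such as $\<+|\Psi(z)\>=\<\psi_\emptyset|\psi_x\>=\sqrt{k/N}$ and $\<\psi_{x_1}|\psi_{x_1,x_2}\>=\sqrt{(k-1)/(N-1)}$ (which distinguish a $\cH_Q^{(N)}$-piece from a $\cH_Q^{(N-1,1)}$-piece, and an index-$i$ T-space from an index-$(i{+}1)$ one), and the $1/\sqrt k$ factor from the $\tfrac1{\sqrt k}$ normalisation of $|\Psi(z)\>$ that surfaces when $B_F$ couples $\cH_Q$ and $\cH_I$, once one uses that vectors in $\{|+\>\}^\bot\otimes\cH_I$ have vanishing $\cH_Q$-coordinate sums. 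Since $m_\rho\leq4$, this yields $\|[b_\rho,P_\rho]\|\leq4\max_{a,b}|(b_\rho)_{ab}|=O(\max\{\sqrt{k/N},\sqrt{1/k}\})$, which with the first paragraph proves \eqref{eq:conj3}. The main obstacle is precisely this computation --- pinning down workable orthonormal bases for the irreducible copies inside the mixed pieces $\cH_Q^{(N-1,1)}\otimes\cH_I^{(N-i,i)}$ and evaluating $\<\hat a|B_V|\hat b\>$, $\<\hat a|B_F|\hat b\>$ carefully enough to extract exactly the $\sqrt{k/N}$ and $1/\sqrt k$ bounds; the representation-theoretic reductions in the first two paragraphs are essentially formal.
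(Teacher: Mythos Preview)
Your reductions are correct and match the paper's: only the isotypics of $(N-1,1)$, $(N-2,2)$, $(N-2,1,1)$ can contribute, and on each $\le4$-dimensional multiplicity space the bound reduces to controlling the off-diagonal part of $b_\rho$ relative to the diagonal projection $P_\rho$. Your commutator formulation $\|P_\rho-O_\rho P_\rho O_\rho\|\le 4\|[b_\rho,P_\rho]\|$ is a clean equivalent of the paper's transporter/$(\gamma,\beta)$ analysis in the same multiplicity spaces.

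Where the approaches diverge is the actual computation, and this is where your proposal is not yet a proof. You propose to evaluate $(b_\rho)_{ab}=\langle\hat a|B|\hat b\rangle$ by constructing explicit Clebsch--Gordan intertwiners into the copy of $\rho$ inside each piece $\cH_Q^{(N-1,1)}\otimes\cH_I^{(N-i,i)}$; this is doable in principle but nontrivial, since $\S_N$ has no closed Clebsch--Gordan formula, and your heuristic ``overlaps like $\sqrt{k/N}$ and $1/\sqrt k$'' does not pin these vectors down. The paper sidesteps this entirely. It introduces a second, oracle-diagonal decomposition $\cH=\cH^{(0,s)}\oplus\cH^{(0,t)}\oplus\cH^{(1,s)}\oplus\cH^{(1,t)}$ (so that $b_\rho$ is diagonal there) and then computes \emph{traces} rather than individual entries: it shows that $\Tr(\Pi_\rho^{\text{piece}}\cdot\Pi^{(1)})/\dim\rho$ (or with $\Pi^{(1,s)}$ for $\cO_F$) being within $O(\max\{k/N,1/k\})$ of $0$ or $1$ already forces the needed off-diagonal bound. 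For $(N-2,2)$ and $(N-2,1,1)$ this trace is bounded by an elementary diagonal-entry count giving $O(k/N)$; for $(N-1,1)$ the paper computes it exactly via Johnson-scheme eigenvalue formulas, obtaining $\frac{k-1}{k}\cdot\frac{N(N-k-1)}{(N-1)(N-2)}=1-O(\max\{k/N,1/k\})$. Your route would also work, but the paper's trace-based computation is what makes the argument effective without ever writing down the intertwiners.
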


In order to prove Lemma \ref{lem:conj3}, we need to inspect the representation $U$ in more detail.

\subsection{Decomposition of $U$}

Let us decompose $U$ into irreps. We consider two approaches how to do that. 
 That is, the list of irreps contained in $U$ cannot depend on which approach we take,
 but we can choose the way we address individual instances of irreps. For example, we will
 show that $U$ contains four instances of $(N-1,1)$, and we have as much freedom in choosing
 a projector on a single instance of $(N-1,1)$ as in choosing (up to global phase) a unit vector 
 in $\C^4$.

For an irrep $\theta$ present in $U$, let $\hat \Pi_\theta$ be a projector on the space 
 corresponding to all instances of $\theta$ in $U$.

\paragraph{Approach 1: via the tensor product of irreps.} \label{ssec:app1}

We know that $U=U_Q\otimes U_I$ and we already know how $U_Q$ and $U_I$ decomposes into 
irreps. Thus, all we need to see is how, for $j\in\{0,\ldots,k\}$, $(N)_Q\otimes(N-j,j)_I$ and $(N-1,1)_Q\otimes(N-j,j)_I$ decompose into 
 irreps (we use subscripts $Q$ and $I$ here to specify which spaces these irreps act on, namely, $\cH_Q$ and $\cH_I$, respectively, but we will drop these subscripts most of the time later). Note that $(N)\otimes(N-j,j)\cong(N-j,j)$ and $(N-1,1)\otimes(N)\cong(N-1,1)$ 
 as $(N)$ is the trivial representation. And, for $j\in\{1,\ldots,k\}$, the decomposition of 
 $(N-1,1)\otimes(N-j,j)$ is given by the following claim.
\begin{claim}
For $j\in\{1,\ldots,k\}$, we have
\begin{multline*}
(N-1,1)\otimes(N-j,j) \\ 
=(N-j+1,j-1) \+ (N-j,j) \+ (N-j,j-1,1) \+ (N-j-1,j+1) \+ (N-j-1,j,1),
\end{multline*}
where we omit the term $(N-j,j-1,1)$ when $j=1$.
\end{claim}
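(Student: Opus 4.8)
The plan is to avoid computing this Kronecker product from scratch and instead reduce it to the branching rule, exploiting the fact that tensoring with the standard representation $(N-1,1)$ is an induce--restrict operation. Let $\C^N$ denote the natural permutation module of $\S_N$ on $X=\{1,\dots,N\}$. Then $\C^N\cong(N)\oplus(N-1,1)$, so in the representation ring $(N-1,1)=[\C^N]-[(N)]$. Since $\C^N$ is the module induced from the trivial representation of a point stabiliser $\S_{N-1}<\S_N$, the projection (push--pull) formula gives $\C^N\otimes V\cong\bigl(V\!\downarrow\!\S_{N-1}\bigr)\!\uparrow\!\S_N$ for every $\S_N$-module $V$. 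Applying this with $V=(N-j,j)$ and subtracting the copy of $(N-j,j)$ contributed by the $(N)$ summand of $\C^N$,
\[
(N-1,1)\otimes(N-j,j)\;\cong\;\bigl((N-j,j)\!\downarrow\!\S_{N-1}\bigr)\!\uparrow\!\S_N\;\ominus\;(N-j,j).
\]

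Next I would evaluate the two branchings by the standard remove-a-box / add-a-box rules on the Young lattice (the $\S_N\!\leftrightarrow\!\S_{N-1}$ special case of the Littlewood--Richardson rule already invoked in this section). Restricting, $(N-j,j)\!\downarrow\!\S_{N-1}=(N-1-j,j)\oplus(N-j,j-1)$, with the convention that $(N-1,0)$ means $(N-1)$ when $j=1$. Inducing each summand back up to $\S_N$,
\[
(N-1-j,j)\!\uparrow\!\S_N=(N-j,j)\oplus(N-j-1,j+1)\oplus(N-j-1,j,1),
\]
\[
(N-j,j-1)\!\uparrow\!\S_N=(N-j+1,j-1)\oplus(N-j,j)\oplus(N-j,j-1,1),
\]
where the last term of the second line occurs only for $j\ge 2$. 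Summing these, cancelling one copy of $(N-j,j)$, and collecting terms yields exactly the five-term decomposition claimed, and the $j=1$ exception appears automatically.

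This is routine, so I do not expect a genuine obstacle; the only delicate points are the boundary cases. For $j=1$ the summand $(N-j,j-1)$ collapses to $(N-1)$, whose induction is $(N)\oplus(N-1,1)$ with no three-row constituent, which is precisely why $(N-j,j-1,1)$ is omitted there. One must also confirm that every partition written is legitimate: $(N-j-1,j+1)$ is weakly decreasing only if $N\ge 2j+2$, which holds in our setting because $j\le k$ and $N/k$ is superlogarithmic by \autoref{def:ora.dist}; for $N$ in this range the five partitions are moreover pairwise distinct, so the decomposition is multiplicity-free. The one place I would slow down is the corner bookkeeping, and I would guard against an off-by-one by checking that dimensions match --- $(N-1)\dim(N-j,j)$ should equal the sum of the dimensions of the five irreducibles.
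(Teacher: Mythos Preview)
Your proposal is correct and is essentially the same argument as the paper's: both compute $(N-1,1)\otimes(N-j,j)$ as $\bigl((N-j,j)\!\downarrow\!\S_{N-1}\bigr)\!\uparrow\!\S_N\ominus(N-j,j)$ (the paper quotes this as Expression~2.9.5 of James--Kerber, you derive it via push--pull for the permutation module $\C^N$), then evaluate the restriction and induction by the branching rule and cancel one copy of $(N-j,j)$. Your additional remarks on the boundary cases and the dimension sanity check are fine extra care but not needed for the paper's purposes.
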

\begin{proof}
We use Expression 2.9.5 of \cite{james:symmetric}, which, for $j\in\{2,\ldots,k\}$, gives us
\begin{align*}
&(N-1,1)\otimes(N-j,j)
= (N-j,j)\downarrow(\S_{N-1}\times\S_1)\uparrow\S_N \ominus (N-j,j)\downarrow\S_N\uparrow\S_N \\
&\qquad= 
((N-j,j-1)\times(1))\uparrow\S_N \+ ((N-j-1,j)\times(1))\uparrow\S_N  \ominus (N-j,j) \\
 &\qquad =
(N-j+1,j-1)\+(N-j,j)\+(N-j,j-1,1) 
 \+ (N-j,j) \\
&\qquad\qquad
\+ (N-j-1,j+1) \+ (N-j-1,j,1)  \ominus (N-j,j) \\
&\qquad= 
(N-j+1,j-1)\+(N-j,j)\+(N-j,j-1,1) \+ (N-j-1,j+1) \\
&\qquad\qquad\+ (N-j-1,j,1)
\end{align*}
and, similarly, for $j=1$, gives us
\begin{align*}
(N-1,1)\otimes(N-1,1)
= & \,
(N-1,1)\downarrow(\S_{N-1}\times\S_1)\uparrow\S_N \ominus (N-1,1)\downarrow\S_N\uparrow\S_N 
 \\
= & \,
(N)\+(N-1,1)\+(N-2,2) \+ (N-2,1,1).
\mathqed
\end{align*}
\end{proof}
We can see that, for every $\ell\in\{0,1\}$ and $j\in\{0,\ldots,k\}$, the representation $(N-\ell,\ell)_Q\otimes(N-j,j)_I$ is {\em multiplicity-free}, that is, it contains each irrep at most once. 
For an irrep $\theta$ present in $(N-\ell,\ell)_Q\otimes(N-j,j)_I$, let
\[
\Pi_\theta^{(N-\ell,\ell)_Q \otimes (N-j,j)_I} := \hat \Pi_\theta\big(\pro{Q}{(N-\ell,\ell)}\otimes \pro{I}{(N-j,j)}\big),
\]
which is the projector on the unique instance of $\theta$ in $(N-\ell,\ell)_Q\otimes(N-j,j)_I$. For example, 
for $\theta=(N-1,1)$, we have projectors $\Pi_{(N-1,1)}^{(N)_Q\otimes(N-1,1)_I}$, 
$\Pi_{(N-1,1)}^{(N-1,1)_Q\otimes(N)_I}$, $\Pi_{(N-1,1)}^{(N-1,1)_Q\otimes(N-1,1)_I}$, and
$\Pi_{(N-1,1)}^{(N-1,1)_Q\otimes(N-2,2)_I}$.

\paragraph{Approach 2: via spaces invariant under queries $\cO_V$ and $\cO_F$.} \label{ssec:app2}

Let us decompose $\cH$ as the direct sum of four subspaces, each invariant under the action of $U$,
$\cO_V$, and $\cO_F$. First, let $\cH=\cH^{(0)}\+\cH^{(1)}$, where  $\cH^{(0)}$ and $\cH^{(1)}$
 are spaces corresponding to, respectively, the subsets 
\[
  H_0=\big\{(x,z)\in X\times D\,:\,z_x=0\big\}
 \qquad\text{and}\qquad
  H_1=\big\{(x,z)\in X\times D\,:\,z_x=1\big\},
\]
of the standard basis $X\times D$. Let us further decompose $\cH^{(0)}$ and 
$\cH^{(1)}$ as
\[
 \cH^{(0)} = \cH^{(0,s)}\oplus \cH^{(0,t)}
 \qquad\text{and}\qquad
 \cH^{(1)} = \cH^{(1,s)}\oplus \cH^{(1,t)},
\]
where
\[
 \cH^{(0,s)} := \spn\Big\{\sum_{x\colon z_x=0}|x,z\rangle\,:\,z\in D\Big\}
 \qquad\text{and}\qquad
\cH^{(1,s)} := \spn\Big\{\sum_{x\colon z_x=1}|x,z\rangle\,:\,z\in D\Big\},
\]
and $\cH^{(0,t)}:=\cH^{(0)}\cap(\cH^{(0,s)})^\bot$ and $\cH^{(1,t)}:=\cH^{(1)}\cap(\cH^{(1,s)})^\bot$.

Note that, for a given $z$, $\sum_{z\colon z_x=1}|x\rangle=\sqrt{k}\,|\Psi(z)\rangle$. Therefore,
the query $\cO_F$ acts on $\cH^{(1,s)}$ as the minus identity and on $\cH^{(0)}\+\cH^{(1,t)}$ as the
identity. Meanwhile, $\cO_V$ acts on $\cH^{(1)}$ as the minus identity and on $\cH^{(0)}$ as the
identity.

For every superscript $\sigma\in\{(0),(1),(0,s),(0,t),(1,s),(1,t)\}$, let $\Pi^\sigma$ be 
the projector on the space $\cH^\sigma$, and let $U^\sigma$ be the restriction of $U$ to $\cH^\sigma$. Let $V^\sigma_\pi$ be $U^\sigma_\pi$ restricted to $\pi\in \S_k\times \S_{N-k}$ and the space
\[
  \tilde\cH^\sigma:=\cH^\sigma\cap(\cH_Q\otimes|1^k0^{N-k}\rangle_I).
\]
$V^\sigma$ is a representation of $\S_{k}\times \S_{N-k}$.
One can see that 
\[
{|\S_N|}\big/{|\S_k\times \S_{N-k}|}
 = {\dim\cH^\sigma}\big/{\dim\tilde\cH^\sigma},
\]
so we have $U^\sigma=V^\sigma\uparrow \S_N$. In order to see how $U^\sigma$ decomposes
 into irreps, we need to see how $V^\sigma$ decomposes into irreps, and then apply
 the Littlewood-Richardson rule.

We have $\dim\tilde\cH^{(0,s)}=\dim\tilde\cH^{(1,s)}=1$, and it is easy to see that $V^{(0,s)}$ and $V^{(1,s)}$ act trivially on $\tilde\cH^{(0,s)}$ and $\tilde\cH^{(1,s)}$, respectively. That is,
$V^{(0,s)}\cong V^{(1,s)}\cong (k)\times (N-k)$. Now, note that
\[
\tilde\cH^{(0)} = \spn\big\{|x\rangle\otimes|1^k0^{N-k}\rangle\,:\,x\in\{k+1,\ldots,N\}\big\}.
\]
The group $\S_k$ (in $\S_k\times \S_{N-k}$) acts trivially on $\tilde\cH^{(0)}$, while and the action of 
 $\S_{N-k}$ on $\tilde\cH^{(0)}$ defines the natural representation of $\S_{N-k}$. Hence,
 $V^{(0)}\cong(k)\times((N-k)\+(N-k-1,1))$, and $V^{(0)}=V^{(0,s)}\+V^{(0,t)}$, in turn, gives us $V^{(0,t)}\cong(k)\times(N-k-1,1)$.
 Analogously we obtain $V^{(1,t)}\cong(k-1,1)\times(N-k)$.
 The decompositions of $U^{(0,s)}=V^{(0,s)}\uparrow \S_N$ and $U^{(1,s)}=V^{(1,s)}\uparrow \S_N$ into irreps are given via (\ref{eqn:JohnsonIrreps}).
For $U^{(0,t)}=V^{(0,t)}\uparrow \S_N$ and $U^{(1,t)}=V^{(1,t)}\uparrow \S_N$, the Littlewood-Richardson rule gives us, respectively,
\begin{align*}
&((k)\times(N-k-1,1))\uparrow\S_N = (N-1,1) \+ (N-2,2) \+ (N-2,1,1) \\
&\hspace{40pt}\+ (N-3,3) \+ (N-3,2,1) \+ (N-4,4) \+ (N-4,3,1) \+\ldots\\
&\hspace{80pt}\+ (N-k,k) \+ (N-k,k-1,1) \+ (N-k-1,k+1) \+ (N-k-1,k,1)
\end{align*}
and
\begin{align*}
&((k-1,1)\times (N-k))\uparrow\S_N = (N-1,1) \+ (N-2,2) \+ (N-2,1,1) \\
& \hspace{40pt} \+ (N-3,3)  \+ (N-3,2,1) \+ (N-4,4) \+ (N-4,3,1)\\
& \hspace{80pt} \+\ldots\+ (N-k+1,k-1) \+ (N-k+1,k-2,1)\+ (N-k,k-1,1).
\end{align*}

Note that all $U^{(0,s)}$, $U^{(0,t)}$, $U^{(1,s)}$, and $U^{(1,t)}$ are multiplicity-free.
For a superscript $\sigma\in\{(0,s),(0,t),(1,s),(1,t)\}$ and an irrep $\theta$ present in $U^\sigma$, let $\Pi_\theta^\sigma := \hat \Pi_\theta \Pi^\sigma$, which is the projector on the unique instance of $\theta$ in $U^\sigma$.
For example, 
for $\theta=(N-1,1)$, we have all the projectors $\Pi_{(N-1,1)}^{(0,s)}$, $\Pi_{(N-1,1)}^{(0,t)}$, $\Pi_{(N-1,1)}^{(1,s)}$, and $\Pi_{(N-1,1)}^{(1,t)}$.

\subsection{Significant irreps} \label{sec:theirreps}

We noted in Section \ref{ssec:app2} that $\cO_F$ acts on $\cH^{(1,s)}$ as the minus identity and on $\cH^{(0)}\+\cH^{(1,t)}$ as the identity and $\cO_V$ acts on $\cH^{(1)}$ as the minus identity and on $\cH^{(0)}$ as the identity. This means that, if $\mu$ is a subspace of one of the spaces $\cH^{(0)}$, $\cH^{(1,s)}$, or $\cH^{(1,t)}$, then $\mu'=\mu$. In turn, 
 even if that is not the case, we still have that $U|_\mu$ and $U|_{\mu'}$ are isomorphic
 irreps.

Also note that 
\begin{equation}\label{eq:geqle_equal}
\big|\Tr(\Pi_{\cH_Q\otimes \cS_{\geq 2}}(\Pi_\mu-\Pi_{\mu'}))\big| = 
\big|\Tr(\Pi_{\cH_Q\otimes \cS_{< 2}}(\Pi_\mu-\Pi_{\mu'}))\big|.
\end{equation}
Hence we need to consider only $\mu$ such that $U|_\mu$ is isomorphic to 
 an irrep present in both
\[
\big((N)\oplus(N-1,1)\big)_Q\otimes \big((N)\oplus(N-1,1)\big)_I
\qquad\text{and}\qquad
\big((N)\oplus(N-1,1)\big)_Q\otimes \bigoplus_{j=2}^k(N-j,j)_I,
\]
as otherwise the expression (\ref{eq:conj3}) equals $0$.
From Section \ref{ssec:app1} we see that the only such irreps are $(N-1,1)$, $(N-2,2)$, 
 and $(N-2,1,1)$.

The representation $U$ contains four instances of irrep $(N-1,1)$, four of $(N-2,2)$, and 
two of $(N-2,1,1)$. Projectors on them, according to Approach 1 in Section \ref{ssec:app1}, are
\begin{equation}\label{eq:irrepsplit}
\begin{split}
& \Pi_{(N-1,1)}^{(N)_Q\otimes(N-1,1)_I},\,  \Pi_{(N-1,1)}^{(N-1,1)_Q\otimes(N)_I},\, \Pi_{(N-1,1)}^{(N-1,1)_Q\otimes(N-1,1)_I},\, \Pi_{(N-1,1)}^{(N-1,1)_Q\otimes(N-2,2)_I}, \\
& \Pi_{(N-2,2)}^{(N)_Q\otimes(N-2,2)_I},\,  \Pi_{(N-2,2)}^{(N-1,1)_Q\otimes(N-1,1)_I},\, \Pi_{(N-2,2)}^{(N-1,1)_Q\otimes(N-2,2)_I},\, \Pi_{(N-2,2)}^{(N-1,1)_Q\otimes(N-3,3)_I}, \\
& \Pi_{(N-2,1,1)}^{(N-1,1)_Q\otimes(N-1,1)_I},\,  \Pi_{(N-2,1,1)}^{(N-1,1)_Q\otimes(N-2,2)_I},
\end{split}
\end{equation}
or, according to Approach 2 in Section \ref{ssec:app2}, are
\begin{equation*}
\begin{split}
&\Pi_{(N-1,1)}^{(0,s)},\,\Pi_{(N-1,1)}^{(0,t)},\,\Pi_{(N-1,1)}^{(1,s)},\,\Pi_{(N-1,1)}^{(1,t)},\\
&\Pi_{(N-2,2)}^{(0,s)},\,\Pi_{(N-2,2)}^{(0,t)},\,\Pi_{(N-2,2)}^{(1,s)},\,\Pi_{(N-2,2)}^{(1,t)},\\
&\Pi_{(N-2,1,1)}^{(0,t)},\,\Pi_{(N-2,1,1)}^{(1,t)}.
\end{split}
\end{equation*}
One thing we can see from this right away is that, if $U|_\mu\cong(N-2,1,1)$, then
$\mu\subset\cH^{(0)}\oplus\cH^{(1,t)}$, so the application of the query $\cO_F$ fixes $\mu$,
and the expression (\ref{eq:conj3}) equals $0$.

\subsection{Necessary and sufficient conditions for irrep $(N-1,1)$}

We would like to know what are necessary and sufficient conditions for inequality (\ref{eq:conj3}) to
hold. 
First, let us consider the irrep $(N-1,1)$; later, the argument for the other two irreps will be very similar.

\paragraph{Transporters as the standard basis for irreps.}

For $a_1,a_2\in\{0,1\}$ and $b_1,b_2\in\{s,t\}$, let $\Pi_{(N-1,1)}^{(a_1,b_1)\leftarrow(a_2,b_2)}$
be, up to a global phase, the unique operator of rank $\dim(N-1,1)$ such that
\[
\big(U_{(N-1,1)}^{(a_1,b_1)}\big)_\pi = 
\Pi_{(N-1,1)}^{(a_1,b_1)\leftarrow(a_2,b_2)}
\big(U_{(N-1,1)}^{(a_2,b_2)}\big)_\pi
\big(\Pi_{(N-1,1)}^{(a_1,b_1)\leftarrow(a_2,b_2)}\big)^*.
\]
for all  $\pi\in \S_N$. 
We call  $\Pi_{(N-1,1)}^{(a_1,b_1)\leftarrow(a_2,b_2)}$ the {\em transporter} from irrep $U_{(N-1,1)}^{(a_2,b_2)}$ to $U_{(N-1,1)}^{(a_1,b_1)}$. One can see that all non-zero singular values of $\Pi_{(N-1,1)}^{(a_1,b_1)\leftarrow(a_2,b_2)}$ are $1$. We also have
\[
\Pi_{(N-1,1)}^{(a_1,b_1)\leftarrow(a_2,b_2)}
\big(\Pi_{(N-1,1)}^{(a_1,b_1)\leftarrow(a_2,b_2)}\big)^* = \Pi_{(N-1,1)}^{(a_1,b_1)}
,\qquad
\big(\Pi_{(N-1,1)}^{(a_1,b_1)\leftarrow(a_2,b_2)}\big)^*
\Pi_{(N-1,1)}^{(a_1,b_1)\leftarrow(a_2,b_2)} = \Pi_{(N-1,1)}^{(a_2,b_2)}.
\]
We can and we do choose global phases of these transporters in a consistent manner so that
\[
 \big(\Pi_{(N-1,1)}^{(a_1,b_1)\leftarrow(a_2,b_2)}\big)^*=\Pi_{(N-1,1)}^{(a_2,b_2)\leftarrow(a_1,b_1)}
\qquad\text{and}\qquad
\Pi_{(N-1,1)}^{(a_1,b_1)\leftarrow(a_2,b_2)}\Pi_{(N-1,1)}^{(a_2,b_2)\leftarrow(a_3,b_3)} =
\Pi_{(N-1,1)}^{(a_1,b_1)\leftarrow(a_3,b_3)}
\]
for all $a_3\in\{0,1\}$ and $b_3\in\{s,t\}$.
Together they imply $\Pi_{(N-1,1)}^{(a_1,b_1)\leftarrow(a_1,b_1)}=\Pi_{(N-1,1)}^{(a_1,b_1)}$.

Fix $a_3$ and $b_3$, and note that
\[
\big(\Pi_{(N-1,1)}^{(a_3,b_3)\leftarrow(a_1,b_1)}\big)^*
\Pi_{(N-1,1)}^{(a_3,b_3)\leftarrow(a_2,b_2)}
=
\Pi_{(N-1,1)}^{(a_1,b_1)\leftarrow(a_2,b_2)}
\]
is independent of our choice of $(a_3,b_3)$. Therefore, let us introduce the notation
\[
\Pi_{(N-1,1)}^{\leftarrow(a_1,b_1)} := \Pi_{(N-1,1)}^{(a_3,b_3)\leftarrow(a_1,b_1)}.
\]

\begin{fact}
Let $\mu\subset \cH$ be such that $U|_\mu$ is an irrep isomorphic to $(N-1,1)$ and let $\Pi_\mu$ be the projector on this subspace. There exists, up to a global phase, a unique vector $\gamma=(\gamma_{0,s},\gamma_{0,t},\gamma_{1,s},\gamma_{1,t})$ such that $\Pi_\mu=\bar \Pi_\gamma^*\bar \Pi_\gamma$, where 
\[
\bar \Pi_\gamma=
\big(
\gamma_{0,s}\Pi_{(N-1,1)}^{\leftarrow(0,s)} +
\gamma_{0,t}\Pi_{(N-1,1)}^{\leftarrow(0,t)} +
\gamma_{1,s}\Pi_{(N-1,1)}^{\leftarrow(1,s)} +
\gamma_{1,t}\Pi_{(N-1,1)}^{\leftarrow(1,t)}
\big).
\]
The norm of the vector $\gamma$ is $1$.
 The converse also holds: for any unit vector $\gamma$, 
$\bar \Pi_\gamma^*\bar \Pi_\gamma$ is a projector to an irrep isomorphic to $(N-1,1)$.
\end{fact}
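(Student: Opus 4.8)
The plan is to identify the $(N-1,1)$-isotypic component $\hat\Pi_{(N-1,1)}\cH$ with $W\otimes\C^4$, where $W$ is one fixed irreducible copy of $(N-1,1)$ and $\C^4$ is the four-dimensional multiplicity space, to realize $\bar\Pi_\gamma$ as $\Id_W\otimes\langle\bar\gamma|$ under this identification, and then to read everything off from Schur's lemma.

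First I would set up the identification. By Section~\ref{ssec:app2} the four subspaces $\cH_{(N-1,1)}^{(a,b)}:=\Pi_{(N-1,1)}^{(a,b)}\cH$ for $(a,b)\in\{0,1\}\times\{s,t\}$ are mutually orthogonal, each carries a single copy of $(N-1,1)$, and their direct sum is $\hat\Pi_{(N-1,1)}\cH$; each transporter $\Pi_{(N-1,1)}^{(a,b)\leftarrow(a',b')}$ is a partial isometry from $\cH_{(N-1,1)}^{(a',b')}$ onto $\cH_{(N-1,1)}^{(a,b)}$ intertwining the two copies, and the phase conventions fixed in the excerpt give $\Pi^{\leftarrow(a,b)}(\Pi^{\leftarrow(a,b)})^*=\Pi_{(N-1,1)}^{(a_3,b_3)}$, $(\Pi^{\leftarrow(a,b)})^*\Pi^{\leftarrow(a,b)}=\Pi_{(N-1,1)}^{(a,b)}$, and $(\Pi^{\leftarrow(a,b)})^*\Pi^{\leftarrow(a',b')}=\Pi_{(N-1,1)}^{(a,b)\leftarrow(a',b')}$. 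Hence the map $\Theta\colon\hat\Pi_{(N-1,1)}\cH\to W\otimes\C^4$, $W:=\cH_{(N-1,1)}^{(a_3,b_3)}$, given by $v\mapsto\sum_{a,b}(\Pi_{(N-1,1)}^{\leftarrow(a,b)}v)\otimes e_{a,b}$ (with $\{e_{a,b}\}$ the standard basis of $\C^4$) is a unitary that intertwines $U$ restricted to $\hat\Pi_{(N-1,1)}\cH$ with $U|_W\otimes\Id_{\C^4}$, and a direct computation gives $\bar\Pi_\gamma\Theta^{-1}=\Id_W\otimes\langle\bar\gamma|$, so $\Theta\,(\bar\Pi_\gamma^*\bar\Pi_\gamma)\,\Theta^{-1}=\Id_W\otimes|\bar\gamma\rangle\langle\bar\gamma|$.

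From this presentation the ``converse'' direction is immediate: for a unit vector $\gamma$, $|\bar\gamma\rangle\langle\bar\gamma|$ is the rank-one orthogonal projector onto the line $\C\bar\gamma\subset\C^4$, so $\Id_W\otimes|\bar\gamma\rangle\langle\bar\gamma|$ is the orthogonal projector onto $W\otimes\C\bar\gamma$, an $\S_N$-invariant subspace on which $U$ acts through $U|_W$, i.e.\ as an irrep isomorphic to $(N-1,1)$. The same computation also shows necessity of $\|\gamma\|=1$: using $\bar\Pi_\gamma\bar\Pi_\gamma^*=\|\gamma\|^2\,\Pi_{(N-1,1)}^{(a_3,b_3)}$ one gets $(\bar\Pi_\gamma^*\bar\Pi_\gamma)^2=\|\gamma\|^2\,\bar\Pi_\gamma^*\bar\Pi_\gamma$, so idempotency forces $\|\gamma\|^2=1$ when $\gamma\neq0$. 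For existence and uniqueness I would take $\mu\subset\cH$ with $U|_\mu$ an irrep isomorphic to $(N-1,1)$; then $\mu$ necessarily sits inside $\hat\Pi_{(N-1,1)}\cH$, so $\Pi_\mu$ is an $\S_N$-equivariant self-adjoint idempotent on $\hat\Pi_{(N-1,1)}\cH$, and transporting through $\Theta$ and applying Schur's lemma gives $\Theta\Pi_\mu\Theta^{-1}=\Id_W\otimes P$ for an orthogonal projector $P$ on $\C^4$. Since $\dim\mu=\dim(N-1,1)=\dim W$, the projector $P$ has rank one, hence $P=|\delta\rangle\langle\delta|$ for a unit vector $\delta$, unique up to a phase; setting $\gamma:=\bar\delta$ yields $\Pi_\mu=\bar\Pi_\gamma^*\bar\Pi_\gamma$ with $\|\gamma\|=1$, and two admissible $\gamma$'s determine the same ray $\C\gamma$ and so differ by a phase.

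The only genuinely delicate point is the first step: checking that the transporters, with the phase conventions from the excerpt, really assemble into a \emph{unitary} intertwiner $\Theta$ and that $\bar\Pi_\gamma^*\bar\Pi_\gamma=\sum_{a,b,a',b'}\bar\gamma_{a,b}\gamma_{a',b'}\,\Pi_{(N-1,1)}^{(a,b)\leftarrow(a',b')}$ has exactly the stated form. This is pure bookkeeping with the composition and adjoint rules for transporters together with the mutual orthogonality of the four copies; everything afterwards is Schur's lemma applied to a four-dimensional multiplicity space. The analogous facts stated next for $(N-2,2)$ and $(N-2,1,1)$ go through verbatim, with $\C^4$ replaced by the $4$- and $2$-dimensional multiplicity spaces identified in Section~\ref{sec:theirreps}.
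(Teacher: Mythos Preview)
The paper states this as a \emph{Fact} with no accompanying proof, so there is nothing to compare against; your argument is correct and is exactly the standard way to justify such a statement: identify the isotypic component $\hat\Pi_{(N-1,1)}\cH$ with $W\otimes\C^4$ via the transporters, observe that $\bar\Pi_\gamma^*\bar\Pi_\gamma$ corresponds to $\Id_W\otimes|\bar\gamma\rangle\langle\bar\gamma|$, and invoke Schur's lemma to conclude that any invariant subspace carrying a single copy of $(N-1,1)$ corresponds to a rank-one projector on the multiplicity space. Your bookkeeping with the transporter relations and the verification that $\|\gamma\|=1$ is forced by idempotency are both clean.
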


From now on, let us work in this basis of transporters, because in this basis, queries $\cO_V$ and $\cO_F$
restricted to $\hat \Pi_{(N-1,1)}$ are, respectively, 
\[
\cO_V|_{(N-1,1)}=
\left(
\begin{array}{cccc}
1&0&0&0\\
0&1&0&0\\
0&0&-1&0\\
0&0&0&-1
\end{array}
\right)
\qquad\text{and}\qquad
\cO_F|_{(N-1,1)}=
\left(
\begin{array}{cccc}
1&0&0&0\\
0&1&0& 0\\
0&0&-1&0\\
0&0&0&1
\end{array}
\right).
\]

\paragraph{Necessary and sufficient condition for the query $\cO_V$.}

In the basis of transporters we have  
\begin{equation}\label{eq:Pmu}
\Pi_\mu =
\left(
\begin{array}{c}
\gamma_{0,s}^* \\
\gamma_{0,t}^* \\
\gamma_{1,s}^* \\
\gamma_{1,t}^* \\ \end{array}
\right)\cdot
\left(
\begin{array}{cccc}
\gamma_{0,s} & \gamma_{0,t} &
\gamma_{1,s} & \gamma_{1,t}
\end{array}
\right) =
\left(
\begin{array}{cccc}
|\gamma_{0,s}|^2 & \gamma_{0,s}^*\gamma_{0,t} &
\gamma_{0,s}^*\gamma_{1,s} & \gamma_{0,s}^*\gamma_{1,t} \\
\gamma_{0,t}^*\gamma_{0,s} & |\gamma_{0,t}|^2 &
\gamma_{0,t}^*\gamma_{1,s} & \gamma_{0,t}^*\gamma_{1,t} \\
\gamma_{1,s}^*\gamma_{0,s} & \gamma_{1,s}^*\gamma_{0,t} &
|\gamma_{1,s}|^2 & \gamma_{1,s}^*\gamma_{1,t} \\
\gamma_{1,t}^*\gamma_{0,s} & \gamma_{1,t}^*\gamma_{0,t} &
\gamma_{1,t}^*\gamma_{1,s} & |\gamma_{1,t}|^2
\end{array}
\right),
\end{equation}
and note that
\[
|\gamma_{a,b}|^2=\Tr\big(\Pi_\mu \Pi_{(N-1,1)}^{(a,b)}\big)\Big/\dim(N-1,1).
\]
From (\ref{eq:irrepsplit}), one can see that 
\[
\hat \Pi_{(N-1,1)}\Pi_{\cH_Q\otimes \cS_{\geq 2}}=\Pi_{(N-1,1)}^{(N-1,1)_Q\otimes(N-2,2)_I}.
\]
Hence, for the space $\mu$, the desired inequality (\ref{eq:conj3}) becomes
\begin{equation} \label{eq:conj3redef}
\frac{1}{\dim(N-1,1)}\Big|\Tr\big(\Pi_{(N-1,1)}^{(N-1,1)_Q\otimes(N-2,2)_I}(\Pi_\mu-\Pi_{\mu'})\big)\Big| \leq O(\max\{\sqrt{k/N},\sqrt{1/k}\}).
\end{equation}
Let us first obtain a necessary condition if we want this to hold for all $\mu$.

In the same transporter basis, let
\begin{equation}\label{Pbeta}
\Pi_{(N-1,1)}^{(N-1,1)_Q\otimes(N-2,2)_I} =
\left(
\begin{array}{cccc}
|\beta_{0,s}|^2 & \beta_{0,s}^*\beta_{0,t} &
\beta_{0,s}^*\beta_{1,s} & \beta_{0,s}^*\beta_{1,t} \\
\beta_{0,t}^*\beta_{0,s} & |\beta_{0,t}|^2 &
\beta_{0,t}^*\beta_{1,s} & \beta_{0,t}^*\beta_{1,t} \\
\beta_{1,s}^*\beta_{0,s} & \beta_{1,s}^*\beta_{0,t} &
|\beta_{1,s}|^2 & \beta_{1,s}^*\beta_{1,t} \\
\beta_{1,t}^*\beta_{0,s} & \beta_{1,t}^*\beta_{0,t} &
\beta_{1,t}^*\beta_{1,s} & |\beta_{1,t}|^2
\end{array}
\right).
\end{equation}
For $b_0,b_1\in\{s,t\}$ and a phase $\phi\in\R$, define the space $\xi_{b_0,b_1,\phi}$ via the projector on it:
\[
  \Pi_{\xi_{b_0,b_1,\phi}} := \frac{1}{2}
\big(
\Pi^{(0,b_0)}_{(N-1,1)} + e^{i\phi}\Pi^{(0,b_0)\leftarrow(1,b_1)}_{(N-1,1)} + e^{-i\phi}\Pi^{(1,b_1)\leftarrow(0,b_0)}_{(N-1,1)} + \Pi^{(1,b_1)}_{(N-1,1)}
\big).
\]
We have
\[
  \Pi_{\xi_{b_0,b_1,\phi}}-\cO_V\Pi_{\xi_{b_0,b_1,\phi}}\cO_V = 
 e^{i\phi}\Pi^{(0,b_0)\leftarrow(1,b_1)}_{(N-1,1)} + e^{-i\phi}\Pi^{(1,b_1)\leftarrow(0,b_0)}_{(N-1,1)},
\]
so, for this space, the inequality (\ref{eq:conj3redef}) becomes
\[
  \big|e^{i\phi}\beta^*_{1,b_1}\beta_{0,b_0}+e^{-i\phi}\beta^*_{0,b_0}\beta_{1,b_1}\big|\leq
 O(\max\{\sqrt{k/N},\sqrt{1/k}\}).
\]
Since this has to hold for all $b_0$, $b_1$, and $\phi$ (in particular, consider $b_0$ and $b_1$ that maximize $|\beta_{1,b_1}^*\beta_{0,b_0}|$), we must have either
\begin{equation}\label{eq:cond:necc1}
|\beta_{1,s}|^2+|\beta_{1,t}|^2 \leq O(\max\{{k/N},{1/k}\})
\qquad\text{or}\qquad
|\beta_{1,s}|^2+|\beta_{1,t}|^2 \geq 1-O(\max\{{k/N},{1/k}\}),
\end{equation}
and note that
\[
|\beta_{1,s}|^2+|\beta_{1,t}|^2 
= {\Tr\big(\Pi_{(N-1,1)}^{(N-1,1)_Q\otimes(N-2,2)_I}\!\cdot\! \Pi^{(1)}\big)}\Big/{\dim(N-1,1)}.
\]

The condition (\ref{eq:cond:necc1}) is necessary, but it is also
sufficient for \eqref{eq:conj3redef}. Because, if it holds,
then $|\beta^*_{1,b_1}\beta_{0,b_0}|\leq O(\max\{\sqrt{k/N},\sqrt{1/k}\})$ for all $b_0,b_1\in\{s,t\}$ and, clearly, $|\gamma^*_{1,b_1}\gamma_{0,b_0}|\in O(1)$ for all unit vectors $\gamma$.
Therefore, if we plug (\ref{eq:Pmu}) and (\ref{Pbeta}) into (\ref{eq:conj3redef}), the inequality
is satisfied.

\paragraph{Necessary and sufficient condition for the query $\cO_F$.}

Almost identical analysis shows that, in order for \autoref{lem:conj3} to hold when $U|_\mu$ is
isomorphic to $(N-1,1)$ and we apply $\cO_F$, it is necessary and sufficient that
\begin{equation}\label{eq:cond:necc2}
|\beta_{1,s}|^2 \leq O(\max\{{k/N},{1/k}\})
\qquad\text{or}\qquad
|\beta_{1,s}|^2 \geq 1-O(\max\{{k/N},{1/k}\}).
\end{equation}
Note that
\[
|\beta_{1,s}|^2 
= {\Tr\big(\Pi_{(N-1,1)}^{(N-1,1)_Q\otimes(N-2,2)_I}\!\cdot\! \Pi^{(1,s)}_{(N-1,1)}\big)}\Big/{\dim(N-1,1)}.
\]

\subsection{Conditions for irreps $(N-2,2)$ and $(N-2,1,1)$}

For irreps $(N-2,2)$ and $(N-2,1,1)$, let us exploit equation (\ref{eq:geqle_equal}). Mainly, we do that because the space $\cH_Q\otimes \cS_{\geq 2}$ contains three instances of irrep $(N-2,2)$, 
while $\cH_Q\otimes \cS_{< 2}$ contains only one.
From (\ref{eq:irrepsplit}) we get 
\[
\hat \Pi_{(N-2,2)}\Pi_{\cH_Q\otimes \cS_{< 2}}=\Pi_{(N-2,2)}^{(N-1,1)_Q\otimes(N-1,1)_I}
\qquad\text{and}\qquad
\hat \Pi_{(N-2,1,1)}\Pi_{\cH_Q\otimes \cS_{< 2}}=\Pi_{(N-2,1,1)}^{(N-1,1)_Q\otimes(N-1,1)_I}.
\]

\paragraph{Condition for the query $\cO_V$.}

An analysis analogous to that of the irrep $(N-1,1)$ shows that, in order for the desired inequality (\ref{eq:conj3}) to hold for query $\cO_V$ and irreps $(N-2,2)$ and $(N-2,1,1)$, it is sufficient to have
\[
\frac{\Tr\big(\Pi_{(N-2,2)}^{(N-1,1)_Q\otimes(N-1,1)_I}\!\cdot\! \Pi^{(1)}\big)}{\dim(N-2,2)}\leq O(k/N)
\qquad\text{and}\qquad
\frac{\Tr\big(\Pi_{(N-2,1,1)}^{(N-1,1)_Q\otimes(N-1,1)_I}\!\cdot\! \Pi^{(1)}\big)}{\dim(N-2,1,1)}\leq O(k/N).
\]
Let us prove this. 
Consider irrep $(N-2,2)$ and the hook-length formula gives us $\dim(N-2,2)=N(N-3)/2$. We have
\[
\Tr\big(\Pi_{(N-2,2)}^{(N-1,1)_Q\otimes(N-1,1)_I}\!\cdot\! \Pi^{(1)}\big) \leq
\Tr\big((\Pi^{(N-1,1)}_Q\otimes \Pi^{(N-1,1)}_I)\!\cdot\! \Pi^{(1)}\big),
\]
and we can evaluate the right hand side of this exactly. $\Pi^{(1)}$ is diagonal (in the standard basis),
and, on the diagonal, it has $(N-k)\binom{N}{k}$ zeros and $k\binom{N}{k}$ ones. The diagonal
entries of $\Pi^{(N-1,1)}_Q$ are all the same and equal to $\frac{N-1}{N}$. The diagonal entries of 
$\Pi^{(N-1,1)}_I$ are also all the same, because $\Pi^{(N-1,1)}_I$ projects to an eigenspace of the Johnson scheme. More precisely, we have $\Tr(\Pi^{(N-1,1)}_I)=\dim(N-1,1)=N-1$, %
therefore the diagonal entries of $\Pi^{(N-1,1)}_I$ are $(N-1)/\binom{N}{k}$. Hence, the diagonal entries of $\Pi^{(N-1,1)}_Q\otimes \Pi^{(N-1,1)}_I$ are
$(N-1)^2/(N\binom{N}{k})$, implying that
\[
\Tr\big((\Pi^{(N-1,1)}_Q\otimes \Pi^{(N-1,1)}_I)\Pi^{(1)}\big) = \frac{k(N-1)^2}{N}
\]
and, in turn,
\[
\frac{\Tr\big(\Pi_{(N-2,2)}^{(N-1,1)_Q\otimes(N-1,1)_I}\Pi^{(1)}\big)}{\dim(N-2,2)}\leq
\frac{2k(N-1)^2}{N^2(N-3)}\in O(k/N)
\]
as required. The same argument works for irrep $(N-2,1,1)$ as, by the hook-length formula, 
$\dim(N-2,1,1)=(N-1)(N-2)/2=\dim(N-2,2)+1$.

\paragraph{Condition for the query $\cO_F$.}

As we mentioned in the very end of Section \ref{sec:theirreps}, $\cO_F$ affects no space $\mu$ such that 
$U|\mu$ is isomorphic to irrep $(N-2,1,1)$. However, the following argument for irrep $(N-2,2)$ actually works for $(N-2,1,1)$ as well. We have
\[
\frac{\Tr\big(\Pi_{(N-2,2)}^{(N-1,1)_Q\otimes(N-1,1)_I}\Pi^{(1,s)}\big)}{\dim(N-2,2)}\leq
\frac{\Tr\big(\Pi_{(N-2,2)}^{(N-1,1)_Q\otimes(N-1,1)_I}\Pi^{(1)}\big)}{\dim(N-2,2)}\leq
 O(k/N),
\]
which, similarly to the condition (\ref{eq:cond:necc2}) for irrep $(N-1,1)$, is sufficient to show that
\autoref{lem:conj3} holds for irrep $(N-2,2)$ and the query $\cO_F$.

\subsection{Solution for irrep $(N-1,1)$}

Recall that conditions (\ref{eq:cond:necc1}) and (\ref{eq:cond:necc2}) are sufficient for \autoref{lem:conj3} to hold for the queries $\cO_V$ and $\cO_F$, respectively. Hence, it suffices for us to show that
\begin{multline*}
\frac{\Tr\big(\Pi_{(N-1,1)}^{(N-1,1)_Q\otimes(N-2,2)_I}\!\cdot\! \Pi^{(1)}\big)}{\dim(N-1,1)} \geq
\frac{\Tr\big(\Pi_{(N-1,1)}^{(N-1,1)_Q\otimes(N-2,2)_I}\!\cdot\! \Pi^{(1,s)}_{(N-1,1)}\big)}{\dim(N-1,1)}
=\\
=
\frac{k-1}{k}\!\cdot\!\frac{N(N-k-1)}{(N-1)(N-2)}
\geq 1-O(\max\{k/N,1/k\}).
\end{multline*}
It is easy to see that both inequalities in this expression hold, and we need to concern ourselves only with the equality in the middle.

Notice that
\[
\Pi_{(N-1,1)}^{(N-1,1)_Q\otimes(N-2,2)_I}\!\cdot\! \Pi^{(1,s)}_{(N-1,1)}
=
(\Id_{Q}\otimes \Pi_I^{(N-2,2)})\!\cdot\! \Pi^{(1,s)}_{(N-1,1)},
\]
and let us evaluate the trace of the latter. We briefly mentioned before that $\Pi_I^{(N)}$, $\Pi_I^{(N-1,1)}$, \ldots, $\Pi_I^{(N-k,k)}$ are orthogonal projectors on the eigenspaces
of the Johnson scheme. Let us now use this fact.

\paragraph{Johnson scheme on $\cH_I$.}

For any two strings $z,z'\in D$, let $|z-z'|$ be the half of the Hamming distance between them (the Hamming distance between them is an even number in the range $\{0,2,4,\ldots,2k\}$).
For every $i\in\{0,1,\ldots,k\}$, let 
\[
   A^I_i=\sum_{\substack{z,z'\in D\\|z-z'|=i}}|z\rangle\langle z'|,
\]
which is a $01$-matrix in the standard basis of $\cH_I$. Matrices $A_0^I,A_1^I,\ldots,A_k^I$ form
an association scheme known as the Johnson scheme (see \cite[Chapter 7]{godsil:assoc1}).

There are matrices $C_0^I,C_1^I,\ldots,C_k^I$ of the same dimensions as $A_i$ that satisfy
\begin{equation}\label{eq:CandA}
C_j^I=\sum_{i=0}^{k-j}\binom{k-i}{j}A_i\quad\text{for all }j
\qquad\text{and}\qquad
A_i^I=\sum_{j=k-i}^{k}(-1)^{j-k+i}\binom{j}{k-i}C_j\quad\text{for all }i.
\end{equation}
These matrices $C_j^I$ simplify the calculation of the eigenvalues of $A_i^I$, as, for all $j\in\{0,1,\ldots,k\}$, we have
\begin{equation}\label{eq:CviaP}
C_j^I=\sum_{h=0}^{j}\binom{N-j-h}{N-k-h}\binom{k-h}{j-h}\Pi_I^{(N-h,h)}\quad\text{for all }j.
\end{equation}
Hence, we can express $A_i^I$ uniquely as a linear combination of orthogonal projectors $\Pi_I^{(N-h,h)}$, and
the coefficients corresponding to these projectors are the eigenvalues of $A_i^I$.

Here, however, we are interested in the opposite: expressing $\Pi_I^{(N-h,h)}$ as a linear combination
of $A_i^I$. From (\ref{eq:CviaP}) one can see that
\begin{equation}\label{eq:PviaC}
\Pi_I^{(N-h,h)} = (N-2h+1)\sum_{j=0}^h(-1)^{j-h}\frac{\binom{k-j}{h-j}}{(k-j+1)\binom{N-j-h+1}{N-k-h}}C^I_j
\end{equation}
for $h=0,1,2$. %
 We are interested particularly in $\Pi_I^{(N-2,2)}$, and from (\ref{eq:PviaC}) and (\ref{eq:CandA}) we get
\begin{equation} \label{PAssoc1}
\Pi_I^{(N-2,2)} = \frac{1}{\binom{N-4}{k-2}}\sum_{i=0}^{k}
\bigg(
\binom{k-i}{2}-\frac{(k-1)^2}{N-2}(k-i)+\frac{k^2(k-1)^2}{2(N-1)(N-2)}
\bigg) A_i^I.
\end{equation}

\paragraph{Johnson scheme on $\cH^{(1,s)}$.}

Recall that, for $z\in D$, we have $|\Psi(z)\rangle=\sum_{x\,:\,z_x=1}|x\rangle/\sqrt{k}$, and let us define
\[
 A_i^{(1,s)} = \sum_{\substack{z,z'\in D\\|z-z'|=i}}|\Psi(z),z\rangle\langle \Psi(z'),z'|
\]
for all $i\in\{0,1,\ldots,k\}$. The matrices $A_i^I$ and $A_i^{(1,s)}$ have the same eigenvalues corresponding to the same irreps. Analogously to the space $\cH_I$, we can define matrices $C_j^{(1,s)}$ to the space $\cH^{(1,s)}$. From (\ref{eq:PviaC}) and (\ref{eq:CandA}) we get
\begin{equation} \label{PAssoc2}
 \Pi_{(N-1,1)}^{(1,s)} = \frac{1}{\binom{N-2}{k-1}}\sum_{i=0}^{k}
\bigg(
  (k-i)-\frac{k^2}{N}
\bigg) A_i^{(1,s)}.
\end{equation}

\paragraph{Both Johnson schemes together.}

Now that we have expressions for both $\Pi_I^{(N-2,2)}$ and $\Pi^{(1,s)}_{(N-1,1)}$, we can
 compute $\Tr\big((\Id_{Q}\otimes \Pi_I^{(N-2,2)})\!\cdot\! \Pi^{(1,s)}_{(N-1,1)}\big)$.
For all $i,i'\in\{0,1,\ldots,k\}$, we have
\begin{equation}\label{AOverlaps}
\Tr\big((\Id_{Q}\otimes A^I_{i})\!\cdot\! A^{(1,s)}_{i'}\big)
=\delta_{i,i'}\binom{N}{k}\binom{k}{i}\binom{N-k}{i}\frac{k-i}{k}.
\end{equation}
Indeed, it is easy to see that this trace is $0$ if $i\neq i'$, and for $i=i'$ we argue as follows. The matrix $A^I_{i}$ has $\binom{N}{k}$ rows, and each row has $\binom{k}{i}\binom{N-k}{i}$ entries $1$. That is, each $z\in D$ has exactly  $\binom{k}{i}\binom{N-k}{i}$ $z'\in D$ such that $|z-z'|=i$. And for such $z$ and $z'$, we have $\langle \psi_z|\psi_{z'}\rangle=(k-i)/k$.

Now, if we put (\ref{PAssoc1}), (\ref{PAssoc2}), and (\ref{AOverlaps}) together, we get
\begin{multline*}
\Tr\big(\Pi_{(N-1,1)}^{(N-1,1)_Q\otimes(N-2,2)_I}\!\cdot\! \Pi^{(1,s)}_{(N-1,1)}\big)
=
\Tr\big((\Id_{Q}\otimes \Pi_I^{(N-2,2)})\!\cdot\! \Pi^{(1,s)}_{(N-1,1)}\big) =\\
=
\sum_{i=0}^{k}
\frac{\Big((k-i)-\frac{k^2}{N}\Big)}
{\binom{N-2}{k-1}} %
\frac{\Big(\frac{(k-i)(k-i-1)}{2}-\frac{(k-1)^2}{N-2}(k-i) + \frac{k^2(k-1)^2}{2(N-1)(N-2)}\Big)}
{\binom{N-4}{k-2}} %
\binom{N}{k}\binom{k}{i}\binom{N-k}{i}\frac{k-i}{k},
\end{multline*}
which, by using the equality
\[
\sum_{i=0}^{k}\binom{k}{i}\binom{N-k}{i}\frac{(k-i)!}{(k-i-l)!}
=
\frac{k!}{(k-l)!}\binom{N-l}{N-k},
\]can be shown to be equal to $\frac{k-1}{k}\!\cdot\!\frac{N(N-k-1)}{(N-2)}$. We get the desired
equality by dividing this by $\dim(N-1,1)=N-1$.

This concludes the proof of \autonameref{theo:pickone.sound}.

\section{Proofs for \autoref{sec:pickone}}

\subsection{Proof of \autoref{theo:pick1.complete}}
\label{sec:proof:theo:pick1.complete}

\begin{proof}[of \autoref{theo:pick1.complete}]
  Algorithm $E_1$ measures
  the first half of $\ketpsi$. This measurement yields a uniformly
  random outcome $y\in Y$ and leaves $\ketpsiy y$ in the second half.

  Let $\OF(y):=I-2\selfbutterpsiy y$. This notation is justified
  because $\OF(y)$ is how $\OF$ operates its the second input when the
  first input is $\ket y$. In particular, given $\OF$ we can implement
  the unitary $\OF(y)$.

  The algorithm $E_2$ is as follows:\\
  {%
    \begin{algorithm}[H]
      initialize register $X$ with $\ketpsiy y$ (given as input)\;
      \For{$i=1$ \KwTo $n+1$}{
        \For{\label{line:for.j}$j=1,\dots,\ceil{\log(\pi/2\sqrt{\delta_{\min}})}$}{
          \For{\label{line:for.k} $k=1$ \KwTo $2^{j-1}$}{
            let $U_P\ket x:=(-1)^{P(x)}\ket x$\;
            apply\label{line:of.up} $\OF(y)U_P$ to register $X$
          }
          let $P_X:=\sum_{P(x)=1}\selfbutter x$\;
          measure\label{line:measure.px} register $X$ with projector $P_X$, outcome $b$\;
          \If{$b=1$}{
            measure\label{line:measure.x} register $X$ in the computational basis, outcome $x$\;
            \Return $x$\label{line:returnx}
          }
        }
      }
    \end{algorithm}}
  We first analyze the one iteration of the $j$-loop (i.e., lines
  \ref{line:for.k}--\ref{line:returnx}).  Let
  $P_y:=\{x\in\Sy y:P(x)=1\}$ and $\Bar P_y:=\{x\in\Sy y:P(x)=0\}$.
  Let
  $\ketyes:=\sum_{x\in P_y}\sqrt{1/\abs{P_y}}\,\ket
  x$\symbolindexmark{\ketyes}
  and
  $\ketno:=\sum_{x\in \Bar P_y}\sqrt{1/\abs{\Bar P_y}}\,\ket
  x$\symbolindexmark{\ketno}.
  For any $\beta\in\setR$, let
  $\ket{\phi_\beta}:=\sin\beta\ketyes+\cos\beta\ketno$.  We check that
  $U_P\ket{\phi_\beta}=\ket{\phi_{-\beta}}$.  Let
  $\gamma:=\arcsin \sqrt{\abs{P_y}/\abs{\Sy y}}$. Then
  $\ketpsiy y=\sin\gamma\ketyes+\cos\gamma\ketno=\ket{\phi_\delta}$.  Hence
  $\OF(y)\ket{\phi_\beta}=\bigl(I-2\selfbutterpsiy
  y\bigr)\ket{\phi_\beta}=\ket{\phi_{-\beta+2\gamma}}$
  for all $\beta$.  Thus $\OF(y)U_P\ket{\phi_\beta}=\ket{\phi_{\beta+2\gamma}}$.

  Assume that at \autoref{line:for.k}, we have $X=\ket{\phi_\beta}$.  The
  innermost loop (lines \ref{line:for.k}--\ref{line:of.up}) thus
  yields $X=\ket{\phi_{\beta+2^j\gamma}}$. Since $\ketyes\in\im P_X$ and
  $\ketno$ is orthogonal to $\im P_X$, measuring $X$ using $P_X$
  (\autoref{line:measure.px}) yields $b=1$ with probability
  $(\sin(\beta+2^j\gamma))^2$. If $b=1$, $X$ has state $\ketyes$, and
  if $b=0$, $X$ has state $\ketno$. Thus, if $b=1$, measuring $X$ in
  the computational basis (\autoref{line:measure.x}) yields and returns
  $x\in\Sy y$ with $P(x)=1$. 

  Summarizing so far: one iteration of the $j$-loop (i.e., lines
  \ref{line:for.k}--\ref{line:returnx}) returns $x\in\Sy y$ with
  probability $(\sin(\beta+2^j\gamma))^2$ if $X$ has state
  $\ket{\phi_\beta}$ initially. And if no such $x$ is returned, $X$ is
  in state $\ketno=\ket{\psi_0}$.

  In the first execution of the $j$-loop, $X$ contains
  $\ketpsiy y=\ket{\phi_\gamma}$. Thus in all further executions of the
  $j$-loop, $X$ contains $\ketno=\ket{\phi_0}$ and the probability of
  returning $x\in\Sy y$, $P(x)=1$ in the $j$-th iteration is 
  $(\sin 2^j\gamma)^2=1-\bigr(\sin(
  \pi/2-2^j\gamma)\bigl)^2 \geq1-(\pi/2-2^j\gamma)^2$.

  Thus any but the first iteration of the $j$-loop (i.e., lines
  \ref{line:for.k}--\ref{line:returnx}) fails to return $x\in\Sy y$
  with probability at most:
  \[
  \chi:=\min_{1\leq j\leq\ceil{\log(\pi/2\sqrt{\delta_{\min}})}} (\pi/2-2^j\gamma)^2.
  \]
  We distinguish two cases:
  \begin{compactitem}
  \item Case $\gamma>\frac\pi4$: Since also $\gamma\leq 1$, we have that
    $\abs{\pi/2-2\gamma}\leq 2-\pi/2<\frac12$ and thus
    $\chi\leq(\pi/2-2\gamma)^2\leq(\tfrac12)^2\leq\frac12$.
  \item Case $\gamma\leq\frac\pi4$:
    For at least one $1\leq j\leq\ceil{\log(\pi/2\sqrt{\delta_{\min}})}$
    we have $2^j\gamma\leq\pi/2$. And for at least one such $j$ we
    have
    \[
    2^j\gamma\geq 2^{\log\pi/2\sqrt{\delta_{\min}}}\gamma =
    \frac{\pi\gamma}{2\sqrt{\delta_{\min}}} \geq
    \frac{\pi\arcsin\sqrt{\abs{P_y}/\abs{\Sy
          y}}}{2\sqrt{\abs{P_y}/\abs{\Sy y}}} \geq \pi/2.
    \]
    Thus the minimum ranges over some $j,j+1$ such that
    $2^j\gamma\leq\pi/2\leq 2^{j+1}\gamma$.  For any $a\geq 0$,
    $\min\{\abs{\frac\pi2-a},\abs{\frac\pi2-2a}\}\leq\frac\pi6$ if
    $a\leq \frac\pi2\leq 2a$. Thus $\chi\leq(\pi/6)^2\leq\frac12$.
  \end{compactitem}
  Hence in all cases, $\chi\leq\frac12$.

  The algorithm executes the $j$-loop $n+1$ times, and each but the
  first $j$-loop fails to return $x\in\Sy y$, $P(x)=1$ with probability at most
  $\chi\leq\frac12$. Thus the algorithm fails to return $x\in\Sy y$, $P(x)=1$
  with probability at most $\chi^n\leq 2^{-n}$.
\end{proof}

\subsection{Proof of \autoref{coro:pick.one.sound-oall}}
\label{app:proof:coro:pick.one.sound-oall}

\begin{proof}[of \autoref{coro:pick.one.sound-oall}]
  We first show \eqref{item:pick.one.w}.
  Let $P_A:=\Pr[w=w_0: w\ot A^{\Oall}]$.

  In the remainder of the proof, we will make the probabilistic choice
  of oracles explicit, as well as their use by $A$. That is,
  $P_A$ becomes:
  \begin{align*}
    P_A = \Pr[w=w_0:{}&w_0\otR\bits\ellrand,
    (\Sy\com)\ot\$,
    \OS\ot\$,
    \OP\ot\$,\\
    &w\ot A^{\OE,\OP,\OR,\OS,\OF,\Opsi,\OV}].
  \end{align*}
  Here we used the following shorthands: $(\Sy\com)\ot\$$ means that
  the sets $\Sy\com$
  are uniformly random subsets of $\bits\ellch\times\bits\ellresp$
  of size $\kk$.
  $\OS\ot\$$
  means that the oracle $\OS$ is randomly chosen as described in
  \autonameref{def:ora.dist}. $\OP\ot\$$ means that the oracle
  $\OP$
  is randomly chosen as described in \autoref{def:ora.dist}.  Since no
  random choices are involved in the definitions of
  $\OE,\OR,\OF,\Opsi,\OV$,
  we do not write their definitions explicitly here,
  cf.~\multiautoref{defn:twoValues,def:ora.dist}.

  \medskip\noindent\textbf{Removing $\pmb{\OP,\OR}$:} We now remove access to $\OP,\OR$.
  We then have
  \begin{gather}
    \label{eq:pap1}
    P_A \leq 2 (q_P+q_R+1)\sqrt{P_1}, \\
    P_1 := \Pr[w=w_0:w_0\otR\bits\ellrand,
    (\Sy\com)\ot\$,
    \OS\ot\$,
    w\ot A_1^{\OE,\OS,\OF,\Opsi,\OV}]
    \notag
  \end{gather}
  for some $A_1$ by
  \autoref{lemma:remove.wora} (with $\calO_1:=(\OP,\OR)$,
  $w:=w_0$, $\calO_2:=(\OE,\OS,\OF,\Opsi,\OV)$, $\forall w':f(\cdot,w'):=f(w',\cdot,\cdot,\cdot):=w'$).
  Here the algorithm $A_1$
  makes at most as many oracle queries as $A$
  to the remaining oracles. Note that we also removed $\OP\ot\$$
  because $\OP$ is not used any more.

  \medskip\noindent\textbf{Removing $\pmb{\OE}$:}   We now transform $A_1$ not to output $w$, but to output the two
  accepting conversations $(\com,\ch,\resp,\ch',\resp')$ needed for
  extraction. In the following, we write short $\mathsf{Collision}$ for
  $(\ch,\resp)\neq(\ch',\resp')\land(\ch,\resp),(\ch',\resp')\in\Sy\com$.
  \begin{gather}
    \label{eq:p1p2}
    P_1 \leq 
    2q_E\sqrt{P_2}+2^{-\ellrand},\\
    P_2 := 
    \Pr[\mathsf{Collision}:
    (\Sy\com)\ot\$,
    \OS\ot\$,
    (\com,\ch,\resp,\ch',\resp')\ot A_2^{\OS,\OF,\Opsi,\OV}]\notag
  \end{gather}
  for some $A_2$ by \autoref{lemma:remove.xora} (with $w:=w_0$, $\ell:=\ellrand$,
  $\calO_1:=\OE$,
  $\calO_2:=(\OS,\OF,\Opsi,\OV)$,
  and $X:=\{(\com,\ch,\resp,\ch',\resp'):\mathsf{Collision}\}$).
  Here $A_2$ makes at most as many oracles queries as $A_1$.
  We also removed the choice of $w_0$ from the formula because none of the remaining
  oracles depend on it.

  \medskip\noindent\textbf{Removing $\pmb{\Opsi}$:} Fix integers
  $n,m$. We determinate the actual values later.
  By \autonameref{theo:emulate.opsi}, we have:
  \begin{gather}
    \label{eq:p2p3}
    P_2 \leq 
    P_3 + O\Bigl(\frac{q_\Psi}{\sqrt n}+\frac{q_\Psi}{\sqrt m}\Bigr), \\
    P_3 := 
    \Pr[\mathsf{Collision}:
    (\Sy\com)\ot\$,
    \OS\ot\$,
    (\com,\ch,\resp,\ch',\resp')\ot A_3^{\OS,\OF,\OV}(\ket R)]\notag
  \end{gather}
  for some $A_3$. Here $A_3$ makes $q_S,q_F,q_V$ queries to $\OS,\OF,\OV$. And
  $\ket R:=\ketpsi^{\otimes
    m}\otimes\ket{\alpha_1}\otimes\dots\otimes \ket{\alpha_n}$ with 
  $\ket{\alpha_j}:=(\cos\frac{j\pi}{2n})\ketpsi+(\sin\frac{j\pi}{2n})\ketbot$.

  \medskip\noindent\textbf{Removing $\pmb{\OS}$:}
  For given choice of $(\Sy \com)_{\com\in \bits\ellcom}$, let $\calD_Y$ be the distribution of
  $\OS(z)$, i.e., $\calD$ picks $\com\otR \bits\ellcom$ and $(\ch,\resp)\otR\Sy\com$ and returns $(\com,\ch,\resp)$.

  Fix some integers $s$ (we determine the value of $s$ later).
  Then, for fixed choice of $(\Sy\com)_\com$ ($\OV,\OF$ are
  deterministic given $\Sy\com$ anyway), we have by
  \autonameref{theo:small.range} (with $H:=\OS$):
  \begin{align*}
    \Babs{
      &\Pr[\mathsf{Collision}:
      \OS\ot\$,
      (\com,\ch,\resp,\ch',\resp')\ot A_3^{\OS,\OF,\OV}(\ket R)]
      -{}\\
      &\Pr[\mathsf{Collision}:
      G\ot\$,
      (\com,\ch,\resp,\ch',\resp')\ot A_3^{G,\OF,\OV}(\ket R)]
    } \leq 14q^3/s.
  \end{align*}
  Here $G\ot\$$ means
  that $G$
  is chosen as: pick $(\com_1,\ch_1,\resp_1),\dots,(\com_s,\ch_s,\resp_s)\ot\calD_Y$,
  then for all $z$,
  pick $i_z\otR\{1,\dots,s\}$ and set $G(z):=(\com_{i_z},\ch_{i_z},\resp_{i_z})$.

  By averaging over the choice of $(\Sy \com)$, we then get that
  \begin{gather}
    \label{eq:p3p4}
    \abs{P_3-P_4}\leq 14q_S^3/s,\\
    P_4:=\Pr[\mathsf{Collision}:      
    (\Sy\com)\ot\$,
    G\ot\$,
    (\com,\ch,\resp,\ch',\resp')
    \ot A_3^{G,\OF,\OV}].\notag
  \end{gather}
  We construct the adversary $A_4$: Let
  $A_4^{\OF,\OV}(\com_1,\ch_1,\resp_1,\dots,\com_s,\ch_s,\resp_s,\ket R)$ pick $G$ himself as:
  for all $z$, $i_z\otR\{1,\dots,s\}$, $G(z):=(\com_{i_z},\ch_{i_z},\resp_{i_z})$. Then $A_4$
  executes $A_3^{G,\OF,\OV}(\ket R)$.  Then
  \begin{align*}
    P_4=\Pr[&\mathsf{Collision}:
    (\Sy\com)\ot\$,
    (\com_1,\ch_1,\resp_1),\dots,(\com_s,\ch_s,\resp_s)\ot\calD_Y,\\
    &(\com,\ch,\resp,\ch',\resp')\ot A_4^{\OF,\OV}(\com_1,\ch_1,\resp_1,\dots,\com_s,\ch_s,\resp_s,\ket R)].
  \end{align*}
  (Note that the distribution
  $\calD_Y$ depends on the choice of $\Sy y$.)

  Let $A_5^{\OF,\OV}(\ketpsi^{\otimes s},\ket R)$ be the algorithm that does the following: For
  each each $i$, it takes one copy of the state $\ketpsi$ (given as input) and
  measures it in the computational basis to get $(\com_i,\ch_i,\resp_i)$. Then
  $A_5$ runs $A_4^{\OF,\OV}(\com_1,\ch_1,\resp_1,\dots,\com_s,\ch_s,\resp_s,\ket R)$.

  By definition of $\ketpsi$ (\autoref{defn:twoValues}), each $(\com_i,\ch_i,\resp_i)$ chosen by $A_5$ is
  independently distributed according to $\calD_Y$. Thus 
  \begin{gather}\label{eq:p4p5}
    P_4=P_5,\\
    P_5:=\Pr[\mathsf{Collision}:
    (\Sy\com)\ot\$,
    (\com,\ch,\resp,\ch',\resp')\ot A_5^{\OF,\OV}(\ketpsi^{\otimes s},\ket R)].
    \notag
  \end{gather}

  \medskip\noindent\textbf{Converting the $\pmb{\ket{\alpha_i}}$:}  The
  adversary $A_5$ is almost an adversary as in
  \autonameref{theo:pickone.sound}, with one exception: the input to
  $A_5$ is a state
  $\ket R=\ketpsi^{\otimes m}\otimes\ket{\alpha_1}\otimes\dots\otimes
  \ket{\alpha_n}$
  with
  $\ket{\alpha_j}:=(\cos\frac{j\pi}{2n})\ketpsi+(\sin\frac{j\pi}{2n})\ketbot$.
  \autoref{theo:pickone.sound} on the other hand assumes an
  adversary that takes as input states in the span of $\ketpsi$ and
  $\ket{\Sigma\Phi}:=\sum_{\com,\ch,\resp}2^{-(\ellcom+\ellch+\ellresp)/2}\ket{\com,\ch,\resp}$.
  Let
  $\ket{\tilde\alpha_j}:=(\cos\frac{j\pi}{2n})\ketpsi+(\sin\frac{j\pi}{2n})\ket{\Sigma\Phi}$.
  $\ket{\Tilde R}=\ketpsi^{\otimes
    m}\otimes\ket{\tilde\alpha_1}\otimes\dots\otimes
  \ket{\tilde\alpha_n}$
  Let $U_\alpha\ket{\Sigma\Phi}:=\ketbot$ and
  $U_\alpha\ketbot:=\ket{\Sigma\Phi}$ and
  $U_\alpha\ket{\Phi}:=\ket{\Phi}$ for $\ket\Phi$ orthogonal to
  $\ketbot,\ket{\Sigma\Phi}$.

  Let $A_6^{\OF,\OV}(\ketpsi^{\otimes s},\ket{\Tilde R})$ be the
  algorithm that runs
  $A_5^{\OF,\OV}(\ketpsi^{\otimes s},(I^{\otimes m}\otimes
  U_\alpha^{\otimes n})\ket{\Tilde R})$. Then
  \begin{gather}
    \label{eq:p5p6.pre}
    P_5\leq P_6+
    \TD\bigl((I^{\otimes m}\otimes
    U_\alpha^{\otimes n})\ket{\Tilde R},\ \ket R\bigl),\\
    P_6:=\Pr[\mathsf{Collision}:
    (\Sy\com)\ot\$,
    (\com,\ch,\resp,\ch',\resp')\ot A_6^{\OF,\OV}(\ketpsi^{\otimes s},\ket{\Tilde R})].
    \notag
  \end{gather}
  Write $\ketpsi$ as $\ketpsi=\gamma\ket{\Sigma\Phi}+\delta\ket{\Sigma\Phi^\bot}$ with
  $\ket{\Sigma\Phi^\bot}$ a state orthogonal to
  $\ket{\Sigma\Phi}$. Write short $c:=(\cos\frac{j\pi}{2n})$ and $s:=(\sin\frac{j\pi}{2n})$.
  Then 
  \begin{align*}
    \chi &:=\bra{\alpha_j}U_\alpha\ket{\tilde\alpha_j} \\
    &=
    (c\ketpsi+s\ketbot)^\dagger U_\alpha
    (c\ketpsi+s\ket{\Sigma\Phi}) \\
    &=c^2\bra{\Sigma\Psi} U_\alpha\ketpsi + s^2\inner\bot\bot
    + cs\inner{\Sigma\Psi}{\bot} + cs\bra\bot U_\alpha\ketpsi \\
    &\starrel=c^2\abs{\delta^2} + s^2 + cs\cdot 0 + cs\gamma
    = c^2(1-\abs{\gamma^2}) + s^2 + cs\gamma
    = 1-c^2\abs{\gamma^2}+cs\gamma.
  \end{align*}
  In $(*)$ we use that
  $\ketbot,\ket{\Sigma\Phi},\ket{\Sigma\Phi^\bot}$ are orthogonal.
  Furthermore, 
  \begin{align*}
    \gamma=\inner{\Sigma\Phi}{\Sigma\Psi}
    &= \!\!\! \sum_{\substack{\com,\ch,\resp\\(\ch,\resp)\in\Sy\com}}  \!\!\!\!
    2^{-(\ellcom+\ellch+\ellresp)/2}\cdot 2^{-\ellcom/2}/\sqrt\kk \\
    &= 2^{\ellcom}\kk \cdot  2^{-(\ellcom+\ellch+\ellresp)/2}\cdot 2^{-\ellcom/2}/\sqrt\kk
    = {2^{-(\ellch+\ellresp)/2}}\sqrt{\kk} \geq 0.
  \end{align*}
  Thus
  \begin{align*}
    \chi = 1-c^2\abs{\gamma^2}+cs\gamma
    \geq 1-c^2\gamma^2 \geq 1-\gamma^2
  \end{align*}
  and hence
  \begin{align*}
    \TD(\ket{\alpha_j},U_\alpha\ket{\tilde\alpha_j})
    =
    \sqrt{1-\chi^2}
    \leq \sqrt{1-(1-\gamma^2)}
    \leq \sqrt{2\gamma^2}
    = {2^{-(\ellch+\ellresp-1)/2}}\sqrt{\kk}.
  \end{align*}
  With \eqref{eq:p5p6.pre}, we get
  \begin{align}
    P_5&\leq P_6+
    \TD\bigl((I^{\otimes m}\otimes
    U_\alpha^{\otimes n})\ket{\Tilde R},\ \ket R\bigr)    
    = P_6+\sum_{i=1}^n\TD(\ket{\alpha_j},U_\alpha\ket{\tilde\alpha_j})\notag\\
    &\leq P_6 + n{2^{-(\ellch+\ellresp-1)/2}}\sqrt{\kk}.
    \label{eq:p5p6}
  \end{align}
  
  \medskip\noindent\textbf{Wrapping up:} Note that $A_6$ is an
  adversary as in \autonameref{theo:pickone.sound}. Thus by
  \autoref{theo:pickone.sound} (with $h:=n+m+s$), we have:
  \begin{align}
    P_6 \leq 
O\left(
\frac{(n+m+s)}{2^{\ellcom/2}} + \frac{(q_V+q_F)^{1/2}\kk^{1/4}}{2^{(\ellch+\ellresp)/4}}
+ \frac{(q_V+q_F)^{1/2}}{\kk^{1/4}}
\right)
.    \label{eq:p6}
  \end{align}
  Let
    $n,m,s:=\floor{\min\{2^{\ellresp/4},2^{\ellcom/3}\}}$. Since
    $\ellresp$ and $\ellcom$ are superlogarithmic, $n,m,s$ are
    superpolynomial.  The first summand in \eqref{eq:p6} is negligible
    since $n+m+s\leq 3\cdot 2^{\ellcom/3}$. The second summand is
    negligible because $q_V,q_F$ are polynomially-bounded and
    $\kk=2^{\ellch+\floor{\ellresp/3}}$ and $\ellresp$ is
    superlogarithmic. The third summand is negligible because
    $q_V,q_F$ are polynomially-bounded and $\kk$ is superlogarithmic.
    Thus by \eqref{eq:p6}, $P_6$ is negligible.

    Using $n\leq 2^{\ellresp/4}$ and $k\leq 2^{\ellch+\ellresp/3}$, we
    get that the second summand in \eqref{eq:p5p6} is upper bounded by
    $2^{\ellresp/4}\cdot 2^{-\ellch/2-\ellresp/2-1/2}\cdot
    2^{\ellch/2+\ellresp/6}=2^{-1/2-\ellresp/12}$
    which is negligible. Since $P_6$ is negligible, \eqref{eq:p5p6}
    implies that $P_5$ is negligible. By \eqref{eq:p4p5}, $P_4$ is
    negligible. Since $q_S$ is polynomially-bounded and $s$ is
    superpolynomial, $14q_S^3/s$ is negligible. Thus by
    \eqref{eq:p3p4}, $P_3$ is negligible. Since $q_\Psi$ is
    polynomially-bounded and $n,m$ are superpolynomial, the second
    summand in \eqref{eq:p2p3} is negligible, so $P_2$ is negligible.
    Since $\ellrand$ is superlogarithmic, $q_E$ is
    polynomially-bounded, and $P_2$ is negligible, \eqref{eq:p1p2}
    implies that $P_1$ is negligible. And since $q_P,q_R$ are
    polynomially-bounded and $P_1$ is negligible, \eqref{eq:pap1}
    implies that $P_A$ is negligible.
    This shows part
    \eqref{item:pick.one.w} of the lemma.

  \bigskip

  We now show part \eqref{item:pick.one.xx} of the lemma. For an
  adversary $A$ outputting $(\com,\ch,\resp,\ch',\resp')$, let $B$ be
  the adversary that runs $(\com,\ch,\resp,\ch',\resp')\ot A$, then
  invokes $w\ot\OE(\com,\ch,\resp,\ch',\resp')$ and returns $w$. Note
  that $B$ makes $q_E+1$ queries to $\OE$, and the same number of
  queries to the other oracles as $A$.  By definition of $\OE$, we
  have
  \begin{align*}
    \Pr[&(\ch,\resp)\neq (\ch',\resp')\land
    (\ch,\resp),(\ch',\resp')\in\Sy \com:\\
    &(\com,\ch,\resp,\ch',\resp')\ot A^{\Oall}] \leq
    \Pr[w=w_0: w\ot B^{\Oall}].
  \end{align*}
  By \eqref{item:pick.one.w} the rhs is negligible, thus the lhs is,
  too. This proves  \eqref{item:pick.one.xx}.
\end{proof}

\section{Proofs for \autoref{sec:com}}

\subsection{Proof for \autoref{lemma:com.props}}
\label{sec:proof:lemma:com.props}

\begin{proof}[of \autoref{lemma:com.props}]
  \textbf{Perfect completeness:} By definition of $\OS$, we have that
  $x_i\in\Sy{y_i}$ for all $(y_i,x_i):=\OS(z_i)$. Hence
  $\OV(y_i,x_i)=1$ for all $i$. Thus $\COMverify(c,m,u)=1$ for
  $(c,u)\ot\COM(m)$. Hence we have perfect completeness.

  \medskip

  \noindent\textbf{Computational strict binding:} Consider an adversary $A^{\Oall}$
  against the computational strict binding property.  
  Let $\mu$ be the
  probability that $A^{\Oall}$ outputs $(c,m,u,m',u')$
  such that $(m,u)\neq(m',u')$ and $\ok=\ok'=1$ with
  $\ok=\COMverify(c,m,u)$ and $\ok'=\COMverify(c,m',u')$. We need to
  show that $\mu$ is negligible. Let
  $c=:(p_1,\dots,p_{\abs m},y_1,\dots,y_\abs m,b_1,\dots,b_{\abs m})$ and
  $u=:(x_1,\dots,x_{\abs m})$ and $u'=:(x'_1,\dots,x'_{\abs m})$. Then
  $(m,u)\neq(m',u')$ implies that for some $i$,
  $(x_i,m_i)\neq(x_i',m_i')$. If $x_i=x_i'$, then from $\ok=\ok'=1$ we
  have $m_i=b_i\oplus\ibit{p_i}(x_i)=b_i\oplus\ibit{p_i}(x_i')=m_i'$, in
  contradiction to $(x_i,m_i)\neq(x_i',m_i')$. So $x_i\neq x_i'$. 
  Furthermore, $\ok=\ok'=1$ implies that
  $\OV(y_i,x_i)=\OV(y_i,x_i')=1$, i.e., $x_i,x_i'\in\Sy{y_i}$.
  So $A^{\Oall}$ finds $x_i\neq x_i'$ with $x_i,x_i'\in\Sy{y_i}$ with
  probability $\mu$.
  By \autonameref{coro:pick.one.sound-oall}, this implies that $\mu$ is
  negligible. 

  \medskip

  \noindent\textbf{Computational binding:} This is implied by computational strict
  binding.

\medskip

  \noindent
  \textbf{Statistical hiding:} Fix $m,m'\in\bit$. Let $(y,x):=\OS(z)$,
  $z\otR\bits\ellrand$, $p\ot\{1,\dots,\ellch+\ellresp\}$,
  $b:=m\oplus\ibit p(x)$.  Let $\hat y\otR\ellcom$,
  $\hat x\otR\Sy{\hat y}$.  Define analogously
  $y',x',z',p',b',\hat y',\hat x'$.

Let $\calD$ be the
distribution that returns $(\hat y,\hat x)$ with
$\hat y\otR\bits\ellcom$, $\hat x\otR\Sy{\hat y}$.  Note that by
definition of $\OS$, $\OS(z)$ is initialized according to $\calD$.
By \autoref{lemma:empirical}, for
fixed choice of the sets $\Sy y$,
$\SD\bigl((\OS,y,x);(\OS,\hat y,\hat x)\bigr)\leq 2^{(\ellcom-\ellrand)/2-1}\sqrt\kk=:\mu_1$. (With
$X:=\bits\ellrand$, $Y:=\{(y,x):y\in\bits\ellcom,x\in\Sy y\}$, and
$\calO:=\OS$.)
Thus for random $\Sy y$ and random $p$, $\SD\bigl((\Oall,p,y,\ibit p(x)\oplus
m);(\Oall,p,\hat y,\ibit p(\hat
x)\oplus m)\bigr)\leq\mu_1$.
Let $b^*\otR\bit$.
For fixed $\hat y$ and $p$ and random sets $\Sy y$ and random $p$, $\SD\bigl((\Sy{\hat
  y},\hat y,\ibit p(\hat x));(\Sy{\hat y},\hat
y,b^*)\bigr)\leq1/2\sqrt k=:\mu_2$ by
\autoref{lemma:s.lsb}.
Thus for random $\hat y$ and $p$, $\SD\bigl((\Oall,p,\hat y,\ibit p(\hat x)\oplus
m);(\Oall,p,\hat y,b^*\oplus m)\bigr)\leq\mu_2$.
And $(\Oall,p,\hat y,b^*\oplus m)$ has the same distribution as 
$(\Oall,p,\hat y,b^*)$ since $b^*\in\bit$ is uniform and independently
chosen from $\Oall,\hat y$.
Hence $\SD\bigl((\Oall,p,y,\ibit p(x)\oplus m);(\Oall,p,\hat y,b^*)\bigr)\leq\mu_1+\mu_2$.
Analogously,  $\SD\bigl((\Oall,p',y',\ibit p(x')\oplus m');(\Oall,p',\hat
y',b^{*\prime})\bigr)\leq\mu_1+\mu_2$ with $b^{*\prime}\otR\bit$.
Since $(\Oall,p,\hat y,b^*)$ and $(\Oall,p',\hat y',b^{*\prime})$ have the same
distribution, this implies 
\begin{equation}\label{eq:O.lsb}
\SD\bigl((\Oall,p,y,\ibit p(x)\oplus m);(\Oall,p',y',\ibit p(x')\oplus m')\bigr)\leq2(\mu_1+\mu_2).
\end{equation}

Fix $m_1,m_2$ with
$\abs{m_1}=\abs{m_2}$. Let $z_i\otR\bits\ellrand$,  $(y_i,x_i):=\OS(z_i)$,
$p_i\otR\{1,\dots,\ellch+\ellresp\}$, $b_i:=m_i\oplus\ibit{p_i}(x_i)$ and analogously
$y_i',x_i',p_i',z_i',b_i'$. By induction over $n$, and using
\eqref{eq:O.lsb}, we get for all $1\leq n\leq \abs{m_1}$:
\begin{align*}
\SD\bigl(&(\Oall,(p_i)_{i=1,\dots,n},(y_i)_{i=1,\dots,n},(\ibit{p_i}(x_i)\oplus
m_i)_{i=1,\dots,n});\\
&(\Oall,(p_i')_{i=1,\dots,n},(y'_i)_{i=1,\dots,n},(\ibit{p_i}(x'_i)\oplus m'_i)_{i=1,\dots,n})\bigr)\leq2n(\mu_1+\mu_2).
\end{align*}
For $n=\abs{m_1}$, this becomes
\[
\SD\bigl((\Oall,c),(\Oall,c')\bigr)\leq 2{\abs{m_1}}(\mu_1+\mu_2)=:\mu\qquad
(\text{with }c\ot\COM(m),\ c'\ot\COM(m')).
\]
Since $\abs{m_1}$ is polynomially-bounded, and $\ellrand-\ellcom-\kk$ is
superlogarithmic, and $\kk$ is superpolynomial, $\mu$ is
negligible.
Thus $\COM$ is statistically hiding.
\end{proof}

\subsection{Proof of \autoref{lemma:com.attack}}
\label{app:proof:lemma:com.attack}

\begin{proof}[of \autoref{lemma:com.attack}]
  Our adversary is as follows: 
  \begin{compactitem}
  \item $B_1(\abs{m})$ invokes $E_1$ from~\autonameref{theo:pick1.complete}
    $\abs{m}$ times to get $(y_i,\ketpsiy{y_i})$ for $i=1,\dots,\abs m$.\footnote{$E_1$ expects an input $\ketpsi$.
      $\ketpsi$ can be computed using the oracle $\Opsi$.}
    Let $p_1,\dots,p_{\abs m}\otR\{1,\dots,\ellch+\ellresp\}$.
    Let $b_1,\dots,b_{\abs m}\otR\bit$. Output $c:=(p_1,\dots,p_{\abs m},y_1,\dots,y_\abs m,b_1,\dots,b_\abs m)$.
  \item $B_2(m)$: Let $P_i(x):=1$ iff $\ibit{p_i}(x)= b_i\oplus m_i$. Then,
    for each $i=1,\dots,\abs m$, $B_2$ invokes
    $E_2(n,\delta_{\min},y_i,\ketpsiy{y_i})$
    from~\autoref{theo:pick1.complete} with oracle access to $P:=P_i$
    and with $n:=\ellcom$ and $\delta_{\min}:=1/3$ to get
    $x_i$. Then $B_2$ outputs $u:=(x_1,\dots,x_n)$.
  \end{compactitem}
  By \autoref{theo:pick1.complete}, the probability that the $i$-th
  invocation of $E_2$ fails to return $x_i$ with $x_i\in\Sy y\land
  P_i(x_i)=1$ is at most:
  \begin{align*}
    f&:=2^{-\ellcom} + f_\delta
    \qquad\text{with}\qquad f_\delta:=\Pr\Bigl[\frac{\abs{\{x\in\Sy
        {y_i}:P_i(x)=1\}}}{\abs{\Sy{y_i}}}<\delta_{\min}\Bigr]
  \end{align*}

  Let $P'_0:=\{x:\ibit{p_i}(x)=0\}$ and $P'_1:=\{x:\ibit{p_i}(x)=1\}$.  Since
  $\Sy{y_i}\subseteq X$ is chosen uniformly at random, by
  \autoref{lemma:p.fraction} we have for $b=0,1$:
  \[
  f_\delta^b:=\Pr\bigl[\abs{\Sy{y_i}\cap P_b'}/\abs{\Sy{y_i}}<\delta_{\min}\bigr]
  \leq e^{-2k(\frac12-\delta_{\min})^2} = e^{-k/18}.
  \]
  Since $P_i=P_0'$ or $P_i=P_1'$, we have $f_\delta\leq f_\delta^0+f_\delta^1 \leq 2e^{-k/18}$.
  (Note: we cannot just apply \autoref{lemma:p.fraction} to $P_i$ because $P_i$ might not be independent of $\Sy{y_i}$.)
  
  The probability that $B_2$ fails to return $u$ with
  $\COMverify(c,m,u)$ is then $\abs mf$.  Hence
  $\varepsilon_{\COM}\geq 1-\abs mf \geq 1-
  \abs m2^{-\ellcom} + \abs m2e^{-k/18}$
  which is overwhelming since $\abs m$ is polynomial and $\ellcom$ and $k$
  are superlogarithmic.
\end{proof}

\section{Proofs for \autoref{sec:attack.sigma}}

\subsection{Proof of \autoref{lemma:sigma.sec}}
\label{app:proof:lemma:sigma.sec}

\begin{proof}[of \autoref{lemma:sigma.sec}]
  \textbf{Completeness}: We need to show that with overwhelming probability, \begin{inparaenum}[(a)]\item\label{com}
       $\COMverify(c_\ch,\resp_\ch,u_\ch)=1$ for
    $(c_\ch,u_\ch)\ot\COM(\resp_\ch)$ and \item \label{ov}
 $\OV(\com,\ch,\resp_\ch)=1$
    for uniform $\com,\ch$ and $\resp_\ch:=\OP(w,\com,\ch)$.
  \end{inparaenum} From the completeness of $\COM$
  (\autoref{lemma:com.props}), we immediately get \eqref{com}.
  We prove \eqref{ov}:
  By definition
  of $\OP$ and $\OV$, \eqref{ov} holds iff
  $\exists\resp.(\ch,\resp)\in\Sy\com$. We thus need to show that
  $p_1:=\Pr[\exists\resp.(\ch,\resp)\in\Sy\com]$ is overwhelming.
  $\Sy\com$ is a uniformly random subset of size
  $\kk=2^{\ellch+\floor{\ellresp}/3}$ of
  $X=\bits\ellch\times\bits\ellresp$. Thus $p_1$ is lower bounded by
  the probability $p_2$ that out of $\kk$ uniform independent samples
  from $\bits\ellch$, at least one is $\ch$. Thus
  $p_1\geq p_2=1-(1-2^{-\ellch})^\kk=1-\bigl((1-1/2^{\ellch})^{2^{\ellch}}\bigr)^{2^{\floor\ellresp/3}}\starrel\geq
  1-e^{-2^{\floor\ellresp/3}}$
  where $(*)$ uses the fact that $(1-1/n)^n$ converges from below to
  $1/e$ for integers $n\to\infty$.  Thus $p_1$ is overwhelming
  for superlogarithmic $\ellresp$, and the sigma-protocol is complete.

  \medskip

  \noindent\textbf{Commitment entropy:} We need to show that
  $\com^*\ot P_1(s,w)$ has superlogarithmic min-entropy. Since
  $\com^*=(\com,\dots)$, and $\com$ is uniformly distributed on
  $\bits\ellcom$, the min-entropy of $\com^*$ is at least $\ellcom$
  which is superlogarithmic.

  \medskip

  \noindent \textbf{Perfect special soundness}: Observe that
  $V(s,\com^*,\ch,\resp^*)=V(s,\com^*,\ch',\resp^{*\prime})=1$ and $\ch\neq\ch'$ implies
  $(\ch,\resp),(\ch',\resp')\in\Sy\com$ and $s=s_0$ and $\ch\neq\ch'$ which in turn implies
  $\OE(\com,\ch,\resp,\ch',\resp')=w_0$ and $(s,w_0)\in R$. Thus an extractor $E$ that
  just outputs   $\OE(\com,\ch,\resp,\ch',\resp')$ achieves perfect
  special soundness.

  \medskip

  \noindent\textbf{Computational strict soundness:} We need to show that a
  polynomial-time $A$ will only with negligible probability output
  $(\com^*,\ch,\resp^*,\resp^{*\prime})$ such that
  $\resp^*\neq\resp^{*\prime}$ and
  $V(s,\com^*,\ch,\resp^*)=V(s,\com^*,\ch,\resp^{*\prime})=1$. Assume
  $A$ outputs such a tuple with non-negligible probability. By
  definition of $V$, this implies that $\resp^*=(\resp,u)$,
  $\resp^{*\prime}=(\resp',u')$, and $\com^*$ contains $c_\ch$ such
  that $\COMverify(c_\ch,\resp,u)=1$ and
  $\COMverify(c_\ch,\resp',u')=1$. Since $\resp^*\neq\resp^{*\prime}$,
  this contradicts the computational strict binding property of
  $\COM,\COMverify$ (\autoref{lemma:com.props}). Thus the
  sigma-protocol has computational strict soundness.

  \medskip

  \noindent\textbf{Statistical HVZK}: Let $S$ be the
  simulator that picks $z\otR\bits\ellrand$, computes
  $(\com,\ch,\resp):=\OS(z)$, and $(c_c,u_c)\ot\COM(0^{\ellresp})$ for
  all $c\in\bits\ellch\setminus\{\ch\}$, and
  $(c_\ch,u_\ch)\ot\COM(\resp)$, and returns $(\com^*,\ch,\resp^*)$
  with $\com^*:=(\com,(c_\ch)_{\ch\in\bits\ellch})$ and
  $\resp^*:=(\resp_\ch,u_\ch)$.  We now compute the difference between
  the probabilities from the definition of statistical HVZK
  (\autoref{def:sigma.props}) for $(s,w)\in R$, i.e., for $s=s_0$ and
  $w=w_0$. In the calculation, $\com^*$ always stands short for
  $(\com,(c_\ch)_{\ch\in\bits\ellch})$ and $\resp^*$ for
  $(\resp_\ch,u_\ch)$.
  \begin{align*}
    &\Pr[b=1: \com^*\ot P_1(s,w), \ch\otR\bits\ellch, \resp^*\ot P_2(\ch),b\ot A(\com^*,\ch,\resp^*)]\\
    &=\Pr\bigl[b=1: \com\otR\bits\ellcom,\
    \ch\otR\bits\ellch,\
    [\text{for all $c\in\bits\ellch$}\colon
    z_c\otR\bits\ellrand,\\
    &\qquad\qquad    \resp_c:=\OP(w,\com,c,z_c),\
    (c_c,u_c)\ot\COM(\resp_c)], \
    b\ot A(\com^*,\ch,\resp^*)\bigr]\\
    &\stackrel{\varepsilon_0}\approx
    \Pr\bigl[b=1: \com\otR\bits\ellcom,\
    \ch\otR\bits\ellch,\
    [\text{for all $c\in\bits\ellch\setminus\{\ch\}$}\colon\\
    &\qquad\qquad (c_c,u_c)\ot\COM(0^{\ellresp})], \
    z_\ch\otR\bits\ellrand,
    \resp_\ch:=\OP(w,\com,\ch,z_\ch),\\
    &\qquad\qquad(c_\ch,u_\ch)\ot\COM(\resp_\ch),
    b\ot A(\com^*,\ch,\resp^*)\bigr]
  \end{align*}
  Here $a\stackrel{\varepsilon_0}\approx b$ means that
  $\abs{a-b}\leq\varepsilon_0$ where
  $\varepsilon_0:=2^{\ellch}\varepsilon_{\COM}$ and
  $\varepsilon_{\COM}$ is the statistical distance between commitments
  $\COM(\resp_c)$ and $\COM(0^{\ellresp})$. We have that
  $\varepsilon_{\COM}$ is negligible by \autoref{lemma:com.props}
  (statistical hiding of $\COM$).

  We abbreviate $[\text{for all $c\in\bits\ellch\setminus\{\ch\}$}\colon
     (c_c,u_c)\ot\COM(0^{\ellresp})]$ with $[\COM(0)]$
    and continue our calculation:
    \begin{align*}
    \cdots &=
    \Pr\bigl[b=1: \com\otR\bits\ellcom,\
    \ch\otR\bits\ellch,\
    [\COM(0)], \
    z_\ch\otR\bits\ellrand,\\
    &\qquad\qquad\resp_\ch:=\OP(w,\com,\ch,z_\ch),\
    (c_\ch,u_\ch)\ot\COM(\resp_\ch),\
    b\ot A(\com^*,\ch,\resp^*)\bigr]
    \\
    &\stackrel{\varepsilon_1}\approx
    \Pr\bigl[b=1: \com\otR\bits\ellcom,\
    \ch\otR\bits\ellch,\
    [\COM(0)], \\
    &\qquad\qquad\resp_\ch\ot\calD_{\com,\ch},\
    (c_\ch,u_\ch)\ot\COM(\resp_\ch),\
    b\ot A(\com^*,\ch,\resp^*)\bigr]
  \end{align*}
  Here $\calD_{\com,\ch}$ is the uniform distribution on
  $\{\resp:(\ch,\resp)\in\Sy\com\}$.  (Or, if that set is empty, 
  $\calD_{\com,\ch}$ assigns probability $1$ to $\bot$.)
    And
  $a\stackrel{\varepsilon_1}\approx b$ means that
  $\abs{a-b}\leq\varepsilon_1$ where
  $\varepsilon_1:=\frac12\sqrt{2^{\ellresp} / 2^{\ellrand}}$. The last
  equation follows from \autoref{lemma:empirical}, with
  $X:=\bits\ellrand$ and $Y:=\bits\ellresp$ and
  $\calD:=\calD_{\ch,\com}$, and using the fact that for all $z$,
  $\OP(w_0,\com,\ch,z)$ is chosen according to
  $\calD_{\ch,\com}$. (Note that the adversary $A$ has access to
  $\OP$, but that is covered since $\calO$ occur on both sides of the
  statistical distance in \autoref{lemma:empirical}.)
  We continue the computation:
  \begin{align*}
    \dots \stackrel{\varepsilon_2}\approx
    \Pr[b=1: {}& (\com,\ch,\resp_\ch)\otR\calD',\,
    [\COM(0)],\
    (c_\ch,u_\ch)\ot\COM(\resp_\ch),\\
    &b\ot A(\com^*,\ch,\resp^*)]
  \end{align*}
  Here $\calD'$ is the distribution resulting from choosing
  $\com\otR\bits\ellcom$, $(\ch,\resp)\otR\Sy\com$. By
  \autoref{lemma:pick.dep}, $\varepsilon_2\leq
  \frac{2\kk^2}{2^{\ellch+\ellresp}}
  +
  \frac{2^{\ellch/2}}{2\sqrt \kk}
  $. 
  We continue
  \begin{align*}
    \dots &\stackrel{\varepsilon_3}\approx
    \Pr[b=1: z\otR\bits\ellrand, (\com,\ch,\resp):=\OS(z),\\
    &\qquad\qquad[\COM(0)],\
    (c_\ch,u_\ch)\ot\COM(\resp_\ch),\
    b\ot A(\com^*,\ch,\resp^*)]
  \end{align*}
  Here
  $\varepsilon_3=\sqrt{(2^{\ellcom}\cdot k)/2^{\ellrand}}$.
  This follows from \autoref{lemma:empirical} with $\calD:=\calD'$ and
  $X:=\bits\ellrand$ and
  $Y:=\{(\com,\ch,\resp):(\ch,\resp)\in\Sy\com\}$. (Note that
  $\abs Y=2^{\ellcom}\cdot k$.) We continue
  \[
    \dots =
    \Pr[b=1: (\com^*,\ch,\resp^*):=S(s),b\ot A(\com^*,\ch,\resp^*)].
  \]
  Thus the difference of probabilities from the definition of
  statistical HVZK is bounded by
  $\varepsilon:=\varepsilon_0+\varepsilon_1+\varepsilon_2+\varepsilon_3$. And
  $\varepsilon$ is negligible since $\varepsilon_{\COM}$ is
  negligible, and
  $\kk=2^{\ellch+\floor{\ellresp}/3}$, and $\ellch$ is logarithmic, and
  $\ellresp,\ellcom$ are superlogarithmic, and $\ellrand=\ellcom+\ellresp$.
\end{proof}

\subsection{Proof of \autoref{lemma:sigma-break}}
\label{app:proof:lemma:sigma-break}

\begin{proof}[of \autoref{lemma:sigma-break}]
  According to~\autoref{def:total.break} (specialized to the case of
  the sigma-protocol from~\autoref{def:sigma}) we need to construct a
  polynomial-time quantum adversary $A_1,A_2,A_3$ such that:
  \begin{compactitem}
  \item Adversary success:
    \begin{align}
      P_A:=
      \Pr[&\ok=1:s\ot A_1, \com^*\ot A_2, \ch\otR\bits\ellch, 
      \notag\\&
      \resp^*\ot A_3(\ch), \ok=V(s,\com^*,\ch,\resp^*)] \notag\\
      = \Pr[&\ok_v=1\land\ok_c=1\land s=s_0:s\ot
      A_1,\bigr(\com,(c_\ch)_{\ch\in\bits\ellch}\bigl)\ot A_2, \notag\\&\ch\otR\bits\ellch,(\resp,u)\ot
      A_3(\ch),\ok_v:=\OV(\com,\ch,\resp),\notag\\&\ok_c=\COMverify(c_\ch,\resp,u)]
      \label{eq:advsucc.sigma}
    \end{align}
    is overwhelming.
  \item Extractor failure: For any polynomial-time quantum $E$ (with
    access to the final state of $A_1$),
    $\Pr[s=s_0,w=w_0:s\ot A_1,w\ot E(s)]$ is negligible.
  \end{compactitem}
  Our adversary is as follows: 
  \begin{compactitem}
  \item Let $B_1,B_2$ be the adversary from
    \autonameref{lemma:com.attack}. (That is, $B_1(\abs m)$ produces a
    fake commitment which $B_2(m)$ then opens to $m$.)    
  \item $A_1$ outputs $s_0$.
  \item $A_2$ invokes $E_1$ from~\autonameref{theo:pick1.complete} to get
     $(\com,\ketpsiy\com)$.\footnote{Using $\Opsi$ to get the input
       $\ketpsi$ for $E_1$.} 
     Then $A_2$ invokes $c_c\ot B_1({\ellresp})$ for all
     $c\in\bits\ellch$.
     $A_2$ outputs $\com^*:=(\com,(c_\ch)_{\ch\in\bits\ellch})$.
   \item Let $P_\ch(\ch',\resp'):=1$ iff $\ch'=\ch$.  $A_3(\ch)$
     invokes $E_2(n,\delta_{\min},\com,\ketpsiy\com)$ from~\autoref{theo:pick1.complete} with oracle
     access to $P:=P_\ch$ and with $n:=\ellcom$ and
     $\delta_{\min}:=2^{-\ellch-1}$ to get $\resp$. Then $A_3$ invokes
     $u\ot B_1(\resp)$ to get opening information for $c_\ch$.
     $A_3$ outputs $\resp^*:=(\resp,u)$.
  \end{compactitem}
  \noindent\textbf{Adversary success:} By  \autoref{lemma:com.attack}, $\COMverify(c_\ch,\resp,u)=1$ with
  overwhelming probability. Thus $\ok_c=1$ with overwhelming
  probability in \eqref{eq:advsucc.sigma}.

  By \autoref{theo:pick1.complete}, the probability that  $E_2$ fails to return $(\ch',\resp)$ with $(\ch',\resp)\in\Sy\com\land
  P_\ch(\ch',\resp)=1$ is at most:
  \begin{align*}
    f&:=2^{-\ellcom} + f_\delta
    \qquad\text{with}\qquad
    f_\delta:=\Pr\Bigl[\frac{\abs{\{(\ch',\resp)\in\Sy\com:
        P_\ch(\ch',\resp)=1\}}}{\abs{\Sy\com}}<\delta_{\min}\Bigr]
  \end{align*}
  Let $P':=\{x:P_\ch(x)=1\}$ and $X:=\bits\ellch\times\bits\ellcom$. 
  Then $\abs P'/\abs X=2^{-\ellch}$.
  Since
  $\Sy\com\subseteq X$ is chosen uniformly at random with $\abs{\Sy\com}=\kk$, by
  \autoref{lemma:p.fraction} we have:
  \[
  f_\delta=\Pr\bigl[\abs{\Sy\com\cap P'}/\abs{\Sy\com}<\delta_{\min}\bigr]
  \leq e^{-2k(2^{-\ellch}-\delta_{\min})^2} = e^{-k2^{-2\ellch-1}}.
  \]
  Thus $f\leq2^{-\ellcom} + e^{-k2^{-2\ellch-1}}$ is
  negligible since $\ellcom$ is superpolynomial, $\ellch$ logarithmic,
  and $k$ superpolynomial. Thus with overwhelming probability $E_2$
  returns $(\ch',\resp)\in\Sy\com$ with
  $P_\ch(\ch',\resp)=1$. $P_\ch(\ch',\resp)=1$ implies $\ch'=\ch$. Hence
  $(\ch,\resp)\in\Sy\com$, thus $\OV(\com,\ch,\resp)=1$, thus
  $\ok_v=1$ with overwhelming probability. Since $s=s_0$ by
  construction of $A_1$, it follows that $P_A$ is overwhelming. Thus
  we have adversary success.

  \medskip

  \noindent\textbf{Extractor failure:}   It remains to show extractor failure. Fix some polynomial-time
  $E$. 
  Since $A_1$ only returns a fixed $s_0$ and has a trivial final
  state, without loss of generality we can assume that $E$ does not
  use its input $s$ or $A_1$'s final state. Then 
  \begin{align*}
    P_E := \Pr[s=s_0,w=w_0:s\ot A_1,w\ot E^{\Oall}(s)]
    =
    \Pr[w=w_0:w\ot E^{\Oall}]
  \end{align*}
  is negligible by \autonameref{coro:pick.one.sound-oall}.
  This shows extractor failure.
\end{proof}

\subsection{Proof of \autoref{lemma:sigma.sec.comp}}
\label{app:proof:lemma:sigma.sec.comp}

\begin{proof}[of \autoref{lemma:sigma.sec.comp}]
  \textbf{Completeness} and \textbf{statistical HVZK} and
  \textbf{commitment entropy} hold trivially,
  because they only have to hold for $(s,w)\in R'=\varnothing$.
  \textbf{Computational strict soundness} is shown exactly as in the
  proof of \autonameref{lemma:sigma.sec}. (The definition of computational
  strict soundness is independent of the relation $R'$.)

  \medskip

  \noindent\textbf{Computational special soundness:} Let $\ESigma$ be an
  algorithm that always outputs $\bot$.
  By
  \autonameref{def:sigma.props} we have to show that the following probability is
  negligible:
  \begin{align*}
    P_S &:= \Pr[(s,w)\notin R'\land\ch\neq\ch'\land\ok=\ok'=1:
    (s,\com^*,\ch,\resp^*,\ch',\resp^{*\prime})\ot A^{\Oall},\\
    &\qquad\qquad \ok\ot V(s,\com^*,\ch,\resp^*),\ok'\ot
    V(s,\com^*,\ch',\resp^{*\prime}), \\
    &\qquad\qquad w\ot \ESigma(s,\com^*,\ch,\resp^*,\ch',\resp^{*\prime})]
    \\
    &\leq \Pr[\ch\neq\ch'\land(\ch,\resp),(\ch',\resp')\in\Sy\com:
    (com^*,\ch,\resp^*,\ch',\resp^{*\prime})\ot A^{\Oall}, \\
    &\qquad\qquad
    (\com,\dots):=\com^*,
    (\resp,\dots):=\resp^*,
    (\resp',\dots):=\resp^{*\prime}
    ]
  \end{align*}
  The right hand side is negligible by \autonameref{coro:pick.one.sound-oall}.
  Hence $P_S$ is negligible. This shows that the
  sigma-protocol from \autoref{def:sigma.comp} has computational
  special soundness.
\end{proof}

\subsection{Proof of \autoref{lemma:sigma-break.comp}}
\label{app:proof:lemma:sigma-break.comp}

\begin{proof}[of \autoref{lemma:sigma-break.comp}]
  By \autoref{def:total.break} (specialized to the sigma-protocol from
  \autoref{def:sigma.comp}), we need to construct a polynomial-time
  adversary $A_1,A_2,A_3$ such that:
  \begin{align*}
    P_A := \Pr[&\ok=1\ \land\ s\notin L_{R'}:
    s\ot
    A_1,\com^*\ot A_2,\ch\otR\bits\ellch,\resp^*\ot
    A_3(\ch),\\
    &\ok:=V(\com^*,\ch,\resp^*)
    ]\text{ is overwhelming}
  \end{align*}
  We use the same adversary $(A_1,A_2,A_3)$ as in the proof of
  \autoref{lemma:sigma-break}. Then $P_A$ here is the same as $P_A$ in
  the proof of \autoref{lemma:sigma-break}. (Here we additionally have
  the condition $s\notin L_{R'}$, but this condition is vacuously true
  since $R'=\varnothing$ and thus $L_{R'}=\varnothing$.) And in the
  proof of \autoref{lemma:sigma-break} we showed that $P_A$ is
  overwhelming.
\end{proof}

\section{Proofs for \autoref{sec:fiat}}

\subsection{Proof of \autoref{theo:know.break.fs}}
\label{app:proof:theo:know.break.fs}

\begin{lemma}[Attack on Fiat-Shamir]\label{lemma:fs-break}
  There exists a
  total knowledge break (\autoref{def:total.break}) against the
  Fiat-Shamir construction based on the sigma-protocol from
  \autoref{def:sigma}. (For any~$r$.)
\end{lemma}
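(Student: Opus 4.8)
The plan is to replay the sigma-protocol attack of \autoref{lemma:sigma-break} $r$ times in parallel, obtaining the challenges from a single classical query to the random oracle $H$. Concretely I would build a polynomial-time prover $A$ that fixes $s := s_0$ and proceeds as follows. For each $i = 1,\dots,r$ it runs $E_1$ of \autonameref{theo:pick1.complete} on $\ketpsi$ (obtained via $\Opsi$) to get a uniformly random $\com_i$ together with the state $\ketpsiy{\com_i}$, and it uses $B_1$ from \autonameref{lemma:com.attack} to produce fake sub-commitments $c_{i,\ch}$ for every $\ch\in\bits\ellch$; since $\ellch$ is logarithmic there are only polynomially many of these. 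This fixes the Fiat-Shamir first messages $\com^*_i := (\com_i,(c_{i,\ch})_{\ch})$ without yet committing to any response. The prover then makes the classical query $\ch_1\Vert\dots\Vert\ch_r := H(s_0,\com^*_1,\dots,\com^*_r)$. For each $i$ it invokes $E_2$ of \autoref{theo:pick1.complete} on $\ketpsiy{\com_i}$ with the predicate $P_i(\ch',\resp') := [\ch'=\ch_i]$ to obtain $(\ch_i,\resp_i)\in\Sy{\com_i}$, and runs $B_2(\resp_i)$ to open $c_{i,\ch_i}$ to $\resp_i$, obtaining opening information $u_i$. Finally $A$ outputs $s_0$ and $\pi := (\com^*_1,\dots,\com^*_r,\resp^*_1,\dots,\resp^*_r)$ with $\resp^*_i := (\resp_i,u_i)$.

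For \emph{adversary success} I would argue as in \autoref{lemma:sigma-break}, once per index. Each $\Sy{\com_i}$ is a uniformly random size-$\kk$ subset of $\bits\ellch\times\bits\ellresp$ (with $\kk$ superpolynomial) and $\ch_i$ is independent of it, so by \autoref{lemma:p.fraction} the fraction of $(\ch',\resp')\in\Sy{\com_i}$ with $\ch'=\ch_i$ is, except with negligible probability, at least $2^{-\ellch-1}$, which is inverse-polynomial since $\ellch$ is logarithmic; hence \autoref{theo:pick1.complete} guarantees $E_2$ returns a valid $(\ch_i,\resp_i)$ except with probability $2^{-\ellcom}$, and \autoref{lemma:com.attack} guarantees $c_{i,\ch_i}$ verifies with overwhelming probability. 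A union bound over the $r$ (polynomially many) indices shows $\VFS(s_0,\pi)=1$ with overwhelming probability; the verifier recomputes $H$ on the same input, so it sees the same $\ch_i$'s.

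For \emph{extractor failure} I would proceed exactly as at the end of \autoref{lemma:sigma-break}. Since $A$ only ever outputs the fixed $s_0$, the condition $(s,w)\in R=\{(s_0,w_0)\}$ is equivalent to $w=w_0$, so for any polynomial-time quantum $E$ (given $s_0$, the final state of $A$, and oracle access) the composite algorithm that runs $A$ and then $E$ is a polynomial-time oracle algorithm outputting a candidate $w$. Its only resource beyond those covered by \autoref{coro:pick.one.sound-oall} is access to $H$, but $H$ is chosen independently of $\Oall$, of $w_0$, and of $R$, so it can be simulated internally (a polynomial-query algorithm may replace $H$ by a suitable bounded-independence function with no change in behaviour), which reduces the claim to \autoref{coro:pick.one.sound-oall}\,\eqref{item:pick.one.w}: the probability of outputting $w_0$ is negligible.

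The main obstacle is a mild one and is essentially organizational: one must verify that the prover may legitimately fix all of $\com^*_1,\dots,\com^*_r$ — including the fake sub-commitments $c_{i,\ch}$, whose openings are decided only after $H$ has been queried — before learning any challenge, which is exactly the property of the commitment scheme of \autoref{def:com} used in \autoref{lemma:com.attack}; and that giving $H$ to the extractor does not strengthen it, which follows from independence. Apart from the extra $r$-fold union bound, the argument is a verbatim copy of the sigma-protocol attack.
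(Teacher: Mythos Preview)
Your proposal is correct and follows essentially the same route as the paper: the paper's adversary $\hat A_1,\hat A_2$ literally invokes the sigma-protocol adversary $A_2,A_3$ from \autoref{lemma:sigma-break} $r$ times in parallel, obtains all challenges from a single classical query to $H$ (so they are uniform and independent of the $\Sy{\com_i}$), and concludes adversary success via the union bound $1-\hat P_A\leq r(1-P_A)$. Your treatment of extractor failure is slightly more explicit than the paper's---the paper just notes that $\hat A_1$ never touches $H$, so extractor failure reduces verbatim to the sigma-break case (even if the extractor were allowed to choose $H$), whereas you spell out that $H$ can be internalised via bounded independence; both arguments are fine.
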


\begin{proof}
  According to~\autoref{def:total.break} (specialized to the case of
  the Fiat-Shamir construction based on the sigma-protocol from
  \autoref{def:sigma}) we need to construct a polynomial-time quantum
  adversary $\Hat A_1,\Hat A_2$ such that:
  \begin{compactitem}
  \item Adversary success:
    \begin{align*}
      \Hat P_A:=\Pr[&\forall i.\ok_i=1:
      s\ot \Hat A_1^{H,\Oall},\
      \bigl((\com^*_i)_i,(\resp^*_i)_i\bigr)\ot\Hat A_2^{H,\Oall},\penalty0
      \\&
      \ch_1\Vert\dots\Vert\ch_r:= H(s,(\com^*_i)_i),\penalty0
      \ok_i:=V(\com^*_i,\ch_i,\resp^*_i)]
    \end{align*}
    is overwhelming. Here $V$ is the verifier of the sigma-protocol (\autoref{def:sigma}).
  \item Extractor failure: For any polynomial-time quantum $E$ (with
    access to the final state of $\Hat A_1$),
    $\Pr[s=s_0,w=w_0:s\ot \Hat A_1^{H,\Oall},w\ot E^{H,\Oall}(s)]$ is
    negligible.
  \end{compactitem}
  Let $A_1,A_2,A_3$ be the adversary from the proof
  of~\autonameref{lemma:sigma-break}.
  Our adversary is then as follows: 
  \begin{compactitem}
  \item $\Hat A_1$ outputs $s_0$. (Identical to $A_1$.)
  \item $\Hat A_2$ invokes the adversary $A_2$ $r$ times to get
    $\com^*_1,\dots,\com^*_r$. Then $\Hat A_2$ computes
    $\ch_1\Vert\dots\Vert\ch_r:=H(s,(\com_i^*)_i)$. Then $\Hat A_2$
    invokes $A_3$ $r$
    times to get
    $\resp^*_1\ot A_3(\ch_1),\dots,\resp^*_r\ot A_3(\ch_r)$.
    Then $\Hat A_2$ outputs $((\com^*_i)_i,(\resp^*_i)_i)$.
  \end{compactitem}
  \noindent\textbf{Adversary success:} We have
  \begin{align*}
    1-\Hat P_A &= 
    \begin{aligned}[t]
      \Pr[&\exists i.\ok_i=0:
      s\ot A_1^{\Oall},\
      \forall i. \com^*_i\ot A_2^{\Oall},\\
      &\ch_1\Vert\dots\Vert\ch_r:= H(s,(\com^*_i)_i),\
      \forall i. \resp^*_i\ot A_3^{\Oall}(\ch_i),\\
      & \forall i. \ok_i\ot V(\com^*_i,\ch_i,\resp^*_i)] 
    \end{aligned}\\
    &\starrel= 
    \begin{aligned}[t]
      \Pr[&\exists i.\ok_i=0:
      s\ot A_1^{\Oall},\
      \forall i. \com^*_i\ot A_2^{\Oall},\\
      &\forall i.\ch_i\otR\bits\ellch,\
      \forall i. \resp^*_i\ot A_3^{\Oall}(\ch_i),\
      \\&
      \forall i. \ok_i\ot V(\com^*_i,\ch_i,\resp^*_i)]
    \end{aligned} \\
    &\starstarrel\leq  \vphantom{\sum^r} \smash{\sum_{i=1}^r}
    \begin{aligned}[t]
      \Pr[&\ok_i=0:
      s\ot A_1^{\Oall},\
      \com^*_i\ot A_2^{\Oall},\\
      &\ch_i\otR\bits\ellch,\
      \resp^*_i\ot A_3^{\Oall}(\ch_i),\
      \\&
      \ok_i\ot V(\com^*_i,\ch_i,\resp^*_i)]
    \end{aligned} \\
    &\tristarrel=   \sum_{i=1}^r (1-P_A) = r(1-P_A).
  \end{align*}
  Here $(*)$ uses the fact that $H$ is only queried once (classically), and thus
  $H(s,(\com^*_i)_i)$ is uniformly random. And $(**)$ is a union
  bound.  And $(*\mathord**)$ is by definition of $P_A$ in the proof
  of \autoref{lemma:sigma-break}. There is was also shown that $P_A$
  is overwhelming. Thus $1- \Hat P_A\leq r(1-P_A)$ is negligible and
  hence $\Hat P_A$ overwhelming. Thus we have adversary success.

  \medskip

  \noindent\textbf{Extractor failure}:
  Extractor failure was already shown in the proof of
  \autoref{lemma:sigma-break}. ($A_1$ here is defined exactly as $\Hat
  A_1$ in
  the proof of \autoref{lemma:sigma-break}, and the definition of
  extractor failure depends only on $\Hat A_1$, not on $\Hat A_2$ or the
  protocol being attacked.)

  Note that we have actually even shown extractor failure in the case
  that the extractor is allowed to choose the random oracle $H$ before
  and during the execution of $A_1$, because $A_1$ does not access~$H$.
\end{proof}

Now \autoref{theo:know.break.fs} follows from
\autonameref{lemma:sigma.sec} and \autoref{lemma:fs-break}. (The fact that the
Fiat-Shamir protocol is a classical argument of knowledge is shown in
\cite{faust12fiat}.\footnote{Actually, \cite{faust12fiat} requires
  perfect completeness instead of completeness as defined here (we
  allow a negligible error).  However, it is straightforward to see
  that their proof works unmodified for completeness as defined here.

  Also, \cite{faust12fiat} assumes that $\ellch$ is superlogarithmic,
  and considers the case $r=1$. But \cite{faust12fiat} can be applied
  to our formulation by first parallel composing the sigma-protocol
  $r$ times (yielding a protocol with challenges of length $r\ellch$),
  and then applying the result from \cite{faust12fiat}.})

\subsection{Proof of \autoref{theo:break.fs.comp}}
\label{app:proof:theo:break.fs.comp}

\begin{lemma}[Attack on Fiat-Shamir, computational]\label{lemma:fs-break.comp}
  Then there exists a
  total break (\autoref{def:total.break}) against the
  Fiat-Shamir construction based on the sigma-protocol from
  \autoref{def:sigma.comp}. (For any~$r$.)
\end{lemma}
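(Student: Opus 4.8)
The plan is to reuse the Fiat-Shamir adversary from the proof of \autonameref{lemma:fs-break}, in exactly the way that \autonameref{lemma:sigma-break.comp} reused the sigma-protocol adversary from \autonameref{lemma:sigma-break}. Concretely, I would take the pair $(\Hat A_1,\Hat A_2)$ constructed there: $\Hat A_1$ outputs the fixed statement $s_0$, while $\Hat A_2$ runs the sigma-protocol prover $A_2$ of \autoref{lemma:sigma-break} $r$ times to obtain commitments $\com^*_1,\dots,\com^*_r$, makes a single (classical) query $\ch_1\Vert\dots\Vert\ch_r:=H(s_0,(\com^*_i)_i)$, and then runs $A_3(\ch_i)$ for each $i$ to produce the responses $\resp^*_i$. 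Note that $A_1,A_2,A_3$ never reference a witness $w_0$ nor the relation oracle $\OR$ in an essential way ($A_1$ merely outputs a hard-coded $s_0$, and $A_2,A_3$ only invoke $\Opsi,\OV,\OF$ together with the commitment-opening adversary of \autonameref{lemma:com.attack}), so the same code is a legitimate attacker against the Fiat-Shamir instantiation of the sigma-protocol from \autoref{def:sigma.comp}.

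First I would observe that the verifier $\VFS$ built from the sigma-protocol of \autoref{def:sigma.comp} is literally the same algorithm as the one built from \autoref{def:sigma}: the two sigma-protocols differ only in the relation ($R'=\varnothing$ versus $R=\{(s_0,w_0)\}$), and the verification algorithm $V$ does not depend on the relation. Consequently the adversary-success probability $\Hat P_A$ is the very same quantity analysed in the proof of \autoref{lemma:fs-break}, and the union-bound computation there, namely $1-\Hat P_A\le r(1-P_A)$ where $P_A$ is the sigma-protocol success probability of \autoref{lemma:sigma-break} (shown to be overwhelming), carries over verbatim. Hence $\Hat P_A$ is overwhelming.

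It remains only to check the extra clause in \autoref{def:total.break} that distinguishes a total break from a total knowledge break: we need $\ok=1\ \land\ s\notin L_{R'}$ with overwhelming probability. Since $\Hat A_1$ always outputs $s_0$ and $R'=\varnothing$, we have $L_{R'}=\varnothing$, so $s_0\notin L_{R'}$ holds with probability $1$; the total-break probability therefore equals $\Hat P_A$ and is overwhelming. No extractor-failure argument (and hence no appeal to \autonameref{coro:pick.one.sound-oall}) is needed in this lemma — that ingredient enters only separately, in \autoref{lemma:sigma.sec.comp}, when establishing the protocol's stated security properties. There is essentially no obstacle here: the one point requiring care is confirming that the attack inherited from \autoref{lemma:fs-break} stays well-defined once $R$ is emptied, which is immediate from the observations above, so the proof is a short combination of \autoref{lemma:fs-break} and the reasoning of \autoref{lemma:sigma-break.comp}.
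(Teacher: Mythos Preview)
Your proposal is correct and follows essentially the same approach as the paper: reuse the adversary $(\Hat A_1,\Hat A_2)$ from \autoref{lemma:fs-break}, note that $\Hat P_A$ is the same quantity already shown overwhelming there, and observe that the extra condition $s\notin L_{R'}$ is vacuous since $R'=\varnothing$. The paper's own proof is in fact even terser than yours, simply pointing back to \autoref{lemma:fs-break} without re-explaining the adversary or the union bound.
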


\begin{proof}
  By \autoref{def:total.break} (specialized to the case of
  the Fiat-Shamir construction based on the sigma-protocol from
  \autoref{def:sigma.comp}), we need to construct a polynomial-time
  adversary $A_1,A_2$ such that:
    \begin{align*}
      \Hat P_A:=\Pr[&\forall i.\ok_i=1\land s\notin L_{R'}:
      s\ot \Hat A_1^{H,\Oall},\
      \bigl((\com^*_i)_i,(\resp^*_i)_i\bigr)\ot\Hat A_2^{H,\Oall},\penalty0
      \\&
      \ch_1\Vert\dots\Vert\ch_r:= H(s,(\com^*_i)_i),\penalty0
      \ok_i:=V(\com^*_i,\ch_i,\resp^*_i)]\text{ is overwhelming}
    \end{align*}
    Here $V$ is the verifier of the sigma-protocol (\autoref{def:sigma.comp}).
    
    We use the same adversary $(\Hat A_1,\Hat A_2)$ as in the proof of
    \autonameref{lemma:fs-break}. Then $\Hat P_A$ here is the same as
    $\Hat P_A$ in
    the proof of \autoref{lemma:fs-break}. (Here we additionally have
    the condition $s\notin L_{R'}$, but this condition is vacuously true
    since $R'=\varnothing$ and thus $L_{R'}=\varnothing$.) And in the
    proof of \autoref{lemma:fs-break} we showed that $\Hat P_A$ is
    overwhelming.
\end{proof}

Now \autoref{theo:break.fs.comp} follows from
\multiautoref{lemma:sigma.sec.comp,lemma:fs-break.comp}. (The fact that the
Fiat-Shamir protocol is a classical argument of knowledge is shown in
\cite{faust12fiat}.\footnote{Actually, \cite{faust12fiat}
  requires
  perfect special soundness
  instead of computational special soundness, as well as
  perfect completeness instead of completeness as defined here (we
  allow a negligible error).  However, it is
  straightforward to see that their proof works unmodified for
  computational special soundness and 
  completeness as defined here.

  Also, \cite{faust12fiat} assumes that $\ellch$ is superlogarithmic,
  and considers the case $r=1$. But \cite{faust12fiat} can be applied
  to our formulation by first parallel composing the sigma-protocol
  $r$ times (yielding a protocol with challenges of length $r\ellch$),
  and then applying the result from \cite{faust12fiat}.})

\section{Proofs for \autoref{sec:fischlin}}

\subsection{Proof of \autoref{theo:know.break.fischlin}}
\label{app:proof:theo:know.break.fischlin}

\begin{lemma}[Attack on Fischlin's construction]\label{lemma:fischlin-break}
  There exists a
  total knowledge break (\autoref{def:total.break}) against the
  Fischlin construction based on the sigma-protocol from
  \autonameref{def:sigma}.
\end{lemma}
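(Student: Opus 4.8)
The plan is to exhibit a total knowledge break $(A_1,A_2)$ as required by \autoref{def:total.break}, specialized to the Fischlin construction built on the sigma-protocol of \autoref{def:sigma}. The prover $A_1$ simply outputs the fixed statement $s_0$. The prover $A_2$, mimicking \autoref{lemma:sigma-break} and \autoref{lemma:fs-break}, proceeds in three stages. First, for $i=1,\dots,\fisr$ it prepares $\ketpsi$ with $\Opsi$ and runs the algorithm $E_1$ of \autoref{theo:pick1.complete} to obtain a uniformly random commitment $\com_i\in\bits\ellcom$ together with the state $\ketpsiy{\com_i}$. Second, it fixes the full sigma-protocol commitments $\com^*_1,\dots,\com^*_\fisr$, each of the form $\com^*_i=(\com_i,(c_{i,\ch})_{\ch\in\bits\ellch})$, so that the argument of the random oracle $H$ becomes fully determined. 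Third, for each $i$ it uses the Grover-type search $E_2$ of \autoref{theo:pick1.complete} on $\ketpsiy{\com_i}$ to find $(\ch_i,\resp_i)\in\Sy{\com_i}$ making the corresponding hash value $0$, and outputs the proof $\bigl((\com^*_i)_i,(\resp^*_i)_i\bigr)$ with $\resp^*_i:=(\resp_i,u_i)$ for a suitable opening $u_i$. Since the $\fisr$ searches act on independent states $\ketpsiy{\com_i}$, they do not interfere; if all of them succeed, every subproof $(\com^*_i,\ch_i,\resp^*_i)$ is accepting and $\sum_{i=1}^{\fisr} H(s_0,(\com^*_j)_j,i,\ch_i,\resp^*_i)=0\le\fisS$, so $\VFis$ accepts.

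The point that distinguishes Fischlin from Fiat--Shamir is the component $c_{i,\ch}$ and the response shape $\resp^*_i=(\resp_i,u_i)$, where $u_i$ must open $c_{i,\ch_i}$ to $\resp_i$. The predicate that $E_2$ tests must therefore, on input $(\ch,\resp)$, produce such an opening \emph{by a deterministic, reversible computation}: a nested Grover search is not available, since the reservoir of $\ketpsi$-copies is already used up in the first two stages. This is handled by a fixpoint property of the commitment scheme $\COM$ of \autoref{def:com}: $A_2$ sets $c_{i,\ch}:=c_i$ for all $\ch$, where $c_i$ is the commitment string consisting of $\ellresp$ blocks whose $j$-th block uses $\com_i$ as its $\OV$-index, $0$ as its masking bit, and, as its bit-position, the index $p_j$ for which $\ibit{p_j}$ applied to a pair $(\ch,\resp)\in\bits\ellch\times\bits\ellresp$ returns the $j$-th bit of $\resp$. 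Then, by \autoref{def:com}, for every $(\ch,\resp)\in\Sy{\com_i}$ the tuple $u:=((\ch,\resp),\dots,(\ch,\resp))$ of $\ellresp$ copies satisfies $\COMverify(c_i,\resp,u)=1$, since $\OV(\com_i,(\ch,\resp))=1$ and $\ibit{p_j}((\ch,\resp))=\resp_j=\resp_j\oplus 0$ for each $j$. Hence, once $(\com_i)_i$ is fixed, the opening $u$ is read off $(\ch,\resp)$ for free, and the search predicate $P_i(\ch,\resp):=\bigl[\,H\bigl(s_0,(\com^*_j)_j,i,\ch,(\resp,((\ch,\resp),\dots,(\ch,\resp)))\bigr)=0\,\bigr]$ is a classical, efficiently and reversibly computable function of $(\ch,\resp)$ using only oracle access to $H$.

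It then remains to verify adversary success, extractor failure, and polynomial running time. For the search, \autoref{theo:pick1.complete} requires that an inverse-polynomial fraction of $\Sy{\com_i}$ satisfy $P_i$. Since distinct $(\ch,\resp)$ give distinct inputs to $H$, the set of $(\ch,\resp)\in\bits\ellch\times\bits\ellresp$ with $H(\dots)=0$ has, except with negligible probability over $H$, density at least $\tfrac12 2^{-\fisb}$ (a Chernoff bound; $\fisb$ is logarithmic and $\bits\ellch\times\bits\ellresp$ is huge). Conditioned on that, since $\Sy{\com_i}$ is a uniformly random subset of size $\kk$ with $\kk$ superpolynomial, \autoref{lemma:p.fraction} shows that the fraction of $P_i$-good elements inside $\Sy{\com_i}$ is at least $\delta_{\min}:=\tfrac14 2^{-\fisb}$ except with negligible probability, and $\delta_{\min}$ is inverse-polynomial. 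Given this event, $E_2$ run with a superlogarithmic repetition parameter $n$ returns $(\ch_i,\resp_i)\in\Sy{\com_i}$ with $P_i(\ch_i,\resp_i)=1$ except with probability $2^{-n}$; a union bound over the polynomially many $i$ (and the corresponding concentration events) gives that all $\fisr$ searches succeed with overwhelming probability, which together with the fixpoint opening and $s=s_0$ makes $\VFis$ accept with overwhelming probability. The running time of $A_2$ is polynomial: $E_1,E_2$ are polynomial-time, and $1/\delta_{\min}$, $\fisr$, $2^{\ellch}$ and $\ellresp$ are all polynomially bounded. For extractor failure, $A_1$ outputs only the constant $s_0$ and has a trivial final state, so for any polynomial-time $E$ we have $\Pr[s=s_0\land w=w_0:s\ot A_1,w\ot E^{\Oall}(s)]\le\Pr[w=w_0:w\ot E^{\Oall}]$, which is negligible by \autoref{coro:pick.one.sound-oall}.

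The main obstacle in carrying this out is the second paragraph: finding the right fixpoint commitment so that the opening needed inside the verifier's response can be extracted from the searched-for value $(\ch,\resp)$ itself with no further quantum operation, thereby breaking the apparent circular dependency between $\com^*_i$ (which contains $c_{i,\ch}$) and the pair $(\ch_i,\resp_i)$ together with its opening $u_i$. Once this is in place, the remaining steps are a routine combination of \autoref{theo:pick1.complete}, \autoref{lemma:p.fraction} and \autoref{coro:pick.one.sound-oall}, essentially as in \autoref{lemma:fs-break}.
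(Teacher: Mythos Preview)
Your proof is correct and follows essentially the same approach as the paper: the same adversary, and in particular the same fixpoint trick for $\COM$ (your $c_i$ and $u$ are exactly the paper's $\COMstar(\com_i)$ and $\COMopenstar(\ch,\resp)$). The only difference is cosmetic: for the density of $P_i$-good elements in $\Sy{\com_i}$, the paper fixes $\Sy{\com_i}$ and applies Hoeffding directly over the randomness of $H$ (since each $(\ch,\resp)\in\Sy{\com_i}$ independently satisfies $P_i$ with probability $2^{-\fisb}$), whereas you take an unnecessary detour through a Chernoff bound on the full space followed by \autoref{lemma:p.fraction}; both arguments are valid.
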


\begin{proof}
  According to~\autonameref{def:total.break} (specialized to the case of
  Fischlin's construction based on the sigma-protocol from
  \autoref{def:sigma}) we need to construct a polynomial-time quantum
  adversary $A_1,A_2$ such that:
  \begin{compactitem}
  \item Adversary success:
  \begin{align}
    P_A:=\Pr\Bigl[{}&\forall i.\ok_i=1\land \mathit{\sigma\leq \fisS}\land s=s_0:
    s\ot A_1^{H,\Oall},\notag\\
    &(\com^*_i,\ch_i,\resp^*_i)_{i=1\dots \fisr}\ot A_2^{H,\Oall},\
    \ok_i:=V(\com^*_i,\ch_i,\resp^*_i),\notag\\
    &\sigma:=\sum_{i=1}^\fisr H(x,(\com^*_i)_i,i,\ch_i,\resp^*_i)\Bigr]\text{ is overwhelming.}
    \label{eq:adv.succ.fischlin}
  \end{align}
  \item Extractor failure: For any polynomial-time quantum $E$ (with
    access to the final state of $A_1$),
    $\Pr[s=s_0,w=w_0:s\ot A_1^{H,\Oall},w\ot E^{H,\Oall}(s)]$ is
    negligible.
  \end{compactitem}

  \noindent\textbf{Adversary success:} 
  At the first glance, it may seem that it is immediate how to
  construct an adversary that has adversary success: Using
  \autonameref{theo:pick1.complete}, we can for each $i$ search
  $(\ch_i,\resp_i)\in\Sy{\com_i}$ such that
  $H(x,(\com^*_i)_i,i,\ch_i,\resp^*_i)=0$. However, there is a
  problem: $\com_i^*$ contains commitments $c^i_\ch$ to all
  responses. Thus, after finding $\ch_i,\resp_i$, we need to open
  $c^i_{\ch_i}$ as $\resp_i$. This could be done with the adversary
  against $\COM$ from \autonameref{lemma:com.attack}. But the problem is,
  the corresponding openings have to be contained in $\resp_i^*$. So
  we need to know these openings already when searching for
  $\ch_i,\resp_i$. But at that point we do not know yet to what value
  the commitments $c_{\ch_i}^i$ should be opened! To avoid this
  problem, we use a special fixpoint property of the commitment scheme
  $\COM$ that allows us to commit in a way such that we can use the
  $(\ch_i,\resp_i)$ themselves as openings for the commitments.

  The fixpoint property is the following: There are functions
  $\COMstar$\symbolindexmark{\COMstar},
  $\COMopenstar$\symbolindexmark{\COMopenstar}
  such that for any $\com\in\bits\ellcom$,
  and any $(\ch,\resp)\in\Sy\com$, we have
  \begin{equation}
    \COMverify(c,\resp,u)=1
    \text{ for }
    c:=\COMstar(\com)
    \text{ and }
    u:=\COMopenstar(\ch,\resp).
    \label{eq:comstar}
  \end{equation}
  These functions are defined as follows:
  $\COMstar(\com)=(p_1,\dots,p_{\ellresp},y_1,\dots,y_{\ellresp},b_1,\dots,b_{\ellresp})$ with
  $p_i:=\ellch+i$, $y_i:=\com$, $b_i:=0$. And
  $\COMopenstar(\ch,\resp):=(x_1,\dots,x_{\ellresp})$ with
  $x_i:=(\ch,\resp)$ for all $i$. It is easy to verify from the definition of
  $\COMverify$ (\autoref{def:com}) that \eqref{eq:comstar} holds if
  $(\ch,\resp)\in\Sy\com$.

  Our adversary is as follows: 
  \begin{compactitem}
  \item $A_1$ outputs $s_0$.
  \item
    $A_2$ invokes $E_1$ from~\autonameref{theo:pick1.complete} $\fisr$
    times to get $\bigl(\com_i,\ketpsiy{\com_i}\bigr)$ for $i=1,\dots,r$.
    $A_2$ sets $c^i_\ch:=\COMstar(\com_i)$ for all $i$ and all $\ch\in\bits\ellch$.
    And $\com_i^*:=\bigl(\com_i,(c_\ch^i)_\ch\bigr)$.

    Let $P_i(\ch',\resp'):=1$ iff
    $H(s,(\com_i^*)_i,i,\ch',(\resp',\COMopenstar(\ch',\resp')))=0$.
    Then, for each $i=1,\dots,\fisr$, $A_2$ invokes $E_2(n,\delta_{\min},\com_i,\ketpsiy{\com_i})$
    from~\autoref{theo:pick1.complete} with oracle access to
    $P:=P_{i}$ and with $n:=\ellcom$ and $\delta_{\min}:=2^{-b-1}$ to get $\ch_i,\resp_i$. 
    Let $\resp^*_i:=(\resp_i,\COMopenstar(\ch_i,\resp_i))$.
    Then $A_2$ outputs
    $\pi:=(\com^*_i,\ch_i,\resp^*_i)_{i=1,\dots,r}$.
  \end{compactitem}
  Consider an execution of $A_1,A_2$ as in
  \eqref{eq:adv.succ.fischlin}. Let $\mathsf{Succ}_i$ denote the event
  that $(\ch_i,\resp_i)\in\Sy{\com_i}\land P_i(\ch_i,\resp_i)=1$ in that execution. We have
  \begin{align}
    \Pr[\mathsf{Succ}_i] =
      \Pr[&(\ch,\resp)\in\Sy{\com_i}\land P(\ch,\resp)=1:
      \forall j.\bigl(\com_j,\ketpsiy{\com_j}\bigr)\ot E_1,\notag\\
      &\forall j.\com_j^*:=\bigl(\com_j,(\COMstar(\com_j))_\ch\bigr),
      H\otR(\bits*\to\bits b),\notag\\
      &\forall\ch'\resp'.P(\ch',\resp'):=1\text{ iff }
      H(s,(\com_j^*)_j,i,\ch',(\resp',\COMopenstar(\ch',\resp')))=0,\notag\\
      &(\ch,\resp)\ot E_2(n,\delta_{\min},\com_i,\ketpsiy{\com_i})].
    \label{eq:succ_i}
  \end{align}
  Hence by \autonameref{theo:pick1.complete}, 
  \[
  \Pr[\mathsf{Succ}_i] \geq 1-2^{-\ellcom} - 
  \underbrace{\Pr\Bigl[\tfrac{\abs{\{(\ch,\resp)\in\Sy{\com_j}:P(\ch,\resp)=1\}}}{\abs{\Sy{\com_j}}}<\delta_{\min}\Bigr]}_{=:p_\delta}.
  \]
  Here $P$ and $\com$ are chosen as in the rhs of \eqref{eq:succ_i}.

  In the rhs of \eqref{eq:succ_i}, $H$ is chosen after
  $\Sy{\com_j},s,\com_j^*$, and $i$ are fixed. Thus for every,
  $(\ch,\resp)\in \Sy{\com_i}$ it is independently chosen whether
  $P(\ch,\resp)=1$ or $P(\ch,\resp)=0$, where
  $\Pr[P(\ch,\resp)=1]=2^{-\fisb}$.  
  Thus
  \begin{align*}
    p_\delta&=\Pr\Bigl[\sum\nolimits_{i\in S} \tfrac{X_i}{\abs{S}}\geq 1-\delta_{\min}\Bigr]
    =\Pr\Bigl[\sum\nolimits_{i\in S} \tfrac{X_i}{\abs{S}}-(1-2^{-\fisb})\geq 1-\delta_{\min}-(1-2^{-\fisb})\Bigr]\\
    &\starrel\leq e^{-2\abs{S}(1-\delta_{\min}-(1-2^{-\fisb}))^2} = e^{-2\kk(2^{-\fisb}-\delta_{\min})^2} = 
    e^{-\kk(2^{-2\fisb-1})}
  \end{align*}
  where $X_{\ch,\resp}:=1-P(\ch,\resp)$ and $S:=\Sy{\com_i}$. And
  $(*)$ follows from Hoeffding's inequality \cite{JASA1963:Hoeffding}.

  We thus have
  \[
  \Pr[\forall i=1\dots \fisr.\ \mathsf{Succ}_i] \geq 1-2^{-\ellcom} \fisr -\fisr e^{-\kk(2^{-2\fisb-1})} =: p_s
  \]
  Since $r$ is polynomially bounded and $b$ is logarithmic and $\ellcom,\kk$
  are superpolynomial, $p_s$ is overwhelming.
  
  For adversary success, it remains to show that $P_A\geq p_s$ where
  $P_A$ is as in \eqref{eq:adv.succ.fischlin}. For this, we show that
  $\forall i.\mathsf{Succ}_i$ implies
  $\forall i.\ok_i=1\land\sigma\leq\fisS\land s=s_0$. First, note that
  $s=s_0$ always holds by definition of $A_1$. Furthermore, 
  $\forall i.\mathsf{Succ}_i$ implies (by definition of $P_i$) that 
  \begin{align*}
    \sigma &= \sum\nolimits_i H(s,(\com_i^*)_i,i,\ch_i,\resp^*_i) \\
    &=\sum\nolimits_i H\Bigl(s,(\com_i^*)_i,i,\ch_i,\bigl(\resp_i,\COMopenstar(\ch_i,\resp_i)\bigr)\Bigr)=
    \sum\nolimits_i 0 \leq \fisS.
  \end{align*}
  Finally, if $\mathsf{Succ}_i$ holds, then $(\ch_i,\resp_i)\in\Sy{\com_i}$, thus 
  \begin{multline*}
    \COMverify(c_{\ch_i}^i,\resp_i,\COMopenstar(\ch_i,\resp_i))\\
    =\COMverify(\COMstar(\com_i),\resp_i,\COMopenstar(\ch_i,\resp_i))\eqrefrel{eq:comstar}=1.
  \end{multline*}
  And $\OV(\com_i,\ch_i,\resp_i)=1$. Thus
  $\ok_i=V(\com^*_i,\ch_i,\resp^*_i)=1$. Summarizing,
  $\forall i.\mathsf{Succ}_i$ implies
  $\forall i.\ok_i=1\land\sigma\leq\fisS\land s=s_0$ and thus
  $P_A\geq p_s$. Since $p_s$ is overwhelming, so is $P_A$, thus we have adversary success.

  \medskip\noindent\textbf{Extractor failure:}
  Extractor failure was already shown in the proof of
  \autoref{lemma:sigma-break}. ($A_1$ here is defined exactly as in
  the proof of \autoref{lemma:sigma-break}, and the definition of
  extractor failure depends only on $A_1$, not on $A_2$ or the
  protocol being attacked.)

  Note that we have actually even shown extractor failure in the case
  that the extractor is allowed to choose the random oracle $H$ before
  and during the execution of $A_1$, because $A_1$ does not access~$H$.
\end{proof}

\medskip

\noindent
Now \autoref{theo:know.break.fischlin} follows from
\autonameref{lemma:sigma.sec} and \autoref{lemma:fischlin-break}. (The fact that
Fischlin's construction is a classical argument of knowledge is shown
in \cite{fischlin05online}.\footnote{Actually, \cite{fischlin05online}
  requires perfect completeness instead of completeness as defined
  here (we allow a negligible error). However, it is straightforward
  to see that their proof works unmodified for completeness as defined
  here.})

\subsection{Proofs for \autoref{theo:break.fischlin.comp}}
\label{app:proof:theo:break.fischlin.comp}

\begin{lemma}[Attack on Fischlin's construction, computational]\label{lemma:fischlin-break.comp}
  Then there exists a total break (\autoref{def:total.break}) against
  Fischlin's construction based on the sigma-protocol from
  \autonameref{def:sigma.comp}.
\end{lemma}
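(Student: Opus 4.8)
\textbf{Proof plan for \autoref{lemma:fischlin-break.comp}.}

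The plan is to reuse the adversary and the entire argument from \autoref{lemma:fischlin-break} essentially verbatim, exactly in the way that \autoref{lemma:sigma-break.comp} reused the adversary from \autoref{lemma:sigma-break}, and as \autoref{lemma:fs-break.comp} reused \autoref{lemma:fs-break}. Concretely, first I would note that by \autoref{def:total.break} (specialised to Fischlin's construction built on the sigma-protocol from \autoref{def:sigma.comp}) a total break is a polynomial-time quantum adversary $A_1,A_2$ such that
\[
  P_A:=\Pr\Bigl[\forall i.\ok_i=1\land\sigma\leq\fisS\land s\notin L_{R'}:
    s\ot A_1^{H,\Oall},\
    (\com^*_i,\ch_i,\resp^*_i)_{i}\ot A_2^{H,\Oall},\
    \ok_i:=V(\com^*_i,\ch_i,\resp^*_i),\
    \sigma:=\textstyle\sum_i H(s,(\com^*_i)_i,i,\ch_i,\resp^*_i)\Bigr]
\]
is overwhelming.

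Then I would take $(A_1,A_2)$ to be literally the adversary constructed in the proof of \autoref{lemma:fischlin-break}: $A_1$ outputs $s_0$, and $A_2$ uses $E_1,E_2$ from \autoref{theo:pick1.complete} together with the fixpoint functions $\COMstar,\COMopenstar$ in the same way. The key observation is that the quantity $P_A$ above is identical to the quantity $P_A$ analysed in the proof of \autoref{lemma:fischlin-break}, with one cosmetic change: the extra conjunct $s\notin L_{R'}$ is vacuously true, because $R'=\varnothing$ and hence $L_{R'}=\varnothing$, so this conjunct never decreases the probability. Since the sigma-protocols from \autoref{def:sigma} and \autoref{def:sigma.comp} differ only in the choice of relation ($R=\{(s_0,w_0)\}$ versus $R'=\varnothing$) and not in $P_1,P_2,V,\ellcom,\ellch,\ellresp$, the verifier $V$ is the same in both cases, and every step of the adversary-success analysis — the use of $\COMstar,\COMopenstar$, the invocation of \autoref{theo:pick1.complete}, the Hoeffding bound on $p_\delta$, and the deduction $\forall i.\mathsf{Succ}_i\Rightarrow(\forall i.\ok_i=1\land\sigma\leq\fisS\land s=s_0)$ — carries over unchanged. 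Thus $P_A$ is overwhelming by the computation already carried out there.

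There is essentially no obstacle here; the only thing to be a little careful about is that a \emph{total break} (as opposed to a \emph{total knowledge break}) requires only the single overwhelming-success condition and no extractor-failure clause, so the proof is in fact strictly shorter than that of \autoref{lemma:fischlin-break} — we simply drop the extractor-failure part. After establishing \autoref{lemma:fischlin-break.comp}, \autoref{theo:break.fischlin.comp} then follows from \multiautoref{lemma:sigma.sec.comp,lemma:fischlin-break.comp}, together with the fact (from \cite{fischlin05online}, with the same completeness/soundness caveats noted in the proof of \autoref{theo:know.break.fischlin}) that Fischlin's construction based on a sigma-protocol with the stated properties is a classical argument of knowledge and hence a classical argument.
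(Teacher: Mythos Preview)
Your proposal is correct and matches the paper's proof essentially verbatim: the paper also reuses the adversary $(A_1,A_2)$ from \autoref{lemma:fischlin-break}, observes that the added condition $s\notin L_{R'}$ is vacuously true since $R'=\varnothing$, and concludes that $P_A$ is overwhelming by the earlier computation. Your additional remark that the two sigma-protocols share the same verifier $V$ (differing only in the relation) is a helpful clarification that the paper leaves implicit.
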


\begin{proof}
  By \autoref{def:total.break} (specialized to the case of
  Fischlin's construction based on the sigma-protocol from
  \autoref{def:sigma.comp}), we need to construct a polynomial-time
  adversary $A_1,A_2$ such that:
  \begin{align*}
    P_A:=\Pr\Bigl[{}&\forall i.\ok_i=1\land \mathit{\sigma\leq \fisS}\land s=s_0\land s\notin L_{R'}:
    s\ot A_1^{H,\Oall},\\
    &(\com^*_i,\ch_i,\resp^*_i)_{i=1\dots \fisr}\ot A_2^{H,\Oall},\
    \ok_i:=V(\com^*_i,\ch_i,\resp^*_i),\\
    &\sigma:=\sum_{i=1}^\fisr H(x,(\com^*_i)_i,i,\ch_i,\resp^*_i)\Bigr]\text{ is overwhelming.}
  \end{align*}
  Here $V$ is the verifier of the sigma-protocol (\autoref{def:sigma.comp}).

  We use the same adversary $(A_1,A_2)$ as in the proof of
  \autonameref{lemma:fischlin-break}. Then $P_A$ here is the same as $P_A$ in
  the proof of \autoref{lemma:sigma-break}. (Here we additionally have
  the condition $s\notin L_{R'}$, but this condition is vacuously true
  since $R'=\varnothing$ and thus $L_{R'}=\varnothing$.) And in the
  proof of \autoref{lemma:fischlin-break} we showed that $P_A$ is
  overwhelming.
\end{proof}

\medskip\noindent
Now \autoref{theo:break.fischlin.comp} follows from
\autonameref{lemma:sigma.sec.comp} and \autoref{lemma:fischlin-break.comp}. (The
fact that Fischlin's construction is a classical argument of knowledge
is shown in \cite{fischlin05online}.\footnote{Actually, \cite{fischlin05online} requires
 perfect special soundness
  instead of computational special soundness, as well as
  perfect completeness instead of completeness as defined here (we
  allow a negligible error).  However, it is
  straightforward to see that their proof works unmodified for
  computational special soundness and 
  completeness as defined here.})

\end{fullversion}

\end{document}

